\documentclass[11pt]{article}

\usepackage[a4paper,margin=1in]{geometry}
\usepackage{amsmath,amssymb,amsthm,amsfonts}
\usepackage{mathtools}
\usepackage{bm}
\usepackage[round,authoryear]{natbib}
\usepackage{booktabs}
\usepackage{array}
\usepackage{enumitem}
\usepackage[protrusion=true,expansion=false]{microtype}
\usepackage{xcolor}
\usepackage{hyperref}
\hypersetup{colorlinks=true,linkcolor=blue!50!black,citecolor=blue!50!black,urlcolor=blue!50!black,hypertexnames=false}
\usepackage[capitalise,noabbrev]{cleveref}
\usepackage{graphicx}

\usepackage{float}

\newtheorem{theorem}{Theorem}[section]
\newtheorem{proposition}[theorem]{Proposition}

\newtheorem{corollary}[theorem]{Corollary}
\theoremstyle{definition}
\newtheorem{definition}[theorem]{Definition}
\newtheorem{assumption}[theorem]{Assumption}
\newtheorem{example}[theorem]{Example}
\theoremstyle{remark}
\newtheorem{remark}[theorem]{Remark}

\DeclareMathOperator*{\argmax}{arg\,max}

\DeclareMathOperator{\Lip}{Lip}
\DeclareMathOperator{\Var}{Var}
\DeclareMathOperator{\Cov}{Cov}
\DeclareMathOperator{\KL}{KL}
\DeclareMathOperator{\VaR}{VaR}
\DeclareMathOperator{\CVaR}{CVaR}
\DeclareMathOperator{\EVaR}{EVaR}
\DeclareMathOperator{\esssup}{ess\,sup}
\DeclareMathOperator{\essinf}{ess\,inf}
\DeclareMathOperator{\tr}{tr}
\DeclareMathOperator{\supp}{supp}
\DeclareMathOperator{\Dir}{Dir}
\DeclareMathOperator{\NIW}{NIW}

\DeclareMathOperator{\Law}{Law}
\DeclareMathOperator{\softmax}{softmax}
\DeclareMathOperator{\softplus}{softplus}

\newcommand{\R}{\mathbb{R}}
\newcommand{\E}{\mathbb{E}}
\newcommand{\PP}{\mathcal{P}}
\newcommand{\W}{\mathcal{W}}
\newcommand{\F}{\mathcal{F}}
\newcommand{\B}{\mathcal{B}}

\newcommand{\StateSpace}{\mathcal{S}}
\newcommand{\A}{\mathcal{A}}
\newcommand{\M}{\mathcal{M}}
\newcommand{\Z}{\mathcal{Z}}
\newcommand{\Q}{\mathcal{Q}}
\newcommand{\N}{\mathcal{N}}
\newcommand{\IP}{\mathrm{IP}}
\newcommand{\Tpi}{\mathcal{T}^\pi}
\newcommand{\Wbar}{\overline{W}}
\newcommand{\Vex}{V^{\mathrm{ex}}}
\newcommand{\Vcond}{V^{\mathrm{cond}}}
\newcommand{\Rpred}{R^{\mathrm{pred}}}
\newcommand{\Rpredn}{R^{\mathrm{pred}}_n}
\newcommand{\Trans}{\mathsf{P}} % transition kernel (distinct from P)

\title{\bfseries Distributional Portfolio Optimization (DPO):\\
A Unified Framework for Distributions over\\
Weights, Returns, and Parameters}
\author{Miquel Noguer i Alonso\\
\\ Artificial Intelligence Finance Institute (AIFI)}
\date{\today}

\begin{document}
\maketitle

%==============================================================================
% ABSTRACT (PATCH 1)
%==============================================================================
\begin{abstract}
\sloppy
\noindent
Classical portfolio optimization treats expected returns, covariances, and
allocations as deterministic. Modern practice replaces at least one by a
distribution---a posterior over parameters, a law of future returns, a
stochastic allocation policy, or a distributional-robustness set. We call
\emph{distributional portfolio optimization} (DPO) the unified framework in
which weights, returns, and parameters are all modelled as probability measures,
organized around the joint coupling $\Gamma_\theta(dw,dr)$ of allocations and
returns and its marginal triple $(W,R,P)$.

The contribution is synthetic and structural: we organize Bayesian, robust,
chance-constrained, stochastic-allocation, and distributional-RL portfolio
methods through this coupling and prove boundary results connecting them. These
include a portfolio specialization of the Wasserstein--CVaR duality of
\citet{mohajerin2018data}, reducing the linear-loss case to empirical CVaR plus
the dual-norm penalty $(\varepsilon/\alpha)\|w\|_{\ast}$; a static
no-randomization theorem; a Bayesian credible-radius calibration of Wasserstein
DRO (a posterior calibration of the ambiguity radius, distinct from the
distribution-free coverage radius); a Gaussian-isotropic second-order
conservatism bound with first-order tightness in nonsmooth models; a conditional
two-sided rate
$W_1(R^{\mathrm{pred}}_n,R_{\widehat\theta_n})=\Theta(n^{-(1+\alpha^\star)/2})$
governed by the local boundary H\"older exponent $\alpha^\star\in[0,1]$; and a
risk-shifted distributional Bellman contraction at rate $\gamma+\lambda L_\rho$.

Numerical experiments are illustrative. A controlled calibration experiment
shows that across factor DGPs at $K\in\{10,25,50\}$ the credible-radius rule
lands within $3$--$7$ bp of the oracle out-of-sample tail risk and strictly beats
a $24$-month validation-tuned radius while spending no validation data. On a
$K=25$ DJIA backtest, equal-weight $1/N$, no-view Black--Litterman, and
Ledoit--Wolf shrinkage attain higher Sharpe than every distributional method
both gross and net of cost; the operational claim is therefore confined to
calibration-without-validation and turnover, not raw-return dominance over
shrinkage or naive baselines.

\smallskip
\noindent\textbf{Keywords:} portfolio optimization, distributionally robust
optimization, Bayesian portfolio choice, CVaR, coherent risk measures,
Wasserstein distance, reinforcement learning, Dirichlet policy, Markov
decision process.
\end{abstract}

\newpage

\tableofcontents
\bigskip

%==============================================================================
% SECTION 1: INTRODUCTION
%==============================================================================
\section{Introduction}\label{sec:intro}

The classical Markowitz problem \citep{markowitz1952portfolio}
\begin{equation}
\max_{w\in\W} w^\top \mu - \frac{\gamma}{2} w^\top \Sigma w
\label{eq:markowitz}
\end{equation}
takes three inputs---the expected return $\mu$, the covariance $\Sigma$, and
the feasible set $\W$---and produces a single allocation $w^\star$. Every
quantity on both sides is treated as deterministic. The justification for
using \eqref{eq:markowitz} as a proxy for expected-utility maximization
extends beyond Gaussian returns: as formalized by
\citet{benveniste2024untangling}, the maximum-expected-utility allocation
coincides with the mean--variance allocation for any standard concave utility
whenever the return distribution lies in the mean--variance-equivalent (MVE)
class, encompassing all elliptical distributions and a broad family of
asymmetric ones. The DPO framework developed below preserves this universality
at its base case ($W, P$ degenerate) and extends to settings where one or
more of $W, R, P$ are non-trivial.

In realistic practice, none of them are:
\begin{itemize}
\item the parameters $(\mu, \Sigma)$ are estimated from data, hence random; a
      Bayesian approach replaces them by a posterior distribution;
\item the return vector itself is random, and a full description requires more
      than first two moments---higher-order moments, tail behavior, and
      asymmetric dependence all contribute to utility and risk functionals;
\item the weights $w$ may themselves be modeled stochastically---as a
      sampling distribution on $\W$ for resampled efficiency, or as a
      stochastic policy for a reinforcement-learning agent.
\end{itemize}

Each of these generalizations has been developed independently in a distinct
literature: Bayesian portfolio theory \citep{jorion1986bayes,
black1992global, avramov2010bayesian}, coherent risk measures
\citep{artzner1999coherent, rockafellar2000optimization}, distributionally
robust optimization \citep{delage2010distributionally, mohajerin2018data},
chance-constrained programming \citep{charnes1959chance, prekopa1995stochastic},
random portfolios \citep{michaud1998efficient, burns2007random}, hierarchical
risk parity and its return-adjusted variant
\citep{lopez2016building, noguer2025rahrp}, and distributional reinforcement
learning \citep{bellemare2017distributional, dabney2018distributional}. These
subliteratures rarely communicate with one another despite a shared
structure: in each, the object being optimized over is a probability measure
rather than a vector.

The present paper collects these threads under a single analytical
framework---distributional portfolio optimization. The aims are threefold:
(i) make precise the shared mathematical structure---the distributional
triple $(W, R, P)$; (ii) identify duality and domination relations connecting
methods originally posed as alternatives; and (iii) derive finite-sample
guarantees, contraction arguments, and posterior-concentration results that
make the framework amenable to rigorous asymptotic analysis. The framework
is complementary to the reinforcement-learning portfolio optimization (RLPO)
theory described in \S\ref{sec:rlpo-relationship}, with which it shares
notation; where RLPO emphasizes the dynamic Markov decision process underlying
portfolio rebalancing, DPO emphasizes the single-period or averaged
distributional structure of the decision problem.

\paragraph{Positioning relative to existing work.}
The contribution is primarily synthetic and structural. Classical
mean--variance and Bayesian portfolio theory provide the deterministic and
posterior-predictive faces of the framework. Coherent and spectral risk
theory provide the law-invariant scalar risk functionals. Wasserstein DRO
supplies ambiguity sets and dual regularization formulas. Stochastic
allocation methods, including resampling and Dirichlet policies, make the
allocation law nontrivial. Distributional RL extends the same scalar-outcome
logic to dynamic return distributions. The present paper does not claim to
subsume these literatures; rather, it provides a common notation in which
their assumptions, equivalences, and non-equivalences can be compared
directly. The technical contributions sit at the boundaries between these
subliteratures---Bayesian-to-Wasserstein bridges, no-randomization theorems,
distributional Bellman contraction with action-coupled rewards, and first-
versus second-order tightness examples---rather than within any single
subliterature.

%------------------------------------------------------------------------------
\subsection{The distributional triple}\label{sec:triple}

The defining observation of DPO is that every portfolio problem has three
orthogonal sources of randomness, collected into a distributional triple:
\[
(W, R, P) : W \in \PP(\W), \; R \in \PP(\R^K), \; P \in \PP(\Theta),
\]
where $W$ is a distribution over weights, $R$ a distribution over returns,
and $P$ a distribution over the parameters governing $R$. Each point in the
triple encodes a different problem class:
\begin{itemize}
\item $(W = \delta_w, R, P = \delta_\theta)$: classical Markowitz with known
      parameters.
\item $(W = \delta_w, R, P)$: Bayesian portfolio optimization, with
      predictive return $R = \int R_\theta \, dP(\theta)$.
\item $(W = \delta_w, R \in \A, P)$: distributionally robust portfolio
      optimization.
\item $(W \neq \delta, R, P)$: random portfolios, RL policies, Dirichlet
      weight priors.
\item $(W, R, P)$ all non-trivial: hierarchical Bayesian models with
      stochastic policies, the full generalization.
\end{itemize}

The triple is not merely bookkeeping. It organizes the literature
(Section~\ref{sec:taxonomy}), isolates which duality theorems apply to which
setting, and reveals geometric structure through the Fisher--Rao and
Wasserstein metrics on each marginal.

The structure of the framework is:
\begin{equation}
\underbrace{(W, R, P)}_{\text{inputs}}
\xrightarrow{\IP_\#}
\underbrace{Y^\star_{W,R,P} \in \PP(\R)}_{\text{induced outcome}}
\xrightarrow{u, \rho}
\underbrace{V \in \R}_{\text{objective}}.
\label{eq:dpo-pipeline}
\end{equation}
The induced outcome distribution is the scalar object on which all utility
and risk functionals act, and the pipeline \eqref{eq:dpo-pipeline} organizes
the rest of the paper.

%------------------------------------------------------------------------------
\subsection{Contributions and organization}\label{sec:contributions}

The main contributions are:
\begin{itemize}
\item a measure-theoretic formulation of DPO on Polish spaces
      (Section~\ref{sec:prelim});
\item a unified treatment through the distributional triple, including the
      explicit identification of the induced portfolio-outcome distribution
      as the scalar object on which all utility and risk functionals act
      (Section~\ref{sec:taxonomy});
\item rigorous existence, limit, and concentration results for Bayesian DPO
      via utility-tilted posteriors (Section~\ref{sec:bayesian-dpo});
\item for completeness, a self-contained restatement of the law-invariant
      risk and duality toolkit we build on---Rockafellar--Uryasev, Kusuoka,
      spectral, and distortion representations (Section~\ref{sec:returns}),
      the \citet[Cor.~5.1]{mohajerin2018data} Wasserstein duality together
      with its explicit portfolio-CVaR conic specialization
      (Theorem~\ref{thm:wcvar-dro}, Proposition~\ref{prop:explicit-cvar-program})
      and finite-sample guarantees (Section~\ref{sec:dro}), and
      Charnes--Cooper / Calafiore--Campi chance constraints with an explicit
      $e/(e-1)$-form scenario bound (Section~\ref{sec:cc}); none of the
      underlying duality theorems is claimed as new, and the only local
      increment in these sections is the portfolio-CVaR specialization;
\item a self-contained treatment of Dirichlet and logistic-normal policy
      classes, with score, entropy, expected-utility, and policy-gradient
      identities, and a static no-randomization theorem
      (Theorem~\ref{thm:no-randomization}) delimiting when stochastic
      allocations are useful (Section~\ref{sec:weights});
\item a complete proof of contraction for the distributional Bellman operator in the maximal
$p$-Wasserstein metric, together with a risk-shifted distributional Bellman contraction
(Theorem~\ref{thm:risk-sensitive-bellman-contraction}) at rate $\gamma+\lambda L_\rho$, where $L_\rho$ is the
$W_1$-Lipschitz modulus of the one-step Markov risk functional and $M_\rho<\infty$ is its bounded baseline (Section~\ref{sec:distributional-rl});
\item a domination theorem (Theorem~\ref{thm:bayes-dro-domination}) bounding Bayesian risk-averse utility maximization
by Wasserstein DRO with an explicit data-driven radius, a Gaussian-isotropic calculation
(Proposition~\ref{prop:gaussian-isotropic-second-order}) showing that the first-order radius may be conservative by one order
in symmetric location models, a formal second-order Bayes--DRO radius
(Proposition~\ref{prop:second-order-radius}) for the centered symmetric case, a first-order tightness proposition
(Proposition~\ref{prop:first-order-tight}) for affine-deterministic and locally non-smooth models, a sufficient
modulus-of-smoothness bound (Theorem~\ref{thm:modulus-interpolation}), and a conditional two-sided
Bayes--Wasserstein rate theorem (Theorem~\ref{thm:sharp-two-sided-rate}) showing that, under posterior-moment,
tail-negligibility, and explicit lower-witness hypotheses,
\[
W_1(R^{\mathrm{pred}}_n,R_{\widehat\theta_n})
=
\Theta\!\left(n^{-(1+\alpha_n^\star)/2}\right),
\]
where $\alpha_n^\star=\inf_{\theta\in B(\widehat\theta_n,r_n)}\alpha^\star(\theta)\in[0,1]$
is the conservative local boundary H\"older exponent around the realized posterior
center (Section~\ref{sec:synthesis});
\item the Bures--Wasserstein Lipschitz constant for the Gaussian--NIW model
      promoted to a Proposition with explicit constants
      (Proposition~\ref{prop:gaussian-niw-lipschitz},
      Section~\ref{sec:synthesis});
\item numerical and empirical illustrations consistent with the theoretical
      predictions, including a Ledoit--Wolf-shrinkage baseline that frames the
      high-dimensional regime as an empirical question rather than an
      established dominance claim (Section~\ref{sec:numerical}).
\end{itemize}

The paper's novel mathematical content is concentrated in
Section~\ref{sec:synthesis}: the Bayes--Wasserstein domination
(Theorem~\ref{thm:bayes-dro-domination}), the second-order conservatism
calculation (Proposition~\ref{prop:gaussian-isotropic-second-order}), and the
conditional two-sided rate theorem (Theorem~\ref{thm:sharp-two-sided-rate}).
Sections~\ref{sec:returns}--\ref{sec:cc} are background we unify;
Sections~\ref{sec:weights}--\ref{sec:distributional-rl} are largely
recapitulation, with the static no-randomization theorem
(Theorem~\ref{thm:no-randomization}) and the risk-shifted contraction
(Theorem~\ref{thm:risk-sensitive-bellman-contraction}) as the local
increments.

A companion paper \citep{noguer2026rlpo} develops the dynamic-control side of
this picture, casting portfolio choice as risk-sensitive policy optimization.
The dynamic-control analog (RLPO) is described in
\S\ref{sec:rlpo-relationship} below.

\paragraph{What is new here.}
The paper does not claim that Bayesian portfolio choice, coherent risk, Wasserstein DRO,
stochastic policies, or distributional reinforcement learning are individually new. Its
contribution is to formulate these objects as faces of a single coupled distributional
portfolio problem, to distinguish the marginal uncertainty triple $(W,R,P)$ from the
decision-relevant joint coupling $\Gamma_\theta$, and to prove a collection of boundary
results that make the connections mathematically explicit: outcome-law sufficiency,
static no-randomization, Wasserstein--CVaR regularization, Bayesian credible-radius
calibration of DRO balls, a conditional two-sided Bayes--Wasserstein rate theorem
parameterized by the boundary H\"older exponent $\alpha^\star\in[0,1]$, a
distributional Bellman contraction in a portfolio MDP, and a risk-shifted distributional
Bellman contraction at rate $\gamma+\lambda L_\rho$. The CVaR conic program is an
explicit specialization of Mohajerin Esfahani and Kuhn (2018, Cor. 5.1), not a new
duality theorem; the contribution of this part is the explicit portfolio specialization and
computational identification: for the canonical linear-loss CVaR problem, Wasserstein
robustness is equivalent to empirical CVaR plus the dual-norm penalty
$(\varepsilon/\alpha)\|w\|_\ast$.

%------------------------------------------------------------------------------
% PATCH 2: SECTION 1.3 RELATIONSHIP TO RLPO (full rewrite)
%------------------------------------------------------------------------------
\subsection{Relationship to RLPO}\label{sec:rlpo-relationship}

The reinforcement-learning portfolio optimization (RLPO) framework formulates
portfolio rebalancing as a Markov decision process
$(\StateSpace, \A, \Trans, R, \gamma)$ with policy $\pi(\cdot \mid s)$ and value
function
\[
V^\pi(s) = \E^\pi\!\left[\sum_{t \geq 0} \gamma^t R_t \,\middle|\, s_0 = s\right].
\]
DPO and RLPO should be viewed as parallel cross-sections of a common
dynamic--distributional theory rather than as literal reductions of one
another. Two observations make the relationship precise.

\emph{First,} the static DPO objective carries a law-invariant scalar risk
functional $\rho$ that has no direct $\gamma = 0$ analog in RLPO. Setting
$\gamma = 0$ in RLPO reduces the value function to single-period expected-reward
maximization $\argmax_w \E[R(s, w, s') \mid s]$, which is risk-neutral;
recovering the static DPO with a static $\rho$ from RLPO requires encoding
$\rho$ as a one-period Markov risk measure in the sense of
\citet{ruszczynski2010risk} and stopping the recursion at $T = 1$. The two
formulations are therefore dual: the static-decision face emphasizes the
law-invariant scalar functional, the dynamic-control face emphasizes the
recursion. The correct relationship is
\[
\text{one-period RLPO} + \text{static law-invariant risk/utility functional}
\;\leadsto\;
\text{static DPO}.
\]

\emph{Second,} under the parallel identification of state and information
set, the DPO triple specializes the dynamic objects:
\begin{itemize}
\item the RLPO state $s$ corresponds to the information set at the decision
      time, and the DPO parameter law $P$ to the posterior over $(\mu, \Sigma)$
      given $s$;
\item the RLPO policy $\pi(\cdot \mid s)$ corresponds to the DPO weight
      distribution $W$ conditional on the observed state, collapsing to
      $\delta_{w^\star(s)}$ for a deterministic policy;
\item the RLPO one-period reward $R(s, w, s') = w^\top r' - c(w, w_+)$ is a
      draw from $R \in \PP(\R^K)$.
\end{itemize}
Conversely, DPO extends to the dynamic setting by letting all three of
$(W, R, P)$ carry a time index. The distributional Bellman theory of
Section~\ref{sec:distributional-rl} is the explicit dynamic extension, and
Theorem~\ref{thm:distributional-bellman} provides the $\gamma$-contraction
of the return-distribution operator.

\begin{table}[H]
\centering\small
\begin{tabular}{lll}
\toprule
Object & Static (DPO) & Dynamic (RLPO) \\ \midrule
Weights          & $W \in \PP(\W)$                  & $\{W_t(\cdot \mid s_t)\}_{t \geq 0}$, policy kernel \\
Returns          & $R \in \PP(\R^K)$                & $\{R(s_t, w_t, s_{t+1})\}_{t \geq 0}$, transition kernel \\
Parameters       & $P \in \PP(\Theta)$              & $\{P_t\}_{t \geq 0}$, posterior at each step \\
Risk functional  & $\rho$ on $\PP(\R)$              & nested Markov risk measure $\{\rho_t\}$ \\
Outcome          & $Y^\star = \IP_\#(W \otimes R)$  & $Z^\pi(s,w) = \Law_\Trans(\sum_t \gamma^t r_t \mid s_0, w_0)$ \\
Recursion        & N/A (single period)              & $Z^\pi = \Tpi Z^\pi$, contraction in $\Wbar_p$ (Thm.~\ref{thm:distributional-bellman}) \\
$\gamma=0$ limit & \begin{tabular}[t]{@{}l@{}}static decision problem;\\$\rho$, $P$ added on top\end{tabular}
                 & \begin{tabular}[t]{@{}l@{}}one-period expected reward,\\no risk augmentation\end{tabular} \\
\bottomrule
\end{tabular}
\caption{Static distributional triple (DPO) versus its time-indexed dynamic
counterpart (RLPO). Neither framework is a literal reduction of the other:
DPO carries a static law-invariant $\rho$ that requires nesting to embed in
RLPO, and RLPO carries a recursion that is absent from the static formulation.
The two are parallel cross-sections of a common dynamic--distributional
theory.}
\label{tab:dpo-rlpo}
\end{table}

%==============================================================================
% SECTION 2: MEASURE-THEORETIC PRELIMINARIES
%==============================================================================
\section{Measure-theoretic preliminaries}\label{sec:prelim}

We fix notation and collect facts used throughout. Standard references are
\citet{billingsley2012}, \citet{villani2009optimal}, and \citet{bogachev2007}.

\subsection{State spaces and probability measures}

\paragraph{Notation.}
Throughout, $\W \subset \R^K$ denotes the feasible set of portfolio weights
and $w \in \W$ a realized portfolio vector. We use $W \in \PP(\W)$ for the
law of a stochastic allocation; in dynamic policy contexts the same object
is also called the policy and denoted $\pi$. The DPO triple is
\[
(W, R, P) \in \PP(\W) \times \PP(\R^K) \times \PP(\Theta),
\]
with $W$ the weight distribution, $R$ the return distribution (typically
$R_\theta$ the conditional law given $\theta$), and $P$ the parameter
distribution. We reserve the calligraphic $\W$ exclusively for the feasible
set so that $\W$ and $W$ never refer to the same object, and we write
$w \sim W$ for a draw from the weight law. The Polish probability-space
machinery applies on $\W$, $\R^K$, and $\Theta$ symmetrically.

Let $(\Omega, \F, \Pr)$ be a complete probability space. The asset universe
has $K < \infty$ assets, returns $r \in \R^K$, weights $w \in \R^K$. The
feasible set $\W \subseteq \R^K$ is non-empty convex compact; long-only
problems take $\W = \Delta^{K-1}$. For a Polish space $(X, d_X)$ with Borel
$\sigma$-algebra $\B(X)$, $\PP(X)$ denotes Borel probability measures,
equipped with weak convergence. The subset $\PP_p(X)$ of measures with
finite $p$th moment is equipped with the $p$-Wasserstein metric
\begin{equation}
W_p(\mu, \nu) = \inf_{\pi \in \Pi(\mu, \nu)}
\left( \int_{X \times X} d_X(x, y)^p \, d\pi(x, y) \right)^{1/p}, \quad p \geq 1.
\label{eq:wasserstein-def}
\end{equation}
$(\PP_p(X), W_p)$ is itself Polish.

\subsection{Parameter space}

The parameters $\theta \in \Theta$ live in a finite-dimensional Euclidean
open set $\Theta \subseteq \R^{d_\theta}$ (or, in semi-parametric settings,
a Banach space with separable predual). The map $\theta \mapsto R_\theta$
is Borel measurable; continuity and Lipschitz regularity in Wasserstein
distance are imposed as needed at each use.

\subsection{Utility and risk functionals}

A utility function $u: \R \to \R$ is concave, non-decreasing, continuous.
Expected utility under $R \in \PP(\R^K)$ is
\begin{equation}
U(w; R) = \int_{\R^K} u(r^\top w) \, dR(r),
\label{eq:utility}
\end{equation}
well-defined whenever $u^-(r^\top w) \in L^1(R)$.

\begin{definition}[Coherent risk measure on losses; \citealp{artzner1999coherent}]
\label{def:coherent}
A functional $\rho: L^\infty(\Omega, \F, \Pr) \to \R$ is a coherent risk
measure on losses if for all $X, Y \in L^\infty$:
\begin{enumerate}[label=(\roman*)]
\item Monotonicity: $X \leq Y$ a.s.\ implies $\rho(X) \leq \rho(Y)$.
\item Translation equivariance: $\rho(X + c) = \rho(X) + c$ for $c \in \R$.
\item Positive homogeneity: $\rho(\lambda X) = \lambda \rho(X)$ for
      $\lambda \geq 0$.
\item Subadditivity: $\rho(X + Y) \leq \rho(X) + \rho(Y)$.
\end{enumerate}
A risk measure is convex if (iii) is dropped and (iv) replaced by
$\rho(\lambda X + (1 - \lambda) Y) \leq \lambda \rho(X) + (1 - \lambda) \rho(Y)$
for $\lambda \in [0, 1]$. Throughout the paper, $\rho$ is applied to a loss
$L$, so high $L$ is bad and high $\rho(L)$ flags risky positions.
\end{definition}

\begin{example}[Canonical coherent risk measures]
The following are coherent on $L^\infty$ under the loss convention: CVaR at
level $\alpha \in (0, 1)$,
$\CVaR_\alpha(L) = \alpha^{-1} \int_0^\alpha \VaR_u(L) \, du$ with
$\VaR_u(L) = \inf\{\ell: \Pr(L \leq \ell) \geq 1 - u\}$; entropic VaR,
$\EVaR_\alpha(L) = \inf_{\eta > 0} \eta \log(\alpha^{-1} \E[e^{L/\eta}])$;
worst-case loss, $\rho^{\sup}(L) = \esssup L$; and any finite mixture
$\sum_i \lambda_i \rho_i$ with $\lambda_i \geq 0$ and each $\rho_i$ coherent.
\end{example}

\begin{theorem}[Dual representation; \citealp{follmer2011stochastic}]
\label{thm:dual-rep}
Let $\rho: L^\infty \to \R$ be a coherent risk measure on losses satisfying
the Fatou property. Then there exists a non-empty convex weakly closed set
$\Q \subseteq \PP(\Omega)$ of probability measures absolutely continuous
w.r.t.\ $\Pr$ such that
\begin{equation}
\rho(X) = \sup_{Q \in \Q} \E_Q[X].
\label{eq:dual-rep}
\end{equation}
Conversely, every functional of this form is coherent under the loss
convention and has the Fatou property.
\end{theorem}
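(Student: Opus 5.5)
The plan is the standard Hahn--Banach route in the $(L^\infty, L^1)$ duality, with the Fatou property supplying weak$^*$ closedness. The converse is the easy direction, so we dispatch it first. If $\rho(X)=\sup_{Q\in\Q}\E_Q[X]$ with each $Q\ll\Pr$, then each map $X\mapsto\E_Q[X]$ is linear, monotone, and satisfies $\E_Q[X+c]=\E_Q[X]+c$. A supremum of such maps inherits monotonicity, translation equivariance, positive homogeneity, and subadditivity, so $\rho$ is coherent.

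For the Fatou property of a supremum, take $X_n\to X$ a.s.\ with $\sup_n\|X_n\|_\infty<\infty$. For each fixed $Q$, dominated convergence gives $\E_Q[X]=\lim_n\E_Q[X_n]\le\liminf_n\rho(X_n)$. Taking the supremum over $Q$ yields lower semicontinuity of $\rho$ along bounded a.s.\ convergent sequences, which is the Fatou property (in the sign convention appropriate to losses).

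For the forward direction, the first step is to show that the acceptance set $\mathcal{A}=\{X:\rho(X)\le 0\}$ is a convex cone that is $\sigma(L^\infty,L^1)$-closed. Convexity and the cone property follow from subadditivity and homogeneity. Weak$^*$ closedness is the main obstacle. By Krein--Šmulian it suffices to check that $\mathcal{A}\cap\{\|X\|_\infty\le r\}$ is weak$^*$ closed for each $r$. On bounded sets the weak$^*$ topology is controlled by $L^1$ convergence in probability, so we take a sequence in the ball converging in probability, extract an a.s.\ convergent subsequence, and apply the Fatou property to keep the limit in $\mathcal{A}$. Closedness for nets is then reduced to sequences by the usual argument that bounded convex sets closed in $L^1$ are weak$^*$ closed, which is Grothendieck's lemma.

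With $\mathcal{A}$ weak$^*$ closed, we separate. If $\rho(X)>m$, then $X-m\notin\mathcal{A}$, so Hahn--Banach in the pair $(L^\infty,L^1)$ gives some $Z\in L^1$ with $\E[ZX]-m\,\E[Z]>\sup_{Y\in\mathcal{A}}\E[ZY]$. Because $\mathcal{A}$ is a cone, this supremum equals $0$. Because $\mathcal{A}\supseteq -L^\infty_+$ by monotonicity, we get $Z\ge0$. Since $0\in\mathcal{A}$ and $X-m\notin\mathcal{A}$, we have $Z\ne0$, so we may normalize $dQ=Z\,d\Pr/\E[Z]$ to obtain a probability measure $Q\ll\Pr$ with $\E_Q[X]>m$.

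Now define $\Q=\{Q\ll\Pr:\E_Q[Y]\le\rho(Y)\ \forall Y\}$. This set is convex and weakly closed as an intersection of closed half-spaces. Translation equivariance and positive homogeneity show that $\Q$ contains every normalized separating functional produced above, so $\Q$ is non-empty. Moreover, for each $X$ we have $\sup_{Q\in\Q}\E_Q[X]\ge m$ for every $m<\rho(X)$, while $\E_Q[X]\le\rho(X)$ holds trivially on $\Q$. Together these give \eqref{eq:dual-rep}.
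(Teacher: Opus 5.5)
Your proof is correct and is the standard argument: the Fatou property gives a weak$^*$-closed acceptance set via Krein--\v{S}mulian, then Hahn--Banach separation in $(L^\infty,L^1)$ yields $Z\ge 0$ from monotonicity and membership in $\mathcal{Q}$ from translation equivariance. This is the F\"ollmer--Schied proof that the paper cites without reproducing. One wording fix: on norm balls the weak$^*$ topology is coarser than convergence in probability, not ``controlled by'' it. So your sequential Fatou step only gives $L^1$-closedness of $\mathcal{A}\cap\{\|X\|_\infty\le r\}$, and it is convexity (Mazur, plus continuity of $L^\infty\hookrightarrow L^1$ from $\sigma(L^\infty,L^1)$ to $\sigma(L^1,L^\infty)$) that upgrades this to weak$^*$-closedness, as your last clause indicates.
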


The set $\Q$ is the dual representation of $\rho$.
Theorem~\ref{thm:dual-rep} is the bridge between coherent-risk and
distributionally-robust portfolio theories.

\subsection{The Wasserstein geometry on \texorpdfstring{$\PP_p(\R^K)$}{Pp(R\^{}K)}}

\begin{proposition}[Kantorovich--Rubinstein duality]\label{prop:kr}
For $\mu, \nu \in \PP_1(\R^K)$,
\[
W_1(\mu, \nu) = \sup\left\{ \int f \, d\mu - \int f \, d\nu :
f: \R^K \to \R, \, \Lip(f) \leq 1 \right\}.
\]
\end{proposition}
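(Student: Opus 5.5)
The plan is the standard two-inequality argument: the easy direction comes from coupling integrals, and the hard direction from convex duality for the Kantorovich problem with cost $c(x,y)=\|x-y\|$.

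\emph{Easy inequality ($\geq$).} Take any $f$ with $\Lip(f)\le 1$ and any coupling $\pi\in\Pi(\mu,\nu)$. Both $\mu$ and $\nu$ have finite first moments, so $f$ is integrable under each, since $|f(x)|\le |f(0)|+\|x\|$. Then
\[
\int f\,d\mu-\int f\,d\nu=\int \bigl(f(x)-f(y)\bigr)\,d\pi(x,y)\le \int \|x-y\|\,d\pi(x,y).
\]
Taking the supremum over $f$ and the infimum over $\pi$ gives the supremum $\le W_1(\mu,\nu)$.

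\emph{Hard inequality ($\leq$).} I would invoke Kantorovich duality for a lower semicontinuous cost bounded below on a Polish space \citep{villani2009optimal}:
\[
W_1(\mu,\nu)=\sup\Bigl\{\int\varphi\,d\mu+\int\psi\,d\nu:\ \varphi(x)+\psi(y)\le \|x-y\|\Bigr\},
\]
with $\varphi\in L^1(\mu)$ and $\psi\in L^1(\nu)$. The remaining work is to reduce admissible pairs $(\varphi,\psi)$ to the form $(f,-f)$ with $f$ $1$-Lipschitz, using $c$-convexification.

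\begin{enumerate}
\item Given an admissible pair, replace $\varphi$ by
\[
\varphi^{c}(x)=\inf_y\bigl(\|x-y\|-\psi(y)\bigr).
\]
This is an infimum of $1$-Lipschitz functions of $x$. It is finite because $\varphi\le\varphi^c$ pointwise and, for any $y_0$ with $\psi(y_0)$ finite, $\varphi^c(x)\le \|x-y_0\|-\psi(y_0)$.
\item Replace $\psi$ by
\[
\varphi^{cc}(y)=\inf_x\bigl(\|x-y\|-\varphi^c(x)\bigr).
\]
Since $\varphi^c$ is $1$-Lipschitz, this infimum equals $-\varphi^c(y)$: the choice $x=y$ gives the value $-\varphi^c(y)$, and the Lipschitz bound shows no other $x$ does better.
\item Each replacement keeps the pair admissible and does not decrease the dual objective. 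The objective therefore does not decrease when the pair is replaced by $(f,-f)$ with $f=\varphi^c$, which gives $\le$.
\end{enumerate}

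The \textbf{main obstacle} is the duality theorem itself together with the integrability bookkeeping. One must check that $\varphi^c$ stays integrable, which follows from the linear-growth bound of a $1$-Lipschitz function and finite first moments. If a self-contained proof is wanted, I would do it in three stages:
\begin{itemize}
\item prove the finite-support case by linear-programming duality for the transport LP;
\item pass to general $\mu,\nu\in\PP_1(\R^K)$ by approximating with empirical or discretized measures that converge in $W_1$;
\item use tightness and lower semicontinuity of $\pi\mapsto\int\|x-y\|\,d\pi$ to pass to the limit in the primal, while the dual side is stable because Lipschitz test functions are uniformly controlled by first moments.
\end{itemize}
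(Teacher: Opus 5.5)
The paper does not prove this proposition. It records it in the preliminaries as a standard fact with a pointer to \citet{villani2009optimal}. Your argument is correct and is exactly the textbook proof in that reference: the coupling bound gives $\geq$, then general Kantorovich duality plus the $c$-transform reduction gives $\leq$, using that for the metric cost the $c$-transform is $1$-Lipschitz and satisfies $f^{c}=-f$ for $1$-Lipschitz $f$. The function you call $\varphi^{c}$ is really the $c$-transform $\psi^{c}$ of $\psi$, but this notational slip does not affect the argument.
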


\begin{proposition}[Lipschitz integral bound]\label{prop:lip-int}
If $f: \R^K \to \R$ is $L$-Lipschitz, then for any
$\mu, \nu \in \PP_1(\R^K)$,
$|\int f \, d\mu - \int f \, d\nu| \leq L \cdot W_1(\mu, \nu)$.
\end{proposition}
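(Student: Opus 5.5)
The bound follows directly from the Kantorovich--Rubinstein duality of Proposition~\ref{prop:kr}, after rescaling $f$ to have Lipschitz constant one. If $L = 0$, then $f$ is constant, both integrals are equal, and the inequality holds trivially. So assume $L > 0$ and set $g = f/L$, which satisfies $\Lip(g) \le 1$.

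First, we check that the integrals are finite. Fix $x_0 \in \R^K$. Lipschitz continuity gives
\[
|f(x)| \le |f(x_0)| + L\,\|x - x_0\|,
\]
and both $\mu$ and $\nu$ lie in $\PP_1(\R^K)$. Hence $\int f\,d\mu$ and $\int f\,d\nu$ are finite, and so are the corresponding integrals of $g$.

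Next, apply Proposition~\ref{prop:kr} twice. Taking $g$ as the test function yields
\[
\int g\,d\mu - \int g\,d\nu \le W_1(\mu,\nu).
\]
The function $-g$ is also $1$-Lipschitz, so the same argument gives
\[
\int g\,d\nu - \int g\,d\mu \le W_1(\mu,\nu).
\]
Together these give $\bigl|\int g\,d\mu - \int g\,d\nu\bigr| \le W_1(\mu,\nu)$. Multiplying through by $L$ and using linearity of the integral gives the claimed inequality.

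If one prefers not to rely on duality, there is a direct coupling argument. Take any coupling $\pi \in \Pi(\mu,\nu)$. Then
\[
\Bigl|\int f\,d\mu - \int f\,d\nu\Bigr| = \Bigl|\int (f(x)-f(y))\,d\pi(x,y)\Bigr| \le L \int \|x-y\|\,d\pi(x,y).
\]
Taking the infimum over $\pi$ gives $L\,W_1(\mu,\nu)$ by \eqref{eq:wasserstein-def} with $p=1$.

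No step is a real obstacle. The only point that needs care is integrability, which ensures the difference of integrals is well defined. This is handled by the linear growth bound on $f$ together with the finite first moments of $\mu$ and $\nu$.
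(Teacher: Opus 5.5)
Your argument is correct and matches the paper, which gives no separate proof and presents this bound as an immediate consequence of the Kantorovich--Rubinstein duality in Proposition~\ref{prop:kr}, applied to $f/L$ and $-f/L$ exactly as you do. Your coupling argument is also valid and is the more elementary route: it uses only the primal definition \eqref{eq:wasserstein-def} of $W_1$, not the duality theorem.
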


\begin{proposition}[Completeness]\label{prop:complete}
$(\PP_p(\R^K), W_p)$ is a complete separable metric space. $\mu_n \to \mu$
in $W_p$ iff $\mu_n \to \mu$ weakly and
$\int \|x\|^p \, d\mu_n \to \int \|x\|^p \, d\mu$.
\end{proposition}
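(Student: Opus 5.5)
We prove the two claims of Proposition~\ref{prop:complete} in the order separability, completeness, characterization, working on $X=\R^K$ with the Euclidean metric. Throughout we use \eqref{eq:wasserstein-def} and the fact that couplings compose via the gluing lemma, so that $W_p$ satisfies the triangle inequality.

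\emph{Separability.} Take the countable family $\mathcal D$ of finitely supported measures with rational weights on rational points. Given $\mu\in\PP_p(\R^K)$ and $\varepsilon>0$, truncate to a large ball $B$ so that $\int_{B^c}\|x\|^p\,d\mu<\varepsilon$, sending the mass of $B^c$ to the origin. Then partition $B$ into finitely many cells of small diameter and push each cell's mass to a rational point inside it. Finally, rationalize the weights, moving a tiny mass over distances bounded by $\operatorname{diam}B$. Each of these three moves has an explicit coupling of cost $O(\varepsilon)$, which gives density of $\mathcal D$.

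\emph{Characterization.} For ($\Rightarrow$), suppose $W_p(\mu_n,\mu)\to0$. Since $W_1\le W_p$ by Jensen, Proposition~\ref{prop:lip-int} yields $\int f\,d\mu_n\to\int f\,d\mu$ for every bounded Lipschitz $f$, and by the portmanteau theorem this is weak convergence. Convergence of $p$th moments follows because $W_p(\mu_n,\delta_0)=(\int\|x\|^p\,d\mu_n)^{1/p}$ and, by the triangle inequality for $W_p$, $|W_p(\mu_n,\delta_0)-W_p(\mu,\delta_0)|\le W_p(\mu_n,\mu)$. For ($\Leftarrow$), weak convergence together with convergence of $p$th moments gives uniform integrability of $\|x\|^p$ under $(\mu_n)$. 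Take optimal couplings $\pi_n\in\Pi(\mu_n,\mu)$; their marginals are tight, so $(\pi_n)$ is tight. Fix any subsequence. By Prokhorov and stability of optimal couplings under weak limits, it has a further subsequence along which $\pi_n\Rightarrow\pi$, with $\pi$ an optimal coupling of $(\mu,\mu)$, i.e.\ supported on the diagonal. Next, $\|x-y\|^p\le 2^{p-1}(\|x\|^p+\|y\|^p)$, so the cost is uniformly integrable along this sub-subsequence. Hence $\int\|x-y\|^p\,d\pi_n\to\int\|x-y\|^p\,d\pi=0$ along it. Since every subsequence has such a further subsequence, the full sequence satisfies $W_p(\mu_n,\mu)^p\to0$.

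\emph{Completeness.} Let $(\mu_n)$ be $W_p$-Cauchy. It is tight: choose $N$ with $W_p(\mu_n,\mu_N)<\varepsilon$ for $n\ge N$. Mass of $\mu_n$ escaping a large ball must either come from mass of $\mu_N$ outside a slightly smaller ball, which is controlled since $\mu_N$ is a single measure, or be transported a long distance, which is controlled by Markov's inequality applied to the coupling cost. By Prokhorov, a subsequence converges weakly to some $\mu$. Lower semicontinuity of $W_p$ under weak convergence (pass to the limit in optimal couplings and apply Portmanteau to the nonnegative lower semicontinuous cost) gives $W_p(\mu,\mu_m)\le\liminf_k W_p(\mu_{n_k},\mu_m)\le\varepsilon$ for $m\ge N$. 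In particular $\mu\in\PP_p$, since $W_p(\mu,\delta_0)\le W_p(\mu,\mu_m)+W_p(\mu_m,\delta_0)<\infty$, and $\mu_m\to\mu$ in $W_p$.

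The main obstacle is the tightness and uniform-integrability bookkeeping: the uniform moment bound behind tightness in the completeness step, and the step from weak convergence plus moment convergence to uniform integrability in ($\Leftarrow$). For the latter, the plan is to apply the standard lemma that $\int g\,d\mu_n\to\int g\,d\mu$ for every continuous $g$ with $|g|\le C(1+\|x\|^p)$ whenever the moments converge.
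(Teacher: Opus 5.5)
Your proof is correct. The paper gives no proof of its own; it treats this as a standard fact with pointers to Villani, Billingsley and Bogachev, and your argument is essentially the textbook one (Villani's Theorems 6.9 and 6.18): rational atomic measures for separability, stability of optimal plans plus uniform integrability of $\|x-y\|^p$ for the characterization, and tightness plus weak lower semicontinuity of $W_p$ for completeness, where working in $\R^K$ makes your tightness step easier than in a general Polish space. The one metric axiom you never address, $W_p(\mu,\nu)=0\Rightarrow\mu=\nu$, follows at once from your ($\Rightarrow$) argument applied to a constant sequence.
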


\subsection{Standing assumptions}

\begin{assumption}[Standing regularity]\label{ass:standing}
Unless stated otherwise, the following are imposed throughout the paper.
\begin{enumerate}[label=(\roman*)]
\item The feasible set $\W \subset \R^K$ is non-empty, convex, compact, with
      $B_\W := \sup_{w \in \W} \|w\| < \infty$.
\item Return laws lie in $\PP_1(\R^K)$, and the conditional model
      $\theta \mapsto R_\theta$ is Borel measurable.
\item The utility $u: \R \to \R$ is non-decreasing, concave, and continuous;
      when Wasserstein-continuity is required, $u$ is assumed
      $L_u$-Lipschitz.
\item Risk measures act on losses $L(w, r) = -r^\top w$. Unless otherwise
      stated, $\rho$ is a law-invariant convex risk measure on losses
      (Definition~\ref{def:coherent}), lower semicontinuous with respect to
      $W_1$ convergence on $\PP_1(\R)$.
\item In static product-form problems, the allocation draw and return draw
      are conditionally independent given the information set and parameter:
      $\Gamma_\theta(dw, dr) = W_\theta(dw) R_\theta(dr)$.
      This is a modelling restriction, not part of the definition of DPO. In
      state-dependent policies, Bayesian policies, Thompson sampling,
      hierarchical randomization, or dynamic portfolio problems, we allow a
      general coupling $\Gamma_\theta \in \PP(\W \times \R^K)$, whose
      marginals are the weight law and return law.
\end{enumerate}
\end{assumption}

\begin{remark}[Compactness is mild]
Compactness of $\W$ covers the long-only simplex, box constraints, leverage
caps, and long--short portfolios with bounded gross exposure. It excludes
unbounded leverage, which is economically ill-posed without transaction-cost
or margin frictions.
\end{remark}

%------------------------------------------------------------------------------
% PATCH 3: NOTATION CONVENTION
%------------------------------------------------------------------------------
\subsection{Notation guide}

Several symbols carry context-dependent meaning across the static and
dynamic sections. Table~\ref{tab:notation} fixes the conventions.

\paragraph{Convention on overloaded symbols.}
The symbol $P$ denotes the parameter law throughout
Sections~\ref{sec:bayesian-dpo}--\ref{sec:weights}. In
Section~\ref{sec:distributional-rl} on dynamic distributional control, we
write $\Trans$ for the Markov transition kernel of the underlying MDP to
avoid collision with the parameter-law notation; the meaning is also
reaffirmed locally. Similarly, $\gamma$ is used as the risk-aversion
coefficient in static objectives and as the discount factor in the dynamic
Bellman equations; the static and dynamic sections are non-interleaved, so
no formula combines both meanings of $\gamma$.

\begin{table}[H]
\centering\footnotesize
\begin{tabular}{lll}
\toprule
Symbol & Meaning in static DPO (\S\S\ref{sec:bayesian-dpo}--\ref{sec:weights}) & Meaning in dynamic/RL (\S\ref{sec:distributional-rl}) \\
\midrule
$\W$            & feasible portfolio set                       & action space $\A$ \\
$w$             & portfolio weights                            & current action \\
$W$ or $Q$      & law of stochastic weights                    & policy-induced action law \\
$R$             & return law                                   & --- \\
$R_\theta$      & return law conditional on parameter $\theta$ & --- \\
$P$             & parameter / posterior law                    & --- \\
$\Trans$        & ---                                          & transition kernel \\
$P_n$           & posterior after $n$ observations             & --- \\
$\Rpred, \Rpredn$ & predictive return law $\int R_\theta dP_n$ & --- \\
$\Gamma_\theta$ & joint coupling on $\W \times \R^K$ given $\theta$ & --- \\
$\gamma$        & risk-aversion coefficient                    & discount factor \\
$\rho$          & law-invariant risk functional on scalar losses & --- \\
$Y^{\mathrm{cond}}_{\Gamma,\theta}$ & induced scalar outcome $\IP_\#\Gamma_\theta$ (parameter-fixed) & --- \\
$Y^\star_\Gamma$ & induced scalar outcome $\IP_\#\bar\Gamma$ (parameter-mixed) & --- \\
$Y^\star_{W,R,P}$ & product-form synonym for $Y^\star_\Gamma$ when $\Gamma = W \otimes R$ & --- \\
$Z^\pi$         & ---                                          & return-distribution function under $\pi$ \\
$\Tpi$          & ---                                          & distributional Bellman operator \\
$\Wbar_p$       & ---                                          & supremal Wasserstein metric on $\Z_p$ \\
\bottomrule
\end{tabular}
\caption{Notation guide.}
\label{tab:notation}
\end{table}

%==============================================================================
% SECTION 3: TAXONOMY
%==============================================================================
\section{The distributional triple: taxonomy}\label{sec:taxonomy}

We collect existing portfolio methodologies into a single taxonomy indexed by
the marginal triple $(W, R, P)$, while emphasizing that the mathematically
complete primitive is the joint coupling $\Gamma_\theta(dw, dr)$ of
allocations and returns.

\begin{definition}[DPO problem: coupled primitive and marginal triple]
\label{def:dpo-primitive}
Let $\W \subset \R^K$ be the feasible allocation set, let $u: \R \to \R$ be
a utility, let $\rho$ be a law-invariant risk functional on scalar losses,
let $\gamma \geq 0$ be a risk-aversion coefficient, and let
$P \in \PP(\Theta)$ be a parameter law.

The most general single-period DPO primitive is a Markov kernel
$\theta \mapsto \Gamma_\theta \in \PP(\W \times \R^K)$,
where $\Gamma_\theta$ is the joint law of the allocation $w$ and return
vector $r$ conditional on $\theta$. Write $\Gamma_\theta^W$ and
$\Gamma_\theta^R$ for its two marginals. The integrated joint law is
\[
\bar\Gamma(dw, dr) := \int_\Theta \Gamma_\theta(dw, dr) \, dP(\theta).
\]
The marginal triple associated with the coupled model is
\[
(\bar W, \bar R, P), \quad
\bar W(A) := \bar\Gamma(A \times \R^K), \quad
\bar R(B) := \bar\Gamma(\W \times B).
\]
The triple is useful as a taxonomy of marginal uncertainty sources, but the
decision-relevant primitive is the coupling $\Gamma_\theta$. Marginals alone
do not generally identify the law of the portfolio outcome
(Remark~\ref{rmk:marginals-no-id}).

The product-form DPO model is the special case
$\Gamma_\theta(dw, dr) = W_\theta(dw) R_\theta(dr)$.
If $W_\theta \equiv W$ does not depend on $\theta$, then
$\bar\Gamma(dw, dr) = W(dw) \Rpred(dr)$ with
$\Rpred := \int_\Theta R_\theta \, dP(\theta)$,
which recovers the product law used in the static sections. The ex-ante DPO
value is
\begin{equation}
\Vex(\Gamma, P) = \int u(y) \, dY^\star_\Gamma(y) - \gamma \rho(-Y^\star_\Gamma),
\label{eq:vex}
\end{equation}
and the conditional DPO value is
\begin{equation}
\Vcond(\Gamma, P) = \int_\Theta \left[\int u(y) \, dY^{\mathrm{cond}}_{\Gamma,\theta}(y)
- \gamma \rho(-Y^{\mathrm{cond}}_{\Gamma,\theta})\right] dP(\theta),
\label{eq:vcond}
\end{equation}
with $Y^{\mathrm{cond}}_{\Gamma,\theta}$ and $Y^\star_\Gamma$ defined in
Definition~\ref{def:induced} below. For a linear utility and a nonlinear
law-invariant $\rho$, $\Vex$ and $\Vcond$ generally differ.
\end{definition}

The ex-ante objective evaluates utility and risk against the marginal
predictive law, treating posterior uncertainty as additional ``volatility''
in the return mixture. The conditional objective evaluates utility and risk
separately under each $\theta$ and averages, exposing
parameter-by-parameter sensitivity. Both are economically defensible; we
use the ex-ante form by default and flag the conditional form where it
differs.

\subsection{The induced portfolio-outcome distribution}

The coupling $\Gamma_\theta$ captures the distributional inputs, but every
utility and risk functional ultimately operates on a derived scalar random
variable: the portfolio return $Y = r^\top w$.

\begin{definition}[Induced portfolio-outcome distribution]\label{def:induced}
Let $\IP: \R^K \times \R^K \to \R$, $\IP(w, r) = r^\top w$. For a coupled
DPO primitive $\Gamma_\theta$, define the conditional scalar
portfolio-outcome law
\[
Y^{\mathrm{cond}}_{\Gamma,\theta} := \IP_\# \Gamma_\theta \in \PP(\R),
\]
and the ex-ante (parameter-mixed) scalar portfolio-outcome law
\begin{equation}
Y^\star_\Gamma := \IP_\# \bar\Gamma \in \PP(\R).
\label{eq:y-star}
\end{equation}
In the product-form case $\Gamma_\theta = W \otimes R_\theta$,
$Y^\star_\Gamma = \IP_\# (W \otimes \Rpred)$,
which is the scalar outcome law used throughout the independent static
model. We retain the notation $Y^\star_{W,R,P}$ for this product-form case
when the marginal triple $(W, R, P)$ uniquely determines the coupling.
\end{definition}

\begin{remark}[Marginals do not identify the outcome law]\label{rmk:marginals-no-id}
The marginal triple $(W, R, P)$ does not determine the law of $r^\top w$
unless a coupling is specified. For example, in the scalar long--short case
let $W = R = \tfrac{1}{2}\delta_{-1} + \tfrac{1}{2}\delta_1$. The
comonotone coupling
$\Gamma^+ = \tfrac{1}{2}\delta_{(-1,-1)} + \tfrac{1}{2}\delta_{(1,1)}$ and
the countermonotone coupling
$\Gamma^- = \tfrac{1}{2}\delta_{(-1,1)} + \tfrac{1}{2}\delta_{(1,-1)}$
have the same marginals $W$ and $R$, but $rw = 1$ $\Gamma^+$-a.s.\ and
$rw = -1$ $\Gamma^-$-a.s. Thus the product law $W \otimes R$ is not innocuous:
it is an independence assumption. The coupled law $\Gamma$, rather than the
marginal triple alone, is the mathematically complete object.
\end{remark}

We use $Y^{\mathrm{cond}}_{\Gamma,\theta}$ for the conditional version
(parameter-fixed) and $Y^\star_\Gamma$ for the aggregate (parameter-mixed).
When $P = \delta_\theta$ the two coincide.

\begin{proposition}[Sufficiency of the induced law for ex-ante objectives]
\label{prop:induced-sufficiency}
Let $\rho$ be any law-invariant risk functional on $\PP(\R)$ (coherent,
convex, distortion, or spectral). Then for any DPO triple $(W, R, P)$ and
the linear loss $L(w, r) = -r^\top w$, the ex-ante value $\Vex(W, R, P)$ in
\eqref{eq:vex} depends on $(W, R, P)$ only through $Y^\star_{W,R,P}$. In
particular, two DPO triples that induce the same $Y^\star$ yield identical
$\Vex$. The same statement does not hold for the conditional value $\Vcond$
in \eqref{eq:vcond} unless $\rho$ is affine on $\PP(\R)$ or $R_\theta$ is
$P$-a.s.\ constant in $\theta$.
\end{proposition}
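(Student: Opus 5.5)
The first claim follows by inspecting \eqref{eq:vex}. Both terms are functionals of the single measure $Y^\star_\Gamma \in \PP(\R)$. The utility term is $\int u\,dY^\star_\Gamma$, which is an integral against that measure. For the risk term, $\rho(-Y^\star_\Gamma)$ is defined through law invariance: for any random variable $L$ on $(\Omega,\F,\Pr)$ whose law is the pushforward of $Y^\star_\Gamma$ under $y\mapsto -y$, $\rho(L)$ takes the same value. We therefore regard $\rho$ as a map on $\PP(\R)$.

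To link the triple $(W,R,P)$ to $Y^\star_{W,R,P}$, I would apply the change-of-variables formula for pushforwards:
\[
\int_{\W\times\R^K} u(r^\top w)\,d(W\otimes\Rpred)(w,r)=\int_\R u(y)\,dY^\star_{W,R,P}(y).
\]
The same formula shows that the loss $L=-r^\top w$, taken under $\bar\Gamma=W\otimes\Rpred$, has law $(-\mathrm{id})_\# Y^\star$. Integrability is inherited in both directions, so the utility integral is well defined for one description exactly when it is for the other. Hence two triples with the same $Y^\star$ have the same utility term and the same risk term, which gives equal $\Vex$. The argument never uses the specific class of $\rho$, only law invariance.

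For the negative part, I would exhibit counterexamples and then state the two exceptional cases. If $R_\theta$ is $P$-a.s.\ constant, then $Y^{\mathrm{cond}}_{\Gamma,\theta}=Y^\star$ a.s. In that case $\Vcond=\Vex$, so $\Vcond$ is again a function of $Y^\star$. If $\rho$ is affine on $\PP(\R)$, meaning mixture-linear, as with the expectation, then, because $u$-integrals are also mixture-linear and $Y^\star=\int Y^{\mathrm{cond}}_\theta\,dP$, the integral over $\theta$ can be exchanged with the functional. Again $\Vcond=\Vex$.

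In general, take $K=1$, $W=\delta_1$, $u(y)=y$, and $\rho=\CVaR_\alpha$. Construct two parameter laws with the same predictive mixture and different conditional components. The first is $P=\delta_{\theta_0}$ with $R_{\theta_0}=\tfrac12\delta_{-1}+\tfrac12\delta_1$. The second is $P=\tfrac12\delta_{\theta_1}+\tfrac12\delta_{\theta_2}$ with $R_{\theta_1}=\delta_{-1}$ and $R_{\theta_2}=\delta_1$. Both induce the same $Y^\star$. The averaged CVaRs differ: $\CVaR_\alpha$ of the loss is $1$ for the first law (for $\alpha\le 1/2$), and $\tfrac12(1)+\tfrac12(-1)=0$ for the second. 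So $\Vcond$ is not a function of $Y^\star$.

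The main obstacle is the interpretive step of making $\rho(-Y^\star)$ rigorous. $\rho$ is defined on $L^\infty(\Omega)$, while $Y^\star$ may be unbounded. I would handle this by assuming an atomless $(\Omega,\F,\Pr)$, so that every law in the domain is realized, and by restricting to laws on which $\rho$ is finite. This is consistent with Assumption~\ref{ass:standing}(iv), which already treats $\rho$ on $\PP_1(\R)$.
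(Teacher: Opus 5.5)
Your proposal is correct and follows the paper's route. The ex-ante part rests on law invariance plus the pushforward identity, and the conditional part rests on a two-triple CVaR counterexample; yours is an explicit instance of the paper's $R_0=\tfrac12(R_++R_-)$ construction, and your check of the two exceptional cases goes slightly beyond what the paper writes. One caveat, which the paper also leaves implicit: your counterexample separates the two $\Vcond$ values only when $\gamma>0$. With $u(y)=y$ both utility terms are $0$, so the gap is $\gamma(1-0)$, and at $\gamma=0$ the conditional value equals $\Vex$ by Fubini and therefore does depend only on $Y^\star$.
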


\begin{proof}
The ex-ante objective is a function of the law of $r^\top w$ under
$w \sim W$, $r \sim \Rpred$, which is exactly $Y^\star_{W,R,P}$;
law-invariance of $\rho$ and the integral against $u$ then give the claim.
For the negative claim, take $\rho = \CVaR_\alpha$ and two triples with
$W = \delta_w$ fixed: one with $R_\theta \equiv R_0$ and another with
$R_\theta \in \{R_+, R_-\}$ each on a half of $P$ such that
$(R_+ + R_-)/2 = R_0$. The ex-ante values agree, but $\Vcond$ generally
does not by Jensen's inequality applied to the convex CVaR.
\end{proof}

\begin{proposition}[Wasserstein ambiguity lifts to outcome ambiguity]
\label{prop:wass-lift}
Let $\A \subseteq \PP(\R^K)$ be a $W_1$-ambiguity set of return laws with
radius $\varepsilon$ centered at $\hat R_n$, and for each $R \in \A$ let
$Y_R = \IP_\#(W \otimes R)$ with $W$ held fixed. The map $R \mapsto Y_R$ is
Lipschitz with constant $c_W := \E_{w \sim W} \|w\|_\ast$, where
$\|\cdot\|_\ast$ is the dual of the ground norm underlying $W_1$. Hence
$\sup_{R \in \A} W_1(Y_R, Y_{\hat R_n})
\leq c_W \cdot \sup_{R \in \A} W_1(R, \hat R_n) \leq c_W \varepsilon$,
and the induced outcome ambiguity set is contained in the closed $W_1$-ball
of radius $c_W \varepsilon$ centered at $Y_{\hat R_n}$.
\end{proposition}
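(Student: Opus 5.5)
The approach is a coupling construction, using the primal definition \eqref{eq:wasserstein-def} of $W_1$ together with H\"older's inequality $|r^\top w - s^\top w| \le \|w\|_\ast \|r-s\|$ for the dual norm. Fix $R, R' \in \PP_1(\R^K)$ and $W$. First I will check that $Y_R \in \PP_1(\R)$: since $\W$ is compact (Assumption~\ref{ass:standing}(i)), we have $\int |y|\,dY_R \le B_\ast \int \|r\|\,dR < \infty$ with $B_\ast := \sup_{w\in\W}\|w\|_\ast$. This also guarantees $c_W \le B_\ast < \infty$.

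\textbf{Coupling step.} Take an optimal coupling $\pi \in \Pi(R,R')$ for the ground norm; existence is standard on Polish spaces, and otherwise an $\eta$-optimal one suffices, followed by $\eta \downarrow 0$. Form the measure $W \otimes \pi$ on $\W \times \R^K \times \R^K$ and push it forward under $(w,r,s) \mapsto (r^\top w, s^\top w)$. Its marginals are exactly $Y_R = \IP_\#(W\otimes R)$ and $Y_{R'}$, so it is an admissible coupling. Hence
\[
W_1(Y_R, Y_{R'}) \le \int\!\!\int |r^\top w - s^\top w| \, dW(w)\, d\pi(r,s) \le \int \|w\|_\ast\, dW(w) \int \|r-s\|\, d\pi(r,s).
\]
Because $W$ and $\pi$ enter as a product measure, the integral factors by Tonelli. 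This gives the Lipschitz bound $W_1(Y_R,Y_{R'}) \le c_W\, W_1(R,R')$.

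\textbf{Specialization to the ambiguity set.} Setting $R' = \hat R_n$ and taking the supremum over $R \in \A$ yields the first inequality. The bound $W_1(R,\hat R_n) \le \varepsilon$ on $\A$ then yields $c_W\varepsilon$. The containment in the closed ball follows immediately.

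\textbf{Main obstacle.} The delicate point is the factorization. It relies crucially on the product form $W \otimes R$, i.e.\ the same allocation draw $w$ is paired with both $r$ and $s$ independently of $\pi$. I will remark that this is where Assumption~\ref{ass:standing}(v) is used. For a general coupling $\Gamma$, one would instead need a conditional coupling argument and obtain only the cruder constant $B_\ast$.
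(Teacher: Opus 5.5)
Your proof is correct and follows the paper's argument: draw $w \sim W$ independently of an optimal coupling $\pi$ of $(R,R')$, apply $|(r-s)^\top w| \le \|r-s\|\,\|w\|_\ast$, factor the expectation by independence, conclude via the primal definition of $W_1$, and then specialize to $R' = \hat R_n$. Your extra checks ($c_W<\infty$ and $Y_R \in \PP_1(\R)$ via compactness of $\W$) are harmless additions. The product form is already built into the definition $Y_R = \IP_\#(W\otimes R)$, so no separate appeal to Assumption~\ref{ass:standing}(v) is needed.
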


\begin{proof}
Fix $R, R' \in \PP(\R^K)$ and let $\pi$ be an optimal coupling realizing
$W_1(R, R')$. Draw $w \sim W$ independently of $(r, r') \sim \pi$; set
$y = r^\top w$, $y' = r'^\top w$. Then $y \sim Y_R$, $y' \sim Y_{R'}$. By
the dual-norm inequality $|(r - r')^\top w| \leq \|r - r'\| \|w\|_\ast$,
$\E|y - y'| \leq c_W \cdot W_1(R, R')$. The primal definition of $W_1$ gives
$W_1(Y_R, Y_{R'}) \leq c_W W_1(R, R')$. Specializing and taking the supremum
yields the claim.
\end{proof}

\begin{theorem}[Joint continuity of the induced outcome law]
\label{thm:pushforward-continuity}
Let $\W \subset \R^K$ be compact, $W, W' \in \PP_1(\W)$,
$R, R' \in \PP_1(\R^K)$, and let $Y_{W,R} = \IP_\#(W \otimes R)$. Then
\[
W_1(Y_{W,R}, Y_{W',R'}) \leq m_1(R) W_1(W, W') + B_\W W_1(R, R'),
\]
where $m_1(R) = \int \|r\| dR(r)$ and $B_\W = \sup_{w \in \W} \|w\|$.
Consequently, if $W_n \to W$ and $R_n \to R$ in $W_1$ with
$\sup_n m_1(R_n) < \infty$, then $Y_{W_n, R_n} \to Y_{W,R}$ in $W_1$.
\end{theorem}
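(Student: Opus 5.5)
The plan is to insert the intermediate law $Y_{W',R} = \IP_\#(W'\otimes R)$ and apply the triangle inequality in $(\PP_1(\R),W_1)$:
\[
W_1(Y_{W,R},Y_{W',R'}) \le W_1(Y_{W,R},Y_{W',R}) + W_1(Y_{W',R},Y_{W',R'}).
\]
I will bound the first term by perturbing only the weight law with the return law $R$ held fixed. I will bound the second by perturbing only the return law with the weight law $W'$ held fixed. With this ordering the moment factor involves the fixed $R$ rather than $R'$, which is what the statement asserts. All three measures are in $\PP_1(\R)$: $\W$ is compact and $R,R'\in\PP_1$, so $\int|r^\top w|\,d(W\otimes R)\le B_\W m_1(R)<\infty$.

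\emph{Second term.} This is exactly Proposition~\ref{prop:wass-lift} with the weight law $W'$ held fixed. It gives $W_1(Y_{W',R},Y_{W',R'}) \le c_{W'}\,W_1(R,R')$ with $c_{W'}=\E_{w\sim W'}\|w\|_\ast$. Since $W'$ is supported in $\W$, we have $c_{W'}\le \sup_{w\in\W}\|w\|_\ast$. With the Euclidean ground norm (self-dual) used in the paper, this is $B_\W$. For a general norm one would simply read $B_\W$ as the dual-norm radius of $\W$.

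\emph{First term.} I will reuse the proof of Proposition~\ref{prop:wass-lift} with the roles of weights and returns swapped. Let $\pi\in\Pi(W,W')$ be an optimal coupling for $W_1(W,W')$; it exists because $\W$ is compact. Draw $(w,w')\sim\pi$ and, independently, $r\sim R$. Then $r^\top w\sim Y_{W,R}$ and $r^\top w'\sim Y_{W',R}$, so the pair $(r^\top w, r^\top w')$ is a coupling of the two laws. By Cauchy--Schwarz (or the dual-norm inequality), $|r^\top(w-w')|\le\|r\|\,\|w-w'\|$. Independence of $r$ from $(w,w')$ lets the expectation factor:
\[
\E|r^\top w - r^\top w'| \le \E\|r\|\;\E_\pi\|w-w'\| = m_1(R)\,W_1(W,W').
\]
The primal definition \eqref{eq:wasserstein-def} then gives the bound. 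The product structure $W\otimes R$ is essential here: without independence the expectation would not factor.

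\emph{Consequence.} Apply the inequality with $(W,R)$ the limit and $(W',R')=(W_n,R_n)$. This gives $W_1(Y_{W,R},Y_{W_n,R_n})\le m_1(R)W_1(W,W_n)+B_\W W_1(R,R_n)\to0$. With this ordering only $m_1(R)<\infty$ is needed. If the roles are reversed the factor becomes $m_1(R_n)$, and the hypothesis $\sup_n m_1(R_n)<\infty$ covers that case.

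The main obstacle is only bookkeeping. One must choose the intermediate law so that the moment factor attaches to the correct return law. One must also reconcile the norm in $m_1$, the dual norm in $c_{W'}$, and the Euclidean $B_\W$; taking the Euclidean norm throughout resolves all three at once.
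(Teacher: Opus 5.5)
Your proof is correct and is essentially the paper's argument. The paper takes independent couplings of $(W,W')$ and $(R,R')$, forms their product, and splits $w^\top r - w'^\top r'$ pointwise through $w'^\top r$; you split through the law $Y_{W',R}$ of that same intermediate quantity, using the triangle inequality for $W_1$ and Proposition~\ref{prop:wass-lift}, and obtain the identical bound. Your remark that the consequence needs only $m_1(R)<\infty$ is also correct, and the hypothesis $\sup_n m_1(R_n)<\infty$ holds automatically anyway, since $|m_1(R_n)-m_1(R)|\le W_1(R_n,R)$.
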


\begin{proof}
Take any couplings $(w, w')$ of $(W, W')$ and $(r, r')$ of $(R, R')$,
independent of one another, and form the product coupling. Then
$|w^\top r - w'^\top r'| \leq \|w - w'\| \|r\| + \|w'\| \|r - r'\|$
by the dual-norm inequality. Taking expectations,
$\E|w^\top r - w'^\top r'| \leq \E\|r\| \cdot \E\|w - w'\| + B_\W \cdot \E\|r - r'\|$.
The pair $(w^\top r, w'^\top r')$ is a coupling of $(Y_{W,R}, Y_{W',R'})$,
so the primal definition of $W_1$ on $\R$ gives the result.
\end{proof}

\begin{theorem}[Existence of DPO optimizers; Lipschitz-utility version]
\label{thm:dpo-existence}
Let $\W \subset \R^K$ be compact and let $R \in \PP_1(\R^K)$ be fixed. For
$Q \in \PP(\W)$, define $Y_{Q,R} := \IP_\#(Q \otimes R)$. Assume that
$\gamma \geq 0$, that $u: \R \to \R$ is $L_u$-Lipschitz and continuous, and
that $\rho$ is lower semicontinuous with respect to $W_1$-convergence on
$\PP_1(\R)$. Then the static product-form problem
\begin{equation}
\sup_{Q \in \PP(\W)} \left\{ \E_{w \sim Q, r \sim R}[u(r^\top w)]
- \gamma \rho(-Y_{Q,R}) \right\}
\label{eq:dpo-stochastic}
\end{equation}
admits an optimizer $Q^\star \in \PP(\W)$. Restricting to deterministic
allocations $Q = \delta_w$, the deterministic problem
$\sup_{w \in \W} \{\E_R[u(r^\top w)] - \gamma \rho(-r^\top w)\}$
also admits an optimizer. If $u$ is continuous but not Lipschitz, the same
conclusion holds under the stronger assumption that $R$ has compact support.
\end{theorem}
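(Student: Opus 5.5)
The plan is a Weierstrass argument: show that $\PP(\W)$ is compact and that the objective in \eqref{eq:dpo-stochastic} is upper semicontinuous on it. Since $\W$ is compact, Prokhorov's theorem makes $\PP(\W)$ weakly compact. Because $\W$ is bounded, weak convergence on $\PP(\W)$ coincides with $W_1$-convergence (Proposition~\ref{prop:complete}; first moments converge automatically on a bounded set). So $(\PP(\W),W_1)$ is a compact metric space. The deterministic set $\{\delta_w : w\in\W\}$ is the isometric image of $\W$ under $w\mapsto\delta_w$, because $W_1(\delta_w,\delta_{w'})=\|w-w'\|$. It is therefore a compact subset, and the deterministic problem will follow from the same argument restricted to this subset.

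For upper semicontinuity, take $Q_n\to Q$ in $W_1$. Theorem~\ref{thm:pushforward-continuity} with $R_n\equiv R$ gives
\[
W_1(Y_{Q_n,R},Y_{Q,R})\le m_1(R)\,W_1(Q_n,Q)\to 0,
\]
so $Q\mapsto Y_{Q,R}$ is $W_1$-continuous (indeed Lipschitz). Note also that $Y_{Q,R}\in\PP_1(\R)$, since its first moment is at most $B_\W m_1(R)$.

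The utility term equals $\int u\,dY_{Q,R}$, and $u$ is $L_u$-Lipschitz. By Proposition~\ref{prop:lip-int} it is Lipschitz in $Q$ with constant $L_u m_1(R)$. It is also finite, because $|u(y)|\le |u(0)|+L_u|y|$ and $Y_{Q,R}$ has a finite first moment.

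For the risk term, reflection $y\mapsto -y$ is a $W_1$-isometry on $\PP_1(\R)$, so $Q\mapsto\rho(-Y_{Q,R})$ is the composition of a continuous map with a lower semicontinuous $\rho$. It is therefore lower semicontinuous, and $-\gamma\rho(-Y_{Q,R})$ is upper semicontinuous since $\gamma\ge0$. The objective is thus a continuous function plus an upper semicontinuous one, hence upper semicontinuous on a compact space. Its supremum is attained provided the objective is not identically $-\infty$ and does not take the value $+\infty$. Under the standing convention that $\rho$ is real-valued, the objective is real-valued and the Weierstrass theorem gives $Q^\star$. If $\rho$ were allowed to take the value $+\infty$, I would add the remark that the maximum is still attained whenever the supremum is finite.

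In the non-Lipschitz case, $R$ has compact support, say $\supp R\subseteq\{\|r\|\le M\}$. Every $Y_{Q,R}$ is then supported in $[-B_\W M,\,B_\W M]$. On this compact interval $u$ is uniformly continuous and bounded. Weak convergence $Y_{Q_n,R}\Rightarrow Y_{Q,R}$, which is implied by $W_1$-convergence, therefore gives $\int u\,dY_{Q_n,R}\to\int u\,dY_{Q,R}$, and the rest of the argument is unchanged.

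The main obstacle is purely technical. One must check that $W_1$-semicontinuity of $\rho$ is compatible with the topology used for compactness; the bounded support of $\W$ is what makes weak convergence and $W_1$-convergence coincide. One must also handle integrability of $u$, which is covered by the first-moment bound above. Everything else is routine.
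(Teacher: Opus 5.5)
Your proposal is correct and follows the paper's own proof. The paper likewise uses $W_1$-compactness of $\PP(\W)$, continuity of the utility term, upper semicontinuity of $Q\mapsto-\gamma\rho(-Y_{Q,R})$ via Theorem~\ref{thm:pushforward-continuity} and lower semicontinuity of $\rho$, and then Weierstrass. For the utility term you go through Theorem~\ref{thm:pushforward-continuity} and Proposition~\ref{prop:lip-int}, whereas the paper optimally couples $w_n$ and $w$ directly; the two give the same estimate. Your explicit treatment of the deterministic restriction and of the compact-support, non-Lipschitz case fills in steps that the paper's proof leaves implicit.
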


\begin{proof}
Since $\W$ is compact, $\PP(\W)$ is compact under weak convergence, and on
a compact metric space weak convergence coincides with $W_1$-convergence.
Let $Q_n \to Q$ in $W_1$. Couple $w_n \sim Q_n$ and $w \sim Q$ optimally
and draw $r \sim R$ independently. Then
$|\E[u(r^\top w_n)] - \E[u(r^\top w)]| \leq L_u \E_R\|r\| \cdot \E\|w_n - w\| \to 0$.
Hence the expected-utility term is continuous in $Q$. By the continuity of
$Q \mapsto Y_{Q,R}$ in $W_1$ (Theorem~\ref{thm:pushforward-continuity}) and
lower semicontinuity of $\rho$, the map
$Q \mapsto -\gamma \rho(-Y_{Q,R})$ is upper semicontinuous. The objective
is therefore upper semicontinuous on the compact set $\PP(\W)$; Weierstrass
gives existence.
\end{proof}

\subsection{Taxonomy table}

\begin{table}[H]
\centering\footnotesize
\caption{Portfolio methodologies as points in $(W, R, P)$. ``$\delta$''
denotes a point mass.}
\label{tab:taxonomy}
\begin{tabular}{lllll}
\toprule
Method & $W$ & $R$ & $P$ & Description \\
\midrule
Markowitz                     & $\delta_w$ & $R_\theta$ (mean--cov.\ law) & $\delta_\theta$ & Deterministic params/weights \\
Mean--CVaR                    & $\delta_w$ & yes & $\delta_\theta$ & Risk functional of $R$ \\
Black--Litterman              & $\delta_w$ & Gaussian & posterior & Bayesian linear views \\
Bayesian                      & $\delta_w$ & predictive & posterior & Full posterior predictive \\
Robust                        & $\delta_w$ & ambiguity & $\delta_\theta$ & Worst case over $R \in \A$ \\
Wasserstein DRO               & $\delta_w$ & $W_p$-ball & $\delta_\theta$ & Ambiguity as transport ball \\
Chance-constrained            & $\delta_w$ & yes & $\delta_\theta$ & Probabilistic loss bound \\
Resampled efficiency          & $W$        & yes & posterior & Michaud-type resampling \\
HRP                           & $\delta_w$ & $\Sigma$ only & $\delta_\theta$ & Hierarchical recursive bisection \\
RA-HRP                        & $\delta_w$ & mean+cov & $\delta_\theta$ or post. & Return-adjusted HRP \\
Bayesian HRP                  & $W$        & Gaussian & posterior & Hierarchical stochastic tree \\
Dirichlet RL policy           & $W = \Dir(\alpha_\theta(s))$ & yes & $\delta_\theta$ & Stochastic policy class \\
Distributional RL             & $\delta_w$ per step & full return & $\delta_\theta$ & Learn Law(return) \\
\bottomrule
\end{tabular}
\end{table}

Three observations follow from Table~\ref{tab:taxonomy}. First, most of the
modern portfolio literature inhabits faces of the marginal triple where
exactly one of $W, R, P$ is non-trivial; interior cells where two or more
uncertainty sources are simultaneously non-trivial remain comparatively
underexplored. Second, several boundaries between cells are rigorous limits
when the limiting operation is specified precisely: Dirichlet policies
collapse to deterministic weights as concentration diverges; utility-tilted
posteriors approach optimistic or worst-case criteria as the tilt parameter
tends to infinity; and CVaR or chance-constrained formulations can be linked
through conservative penalty limits. A flat prior alone, however, does not
generically produce a worst-case robust objective. Third, the duality
results of Sections~\ref{sec:dro} and~\ref{sec:synthesis} establish
operational bridges between cells that look distinct, but these bridges
should be read as domination or calibration results rather than literal
equivalences.

%==============================================================================
% SECTION 4: BAYESIAN DPO
%==============================================================================
\section{Parameter distributions: Bayesian DPO}\label{sec:bayesian-dpo}

The Bayesian face of DPO elevates $P$ to a first-class object. Given
$r \mid \theta \sim R_\theta$ and prior $\theta \sim P_0$, the posterior
$P_n = P(\cdot \mid r_{1:n})$ defines the predictive
\begin{equation}
\Rpredn(A) = \int_\Theta R_\theta(A) \, dP_n(\theta), \quad A \in \B(\R^K).
\label{eq:predictive}
\end{equation}

\subsection{The Bayesian portfolio problem}

\begin{proposition}[Fubini reduction]
\label{prop:fubini}
Under integrability of $u(r^\top w)$ w.r.t.\ $P_n(\theta) \otimes R_\theta(r)$,
\[
\E_{\theta \sim P_n} \E_{r \sim R_\theta}[u(r^\top w)]
= \E_{r \sim \Rpredn}[u(r^\top w)].
\]
\end{proposition}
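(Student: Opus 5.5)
The identity is an application of Tonelli--Fubini to the joint measure on $\Theta\times\R^K$ built from the posterior and the conditional kernel. Since $\theta\mapsto R_\theta$ is Borel measurable (Assumption~\ref{ass:standing}(ii)), $\theta\mapsto R_\theta(A)$ is measurable for each $A\in\B(\R^K)$, so $(\theta,A)\mapsto R_\theta(A)$ is a Markov kernel. I therefore define the joint law
\[
\Lambda_n(d\theta,dr) := P_n(d\theta)\,R_\theta(dr)
\]
on $\Theta\times\R^K$, so that $\Lambda_n(C\times A)=\int_C R_\theta(A)\,dP_n(\theta)$. Its $r$-marginal is exactly $\Rpredn$ by \eqref{eq:predictive}. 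The phrase ``integrability with respect to $P_n(\theta)\otimes R_\theta(r)$'' in the statement should be read as integrability against $\Lambda_n$.

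First I prove the identity for indicators $g=\mathbf 1_A$. In that case the left side is $\int_\Theta R_\theta(A)\,dP_n(\theta)$ and the right side is $\Rpredn(A)$, and these agree by definition. Linearity extends the identity to simple functions. Monotone convergence then extends it to all nonnegative Borel $g$; this step also shows that $\theta\mapsto\int g\,dR_\theta$ is measurable, which is the kernel version of Tonelli.

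Next I take $g(r)=u(r^\top w)$ with $w$ fixed. This $g$ is continuous, hence Borel. I split $g=g^+-g^-$ and apply the nonnegative case to each part. The integrability hypothesis makes $\int\!\int g^{\pm}\,dR_\theta\,dP_n=\int g^\pm\,d\Rpredn$ finite. Hence $\int g^\pm\,dR_\theta<\infty$ for $P_n$-a.e.\ $\theta$, the inner expectation $\E_{r\sim R_\theta}[u(r^\top w)]$ is defined $P_n$-a.s., and subtracting the two identities gives the claim.

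The only delicate point is this measurability and a.s.-finiteness bookkeeping, since $u$ may be unbounded below and the inner integral could be infinite on a $P_n$-null set. The monotone-class argument for nonnegative functions handles it, so no genuine obstacle remains.
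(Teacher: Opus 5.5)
Your proof is correct. It is the Fubini--Tonelli argument for the kernel product $P_n(d\theta)\,R_\theta(dr)$ that the proposition's name indicates, extended from indicators to nonnegative and then integrable $u(r^\top w)$. The paper states this result without writing out a proof, so beyond confirming your measurability and a.s.-finiteness bookkeeping there is nothing further to compare.
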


\begin{proposition}[Gaussian--NIW predictive is multivariate t]
\label{prop:niw-predictive}
Let $r_1, \ldots, r_n \mid \mu, \Sigma \overset{iid}{\sim} \N(\mu, \Sigma)$
with prior $(\mu, \Sigma) \sim \NIW(\mu_0, \lambda_0, \Psi_0, \nu_0)$,
$\nu_0 > K - 1$. Then the posterior is
$\NIW(\mu_n, \lambda_n, \Psi_n, \nu_n)$ with
$\lambda_n = \lambda_0 + n$, $\nu_n = \nu_0 + n$,
\[
\mu_n = \frac{\lambda_0 \mu_0 + n \bar r}{\lambda_n}, \quad
\Psi_n = \Psi_0 + S + \frac{\lambda_0 n}{\lambda_n}(\bar r - \mu_0)(\bar r - \mu_0)^\top,
\]
and the predictive is multivariate Student's $t$ with $\nu_n - K + 1 > 0$
degrees of freedom:
\begin{equation}
\Rpredn = t_K\!\left(\nu_n - K + 1, \mu_n,
\frac{\lambda_n + 1}{\lambda_n (\nu_n - K + 1)} \Psi_n\right).
\label{eq:niw-pred}
\end{equation}
\end{proposition}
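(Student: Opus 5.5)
The plan is the classical conjugate computation for the Normal--Inverse-Wishart family, done in two stages: first the posterior, then the predictive. For the posterior we will write the joint density of the prior and the likelihood and complete the square in $\mu$.

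\textbf{Step 1: the likelihood.} Using $\sum_i (r_i-\mu)(r_i-\mu)^\top = S + n(\bar r-\mu)(\bar r-\mu)^\top$, with $S=\sum_i (r_i-\bar r)(r_i-\bar r)^\top$, the likelihood becomes
\[
|\Sigma|^{-n/2}\exp\!\bigl(-\tfrac12\tr(\Sigma^{-1}S) - \tfrac n2 (\bar r-\mu)^\top\Sigma^{-1}(\bar r-\mu)\bigr).
\]

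\textbf{Step 2: the prior.} The prior density is proportional to
\[
|\Sigma|^{-(\nu_0+K+2)/2}\exp\!\bigl(-\tfrac12\tr(\Sigma^{-1}\Psi_0)-\tfrac{\lambda_0}{2}(\mu-\mu_0)^\top\Sigma^{-1}(\mu-\mu_0)\bigr).
\]

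\textbf{Step 3: combining the quadratic forms in $\mu$.} The two quadratic forms combine through the identity
\[
\lambda_0(\mu-\mu_0)(\mu-\mu_0)^\top + n(\mu-\bar r)(\mu-\bar r)^\top = \lambda_n(\mu-\mu_n)(\mu-\mu_n)^\top + \tfrac{\lambda_0 n}{\lambda_n}(\bar r-\mu_0)(\bar r-\mu_0)^\top .
\]
This is verified by expanding both sides and comparing the coefficients of $\mu\mu^\top$, of the cross terms, and of the constant term. The resulting kernel is exactly $\NIW(\mu_n,\lambda_n,\Psi_n,\nu_n)$: the power of $|\Sigma|$ increases by $n/2$, which gives $\nu_n=\nu_0+n$, and the trace terms collect into $\Psi_n$.

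\textbf{Step 4: the predictive.} We integrate $\N(r;\mu,\Sigma)$ against the posterior in two stages. First we integrate out $\mu\mid\Sigma\sim\N(\mu_n,\Sigma/\lambda_n)$. By Gaussian convolution this gives $r\mid\Sigma\sim \N(\mu_n, \tfrac{\lambda_n+1}{\lambda_n}\Sigma)$. Next we integrate $\Sigma\sim\mathrm{IW}(\Psi_n,\nu_n)$. The integrand is an unnormalised inverse-Wishart kernel with parameters $\nu_n+1$ and $\Psi_n + \tfrac{\lambda_n}{\lambda_n+1}(r-\mu_n)(r-\mu_n)^\top$. Its normaliser is proportional to
\[
\bigl|\Psi_n + c\,(r-\mu_n)(r-\mu_n)^\top\bigr|^{-(\nu_n+1)/2}, \qquad c=\tfrac{\lambda_n}{\lambda_n+1}.
\]
By the matrix determinant lemma this equals
\[
|\Psi_n|^{-(\nu_n+1)/2}\bigl(1+c\,(r-\mu_n)^\top\Psi_n^{-1}(r-\mu_n)\bigr)^{-(\nu_n+1)/2}.
\]

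\textbf{Step 5: matching the Student-$t$ form.} Set $\nu'=\nu_n-K+1$, so that the exponent $(\nu_n+1)/2$ equals $(\nu'+K)/2$. Then rewrite
\[
c\,(r-\mu_n)^\top\Psi_n^{-1}(r-\mu_n)=\tfrac{1}{\nu'}(r-\mu_n)^\top\Bigl(\tfrac{\lambda_n+1}{\lambda_n\nu'}\Psi_n\Bigr)^{-1}(r-\mu_n).
\]
This is the $t_K(\nu',\mu_n,\tfrac{\lambda_n+1}{\lambda_n\nu'}\Psi_n)$ kernel of \eqref{eq:niw-pred}. Positivity of the degrees of freedom, $\nu'>0$, follows from $\nu_0>K-1$ and $n\ge0$. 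Since both sides are probability densities, matching kernels suffices; we need not track the Gamma-function constants.

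\textbf{Main obstacle.} The main difficulty is bookkeeping of the exponent on $|\Sigma|$ and of the inverse-Wishart normalising constant. This is what ensures that the degrees of freedom come out as $\nu_n-K+1$ rather than $\nu_n$. We will handle it carefully by using the inverse-Wishart normaliser $\propto|\Psi|^{\nu/2}$.
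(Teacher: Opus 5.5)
Your proof is correct. It is the standard conjugate computation:
\begin{itemize}
\item completing the square in $\mu$ gives the NIW posterior;
\item integrating out $\mu$ by Gaussian convolution, then integrating $\Sigma$ against an inverse-Wishart kernel with $\nu_n+1$ degrees of freedom, and applying the matrix determinant lemma, gives the Student-$t$ kernel with $\nu_n-K+1$ degrees of freedom and scale $\frac{\lambda_n+1}{\lambda_n(\nu_n-K+1)}\Psi_n$.
\end{itemize}

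The paper states this proposition without proof, treating it as textbook background, so there is no competing route to compare. Your derivation supplies exactly the computation the paper implicitly relies on.

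Two points are worth making explicit. First, the NIW parameterization presumes $\lambda_0>0$ and $\Psi_0\succ 0$. Hence $\Psi_n\succ 0$, which you need both for properness of the posterior and for $\Psi_n^{-1}$ in the determinant-lemma step. Second, $S$ must be the scatter matrix $\sum_i (r_i-\bar r)(r_i-\bar r)^\top$. The paper leaves $S$ undefined, and your Step 1 supplies the correct definition.
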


\begin{remark}[Multivariate t is MVE; preservation of universality]
The predictive in \eqref{eq:niw-pred} is elliptical, hence MVE in the sense
of \citet{benveniste2024untangling}: the Bayesian-optimal portfolio
coincides with the mean--variance allocation computed from the predictive
mean and scale matrix, for any standard concave utility. Bayesian DPO under
Gaussian--NIW conjugacy inherits the universality of deterministic
Markowitz despite heavier tails. For the mean--variance interpretation to
be well-defined, we additionally require $\nu_n - K + 1 > 2$ (i.e.\
$\nu_n > K + 1$), ensuring that the predictive multivariate $t$ has finite
second moments; the existence of the density only requires $\nu_n - K + 1 > 0$.
\end{remark}

\subsection{Black--Litterman as a special case}

\begin{sloppypar}
Black--Litterman fits the present framework as Bayesian DPO with
$(W, R, P) = (\delta, \text{Gaussian}, \text{Gaussian})$. The prior is
$\mu \sim \N(\pi, \tau \Sigma)$ and the view equation is
$P \mu = q + \epsilon$ with $\epsilon \sim \N(0, \Omega)$. The posterior is
$\mu \mid \text{views} \sim \N(\mu^{BL}, \Sigma^{BL})$ with the standard
formulas of \citet{black1992global}. The predictive law is
$\N(\mu^{BL},\, \Sigma + \Sigma^{BL})$, and the associated
mean--variance portfolio is
$w^{BL} = (\gamma \Sigma^{\mathrm{eff}})^{-1} \mu^{BL}$.
\end{sloppypar}

\subsection{Utility-tilted posteriors}

When the decision-relevant quantity is the portfolio rather than the
parameters, \citet{kadane1995prime} and \citet{polson2000bayesian} advocate
the utility-tilted posterior
\begin{equation}
P_n^u(d\theta; w) \propto \exp\!\left(\lambda U(w; R_\theta)\right) P_n(d\theta),
\label{eq:tilted}
\end{equation}
with $U(w; R_\theta) = \E_{r \sim R_\theta}[u(r^\top w)]$ and sharpness
$\lambda > 0$.

\begin{assumption}[Uniform exponential envelope]\label{ass:exponential-envelope}
For each $\bar\lambda < \infty$,
$\E_{P_n}[\exp\{\bar\lambda U_\#(\theta)\}] < \infty$, where
$U_\#(\theta) := \sup_{w \in \W} |U(w; R_\theta)|$.
Equivalently, for every compact $\Lambda \subset [0, \infty)$ there exists
$M_\Lambda \in L^1(P_n)$ such that, for all $\lambda \in \Lambda$ and all
$w \in \W$,
$\exp\{\lambda |U(w; R_\theta)|\} \leq M_\Lambda(\theta)$ $P_n$-a.s.
\end{assumption}

\begin{theorem}[Existence and limits of the tilted portfolio]\label{thm:tilted}
Let $u$ be concave and non-decreasing, let $\W$ be compact and convex, and
suppose that $w \mapsto U(w; R_\theta)$ is continuous for $P_n$-a.e.\
$\theta$. Assume the uniform exponential envelope of
Assumption~\ref{ass:exponential-envelope}. For $\lambda > 0$, define
\[
V_\lambda^+(w) := \frac{1}{\lambda} \log \E_{P_n}\!\left[\exp\{\lambda U(w; R_\theta)\}\right],
\quad
V_\lambda^-(w) := -\frac{1}{\lambda} \log \E_{P_n}\!\left[\exp\{-\lambda U(w; R_\theta)\}\right],
\]
with $V_0^+(w) = V_0^-(w) := \E_{P_n}[U(w; R_\theta)]$. Then, for every
$\lambda \geq 0$, both $V_\lambda^+$ and $V_\lambda^-$ are continuous on
$\W$ and admit maximizers. For each fixed $w$, $\lambda \mapsto V_\lambda^+(w)$
is non-decreasing and $\lambda \mapsto V_\lambda^-(w)$ is non-increasing;
$V_\lambda^-(w) \leq V_0^\pm(w) \leq V_\lambda^+(w)$ for all $\lambda \geq 0$.
If the envelope condition holds for all $\lambda > 0$, then the large-tilt
limits hold in the extended-real sense:
$\lim_{\lambda \to \infty} V_\lambda^+(w) = \esssup_{P_n} U(w; R_\theta)$,
$\lim_{\lambda \to \infty} V_\lambda^-(w) = \essinf_{P_n} U(w; R_\theta)$.
\end{theorem}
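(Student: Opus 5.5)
The argument treats $V_\lambda^\pm(w)$ as cumulant-generating (log-moment-generating) functionals of the scalar random variable $X_w(\theta) := U(w;R_\theta)$ under $P_n$. All four assertions then follow from standard properties of such functionals. Write $\Lambda_w(\lambda) := \log \E_{P_n}[\exp\{\lambda X_w\}]$. We have $V_\lambda^+(w) = \Lambda_w(\lambda)/\lambda$ and $V_\lambda^-(w) = -\Lambda_w(-\lambda)/\lambda$. Assumption~\ref{ass:exponential-envelope} makes $\Lambda_w$ finite on a neighbourhood of every compact $\lambda$-range, uniformly in $w$. Since $|X_w| \le U_\#$, it also makes $X_w$ integrable, so $V_0^\pm$ is well defined.

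\emph{Continuity and existence.} Fix $\lambda \ge 0$ and let $w_k \to w$ in $\W$. For $P_n$-a.e.\ $\theta$ we have $\exp\{\pm\lambda X_{w_k}(\theta)\} \to \exp\{\pm\lambda X_w(\theta)\}$ by the assumed continuity of $w \mapsto U(w;R_\theta)$. These integrands are dominated by $M_{\{\lambda\}}(\theta) \in L^1(P_n)$, so dominated convergence gives continuity of the expectations. The expectations are strictly positive, so composing with $\log$ and scaling by $\pm 1/\lambda$ preserves continuity. For $\lambda = 0$ the same argument applies with dominating function $U_\#$, which is integrable because $U_\# \le \bar\lambda^{-1} e^{\bar\lambda U_\#}$. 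Compactness of $\W$ and Weierstrass then give maximizers. Concavity of $u$ and convexity of $\W$ are not needed here.

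\emph{Monotonicity and sandwich.} The cleanest route is Jensen/Lyapunov. For $0 < \lambda < \lambda'$, apply Jensen to the convex map $x \mapsto x^{\lambda'/\lambda}$ and the positive variable $e^{\lambda X_w}$:
\[
\bigl(\E e^{\lambda X_w}\bigr)^{\lambda'/\lambda} \le \E e^{\lambda' X_w}.
\]
Taking logs and dividing by $\lambda'$ shows that $V^+_\lambda(w)$ is non-decreasing in $\lambda$. Jensen with $\log$ gives $\Lambda_w(\lambda) \ge \lambda \E X_w$, hence $V_\lambda^+ \ge V_0$. The statements for $V^-$ follow by applying the same inequalities to $-X_w$, since $V_\lambda^-(w) = -V_\lambda^+[-X_w]$.

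\emph{Large-tilt limits.} This is the only step needing care, because $\esssup X_w$ may be $+\infty$. Put $s := \esssup_{P_n} X_w \in (-\infty, +\infty]$. The upper bound $V^+_\lambda \le s$ is immediate when $s < \infty$, since $e^{\lambda X_w} \le e^{\lambda s}$ a.s. For the lower bound, take any $c < s$. The set $A_c = \{X_w > c\}$ has $p_c := P_n(A_c) > 0$, so
\[
V_\lambda^+(w) \ge \lambda^{-1}\log\bigl(p_c e^{\lambda c}\bigr) = c + \lambda^{-1}\log p_c \to c.
\]
Letting $c \uparrow s$ gives $\liminf_\lambda V_\lambda^+(w) \ge s$, which also covers the case $s = +\infty$ in the extended-real sense. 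The envelope condition for all $\lambda > 0$ is used only to guarantee finiteness of $V_\lambda^+$ along the sequence. Monotonicity makes the limit exist. The $\essinf$ limit for $V^-$ follows by the same symmetry $X_w \mapsto -X_w$.
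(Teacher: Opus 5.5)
Your proof is correct and takes the same route as the paper's three-line argument: dominated convergence under the envelope for continuity, monotonicity of entropic certainty equivalents, and the log-Laplace limit applied to $\pm U(w;R_\theta)$. You fill in the details the paper leaves implicit, namely the Jensen/Lyapunov step for monotonicity and the elementary argument via the set $A_c$ for the extended-real limit. Your remark that concavity of $u$ and convexity of $\W$ play no role here is also accurate.
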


\begin{proof}
Continuity follows from dominated convergence under the envelope.
Monotonicity is the standard property of entropic certainty equivalents.
The limits as $\lambda \to \infty$ are the log-Laplace principle applied to
$X = \pm U(w; R_\theta)$.
\end{proof}

% PATCH 4: REMARK 4.6 ADDITIONS AND BOUNDARY-LIMIT NOTE
\begin{remark}[Verification for standard utilities]\label{rmk:utility-envelope}
Bounded-above utility controls only the upper tail relevant to $V_\lambda^+$;
it does not by itself verify the two-sided envelope in
Assumption~\ref{ass:exponential-envelope}. For example, CARA utility
$u(x) = -e^{-ax}$ is bounded above by zero, but it can be arbitrarily
negative when $x$ is sufficiently negative. Therefore the robust tilt
$V_\lambda^-$ still requires lower-tail control. A convenient sufficient
condition is any one of: (i) returns are uniformly bounded on the posterior
support; (ii) the parameter posterior is restricted to a compact set on
which $\sup_{w \in \W} \E_{R_\theta}[e^{c|r^\top w|}] < \infty$ for the
required range of $c$; (iii) the problem is solved after return truncation.
For log and CRRA utilities, the domain restriction $1 + r^\top w > 0$ must
also be enforced $R_\theta$-a.s.

\smallskip
Finally, the large-tilt limits in Theorem~\ref{thm:tilted} are extended-real
limits. In unbounded-support models, $\esssup_{P_n} U(w; R_\theta)$ may be
$+\infty$ or $\essinf_{P_n} U(w; R_\theta)$ may be $-\infty$, even when
$V_\lambda^\pm(w)$ is finite for each finite $\lambda$. The theorem
identifies the limiting criterion but does not imply that the limiting
optimizer is finite or attained without additional compactness or tail
assumptions.
\end{remark}

\subsection{Posterior concentration and asymptotic plug-in equivalence}

\begin{assumption}[Posterior concentration]\label{ass:posterior-concentration}
There exist $M > 0$ and $\theta^\star \in \Theta$ such that for all
sufficiently large $n$, $\E_{P_n}[\|\theta - \theta^\star\|] \leq M n^{-1/2}$.
\end{assumption}

In a correctly specified, regular parametric model with a non-singular Fisher
information and a prior that places positive mass in a neighbourhood of the
true parameter, Assumption~\ref{ass:posterior-concentration} follows from the
Bernstein--von Mises theorem together with standard uniform integrability of
the posterior (see, e.g., \citealp{vandervaart2000}, Ch.~10). Under
misspecification, $\theta^\star$ is the KL-projection of the data-generating
distribution onto the parametric family; \citet{kleijn2012bernstein}
establish a Bernstein--von Mises-type theorem at $\theta^\star$ in Hellinger
or KL contraction, but obtaining the displayed posterior first-moment bound
$\E_{P_n}\|\theta - \theta^\star\| = O(n^{-1/2})$ additionally requires
uniform integrability of $\|\theta - \theta^\star\|$ under $P_n$. We therefore
treat Assumption~\ref{ass:posterior-concentration} as a hypothesis to be
verified case by case: in well-specified Gaussian--NIW models it follows
directly from the conjugate posterior, and in misspecified models it is
known to hold for sufficiently regular parameterizations with subgaussian
score, with $M$ depending on the model bias and the curvature of the KL
projection.

\begin{remark}[Failure modes of the parametric rate]
Four failure modes worth flagging: (1) rank-deficient samples ($n < K$);
(2) heavy-tailed returns with infinite fourth moment; (3) persistent
(near-unit-root) processes; (4) structurally misspecified covariance. The
data-driven radius $\varepsilon_n$ in
Theorem~\ref{thm:bayes-dro-domination} is computed from the posterior and
self-corrects.
\end{remark}

% PATCH 5: PROP 4.9 DEMOTED TO REMARK
\begin{remark}[Asymptotic plug-in equivalence as a special case of Theorem~\ref{thm:bayes-dro-domination}]
\label{rmk:plug-in-equivalence}
The asymptotic plug-in equivalence stated in earlier versions of this paper
as a stand-alone proposition is a quantitative specialization of the
Bayesian--DRO domination of Theorem~\ref{thm:bayes-dro-domination}: under
Assumption~\ref{ass:posterior-concentration} and a local Lipschitz model
map, the first-order radius $\varepsilon_n^W = O(n^{-1/2})$ controls the
plug-in--predictive gap on Lipschitz losses, and combining this with the
standard posterior bias bound
$\|\hat\theta - \theta^\star\| = O_p(n^{-1/2})$ yields the rate
$|\E_{\Rpredn}[\ell] - \E_{R_{\theta^\star}}[\ell]| = O(n^{-1/2})$ for
Lipschitz losses $\ell$. We therefore do not state it as an independent
proposition.
\end{remark}

%==============================================================================
% SECTION 5: COHERENT RISK AND TAIL MODELING
%==============================================================================
\section{Return distributions: coherent risk and tail modeling}\label{sec:returns}

When $R$ is non-trivial and $P$ is degenerate, DPO reduces to optimizing a
functional of $R$. The key analytical device is the dual representation of
the chosen risk measure.

\subsection{CVaR and the Rockafellar--Uryasev theorem}

\begin{theorem}[\citealp{rockafellar2000optimization}]\label{thm:ru}
Let $L(w, r) = -r^\top w$ and
\begin{equation}
F_\alpha(w, \eta) = \eta + \frac{1}{\alpha} \E_{r \sim R}[(L(w, r) - \eta)_+].
\label{eq:ru}
\end{equation}
Then $\CVaR_\alpha(L(w, \cdot)) = \min_{\eta \in \R} F_\alpha(w, \eta)$, and
the minimizer is $\eta^\star(w) = \VaR_\alpha(L(w, \cdot))$. The functional
$F_\alpha$ is jointly convex in $(w, \eta)$ for the linear loss.
\end{theorem}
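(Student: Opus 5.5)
The proof is the standard Rockafellar--Uryasev argument: fix $w$, treat $F_\alpha(w,\cdot)$ as a one-dimensional convex function of $\eta$, identify its minimizers through its one-sided derivatives, and then check that the minimum value equals $\CVaR_\alpha$ via the quantile representation. Write $L = L(w,r)$ and let $G(\ell) = \Pr(L \le \ell)$ be its distribution function. Finiteness of $F_\alpha$ follows from $R \in \PP_1(\R^K)$ (Assumption~\ref{ass:standing}(ii)), since $|L| \le \|r\|\,\|w\|_\ast$. The map $\eta \mapsto (L-\eta)_+$ is convex and $1$-Lipschitz, so dominated convergence lets us differentiate under the expectation. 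This gives the right and left derivatives
\[
\partial^+_\eta F_\alpha = 1 - \tfrac{1}{\alpha}\Pr(L > \eta), \qquad \partial^-_\eta F_\alpha = 1 - \tfrac{1}{\alpha}\Pr(L \ge \eta).
\]
Hence $\eta$ minimizes $F_\alpha(w,\cdot)$ if and only if $\Pr(L > \eta) \le \alpha \le \Pr(L \ge \eta)$. Equivalently, $\eta$ lies in the closed interval of upper-$\alpha$ quantiles of $L$. Its left endpoint is exactly $\VaR_\alpha(L) = \inf\{\ell : G(\ell) \ge 1-\alpha\}$ under the paper's convention. Coercivity comes from $F_\alpha \to \infty$ as $|\eta| \to \infty$: the slope is $1$ for large $\eta$ and $1 - 1/\alpha < 0$ for very negative $\eta$. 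So the minimum is attained and $\eta^\star(w) = \VaR_\alpha(L)$ is a minimizer (the minimizer when the quantile is unique).

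Next I evaluate $F_\alpha(w,\eta^\star)$. Represent $L = q(U)$ with $U$ uniform on $(0,1)$ and $q(u) = \VaR_u(L)$, which is non-increasing in $u$. Then
\[
\E[(L-\eta^\star)_+] = \int_0^1 (\VaR_u(L) - \eta^\star)_+\,du = \int_0^\alpha (\VaR_u(L) - \eta^\star)\,du.
\]
The second equality holds because $\VaR_u(L) \ge \eta^\star$ for $u < \alpha$ and $\VaR_u(L) \le \eta^\star$ for $u > \alpha$, by monotonicity of the quantile. Substituting into \eqref{eq:ru}, the $\eta^\star$ terms cancel and
\[
F_\alpha(w,\eta^\star) = \alpha^{-1}\int_0^\alpha \VaR_u(L)\,du = \CVaR_\alpha(L),
\]
matching the definition in the canonical-measures example.

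Joint convexity in $(w,\eta)$ is immediate. For each fixed $r$, the map $(w,\eta) \mapsto -r^\top w - \eta$ is affine, $x \mapsto x_+$ is convex and non-decreasing, and so the composition is convex. Taking expectations preserves convexity, and adding $\eta$ preserves it too.

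The main obstacle is the atom case, where $G$ jumps across level $1-\alpha$ and the quantile is non-unique. There one must check both that the derivative characterization still places $\VaR_\alpha$ in the minimizing set and that the quantile-integral identity is insensitive to the choice within the flat piece. The one-sided derivative argument and the monotone rearrangement $L = q(U)$ handle both points without any continuity assumption on $G$, so I would state the result carefully as ``$\VaR_\alpha$ is a minimizer, unique when the quantile is unique.''
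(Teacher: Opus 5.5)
Your argument is correct. The paper gives no proof of its own here and simply attributes the result to \citet{rockafellar2000optimization}, so the relevant comparison is with that source.

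The 2000 paper assumes a continuous loss distribution. Under that assumption $F_\alpha(w,\cdot)$ is differentiable and the minimizer solves a first-order equation. Your route is the general-distribution version of the argument, which the same authors gave in later work. You use the one-sided derivatives to characterize minimizers by $\Pr(L>\eta)\le\alpha\le\Pr(L\ge\eta)$, and then evaluate via the quantile representation $L\overset{d}{=}\VaR_U(L)$.

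That extra generality matters here. The paper immediately applies the theorem to finitely supported laws, namely the scenario LP \eqref{eq:cvar-lp} and the empirical $\hat R_n$ in Theorem~\ref{thm:wcvar-dro}, where the continuity hypothesis fails. Your route also evaluates the minimum directly against the paper's definition $\CVaR_\alpha=\alpha^{-1}\int_0^\alpha\VaR_u\,du$ rather than through a conditional tail expectation. Your qualification that $\VaR_\alpha(L)$ is the smallest minimizer, unique only when the upper-$\alpha$ quantile is unique, is the correct reading of the statement's ``the minimizer''.

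Two small points need fixing.
\begin{itemize}
\item Your closing paragraph misplaces the source of non-uniqueness. It does not come from $G$ jumping across level $1-\alpha$; a jump across that level pins the minimizer down uniquely. It comes from $G$ being flat at height exactly $1-\alpha$ on a nondegenerate interval. Atoms matter for a different reason: the formula $\E[L\mid L\ge\VaR_\alpha(L)]$ can then differ from $\CVaR_\alpha(L)$. Your quantile representation sidesteps that issue.
\item The slope of $F_\alpha(w,\cdot)$ only tends to $1$ as $\eta\to\infty$ unless $L$ is bounded above. You do not need coercivity anyway, since the subgradient condition already exhibits $\VaR_\alpha(L)$ as a minimizer.
\end{itemize}
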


For finitely-supported $R = \tfrac{1}{S} \sum_s \delta_{r_s}$, mean--CVaR
optimization is the LP
\begin{equation}
\min_{w \in \W, \eta \in \R, z \in \R^S_{\geq 0}}
\eta + \frac{1}{\alpha S} \sum_s z_s
\quad \text{s.t.} \quad z_s \geq -r_s^\top w - \eta.
\label{eq:cvar-lp}
\end{equation}

\subsection{Kusuoka and distortion representations}

\begin{theorem}[Kusuoka representation; \citealp{kusuoka2001law}]\label{thm:kusuoka}
Let $\rho: L^\infty \to \R$ be law-invariant, coherent on losses, and
satisfy the Fatou property on a non-atomic probability space. Then there
exists a non-empty convex weakly closed set $\M$ of probability measures on
$(0, 1]$ such that
\begin{equation}
\rho(L) = \sup_{\varphi \in \M} \int_{(0, 1]} \CVaR_u(L) \, d\varphi(u).
\label{eq:kusuoka}
\end{equation}
\end{theorem}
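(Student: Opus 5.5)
The plan is to reduce the statement to the Fenchel--Moreau dual representation (Theorem~\ref{thm:dual-rep}), use law invariance to replace the dual set by a set of quantile weights, and then rewrite each such weighted quantile integral as a mixture of CVaRs. I work throughout on the non-atomic space, so every law on $\R$ is realized and law invariance is meaningful.

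\textbf{Step 1 (dual set in density form).} By Theorem~\ref{thm:dual-rep}, $\rho(L)=\sup_{Q\in\Q}\E_Q[L]=\sup_{Z\in\mathcal D}\E[ZL]$. Here $\mathcal D=\{dQ/d\Pr:Q\in\Q\}\subset L^1$ is a convex, weak$^*$-closed set of nonnegative densities with $\E Z=1$.

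\textbf{Step 2 (law invariance gives a quantile form).} Law invariance means $\rho(L)=\rho(L\circ\tau)$ for every measure-preserving transformation $\tau$ of the non-atomic space. I would use this to show that $\mathcal D$ can be replaced by its closure under rearrangements, i.e.\ $\mathcal D$ is rearrangement-invariant. The Hardy--Littlewood inequality then gives $\sup_{Z\sim Z_0}\E[ZL]=\int_0^1 q_{Z_0}(1-t)\,q_L(1-t)\,dt$, using the comonotone rearrangement with decreasing quantile functions. Writing $g:=q_{Z_0}(1-\cdot)$, which is nonincreasing on $(0,1]$, nonnegative, and integrates to one, I obtain
\[
\rho(L)=\sup_{g\in\mathcal G}\int_0^1 g(t)\VaR_t(L)\,dt,
\]
where $\mathcal G$ is the set of decreasing rearrangements of densities in $\mathcal D$.

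\textbf{Step 3 (CVaR mixtures).} Any nonincreasing, right-continuous density $g$ on $(0,1]$ can be written as $g(t)=\int_{[t,1]}u^{-1}\,d\varphi(u)$. The measure $\varphi$ is given by $d\varphi(u)=u\,d(-g)(u)$ on $(0,1)$, together with an atom $g(1^-)$ at $u=1$, and it is a probability measure because $\int g=1$. Fubini then gives
\[
\int_0^1 g(t)\VaR_t(L)\,dt=\int_{(0,1]}\frac1u\int_0^u\VaR_t(L)\,dt\,d\varphi(u)=\int\CVaR_u(L)\,d\varphi(u),
\]
which uses the convention $\CVaR_u=u^{-1}\int_0^u\VaR_t$. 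I take $\M$ to be the image of $\mathcal G$ under $g\mapsto\varphi$. It is convex because the map is affine. After passing to the closed convex hull, it is weakly closed and the supremum is unchanged: $\varphi\mapsto\int\CVaR_u(L)\,d\varphi$ is continuous for bounded $L$, since $u\mapsto\CVaR_u(L)$ is bounded and continuous on $(0,1]$.

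\textbf{Main obstacle.} The delicate step is Step 2, namely justifying that the supremum over $\mathcal D$ equals the supremum over rearrangements of $\mathcal D$. The approach is to show that the representing set of a law-invariant Fatou risk measure is itself rearrangement-invariant. Suppose $\E[Z L]>\rho(L)$ for some $L$ and some rearrangement $Z$ of an element $Z_0\in\mathcal D$. Transport $L$ back by a measure-preserving map that approximately carries $Z$ to $Z_0$. On a non-atomic standard space such maps exist in the $L^1$-approximate sense, by Ryff's theorem or through a finite-partition approximation. This produces $L'\overset{d}{=}L$ with $\E[Z_0L']>\rho(L')$, which contradicts the dual representation. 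I expect the approximation, and the Fatou/weak$^*$ closure needed to pass to the limit, to carry most of the technical weight; everything else is Hardy--Littlewood and Fubini.
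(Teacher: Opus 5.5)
The paper gives no proof of this statement; it is quoted from \citet{kusuoka2001law}. Your outline follows the standard textbook argument (as in \citet{follmer2011stochastic}). It goes from the dual representation, through law invariance and Hardy--Littlewood, to $\rho(L)=\sup_{Z\in\mathcal D}\int_0^1 q_Z(s)\,q_L(s)\,ds$; it then uses Fubini to write each nonincreasing quantile weight as a CVaR mixture, and finally takes a weakly closed convex hull. Steps~1 and~3 are sound. Two small remarks. First, your formula $\int_{[t,1]}u^{-1}\,d\varphi(u)$ returns the left-continuous version $g(t^-)$, which is harmless because only $dt$-a.e.\ values enter. Second, $\mathcal G$ need not be convex, since decreasing rearrangement is not affine. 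Your passage to the closed convex hull makes this moot, given the weak continuity of $\varphi\mapsto\int\CVaR_u(L)\,d\varphi(u)$ that you correctly note.

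The step that needs repair is Step~2. You replace law invariance by the weaker property $\rho(L)=\rho(L\circ\tau)$ for measure-preserving $\tau$. That forces an automorphism-approximation argument, which you only sketch and which, as you say, needs a standard space. The statement is for an arbitrary atomless space, and there the argument can fail outright. Take Lebesgue measure on $[0,\tfrac12]$ glued to half the coin-tossing measure on $\{0,1\}^{\omega_1}$. Then $2\mathbf 1_A$ and $2\mathbf 1_B$, for the two halves $A,B$, are equidistributed densities. Yet no automorphism brings one $L^1$-close to the other, because every positive-measure subset of $B$ has a non-separable measure algebra.

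The hypothesis actually gives invariance under equality in law, and one lemma then removes the obstacle. On an atomless space every $Z_0$ can be written $Z_0=q_{Z_0}(U)$ a.s.\ for some uniform $U$. You already need this lemma, applied to $L$, to attain equality in your Hardy--Littlewood step. Apply it to $Z_0$ instead and set $L':=q_L(U)$. Then $L'\overset{d}{=}L$ is comonotone with $Z_0$, so $\int_0^1 q_{Z_0}q_L=\E[Z_0L']\le\rho(L')=\rho(L)$ for every $Z_0\in\mathcal D$. Hardy--Littlewood gives $\E[Z_0L]\le\int_0^1 q_{Z_0}q_L$ in the other direction. Together these yield the quantile representation with no rearrangement invariance of $\mathcal D$, no approximation, and no closure argument.

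Even in your own route, the Fatou or closure passage you anticipate is unnecessary. A strict inequality $\E[ZL]>\rho(L)$ survives a single sufficiently fine $L^1$ approximation of $Z$, because $L$ is bounded.
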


Two parametric subfamilies realize Theorem~\ref{thm:kusuoka} with closed
forms.

\paragraph{Spectral risk measures.}
For a non-increasing right-continuous density $\varphi: [0, 1] \to \R_+$
with $\int_0^1 \varphi(u) \, du = 1$, the spectral risk
\citep{acerbi2002spectral} is
$\rho_\varphi(L) = \int_0^1 \VaR_u(L) \varphi(u) \, du$.

\paragraph{Distortion risk measures.}
For a concave increasing distortion $g: [0, 1] \to [0, 1]$ with $g(0) = 0$,
$g(1) = 1$, the distortion risk on a loss $L$ \citep{wang2000class} is
$\rho_g(L) = \int_0^\infty g(S_L(x)) dx + \int_{-\infty}^0 (g(S_L(x)) - 1) dx$,
where $S_L(x) = \Pr(L > x)$.

\begin{example}[Three canonical distortions]
(i) CVaR: $g^{\CVaR}(u) = \min(u/\alpha, 1)$;
(ii) Wang transform: $g_\lambda(u) = \Phi(\Phi^{-1}(u) + \lambda)$ for
$\lambda \geq 0$, giving $\rho_{g_\lambda}(L) = \mu_L + \lambda \sigma_L$
for $L \sim \N(\mu_L, \sigma_L^2)$;
(iii) Proportional hazards: $g^{PH}(u) = u^\beta$ for $\beta \in (0, 1]$.
\end{example}

\subsection{Entropic value-at-risk and KL duality}

\begin{definition}[EVaR; \citealp{ahmadi2012entropic}]\label{def:evar}
For $\alpha \in (0, 1)$,
$\EVaR_\alpha(L) = \inf_{\eta > 0} \eta \log(\alpha^{-1} \E[e^{L/\eta}])$.
EVaR is coherent on $L^\infty$ and satisfies
$\VaR_\alpha \leq \CVaR_\alpha \leq \EVaR_\alpha$.
\end{definition}

%==============================================================================
% SECTION 6: DRO
%==============================================================================
\section{Distributionally robust DPO}\label{sec:dro}

DRO replaces a single $R$ by an ambiguity set $\A \subseteq \PP(\R^K)$ and
solves
\begin{equation}
\max_{w \in \W} \inf_{R \in \A} \E_{r \sim R}[u(r^\top w)]
\quad \text{or} \quad
\min_{w \in \W} \sup_{R \in \A} \rho_R(-r^\top w).
\label{eq:dro-problem}
\end{equation}

\subsection{Moment-based ambiguity}

\begin{definition}[Delage--Ye ambiguity]
For $\mu \in \R^K$, $\Sigma \succ 0$, $\kappa \geq 0$,
\[
\A^{\mathrm{mom}}(\mu, \Sigma, \kappa) = \{R \in \PP_2(\R^K):
(\E_R[r] - \mu)^\top \Sigma^{-1} (\E_R[r] - \mu) \leq \kappa,
\Cov_R(r) \preceq \Sigma\}.
\]
\end{definition}

\begin{theorem}[\citealp{delage2010distributionally}]
For piecewise-linear concave $u(x) = \min_k\{a_k x + b_k\}$, the inner
worst-case expectation reduces to an SDP with $O(K^2)$ variables and
$O(K^2)$ constraints. The outer maximization over $w \in \W$ preserves
convexity for polyhedral $\W$.
\end{theorem}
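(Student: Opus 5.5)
The plan is to reduce the inner worst-case expectation to a conic program through moment duality, in three steps.

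First, fix $w$ and write the inner problem $\inf_{R \in \A^{\mathrm{mom}}} \E_R[u(r^\top w)]$ as a generalized moment problem over nonnegative measures. It has the normalization $\int dR = 1$ and two further constraints. The mean constraint becomes the Schur-complement form
\[
\begin{pmatrix} \Sigma & \E_R[r]-\mu \\ (\E_R[r]-\mu)^\top & \kappa \end{pmatrix} \succeq 0,
\]
which is linear in $R$. For the covariance constraint, I would follow \citet{delage2010distributionally}: replace $\Cov_R(r) \preceq \Sigma$ by the second-moment constraint $\E_R[(r-\mu)(r-\mu)^\top] \preceq \Sigma'$ for a suitable $\Sigma'$ (for instance $(1+\kappa)\Sigma$, or their $\gamma_2\Sigma$ parametrization). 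The point of this step is that the feasible set becomes convex and the constraints linear in $R$. I expect this reformulation to be where the statement's ``$\Cov_R \preceq \Sigma$'' needs care. If the exact covariance constraint is not convex-representable, I will state that the SDP is exact for the second-moment version, as in Delage--Ye's Lemma~1.

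Second, dualize the moment problem. Attach a scalar multiplier $t$ to normalization, a PSD matrix $Q$ to the second-moment constraint, and a vector/PSD pair $(q, P)$ to the mean constraint. Conic duality for moment problems (Shapiro/Isii), under a Slater point, which exists since $\Sigma \succ 0$ and $\kappa \geq 0$, gives zero duality gap. The dual has a semi-infinite constraint of the form
\[
t + 2q^\top (r-\mu) + (r-\mu)^\top Q (r-\mu) \leq u(r^\top w) \quad \forall r \in \R^K .
\]
Since $u = \min_k\{a_k x + b_k\}$, this constraint splits into $k$ constraints, one per affine piece:
\[
t + 2q^\top(r-\mu) + (r-\mu)^\top Q(r-\mu) - a_k r^\top w - b_k \leq 0 \quad \forall r .
\]
Each of these is a nonpositive quadratic in $r$, equivalent by the S-lemma (here just PSD-ness of the homogenized matrix) to a $(K+1)\times(K+1)$ LMI that is linear in $(t, q, Q, w)$. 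The variables are $Q$, $P$, $q$ and $t$, so there are $O(K^2)$ variables, and the $k$ LMIs plus the PSD cones account for $O(K^2)$ scalar constraint entries, for a fixed number of pieces.

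Third, the outer problem. After dualization, the worst-case value is a maximization over the dual variables whose constraints are jointly linear in $(w, \text{dual variables})$, because $w$ enters the LMIs only linearly through $a_k w$. Merging the outer $\max_w$ with the dual max therefore yields a single SDP, provided $\W$ is polyhedral, or more generally SDP-representable. This single SDP is the convexity-preservation claim.

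The main obstacle is the zero-duality-gap step. I would handle it by invoking the standard conic moment duality theorem with a strictly feasible $R$, for example $\N(\mu, \tfrac12\Sigma)$.
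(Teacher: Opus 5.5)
The paper gives no proof of this statement; it is quoted from \citet{delage2010distributionally}. Your route is exactly their argument: the Schur-complement form of the mean constraint, conic moment duality, one $(K+1)\times(K+1)$ LMI per affine piece, and then absorbing $\max_w$ into the dual supremum because $w$ enters linearly. That last step is fine.

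The genuine gap is the step you flagged yourself. $\A^{\mathrm{mom}}$ imposes $\Cov_R(r)\preceq\Sigma$, which is not convex in $R$, and replacing it by a centred second-moment bound is not an equivalence. On the mean ellipsoid $(m-\mu)(m-\mu)^\top\preceq\kappa\Sigma$. Hence the bound $\E_R[(r-\mu)(r-\mu)^\top]\preceq(1+\kappa)\Sigma$ defines a strictly larger set, and the bound with $\Sigma$ a strictly smaller one. The values really differ. Take $K=1$, $w=1$, $\mu=0$, $\Sigma=1$, $\kappa=1$, $u(x)=\min\{x+3,0\}$, and use Scarf's bound $\sup\E[(-3-y)_+]=\tfrac12\bigl(-3-\bar y+\sqrt{(3+\bar y)^2+v}\bigr)$ for mean $\bar y$ and variance at most $v$. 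The worst-case expected utility is $-(\sqrt5-2)/2\approx-0.118$ over $\A^{\mathrm{mom}}$, versus $-1/6$ with $(1+\kappa)\Sigma$ and $-1/12$ with $\Sigma$. So your SDP with $(1+\kappa)\Sigma$ is only a conservative lower bound, and your fallback proves Delage--Ye's theorem for their set, not the statement as written.

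You can recover the literal statement exactly by conditioning on the mean instead of changing the set. Write $\A^{\mathrm{mom}}$ as the union, over $m$ in the ellipsoid $(m-\mu)^\top\Sigma^{-1}(m-\mu)\le\kappa$, of the sets $\{\E_R[r]=m,\ \E_R[(r-m)(r-m)^\top]\preceq\Sigma\}$. Each is cut out by constraints linear in $R$ and has the strictly feasible point $\N(m,\Sigma/2)$. For fixed $m$ the dual is $\sup\; t-\tr(\Sigma Q)$ over $Q\succeq0$, $q\in\R^K$, $t\in\R$, subject to
\[
\begin{pmatrix} Q & \tfrac12(a_k w-q)\\ \tfrac12(a_k w-q)^\top & a_k m^\top w+b_k-t\end{pmatrix}\succeq 0\qquad\text{for every piece }k.
\]
So $m$ enters only through $m^\top w$ in the corner entries. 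Since $u$ is non-decreasing (Assumption~\ref{ass:standing}(iii)), every $a_k\ge0$, the dual value is non-decreasing in $m^\top w$, and the worst mean is $m^\top w=\mu^\top w-\sqrt{\kappa}\,\|\Sigma^{1/2}w\|$. Replace the corner entry by $a_k(\mu^\top w-\sqrt{\kappa}\,\sigma)+b_k-t$ and add the cone constraint $\|\Sigma^{1/2}w\|\le\sigma$; this is exact, again because $a_k\ge0$. The result is a single SDP, jointly linear in $(w,\sigma,t,q,Q)$, with $O(K^2)$ variables.

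This also removes a second flaw: at $\kappa=0$ the Schur-complement LMI has no strictly feasible point, so your Slater claim fails there. Finally, fix a sign in your dual. With $Q\succeq0$ attached to an upper bound on second moments in an infimum problem, the quadratic must enter as $-(r-\mu)^\top Q(r-\mu)$ on the left of $\le u(r^\top w)$. As written, a convex quadratic lying below a piecewise-linear function forces $Q=0$.
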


\subsection{Wasserstein DRO: linear-loss case}

\begin{definition}[Wasserstein ambiguity]
Given $\hat R_n = n^{-1} \sum_{i=1}^n \delta_{r_i}$ and $\varepsilon \geq 0$,
\[
\A_p(\varepsilon) = \{R \in \PP_p(\R^K): W_p(R, \hat R_n) \leq \varepsilon\}.
\]
\end{definition}

\begin{theorem}[Wasserstein DRO duality; \citealp{mohajerin2018data}, $p=1$]
\label{thm:wasserstein-duality}
For any upper semicontinuous loss $\ell(w, \cdot): \R^K \to \R$ with at
most linear growth in $r$,
\begin{equation}
\sup_{R \in \A_1(\varepsilon)} \E_{r \sim R}[\ell(w, r)]
= \inf_{\lambda \geq 0}
\left\{ \lambda \varepsilon
+ \frac{1}{n} \sum_{i=1}^n \sup_{r' \in \R^K}
\left[\ell(w, r') - \lambda \|r' - r_i\|\right] \right\}.
\label{eq:mek-duality}
\end{equation}
\end{theorem}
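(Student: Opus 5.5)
The proof is a Lagrangian relaxation of the transport constraint followed by exchanging $\sup$ and $\inf$. Parametrize $\A_1(\varepsilon)$ by couplings. Any $R$ with $W_1(R,\hat R_n)\le\varepsilon$ is the second marginal of some $\pi\in\Pi(\hat R_n,R)$ with $\int\|r-r'\|\,d\pi\le\varepsilon$. Since the first marginal is empirical, $\pi$ disintegrates as $\frac1n\sum_i\delta_{r_i}\otimes Q_i$ with $Q_i\in\PP_1(\R^K)$. The primal problem therefore becomes
\[
\sup_{Q_1,\dots,Q_n}\Bigl\{\tfrac1n\textstyle\sum_i\int \ell(w,r')\,dQ_i(r') \;:\; \tfrac1n\sum_i\int\|r'-r_i\|\,dQ_i(r')\le\varepsilon\Bigr\},
\]
an infinite-dimensional linear program in $(Q_i)$ with a single linear constraint. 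The constraint is attained with equality in the coupling sense, because optimal couplings exist by Proposition~\ref{prop:complete} and lower semicontinuity of the transport cost.

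\emph{Weak duality.} Introduce a multiplier $\lambda\ge0$ and bound each term pointwise. For any feasible $(Q_i)$,
\[
\tfrac1n\textstyle\sum_i\int\ell\,dQ_i \le \lambda\varepsilon+\tfrac1n\sum_i\int\bigl[\ell(w,r')-\lambda\|r'-r_i\|\bigr]dQ_i.
\]
Each integral is at most $\sup_{r'}[\ell(w,r')-\lambda\|r'-r_i\|]$. Taking the infimum over $\lambda$ gives "$\le$". The linear-growth hypothesis $\ell(w,r)\le C(1+\|r\|)$ guarantees two things: the inner suprema are finite for $\lambda> C$, and each $\int\ell\,dQ_i$ is well defined for $Q_i\in\PP_1$. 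For $\lambda<C$ both sides may be $+\infty$, which is consistent.

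\emph{Strong duality.} This is the main obstacle, and I would handle it by a direct perturbation argument rather than by appealing to abstract conic duality. Write $\phi_i(\lambda):=\sup_{r'}[\ell(w,r')-\lambda\|r'-r_i\|]$. Define $h(\lambda)=\lambda\varepsilon+\frac1n\sum_i\phi_i(\lambda)$, which is convex and lower semicontinuous in $\lambda$. Restrict the primal to $Q_i$ supported on at most two points. The primal value is then the concave envelope, evaluated at budget $\varepsilon$, of $\frac1n\sum_i g_i(t_i)$, where
\[
g_i(t):=\sup\{\ell(w,r'):\|r'-r_i\|\le t\}.
\]
Two-point mixtures realize exactly the concave hull $\hat g_i$, and the budget constraint is $\frac1n\sum t_i\le\varepsilon$. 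By the one-dimensional Fenchel identity,
\[
\sup_{\sum t_i\le n\varepsilon}\tfrac1n\textstyle\sum_i\hat g_i(t_i)=\inf_{\lambda\ge0}\bigl\{\lambda\varepsilon+\tfrac1n\sum_i\sup_{t\ge0}[g_i(t)-\lambda t]\bigr\}.
\]
Finally, $\sup_{t}[g_i(t)-\lambda t]=\phi_i(\lambda)$, by splitting the supremum over $r'$ according to $t=\|r'-r_i\|$. This is a finite-dimensional separable concave resource-allocation problem, so Slater's condition holds whenever $\varepsilon>0$ and there is no duality gap. The case $\varepsilon=0$ follows by letting $\lambda\to\infty$, using upper semicontinuity of $\ell$ to get $\phi_i(\lambda)\downarrow\ell(w,r_i)$.

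The delicate points are two. First, the supremum inside $g_i$ need not be attained when the feasible ball is unbounded. Here I would use $\eta$-optimal points and note that the discrete measures built from them lie in $\A_1(\varepsilon)$, so they only lower-bound the primal, which is the direction needed. Second, the infimum over $\lambda$ must be justified when $h\equiv+\infty$ on $[0,C)$. Linear growth makes $h$ finite on $(C,\infty)$, so the dual is a genuine minimization over a half-line and the Fenchel identity above applies.
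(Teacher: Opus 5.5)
Your argument is correct, but it takes a different route from the paper. The paper gives no proof of Theorem~\ref{thm:wasserstein-duality}; it imports the result from \citet{mohajerin2018data}. Their argument starts from the same reformulation over conditional laws $Q_i$. It then interchanges $\sup_Q$ and $\inf_\lambda$ by invoking an infinite-dimensional strong-duality theorem for moment problems (Shapiro), after which the inner supremum over measures collapses to Dirac masses.

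You never dualize in the space of measures. You lower-bound the primal by two-point conditionals and compress each $Q_i$ to its transport budget $t_i$ via $g_i$ and its concave hull $\hat g_i$. Strong duality then comes from ordinary finite-dimensional Lagrangian duality (Slater at $\varepsilon>0$) for a separable concave allocation with one linear constraint, and weak duality closes the sandwich.

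This is the Gao--Kleywegt style of proof, and it buys several things. It is self-contained and checkable in exactly the generality the statement claims. It shows that for $\varepsilon>0$ only measurability and the linear growth bound are used; upper semicontinuity enters only in your separate limit $\phi_i(\lambda)\downarrow\ell(w,r_i)$ at $\varepsilon=0$. As a byproduct, near-worst-case laws can be taken with at most two atoms per sample point. The cited route buys brevity and a direct path to the finite convex reformulations used later (Theorem~\ref{thm:wcvar-dro}). The price is reliance on the regularity hypotheses of an external duality theorem, which the paper does not check for its stated loss class.

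Three small fixes:
\begin{enumerate}
\item Existence of optimal couplings is not what Proposition~\ref{prop:complete} asserts. It is the standard tightness-plus-semicontinuity fact, and you can avoid it by using couplings of cost $\varepsilon+\delta$ and letting $\delta\downarrow0$.
\item Under linear growth the primal is always finite. So only $h(\lambda)$, not ``both sides,'' can be $+\infty$ for $\lambda<C$.
\item The Lagrangian of the hull problem produces $\sup_{t\ge0}[\hat g_i(t)-\lambda t]$. State explicitly that this equals $\sup_{t\ge0}[g_i(t)-\lambda t]$, because a linear function has the same supremum against a function and against its concave hull.
\end{enumerate}
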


\begin{corollary}[Affine loss closed form]\label{cor:affine-dro}
For the linear loss $\ell_w(r) = -r^\top w$ with $\ell^s$-norm ground
metric (dual norm $\|\cdot\|_{(s^\ast)}$),
\begin{equation}
\sup_{R \in \A_1(\varepsilon)} \E_R[-r^\top w]
= \E_{\hat R_n}[-r^\top w] + \varepsilon \|w\|_{(s^\ast)},
\label{eq:affine-dro}
\end{equation}
or equivalently
$\inf_{R \in \A_1(\varepsilon)} \E_R[r^\top w]
= \E_{\hat R_n}[r^\top w] - \varepsilon \|w\|_{(s^\ast)}$.
\end{corollary}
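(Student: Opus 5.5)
The plan is to apply the duality of Theorem~\ref{thm:wasserstein-duality} to the affine loss $\ell_w(r) = -r^\top w$. This loss is continuous and has linear growth, so the theorem applies. The whole proof then reduces to evaluating the inner supremum $\sup_{r'}[-r'^\top w - \lambda\|r'-r_i\|]$ in closed form.

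For the inner supremum, substitute $r' = r_i + \Delta$. The inner problem becomes
\[
-r_i^\top w + \sup_{\Delta\in\R^K}\bigl[-\Delta^\top w - \lambda\|\Delta\|_{(s)}\bigr].
\]
By the definition of the dual norm, $\sup_\Delta[\langle -w,\Delta\rangle - \lambda\|\Delta\|]$ is the convex conjugate of $\lambda\|\cdot\|$ evaluated at $-w$. This conjugate is the indicator of the dual ball: it equals $0$ when $\|w\|_{(s^\ast)}\le\lambda$ and $+\infty$ otherwise. To see the second case, pick $\Delta$ with $\|\Delta\|=1$ attaining $-\Delta^\top w = \|w\|_{(s^\ast)}$ and scale it by $t\to\infty$.

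Averaging over $i$, the dual objective becomes $\lambda\varepsilon + \E_{\hat R_n}[-r^\top w]$ on the set $\{\lambda\ge\|w\|_{(s^\ast)}\}$ and $+\infty$ off it. Since $\varepsilon\ge0$, the infimum over $\lambda$ is attained at $\lambda=\|w\|_{(s^\ast)}$, which gives \eqref{eq:affine-dro}. The equivalent form follows by negation, using $\inf_R \E_R[r^\top w] = -\sup_R \E_R[-r^\top w]$.

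As a sanity check independent of the duality theorem, the two inequalities can be verified directly. For the upper bound, $r\mapsto -r^\top w$ is $\|w\|_{(s^\ast)}$-Lipschitz, so Proposition~\ref{prop:lip-int} gives $\E_R[-r^\top w]\le \E_{\hat R_n}[-r^\top w]+\varepsilon\|w\|_{(s^\ast)}$ for every $R\in\A_1(\varepsilon)$. For attainment, shift every atom $r_i$ by $\varepsilon\Delta^\star$, where $\Delta^\star$ is the dual-norm maximizer above. This transport has cost exactly $\varepsilon$ and achieves the bound.

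The only delicate step is the dual-norm conjugate computation, in particular the existence of the maximizer $\Delta^\star$. It exists because the unit sphere of the finite-dimensional norm $\|\cdot\|_{(s)}$ is compact. The case $w=0$ is trivial.
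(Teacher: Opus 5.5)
Your proof is correct and follows the route the paper intends. The paper gives no separate proof and presents the corollary as an immediate consequence of Theorem~\ref{thm:wasserstein-duality}. Your evaluation of the inner supremum as the conjugate of $\lambda\|\cdot\|_{(s)}$ at $-w$, namely the indicator of $\{\lambda \ge \|w\|_{(s^\ast)}\}$, is exactly the computation that makes it immediate.

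Your primal check is in fact a complete proof on its own that does not use the duality theorem. It consists of the $\|w\|_{(s^\ast)}$-Lipschitz upper bound together with the attaining measure obtained by shifting every atom by $\varepsilon\Delta^\star$.
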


%------------------------------------------------------------------------------
% PATCH 6: SECTION 6.3 TONE-DOWN
%------------------------------------------------------------------------------
\subsection{Wasserstein DRO of CVaR}

The piecewise-linear-convex case of \citet[Cor.~5.1]{mohajerin2018data}
extends Theorem~\ref{thm:wasserstein-duality} to CVaR for a generic convex
loss. The contribution of this subsection is the specialization to the
canonical CVaR loss with the standard $\ell^s$ ground norm and the resulting
explicit closed form (Corollary~\ref{cor:worst-case-cvar}) and conic program
(Proposition~\ref{prop:explicit-cvar-program}). The reformulation is the
form required for implementation on commodity solvers; we do not claim a
duality theorem beyond Mohajerin Esfahani--Kuhn Cor.~5.1.

\begin{theorem}[Wasserstein DRO of CVaR]\label{thm:wcvar-dro}
For the piecewise-linear loss $L(w, r) = -r^\top w$ and the
Rockafellar--Uryasev representation \eqref{eq:ru}, the worst-case CVaR over
$\A_1(\varepsilon)$ with $\ell^s$ ground metric admits the closed form
\begin{equation}
\sup_{R \in \A_1(\varepsilon)} \CVaR_\alpha^R(L(w, \cdot))
= \inf_{\eta \in \R} \left\{
\eta + \frac{1}{\alpha} \E_{\hat R_n}[(L(w, r) - \eta)_+]
+ \frac{\varepsilon}{\alpha} \|w\|_{(s^\ast)}
\right\}.
\label{eq:wcvar-dro}
\end{equation}
\end{theorem}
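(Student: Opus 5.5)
The plan is to write worst-case CVaR as a sup--inf through the Rockafellar--Uryasev representation (Theorem~\ref{thm:ru}), exchange the supremum and infimum, and then evaluate the inner worst-case expectation of the hinge loss in closed form with the Wasserstein duality of Theorem~\ref{thm:wasserstein-duality}. Concretely, for each $R\in\A_1(\varepsilon)$,
\[
\CVaR_\alpha^R(L(w,\cdot))=\inf_{\eta\in\R}\Bigl\{\eta+\tfrac1\alpha\E_R[(L(w,r)-\eta)_+]\Bigr\},
\]
so the left-hand side of \eqref{eq:wcvar-dro} equals $\sup_{R\in\A_1(\varepsilon)}\inf_{\eta}F_\alpha(w,\eta;R)$. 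The inequality $\sup_R\inf_\eta\le\inf_\eta\sup_R$ is automatic, and it gives the ``$\le$'' direction once the inner supremum has been computed.

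\emph{Step 1: inner supremum for fixed $\eta$.} Take the loss $\ell_\eta(r):=(-r^\top w-\eta)_+$. It is continuous, convex, piecewise linear, and Lipschitz in the $\ell^s$ norm with constant exactly $\|w\|_{(s^\ast)}$, by the dual-norm inequality. Apply \eqref{eq:mek-duality}.
\begin{itemize}
\item If $\lambda\ge\|w\|_{(s^\ast)}$, then $\ell_\eta(r')-\lambda\|r'-r_i\|\le\ell_\eta(r_i)$, so the inner supremum equals $\ell_\eta(r_i)$.
\item If $\lambda<\|w\|_{(s^\ast)}$ and $w\neq0$, move $r'$ from $r_i$ along a direction $d$ with $\|d\|=1$ and $-d^\top w=\|w\|_{(s^\ast)}$. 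Along this ray the hinge is eventually linear with slope $\|w\|_{(s^\ast)}$, so the supremum is $+\infty$.
\end{itemize}
Hence
\[
\sup_{R\in\A_1(\varepsilon)}\E_R[\ell_\eta]=\E_{\hat R_n}[\ell_\eta]+\varepsilon\|w\|_{(s^\ast)}.
\]
The case $w=0$ is trivial. Dividing by $\alpha$ and adding $\eta$ yields exactly the bracket in \eqref{eq:wcvar-dro}. This is the same computation as Corollary~\ref{cor:affine-dro}, applied to the hinge instead of the affine loss.

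\emph{Step 2: exchange of $\sup_R$ and $\inf_\eta$ (the main obstacle).} The map $F_\alpha(w,\eta;R)$ is convex and continuous in $\eta$, and affine, hence concave, in $R$. The set $\A_1(\varepsilon)$ is convex. I will apply Sion's minimax theorem with the compactness placed on the $\eta$-side.
\begin{itemize}
\item First, bound $\VaR_\alpha^R(L(w,\cdot))$ uniformly over the ball. By Markov's inequality and the uniform first-moment bound $\E_R\|r\|\le \E_{\hat R_n}\|r\|+\varepsilon$, these quantiles lie in a fixed interval $[a,b]$.
\item Since the minimizer of $F_\alpha(w,\cdot;R)$ is $\VaR_\alpha^R$, restricting $\eta$ to $[a,b]$ changes no $\inf_\eta F_\alpha(w,\eta;R)$.
\item Sion on the compact convex interval $[a,b]$ gives $\sup_R\inf_{\eta\in[a,b]}=\inf_{\eta\in[a,b]}\sup_R$.
\item By convexity in $\eta$ of the Step~1 closed form, its minimizer is the empirical quantile $\VaR_\alpha^{\hat R_n}(L(w,\cdot))$. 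Because $\hat R_n\in\A_1(\varepsilon)$, this quantile already lies in $[a,b]$, so the constrained infimum equals the unrestricted one in \eqref{eq:wcvar-dro}.
\end{itemize}
For the Sion argument I also need upper semicontinuity in $R$. It holds because $R\mapsto\E_R[\ell_\eta]$ is linear, and lower semicontinuity in $\eta$ is immediate.

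If the minimax route proves delicate, the fallback is a direct lower bound by construction. Fix a small $p>0$ and take a fraction $p$ of the mass of $\hat R_n$ on the atoms with the largest losses. Transport that mass a distance $\varepsilon/p$ along the direction $d$ from Step~1, which costs exactly $\varepsilon$ in $W_1$. Then compute the resulting CVaR with $p\le\alpha$: the loss in the $\alpha$-tail increases by $\|w\|_{(s^\ast)}\varepsilon/\alpha$ on average, while the empirical tail structure is preserved. Letting $p\downarrow0$ matches the right-hand side and gives the ``$\ge$'' direction. This is essentially the extremal distribution of \citet[Cor.~5.1]{mohajerin2018data}, specialized to the linear loss.
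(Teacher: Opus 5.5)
Your proof is correct, but it takes a different route from the paper's. The paper writes the Rockafellar--Uryasev integrand $\tau+\alpha^{-1}(-r^\top w-\tau)_+$ as a maximum of two affine functions of $r$ with slopes $0$ and $-w/\alpha$. It then invokes \citet[Cor.~5.1]{mohajerin2018data}, whose unbounded-support case yields the penalty $\varepsilon\max_j\|a_j\|_{\ast}=(\varepsilon/\alpha)\|w\|_{\ast}$, and it credits the $\sup_R$/$\inf_\tau$ interchange to that corollary as well.

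Your Step~1 is the same fixed-$\eta$ identity, obtained from Theorem~\ref{thm:wasserstein-duality} via the exact Lipschitz constant $\|w\|_{(s^\ast)}$ of the hinge. Your Step~2 proves the interchange explicitly: a uniform Markov bound on $\VaR_\alpha^R$ over the ball gives a compact $\eta$-interval containing a minimizer for every $R$ (including $\hat R_n$), and a minimax theorem finishes.

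The paper's route is shorter and exhibits the result as a literal specialization. Yours is self-contained exactly where the citation is thinnest. Read as a worst-case-expectation formula for a fixed max-affine loss, the cited corollary delivers $\inf_\tau\sup_R$ directly, and the equality with $\sup_R\inf_\tau$ is the real content.

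Two small repairs are needed. First, linearity of $R\mapsto\E_R[\ell_\eta]$ gives concavity, not upper semicontinuity. Either topologize the finite-first-moment signed measures by integrals of continuous functions of linear growth (in which this map is continuous), or use Ky Fan's minimax theorem, which needs compactness and lower semicontinuity only on the $\eta$-side. Second, the empirical quantile minimizes the Step~1 closed form because the penalty $(\varepsilon/\alpha)\|w\|_{(s^\ast)}$ does not depend on $\eta$, not because of convexity.

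Your fallback is actually the cleanest proof. Step~1 plus weak duality gives $\le$, the construction gives $\ge$, and no minimax theorem is needed. You also do not need $p\downarrow 0$. For any $p\le\alpha$, shifting the top $p$ of empirical tail mass by $(\varepsilon/p)d$ costs exactly $\varepsilon$. That mass stays in the $\alpha$-tail, and the rest of the quantile function is unchanged. So the CVaR rises by exactly $\alpha^{-1}p(\varepsilon/p)\|w\|_{(s^\ast)}=(\varepsilon/\alpha)\|w\|_{(s^\ast)}$, and the supremum is attained.
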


\begin{proof}
The Rockafellar--Uryasev integrand can be written as the maximum of two
affine functions of the return vector:
\begin{align*}
\theta_{\rm RU}(r;\tau,w)
&:= \tau + \frac{1}{\alpha}(-r^\top w-\tau)_+  \\
&= \max\left\{ \tau,\; \left(1-\frac{1}{\alpha}\right)\tau
        - \frac{1}{\alpha}r^\top w \right\}.
\end{align*}
Thus the two slope vectors in the notation of
\citet[Cor.~5.1]{mohajerin2018data} are $a_1 = 0$ and
$a_2 = -w/\alpha$, so
$\max_j\|a_j\|_{s^\ast} = \|w\|_{s^\ast}/\alpha$. Applying that corollary
to the robust CVaR epigraph gives directly
\begin{align*}
\sup_{R\in\mathcal A_1(\varepsilon)} \CVaR_\alpha^R(-r^\top w)
&= \inf_{\tau\in\R}
\left\{
  \frac{1}{n}\sum_{i=1}^n
  \theta_{\rm RU}(r_i;\tau,w)
  + \frac{\varepsilon}{\alpha}\|w\|_{s^\ast}
\right\}  \\
&= \inf_{\tau\in\R}
\left\{
\tau + \frac{1}{\alpha n}\sum_{i=1}^n (-r_i^\top w-\tau)_+
+ \frac{\varepsilon}{\alpha}\|w\|_{s^\ast}
\right\},
\end{align*}
which is \eqref{eq:wcvar-dro}. This is not obtained by first proving a
fixed-$\tau$ robust-expectation identity and then exchanging $\sup_R$ with
$\inf_\tau$; the strong-duality/minimax step for this piecewise-affine convex
loss class is exactly the content supplied by
\citet[Cor.~5.1]{mohajerin2018data}. Renaming the auxiliary variable $\tau$
back to $\eta$ yields the displayed statement.
\end{proof}

\begin{corollary}[Worst-case CVaR is regularized empirical CVaR]
\label{cor:worst-case-cvar}
For the linear loss $L = -r^\top w$,
\[
\sup_{R \in \A_1(\varepsilon)} \CVaR_\alpha^R(L(w, \cdot))
= \CVaR_\alpha^{\hat R_n}(L(w, \cdot))
+ \frac{\varepsilon}{\alpha} \|w\|_{(s^\ast)}.
\]
\end{corollary}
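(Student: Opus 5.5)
The corollary follows from Theorem~\ref{thm:wcvar-dro} by a single observation: the penalty term $\frac{\varepsilon}{\alpha}\|w\|_{(s^\ast)}$ in \eqref{eq:wcvar-dro} does not depend on the auxiliary variable $\eta$. It can therefore be pulled out of the infimum over $\eta\in\R$. That gives
\[
\sup_{R\in\A_1(\varepsilon)}\CVaR_\alpha^R(L(w,\cdot))
= \inf_{\eta\in\R}\Bigl\{\eta+\tfrac{1}{\alpha}\E_{\hat R_n}[(L(w,r)-\eta)_+]\Bigr\}
+\tfrac{\varepsilon}{\alpha}\|w\|_{(s^\ast)} .
\]
The remaining infimum is the Rockafellar--Uryasev function $F_\alpha(w,\eta)$ of \eqref{eq:ru}, evaluated with $R=\hat R_n$. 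Theorem~\ref{thm:ru} then identifies it with $\CVaR_\alpha^{\hat R_n}(L(w,\cdot))$, which is the claimed identity.

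Two points need care. The first is that Theorem~\ref{thm:ru} applies to the empirical measure. $\hat R_n$ is finitely supported, so $L(w,\cdot)$ is bounded under it and integrable. The function $\eta\mapsto F_\alpha(w,\eta)$ is then convex and piecewise linear. Its slope is $1-\alpha^{-1}\Pr_{\hat R_n}(L>\eta)$, which is negative for $\eta$ small and equal to $1$ for $\eta$ large. Hence the infimum is attained at the empirical $\VaR_\alpha$, and Theorem~\ref{thm:ru} gives the empirical CVaR with no extra assumptions.

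The second point is that pulling a constant out of an infimum is valid only if the infimum is finite. By the slope computation above, the infimum is finite, so no $\infty-\infty$ issue arises.

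There is no genuine obstacle here. All the duality content, including the exchange of $\sup_R$ and $\inf_\eta$, is already contained in Theorem~\ref{thm:wcvar-dro}, which relies on \citet[Cor.~5.1]{mohajerin2018data}. The corollary is an algebraic rearrangement plus the Rockafellar--Uryasev identity applied to $\hat R_n$. The notation $\CVaR_\alpha^{\hat R_n}$ should be read with the paper's loss convention, so that it agrees with the $\VaR_u$-based definition in the canonical-examples list.
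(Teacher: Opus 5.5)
Your proposal is correct and follows the paper's intended route. The paper gives no separate proof of the corollary; it is meant as an immediate consequence of Theorem~\ref{thm:wcvar-dro}, obtained exactly as you do it: pull the $\eta$-independent penalty $\frac{\varepsilon}{\alpha}\|w\|_{(s^\ast)}$ out of the infimum, then apply Theorem~\ref{thm:ru} to the finitely supported $\hat R_n$.

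One small point: your finiteness caveat is unnecessary, since adding a finite constant commutes with an infimum even when that infimum is $-\infty$. It does no harm to the argument.
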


\begin{proposition}[Explicit Wasserstein-CVaR program]
\label{prop:explicit-cvar-program}
The Wasserstein-CVaR DRO portfolio problem is equivalent to the conic
program
\begin{equation}
\begin{aligned}
\min_{w, \eta, z, t} \quad & \eta + \frac{1}{\alpha n} \sum_{i=1}^n z_i + \frac{\varepsilon}{\alpha} t \\
\text{s.t.} \quad & z_i \geq -r_i^\top w - \eta, \quad i = 1, \ldots, n, \\
& z_i \geq 0, \quad i = 1, \ldots, n, \\
& \|w\|_\ast \leq t, \\
& w \in \W.
\end{aligned}
\label{eq:wcvar-program}
\end{equation}
For $\|\cdot\|_\ast = \|\cdot\|_2$, this is an SOCP; for $\|\cdot\|_1$ or
$\|\cdot\|_\infty$, an LP.
\end{proposition}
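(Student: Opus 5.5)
The plan is to start from the closed form of Theorem~\ref{thm:wcvar-dro} and linearize each nonsmooth piece with an epigraph variable, showing the two problems have the same optimal value and that optimizers correspond. By that theorem, the Wasserstein-CVaR DRO portfolio problem $\min_{w\in\W}\sup_{R\in\A_1(\varepsilon)}\CVaR_\alpha^R(-r^\top w)$ equals
\[
\min_{w\in\W}\,\inf_{\eta\in\R}\Bigl\{\eta+\frac{1}{\alpha n}\sum_{i=1}^n(-r_i^\top w-\eta)_+ +\frac{\varepsilon}{\alpha}\|w\|_\ast\Bigr\}.
\]
Since the infimum over $w$ and the infimum over $\eta$ are both minimizations, they merge into a single joint minimization over $(w,\eta)\in\W\times\R$. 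No minimax exchange is needed here, because the only exchange was already absorbed into Theorem~\ref{thm:wcvar-dro}.

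Next I introduce $z_i$ and $t$. For fixed $(w,\eta)$, the constraints $z_i\ge -r_i^\top w-\eta$ and $z_i\ge 0$ say exactly $z_i\ge(-r_i^\top w-\eta)_+$. The objective coefficients $1/(\alpha n)$ and $\varepsilon/\alpha$ are nonnegative, so minimizing over $z_i$ and $t$ drives them to $z_i=(-r_i^\top w-\eta)_+$ and $t=\|w\|_\ast$. If $\varepsilon=0$, $t$ is free subject to $t\ge\|w\|_\ast$ and does not affect the value.

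Partially minimizing \eqref{eq:wcvar-program} over $(z,t)$ therefore returns the displayed joint objective. Consequently the optimal values coincide. Moreover, $(w^\star,\eta^\star)$ is optimal for one problem iff $(w^\star,\eta^\star,z^\star,t^\star)$, with the canonical choices of $z^\star$ and $t^\star$, is optimal for the other. Existence of an optimum follows because the objective is continuous and coercive in $\eta$: it grows like $|\eta|$ as $\eta\to+\infty$ and like $(1/\alpha-1)|\eta|$ as $\eta\to-\infty$. Together with compactness of $\W$, this gives existence, though equivalence of values does not need it.

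Last comes the classification of the program. The constraints in $z,\eta,w$ are linear. The constraint $\|w\|_2\le t$ is a second-order cone constraint, so with $\W$ polyhedral or SOC-representable the program is an SOCP. For $\|\cdot\|_1$ I would write $\sum_k s_k\le t$ with $-s_k\le w_k\le s_k$. For $\|\cdot\|_\infty$ I would write $-t\le w_k\le t$. Both are LPs, assuming $\W$ is polyhedral, such as the simplex or a box; I would state that this assumption is implicit in the statement.

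The main obstacle is only bookkeeping: matching the dual norm $\|\cdot\|_{(s^\ast)}$ of the chosen $\ell^s$ ground metric to $\|\cdot\|_\ast$ in the program (e.g.\ $\ell^\infty$ ground metric gives the $\ell^1$ penalty), and being explicit about the structural assumption on $\W$ needed for the LP/SOCP claims.
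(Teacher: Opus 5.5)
Your argument is correct and is the route the paper has in mind. The paper states this proposition without a separate proof, treating it as the standard epigraph reformulation of the closed form in Theorem~\ref{thm:wcvar-dro}: merge the two minimizations, then introduce $z_i \ge (-r_i^\top w-\eta)_+$ and $t \ge \|w\|_\ast$. You also make explicit two points the paper leaves implicit: the LP/SOCP classification needs $\W$ to be polyhedral (or SOC-representable), and $\|\cdot\|_\ast$ must be matched to the dual norm $\|\cdot\|_{(s^\ast)}$ of the ground metric.
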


\subsection{Finite-sample DRO guarantees}

\begin{theorem}[\citealp{mohajerin2018data}; sharp rate from \citealp{fournier2015rate}]
\label{thm:dro-finite-sample}
Under light-tail conditions on $R^\star$, there exist $c_1, c_2 > 0$ such
that for any $\delta \in (0, 1)$,
$\varepsilon_n(\delta) = (c_1 \log(c_2/\delta) / n)^{1/\max(K, 2)}$
satisfies $\Pr(W_1(\hat R_n, R^\star) \leq \varepsilon_n(\delta)) \geq 1 - \delta$.
The exponent $1/\max(K, 2)$ matches the sharp rate of
$\E[W_1(\hat R_n, R^\star)]$ established for absolutely continuous $R^\star$
in \citet[Theorem~1]{fournier2015rate}. At the critical dimension $K = 2$,
logarithmic factors can appear depending on the moment and concentration
regime; since the empirical application has $K = 25$, this critical-dimensional
refinement does not affect the interpretation here.
\end{theorem}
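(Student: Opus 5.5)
The plan is to obtain the statement by combining a measure-concentration inequality for the empirical Wasserstein distance with a union-bound inversion, following \citet[Thm.~3.4]{mohajerin2018data} and \citet[Thm.~2]{fournier2015rate}. Concretely, we take the light-tail condition to mean that there exist $a>1$ and $A<\infty$ with $\E_{R^\star}[\exp(\|r\|^a)]\le A$. The first step is to invoke the Fournier--Guillin concentration inequality. It states that for all $n\ge1$ and $\varepsilon>0$,
\[
\Pr\bigl(W_1(\hat R_n,R^\star)\ge\varepsilon\bigr)\le
\begin{cases}
c_2\exp(-c_1 n\varepsilon^{\max(K,2)}), & \varepsilon\le 1,\\
c_2\exp(-c_1 n\varepsilon^{a}), & \varepsilon>1,
\end{cases}
\]
for $K\neq 2$. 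The constants $c_1,c_2$ depend only on $(a,A,K)$.

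The second step inverts the small-$\varepsilon$ branch. We set the right-hand side equal to $\delta$, which gives $\varepsilon=(\log(c_2/\delta)/(c_1 n))^{1/\max(K,2)}$. After renaming $1/c_1$ as $c_1$, this is the displayed $\varepsilon_n(\delta)$. We must also handle the regime where this value exceeds $1$. There the large-$\varepsilon$ branch applies with exponent $a>1$, and the corresponding radius $(\log(c_2/\delta)/(c_1 n))^{1/a}$ is smaller than the displayed one because the base exceeds $1$ and $1/a<1\le\ldots$. More precisely, when the base is at least $1$, raising it to the smaller exponent $1/a$ versus $1/\max(K,2)$ needs to be checked. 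I will handle this by enlarging $c_1$, so that the single formula dominates both branch inversions uniformly; for base $\ge1$ one simply takes the maximum of the two expressions. Monotonicity of $\varepsilon\mapsto\Pr(W_1\ge\varepsilon)$ then yields $\Pr(W_1(\hat R_n,R^\star)\le\varepsilon_n(\delta))\ge1-\delta$.

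The third step addresses the sharpness remark. I will cite \citet[Thm.~1]{fournier2015rate}, which gives the moment bound $\E W_1(\hat R_n,R^\star)\lesssim n^{-1/K}$ for $K\ge3$. I will also cite the classical lower bound for absolutely continuous $R^\star$: the $n$ atoms of $\hat R_n$ leave most of the mass at distance $\gtrsim n^{-1/K}$ from the nearest atom, by a volume/covering argument. Together these show that the exponent cannot be improved. At $K=2$ the extra $\log n$ factor in Fournier--Guillin is noted, and no claim is made there.

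The main obstacle is the second step, namely reconciling the two tail branches into a single clean formula with the exponent $1/\max(K,2)$. If absorbing the large-deviation branch into the constants proves awkward, the fallback is to state the guarantee for $\delta$ with $\log(c_2/\delta)\le c_1 n$, which ensures $\varepsilon_n\le1$, and to treat the complementary case as vacuous after adjusting the constants.
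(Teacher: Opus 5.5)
Your first step and your inversion of the small-$\varepsilon$ branch are correct. The reconciliation in step~2 fails, however, and no choice of constants rescues it. Write $b:=\log(c_2/\delta)/(c_1 n)$ in your pre-renaming normalization. The regime $b>1$ calls for the radius $b^{1/a}$, and for $b>1$ one has $b^{1/a}\le b^{1/\max(K,2)}$ if and only if $a\ge\max(K,2)$. Under the hypothesis you adopted (only $a>1$), and in particular for $K=25$ with Gaussian-type tails, the comparison goes the other way and $b^{1/a-1/\max(K,2)}\to\infty$ as $b\to\infty$. Equivalently, at the radius $(Cb)^{1/\max(K,2)}$ the large-deviation bound requires $(Cb)^{a/\max(K,2)}\ge b$, which fails for large $b$ whatever $C$ is. ``Taking the maximum of the two expressions'' simply reproduces the two-branch radius of \citet{mohajerin2018data}, not the displayed single formula.

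Nor is this a weakness of the bound. Take $K\ge3$, $R^\star=\N(\mu,\sigma^2 I_K)$ (which satisfies $\E\exp(\|r\|^a)<\infty$ for every $a\in(1,2)$), and the Euclidean ground norm. Let $\bar r$ be the sample mean. Testing Kantorovich--Rubinstein with linear $f$ gives $W_1(\hat R_n,R^\star)\ge\|\bar r-\mu\|\ge|e_1^\top(\bar r-\mu)|$, so $\Pr(W_1(\hat R_n,R^\star)\ge\varepsilon)\ge 2(1-\Phi(\sqrt n\,\varepsilon/\sigma))$. At $\varepsilon=\varepsilon_n(\delta)=(c_1\log(c_2/\delta)/n)^{1/K}$ with $n$ fixed, this is at least $\exp\bigl(-C_n(\log(c_2/\delta))^{2/K}\bigr)$ for some $C_n$ independent of $\delta$. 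Since $2/K<1$, it exceeds $\delta$ once $\delta$ is small enough, for any $c_1,c_2$. So the displayed guarantee cannot hold uniformly over $\delta\in(0,1)$ under a generic light-tail hypothesis. The second half of your fallback, treating $\log(c_2/\delta)>c_1 n$ as vacuous, is therefore unavailable: that range is nonempty for every $n$, and the inequality genuinely fails there.

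What you can prove is one of three things. First, the single formula restricted to $\delta\ge c_2e^{-c_1 n}$ (i.e.\ $b\le1$); this is exactly the first branch of the Mohajerin Esfahani--Kuhn radius, and it is what your small-branch inversion already establishes. Second, the guarantee for all $\delta$ with radius $\max\{b^{1/\max(K,2)},b^{1/a}\}$. Third, the single formula for all $\delta$ under a stronger tail hypothesis with $a\ge\max(K,2)$ (e.g.\ bounded support, where any $a$ is admissible), in which case your monotonicity comparison is correct. The paper gives no proof of its own; it cites \citet{mohajerin2018data}, whose radius is two-branch, so its unqualified ``for any $\delta\in(0,1)$'' must be read with the first restriction.

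A smaller point on step~3: for $K=1$ the covering argument only yields $n^{-1}$. The matching $n^{-1/2}$ lower bound comes instead from $W_1(\hat R_n,R^\star)\ge|\bar r-\mu|$ and the central limit theorem.
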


\subsection{\texorpdfstring{$\phi$-divergence ambiguity and EVaR--KL duality}{phi-divergence ambiguity and EVaR-KL duality}}

\begin{definition}[$\phi$-divergence ambiguity]
For convex $\phi: \R_+ \to \R$ with $\phi(1) = 0$,
$D_\phi(R \| R_0) = \int \phi(dR / dR_0) dR_0$ and
$\A^\phi(\eta) = \{R \ll R_0: D_\phi(R \| R_0) \leq \eta\}$.
\end{definition}

\begin{proposition}[KL-DRO and EVaR; \citealp{bental2013robust}]
Let $\A^{\KL}(\eta) = \{R \ll R_0: \KL(R \| R_0) \leq \eta\}$ and
$\ell = -r^\top w$. Then
\begin{equation}
\sup_{R \in \A^{\KL}(\eta)} \E_R[\ell]
= \inf_{\lambda > 0} \{\lambda \log \E_{R_0}[e^{\ell / \lambda}] + \lambda \eta\},
\label{eq:kl-dro}
\end{equation}
which coincides with $\EVaR_\alpha(\ell)$ at confidence level
$\alpha = e^{-\eta}$, equivalently $\eta = \log(1/\alpha)$: the entropic
penalty $\lambda \eta$ in \eqref{eq:kl-dro} is exactly the
$\eta \log(\alpha^{-1})$ term in Definition~\ref{def:evar} after the
substitution $\eta \mapsto \lambda$. KL radius and EVaR confidence level are
therefore two parameterizations of the same one-parameter family of
coherent risk measures.
\end{proposition}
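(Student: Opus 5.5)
The plan is to prove the KL duality \eqref{eq:kl-dro} by the standard Donsker--Varadhan / Gibbs variational argument, and then identify the right-hand side with $\EVaR_\alpha$ by direct substitution.

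\emph{Step 1 (weak duality).} Fix $R\ll R_0$ with $\KL(R\|R_0)\le\eta$ and any $\lambda>0$. The Donsker--Varadhan inequality gives
\[
\E_R[\ell/\lambda]\le \KL(R\|R_0)+\log\E_{R_0}[e^{\ell/\lambda}] .
\]
It follows from the Gibbs inequality $\KL(R\|G_\lambda)\ge 0$, where
\[
G_\lambda(dr)\propto e^{\ell(r)/\lambda}R_0(dr).
\]
Multiplying by $\lambda$ and using $\KL\le\eta$ gives
\[
\E_R[\ell]\le\lambda\log\E_{R_0}[e^{\ell/\lambda}]+\lambda\eta .
\]
Taking the supremum over $R$ and the infimum over $\lambda$ yields ``$\le$''. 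The argument only needs $\E_{R_0}[e^{\ell/\lambda}]<\infty$ for the relevant $\lambda$. If this fails for every $\lambda$, the right side is $+\infty$ and there is nothing to prove on this side. I will state a light-tail assumption on $R_0$ (bounded support suffices, consistent with $L^\infty$ in Definition~\ref{def:evar}).

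\emph{Step 2 (attainment / strong duality).} Set $\psi(\lambda):=\lambda\log\E_{R_0}[e^{\ell/\lambda}]+\lambda\eta$, which is convex in $\lambda$ as a perspective of the convex log-moment-generating function. Then take cases.

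Interior case: $\psi$ has a minimizer $\lambda^\star\in(0,\infty)$. The first-order condition reads $\KL(G_{\lambda^\star}\|R_0)=\eta$, since the derivative of $\lambda\Lambda(1/\lambda)$ equals $\Lambda(t)-t\Lambda'(t)$ at $t=1/\lambda$, which is $-\KL(G\|R_0)$. So $G_{\lambda^\star}$ is feasible, and the Gibbs inequality holds with equality for it. Hence $\E_{G_{\lambda^\star}}[\ell]=\psi(\lambda^\star)$, which closes the gap.

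Boundary case: the infimum is approached as $\lambda\downarrow0$. Then $\psi(\lambda)\to\esssup_{R_0}\ell$. If $R_0(\ell=\esssup\ell)\ge e^{-\eta}$, conditioning $R_0$ on the top set (or an approximating top set) gives feasible measures whose value approaches the essential supremum.

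I expect this step to be the main technical obstacle: handling monotonicity of $\lambda\mapsto\KL(G_\lambda\|R_0)$ and the degenerate atom case rigorously. A fallback is to invoke Lagrangian duality for the convex program $\sup\{\E_R\ell:\KL\le\eta\}$. Slater's condition holds with $R=R_0$ when $\eta>0$, and the inner unconstrained supremum $\sup_R\{\E_R\ell-\lambda\KL(R\|R_0)\}$ equals $\lambda\log\E_{R_0}e^{\ell/\lambda}$ by Donsker--Varadhan.

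\emph{Step 3 (EVaR identification).} With $\alpha=e^{-\eta}$ we have $\eta=\log(1/\alpha)$, so
\[
\lambda\log\E e^{\ell/\lambda}+\lambda\eta=\lambda\log\bigl(\alpha^{-1}\E e^{\ell/\lambda}\bigr).
\]
Renaming $\lambda$ as the EVaR variable, the infimum is exactly Definition~\ref{def:evar} applied to $L=\ell$. Since $\eta\mapsto e^{-\eta}$ is a bijection $(0,\infty)\to(0,1)$, the two one-parameter families coincide. Coherence of each member then follows from Definition~\ref{def:evar} or, equivalently, from the sup-of-expectations form together with Theorem~\ref{thm:dual-rep}.
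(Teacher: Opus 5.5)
The paper gives no proof of this proposition. The duality \eqref{eq:kl-dro} is imported from Ben-Tal et al.\ (2013), and the only argument the paper supplies is the substitution $\alpha=e^{-\eta}$, $\log(1/\alpha)=\eta$. Your Step~3 reproduces that substitution exactly, so your Steps~1--2 prove what the paper only cites.

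Your route also differs from the cited source. Ben-Tal et al.\ work with a finitely supported nominal law and an arbitrary $\phi$-divergence, and dualize by convex-conjugate duality with an auxiliary multiplier for the normalization $\sum_i p_i=1$. For KL they insert $\phi^\ast(s)=e^s-1$ and minimize that multiplier out to reach the log-moment form. That approach gives one tractable reformulation for every $\phi$-divergence, but it is tied to finite support and does not exhibit the worst case. Your Donsker--Varadhan/Gibbs argument is KL-specific. In exchange, it works for a general bounded (or suitably light-tailed) $R_0$ and identifies the maximizer as the exponential tilt $G_{\lambda^\star}\propto e^{\ell/\lambda^\star}R_0$; the first-order condition $\KL(G_{\lambda^\star}\|R_0)=\eta$ plays the role of complementary slackness. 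Your bounded-support hypothesis is one the statement leaves implicit, and it is reasonable here because EVaR is defined on $L^\infty$ and $R_0$ is empirical in the DRO application.

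The step you flag as the main obstacle closes with what you already have, without the Lagrangian fallback.
\begin{itemize}
\item Convexity of $\psi$, together with $\psi'(\lambda)=\eta-\KL(G_\lambda\|R_0)$, makes $\lambda\mapsto\KL(G_\lambda\|R_0)$ nonincreasing.
\item Jensen gives $\psi(\lambda)\ge\E_{R_0}[\ell]+\lambda\eta$. This rules out the infimum escaping to $\lambda\to\infty$ when $\eta>0$, and you should state it.
\item If there is no interior minimizer, $\psi$ is nondecreasing, so $\KL(G_\lambda\|R_0)\le\eta$ for every $\lambda>0$ and every tilt is feasible.
\item Then $\E_{G_\lambda}[\ell]=\Lambda'(1/\lambda)\to\esssup_{R_0}\ell$ as $\lambda\downarrow0$, because convexity gives $\Lambda(t)/t\le\Lambda'(t)\le\esssup\ell$ and $\Lambda(t)/t\to\esssup\ell$.
\end{itemize}
This handles the whole boundary case. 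Your conditioning construction is valid, but it assumes $R_0(\ell=\esssup\ell)\ge e^{-\eta}$ instead of deriving it from the boundary case.

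If you keep the Lagrangian fallback, note that its dual value at $\lambda=0$ is $\esssup_{R_0}\ell=\lim_{\lambda\downarrow0}\psi(\lambda)$. Hence the minimum over $\lambda\ge0$ coincides with the infimum over $\lambda>0$ in \eqref{eq:kl-dro}.
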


\subsection{Comparison and tractability}

\begin{table}[H]
\centering\footnotesize
\caption{Comparison of DRO ambiguity structures.}
\label{tab:dro-comparison}
\begin{tabular}{llll}
\toprule
Ambiguity & Data fit & Regularizer & Tractability \\
\midrule
Moment (Delage--Ye)                            & $\mu, \Sigma$ only & SDP convex & SDP \\
Wasserstein, linear loss                       & surrounds $\hat R_n$ & dual-norm on $w$ & LP/QP/SOCP \\
Wasserstein, CVaR (Thm.~\ref{thm:wcvar-dro})   & surrounds $\hat R_n$ & $\varepsilon/\alpha$ dual-norm & LP/SOCP \\
KL / $\phi$-div                                & abs.\ continuity vs $R_0$ & entropic & convex, closed form \\
\bottomrule
\end{tabular}
\end{table}

%==============================================================================
% SECTION 7: CHANCE-CONSTRAINED
%==============================================================================
\section{Chance-constrained DPO}\label{sec:cc}

The canonical chance-constrained portfolio problem
\citep{charnes1959chance} is
\begin{equation}
\max_{w \in \W} \E_R[r^\top w]
\quad \text{s.t.} \quad
\Pr_R(r^\top w \leq -a) \leq \varepsilon,
\label{eq:cc-problem}
\end{equation}
with significance level $\varepsilon \in (0, 1)$ and loss bound $a > 0$.

\subsection{Gaussian reformulation}

\begin{proposition}[Gaussian SOC reduction]
If $r \sim \N(\mu, \Sigma)$ with $\Sigma \succ 0$, the chance constraint in
\eqref{eq:cc-problem} is equivalent to the SOC constraint
\begin{equation}
\mu^\top w - \Phi^{-1}(1 - \varepsilon) \sqrt{w^\top \Sigma w} \geq -a,
\label{eq:cc-soc}
\end{equation}
which is convex for $\varepsilon < 1/2$.
\end{proposition}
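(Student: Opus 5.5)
The approach is to standardize the scalar portfolio return and reduce the probability constraint to a quantile inequality. Under $r \sim \N(\mu,\Sigma)$ with $\Sigma \succ 0$, for any fixed $w \neq 0$ the scalar $Y = r^\top w$ is Gaussian with mean $\mu^\top w$ and variance $\sigma_w^2 := w^\top \Sigma w > 0$. Positive definiteness is used exactly here, to guarantee $\sigma_w > 0$ whenever $w \neq 0$.

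Writing $Y = \mu^\top w + \sigma_w Z$ with $Z \sim \N(0,1)$, the constraint $\Pr(Y \le -a) \le \varepsilon$ becomes
\[
\Phi\!\left(\frac{-a-\mu^\top w}{\sigma_w}\right) \le \varepsilon .
\]
Since $\Phi$ is continuous and strictly increasing, this is equivalent to $(-a-\mu^\top w)/\sigma_w \le \Phi^{-1}(\varepsilon) = -\Phi^{-1}(1-\varepsilon)$. Rearranging gives \eqref{eq:cc-soc}.

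The degenerate case $w = 0$ is checked separately. There $Y \equiv 0$, and because $a > 0$ we have $\Pr(0 \le -a) = 0 \le \varepsilon$. On the other side, \eqref{eq:cc-soc} reads $0 \ge -a$, which also holds. So the equivalence is valid on all of $\W$.

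For convexity, write the constraint as
\[
\Phi^{-1}(1-\varepsilon)\,\|\Sigma^{1/2} w\|_2 \le \mu^\top w + a .
\]
The map $w \mapsto \|\Sigma^{1/2} w\|_2$ is a norm composed with a linear map, hence convex. When $\varepsilon < 1/2$ the coefficient $\Phi^{-1}(1-\varepsilon)$ is positive. The left side is then a nonnegative multiple of a convex function, while the right side is affine. The feasible set is therefore a sublevel set of a convex function, and in fact a second-order cone constraint in $(w, \mu^\top w + a)$.

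There is no real obstacle; the only places needing care are the strict monotonicity of $\Phi$ and the $w = 0$ case. It is also worth noting why the threshold $\varepsilon < 1/2$ matters. At $\varepsilon = 1/2$ the constraint is linear and still convex. For $\varepsilon > 1/2$ the coefficient is negative, so the constraint becomes a reverse-convex set. This explains the stated threshold, and $\varepsilon \le 1/2$ in fact suffices.
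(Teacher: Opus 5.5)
Your proof is correct and complete. The paper states this proposition without proof, and your argument (standardize $r^\top w$, then invert the strictly increasing $\Phi$) is the standard one it implicitly relies on. Your separate handling of $w=0$ is accurate, and so is your remark that $\varepsilon \le 1/2$ already suffices for convexity, since at $\varepsilon = 1/2$ the constraint reduces to the half-space $\mu^\top w \ge -a$.
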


\subsection{CVaR inner approximation}

\begin{proposition}[CVaR convex inner approximation]
For every $R \in \PP_1(\R^K)$,
$\CVaR_\varepsilon(-r^\top w) \leq a \Rightarrow \Pr_R(r^\top w \leq -a) \leq \varepsilon$.
The converse fails in general.
\end{proposition}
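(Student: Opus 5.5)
The plan is to pass through the Rockafellar--Uryasev representation (Theorem~\ref{thm:ru}) and then apply a Markov-type inequality to the positive-part term. Fix $R \in \PP_1(\R^K)$ and write $L = -r^\top w$. Since $R$ has a finite first moment and $\W$ is compact, $L$ is integrable, so $F_\varepsilon(w,\eta)$ in \eqref{eq:ru} is finite for every $\eta$. By Theorem~\ref{thm:ru},
\[
\CVaR_\varepsilon(L) = \min_{\eta\in\R}\Bigl\{\eta + \tfrac{1}{\varepsilon}\E_R[(L-\eta)_+]\Bigr\}.
\]
Assume $\CVaR_\varepsilon(L)\le a$. I would argue directly with the minimizer $\eta^\star=\VaR_\varepsilon(L)$.

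First, note that $\eta^\star \le \CVaR_\varepsilon(L)$, because the added term $\tfrac1\varepsilon\E[(L-\eta^\star)_+]$ is nonnegative. Hence $\eta^\star \le a$. There are then two cases.

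\emph{Case $\eta^\star<a$.} On the event $\{L\ge a\}$ we have $(L-\eta^\star)_+\ge a-\eta^\star>0$. Markov's inequality then gives
\[
\Pr(L\ge a)\le \frac{\E[(L-\eta^\star)_+]}{a-\eta^\star}=\frac{\varepsilon\bigl(\CVaR_\varepsilon(L)-\eta^\star\bigr)}{a-\eta^\star}\le\varepsilon.
\]

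\emph{Case $\eta^\star=a$.} This is the step needing slightly more care. Here $\VaR_\varepsilon(L)=a$, and the definition of $\VaR$ in the loss convention gives $\Pr(L\le a)\ge 1-\varepsilon$, i.e.\ $\Pr(L>a)\le\varepsilon$. To control $\Pr(L\ge a)$ including the atom at $a$, use that $\CVaR_\varepsilon=\eta^\star$ forces $\E[(L-a)_+]=0$, so $L\le a$ a.s. In that case $\Pr(L\ge a)=\Pr(L=a)$, which may exceed $\varepsilon$.

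The obstacle therefore sits exactly at this boundary atom, and the event must be read carefully. The original constraint is $\Pr(r^\top w\le -a)=\Pr(L\ge a)$. I would handle this either by proving the claim with the strict event $\{L>a\}$, which the case analysis gives cleanly, or by noting that the implication holds with $\Pr(L\ge a)$ whenever $L$ has no atom at $a$. In the proof I would flag this convention explicitly, since the degenerate example $L\equiv a$ shows the nonstrict version can fail at the atom.

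For the failure of the converse, I would give a two-point counterexample with $K=1$ and $w=1$. Let $L$ take the value $M$ with probability $\varepsilon/2$ and the value $0$ otherwise. Then $\Pr(L\ge a)=\varepsilon/2\le\varepsilon$ for any $0<a\le M$. On the other hand,
\[
\CVaR_\varepsilon(L)=\tfrac{1}{\varepsilon}\cdot\tfrac{\varepsilon}{2}M=M/2,
\]
which exceeds $a$ once $M>2a$.
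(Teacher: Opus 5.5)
Your proposal is correct, and its key observation is right: the proposition is false as literally stated, because the event $\{r^\top w\le -a\}=\{L\ge a\}$ is non-strict. Your example $r^\top w=-a$ $R$-a.s.\ gives $\CVaR_\varepsilon(L)=a$ but $\Pr_R(r^\top w\le -a)=1>\varepsilon$. The failure is not confined to degenerate laws. Any $L\le a$ a.s.\ with $\Pr(L=a)>\varepsilon$ also works, for example mass $2\varepsilon$ at $a$ and $1-2\varepsilon$ at $0$: it has $\CVaR_\varepsilon(L)=a$ and violates the conclusion. Since the scenario laws in \eqref{eq:cvar-lp} are purely atomic, the caveat is not vacuous in the paper's own setting.

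Three corrected statements are available:
\begin{itemize}
\item $\CVaR_\varepsilon(L)\le a$ implies $\Pr_R(r^\top w<-a)\le\varepsilon$;
\item $\CVaR_\varepsilon(L)<a$ implies $\Pr_R(r^\top w\le -a)\le\varepsilon$;
\item the original implication holds when $r^\top w$ has no atom at $-a$.
\end{itemize}
Your case analysis proves all three. Your converse counterexample is computed correctly: $\CVaR_\varepsilon=M/2$ while $\Pr(L\ge a)=\varepsilon/2$.

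The paper states this proposition without proof, so there is nothing to match step by step. The argument it implicitly relies on is a one-liner: $\VaR_\varepsilon(L)\le\CVaR_\varepsilon(L)\le a$, together with right-continuity and monotonicity of $F_L$, gives $F_L(a)\ge 1-\varepsilon$. That is exactly the strict-event version, with no case split and no appeal to Rockafellar--Uryasev.

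Your RU/Markov route is longer but yields more. Whenever $\VaR_\varepsilon(L)<a$ (in particular whenever $\CVaR_\varepsilon(L)<a$), it controls the non-strict event through the quantitative bound $\Pr(L\ge a)\le\varepsilon\,\Delta_\varepsilon(w)/\bigl(a-\VaR_\varepsilon(L)\bigr)$. Here $\Delta_\varepsilon(w)=\CVaR_\varepsilon-\VaR_\varepsilon$ is exactly the conservatism gap of Remark~\ref{rmk:cvar-tightness}, so your version makes explicit how that gap governs the slack in the inner approximation.
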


\begin{remark}[Quantitative tightness gap]\label{rmk:cvar-tightness}
For continuous $R$, the CVaR inner approximation is tight only when the
$\varepsilon$-tail is concentrated at one point. The conservatism gap is
$\Delta_\varepsilon(w) := \CVaR_\varepsilon(-r^\top w) - \VaR_\varepsilon(-r^\top w)$,
the average tail excess above the $\varepsilon$-quantile. For tail laws in
the $(\xi, \beta)$-generalized-Pareto class above $\VaR_\varepsilon$,
$\Delta_\varepsilon(w) = \beta / (1 - \xi)$. Gaussian tails ($\xi \to 0$)
give $\Delta_\varepsilon \to 0$ as $\varepsilon \to 0^+$; heavy-tailed
Pareto-$\alpha$ ($\xi = 1/\alpha$) gives
$\Delta_\varepsilon = \VaR_\varepsilon / (\alpha - 1)$, which does not
vanish---the regime where chance-constrained portfolio choice is most
consequential, motivating the joint use of CVaR with Wasserstein DRO.
\end{remark}

\subsection{Scenario approximation: Calafiore--Campi}

\begin{theorem}[\citealp{calafiore2005uncertain}; explicit constant from \citealp{campi2008exact}]
\label{thm:scenario}
Let $\{r_s\}_{s=1}^S$ be iid from $R$, and consider
\begin{equation}
\max_{w \in \W} \bar\mu^\top w
\quad \text{s.t.} \quad -r_s^\top w \leq a, \quad \forall s = 1, \ldots, S.
\label{eq:scenario-problem}
\end{equation}
Let $d$ be the Helly (support-rank) dimension. Then for
$\varepsilon, \delta \in (0, 1)$,
$\Pr^S\{\Pr_R(-r^\top w^S > a) > \varepsilon\}
\leq \sum_{j=0}^{d-1} \binom{S}{j} \varepsilon^j (1 - \varepsilon)^{S-j}$
\citep{calafiore2005uncertain}. The optimized exponential--polynomial
trade-off in this binomial tail yields the explicit conservative sufficient
condition
\begin{equation}
S \geq \frac{e}{e - 1} \cdot \frac{d - 1 + \log(1/\delta)}{\varepsilon}
\label{eq:scenario-bound}
\end{equation}
\citep[Thm.~1]{campi2008exact}.
\end{theorem}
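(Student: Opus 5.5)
The statement is the scenario theorem of Calafiore--Campi together with the explicit sample-size condition \eqref{eq:scenario-bound}. I would prove it in two parts.

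\emph{Part 1: the binomial tail bound.} I would take the binomial tail as given from \citet{calafiore2005uncertain} and \citet{campi2008exact}, and only sketch why it holds. The scenario program \eqref{eq:scenario-problem} is convex in $w$: it has a linear objective, convex feasible set $\W$, and $S$ affine constraints.

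Assume the minimizer is unique, breaking ties lexicographically. Helly's theorem then gives that the number of \emph{support constraints} is at most $d$. These are the constraints whose removal changes the optimizer.

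The classical Campi--Garatti argument proceeds as follows:
\begin{enumerate}
\item Condition on which index set $I$ of size $d$ determines the solution. The violation probability $V(w^S)$ then equals the violation probability of the solution of the $d$-scenario problem.
\item The remaining $S-d$ scenarios must all be satisfied, which contributes a factor $(1-v)^{S-d}$.
\item Integrate over the law of $V$ and sum over the $\binom{S}{d}$ choices of $I$.
\end{enumerate}
For the fully-supported case this yields exactly $\Pr^S\{V>\varepsilon\} = \sum_{j<d}\binom{S}{j}\varepsilon^j(1-\varepsilon)^{S-j}$, and an inequality in general. I regard this as the substantive step, but it is the cited result, so I would not reprove it.

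\emph{Part 2: the explicit sample size.} This is the part I would actually carry out. Let $\beta := \sum_{j=0}^{d-1}\binom{S}{j}\varepsilon^j(1-\varepsilon)^{S-j}$. I want to show that \eqref{eq:scenario-bound} implies $\beta\le\delta$.

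Introduce a free parameter $\kappa>1$ and insert a factor $\kappa^{d-1-j}\ge 1$ into each term. Extend the sum to all $j\le S$:
\[
\beta \le \kappa^{d-1}\sum_{j=0}^{S}\binom{S}{j}\Big(\tfrac{\varepsilon}{\kappa}\Big)^j(1-\varepsilon)^{S-j}
=\kappa^{d-1}\Big(1-\varepsilon+\tfrac{\varepsilon}{\kappa}\Big)^S
\le \kappa^{d-1}\exp\!\Big(-\varepsilon S\big(1-\tfrac1\kappa\big)\Big).
\]
This is at most $\delta$ whenever
\[
S\ge \frac{1}{1-1/\kappa}\cdot\frac{(d-1)\log\kappa+\log(1/\delta)}{\varepsilon}.
\]
Choosing $\kappa=e$ makes $\log\kappa=1$, which gives exactly
\[
S\ge \frac{e}{e-1}\cdot\frac{d-1+\log(1/\delta)}{\varepsilon}.
\]
This is the ``optimized exponential--polynomial trade-off'' named in the statement. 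Other values of $\kappa$ give valid but differently balanced constants.

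\emph{Expected obstacle.} The only delicate point is the measurability and degeneracy issue in Part 1: non-unique solutions, or support sets smaller than $d$. The standard fix is a tie-break rule together with the observation that degenerate cases only decrease the violation probability, which turns the equality into the displayed inequality.
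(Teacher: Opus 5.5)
Your proposal is correct and takes the paper's route. The paper also takes the binomial tail from the cited scenario literature without reproving it, and in Remark~\ref{rmk:helly} it attributes the factor $e/(e-1)$ to trading off the polynomial and exponential factors of that tail. Your Part~2 makes that trade-off explicit as a Chernoff-type bound with the free parameter set to $\kappa=e$, and the computation is correct.
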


% PATCH 7: REMARK 7.5 NONDEGENERACY QUALIFIER
\begin{remark}[Effective dimensions and the explicit constant]\label{rmk:helly}
Long-only simplex $\Delta^{K-1}$: $d = K - 1$. Box $[\ell, u]^K$ without
budget: $d = K$. Box plus budget: $d = K - 1$. Sector budgets with $m$
equalities: $d = K - m$. The Helly dimension $d$ is defined under the
standard nondegeneracy assumption that the optimum is unique with
probability one under $P^S$; at vertex degeneracies (multiple binding
constraints supporting the same optimal value) $d$ can be smaller than the
tabulated formulas suggest, in which case the scenario bound
\eqref{eq:scenario-bound} is only conservative. The factor $e/(e-1) \approx 1.582$
in \eqref{eq:scenario-bound} comes from optimizing the trade-off between the
polynomial and exponential factors in the binomial tail.
\end{remark}

%==============================================================================
% SECTION 8: WEIGHTS / STOCHASTIC ALLOCATIONS
%==============================================================================
\section{Weight distributions: stochastic allocations}\label{sec:weights}

We turn to the setting where $W$ is non-trivial. The first question is
whether stochastic allocation can improve a static risk-adjusted objective
relative to its mean. The answer is no, in considerable generality; this
delimits when stochastic policies are genuinely valuable.

\subsection{Static randomization cannot improve concave risk-adjusted utility}

\begin{theorem}[Static no-randomization under ex-ante independent randomization]
\label{thm:no-randomization}
Let $Q \in \PP(\W)$ be a stochastic allocation law with mean
$\bar w := \E_Q[w] \in \W$, which is feasible because $\W$ is convex and
compact. Suppose the allocation draw $w \sim Q$ is made ex ante and
independently of the return draw $r \sim R$. Suppose $\gamma \geq 0$, $u$
is concave and non-decreasing, and $\rho$ is a law-invariant convex risk
measure satisfying dilatation monotonicity:
$\rho(\E[L \mid \mathcal{G}]) \leq \rho(L)$
for every integrable loss $L$ and every sub-$\sigma$-algebra $\mathcal{G}$. Then
\[
\E_{r \sim R}[u(r^\top \bar w)] - \gamma \rho(-r^\top \bar w)
\geq \E_{w \sim Q, r \sim R}[u(r^\top w)] - \gamma \rho(-r^\top w),
\]
where the right-hand side is evaluated under the product law $Q \otimes R$.
\end{theorem}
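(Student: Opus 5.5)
The argument treats the two terms separately and combines them using $\gamma \ge 0$. Work on the product space carrying independent $w \sim Q$ and $r \sim R$, and let $\mathcal{G} := \sigma(r)$ be the $\sigma$-algebra generated by the return draw. Independence gives
\[
\E[\,r^\top w \mid \mathcal{G}\,] = r^\top \E_Q[w] = r^\top \bar w \quad \text{a.s.},
\]
and integrability follows from compactness of $\W$ together with $R \in \PP_1(\R^K)$, which is part of the standing assumptions. Feasibility of $\bar w$ is immediate: the barycenter of a probability measure on a compact convex set lies in that set.

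\emph{Utility term.} For each fixed $r$, the map $w \mapsto u(r^\top w)$ is concave, being a concave function composed with a linear map. Jensen's inequality applied to the conditional law of $w$ given $r$, which equals $Q$ by independence, yields
\[
\E_{w\sim Q}[u(r^\top w)] \le u(r^\top \bar w).
\]
Integrating in $r \sim R$ and applying Fubini (Proposition~\ref{prop:fubini}-style) gives
\[
\E_{Q\otimes R}[u(r^\top w)] \le \E_R[u(r^\top \bar w)].
\]
Some care is needed when $u(r^\top w)$ is not integrable. If $\E[u^-] = \infty$ on the randomized side, its value is $-\infty$ and the inequality is trivial. Otherwise Jensen's inequality bounds $u^-(r^\top \bar w)$, so the left-hand side is also well defined.

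\emph{Risk term.} Set $L := -r^\top w$, which is integrable under $Q \otimes R$. Then
\[
\E[L \mid \mathcal{G}] = -r^\top \bar w,
\]
whose law is exactly the law of the deterministic-allocation loss under $R$. Law invariance of $\rho$ means $\rho(-r^\top \bar w)$ is unambiguous on either space. Dilatation monotonicity then gives $\rho(-r^\top \bar w) \le \rho(-r^\top w)$. Multiplying by $-\gamma \le 0$ reverses the inequality, and adding the utility inequality proves the claim.

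The main subtlety is the risk step. One must check that the conditional expectation is taken on a space rich enough to carry both $L$ and $\E[L \mid \mathcal{G}]$, and that law invariance transfers $\rho$ from the product space to the return-only space. The product space serves both purposes. If $\rho$ is only defined on $L^\infty$ while returns are unbounded, one either works on the $L^1$ extension assumed in the hypothesis, or notes that the hypothesis is stated for integrable losses. It is also worth remarking that dilatation monotonicity is automatic for law-invariant convex risk measures with the Fatou property on atomless spaces, but here it is simply assumed, so nothing more is needed.
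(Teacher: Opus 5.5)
Your proof is correct and takes the same route as the paper: pointwise Jensen in $w$ integrated over $R$ for the utility term, and for the risk term conditioning $L=-r^\top w$ on $\sigma(r)$ so that independence gives $\E[L\mid r]=-r^\top\bar w$ and dilatation monotonicity applies, with the two combined via $\gamma\ge 0$. Your additional bookkeeping, on integrability of $u^-$ and on the common probability space needed for law invariance, is a welcome tightening of points the paper leaves implicit.
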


\begin{proof}
\emph{Utility.} Pointwise in $r$, concavity gives
$\E_{w \sim Q} u(r^\top w) \leq u(r^\top \E_Q[w]) = u(r^\top \bar w)$.
Integrate over $R$.

\emph{Risk.} Set $L = -r^\top w$ on the product space and condition on $r$.
Independence gives $\E[L \mid r] = -r^\top \bar w$. By dilatation
monotonicity, $\rho(-r^\top \bar w) = \rho(\E[L \mid r]) \leq \rho(L) = \rho(-r^\top w)$.
Combining yields the claim.
\end{proof}

CVaR, EVaR, entropic risk, and the worst-case loss all satisfy dilatation
monotonicity \citep[\S 4.5]{follmer2011stochastic}. On an atomless
probability space, dilatation monotonicity is automatic for law-invariant
convex risk measures \citep[Theorem~4.59]{follmer2011stochastic};
\citet{cherny2006weighted} provides a direct proof. The hypothesis in
Theorem~\ref{thm:no-randomization} is therefore not an additional
restriction beyond law-invariance and convexity. On an atomless space the
result thus reduces to the statement that ex-ante independent randomization
cannot help a concave-utility / convex-risk objective---a measure-theoretic
sharpening of folklore, included here because it delimits the role of the
stochastic-allocation classes of \S\ref{sec:weights} rather than because the
argument is itself new.

\begin{remark}[When stochastic weights matter]
Theorem~\ref{thm:no-randomization} does not make stochastic policies
useless; it sharpens their role. Randomized weights are valuable when at
least one of: (i) exploration in dynamic RL; (ii) entropy regularization;
(iii) nonconvex or discrete allocation constraints; (iv) transaction-cost
frictions creating path dependence; (v) model averaging across heterogeneous
strategies; (vi) hierarchical allocation where randomness is realized before
conditioning on a latent regime.
\end{remark}

\subsection{Dirichlet policies on the simplex}

\begin{definition}[Dirichlet policy]
For $\W = \Delta^{K-1}$ and a state space $\StateSpace$, a Dirichlet policy is the
kernel
\begin{equation}
W_\theta(\cdot \mid s) = \Dir(\alpha_\theta(s)), \quad
\alpha_\theta(s) = \softplus(f_\theta(s)) + \epsilon \mathbf{1} \in \R^K_{>0},
\label{eq:dirichlet}
\end{equation}
for a network $f_\theta: \StateSpace \to \R^K$ and floor $\epsilon > 0$. The
state-dependent concentration is $\alpha_{0,\theta}(s) = \sum_i \alpha_{i,\theta}(s)$.
\end{definition}

\begin{proposition}[Score, entropy, moments]\label{prop:dirichlet-moments}
For $W_\theta(\cdot \mid s) = \Dir(\alpha_\theta(s))$:
(i) Score: $\nabla_\theta \log W_\theta(w \mid s) = \sum_{i=1}^K [\log w_i - \psi(\alpha_{i,\theta}(s)) + \psi(\alpha_{0,\theta}(s))] \nabla_\theta \alpha_{i,\theta}(s)$;
(ii) Entropy: $H(W_\theta(\cdot \mid s)) = \log B(\alpha) + (\alpha_0 - K) \psi(\alpha_0) - \sum_{i=1}^K (\alpha_i - 1) \psi(\alpha_i)$;
(iii) Moments: $\E[w_i] = \alpha_i / \alpha_0$,
$\Var(w_i) = \alpha_i (\alpha_0 - \alpha_i) / (\alpha_0^2 (\alpha_0 + 1))$,
$\Cov(w_i, w_j) = -\alpha_i \alpha_j / (\alpha_0^2 (\alpha_0 + 1))$ for $i \neq j$.
\end{proposition}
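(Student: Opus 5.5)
We prove the three parts of Proposition~\ref{prop:dirichlet-moments} directly from the Dirichlet density
\[
W_\theta(w\mid s)=\frac{1}{B(\alpha)}\prod_{i=1}^K w_i^{\alpha_i-1},\qquad B(\alpha)=\frac{\prod_i\Gamma(\alpha_i)}{\Gamma(\alpha_0)},
\]
with $\alpha=\alpha_\theta(s)$ and density taken with respect to Lebesgue measure on the first $K-1$ coordinates of $\Delta^{K-1}$. The tools are the chain rule in $\theta$ through $\alpha$, the identity $\partial_{\alpha_i}\log\Gamma=\psi$, and the normalization identity
\[
\int_{\Delta^{K-1}}\prod_i w_i^{\beta_i-1}\,dw=B(\beta)\quad\text{for all }\beta\in\R^K_{>0}.
\]

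\emph{Score.} We have
\[
\log W_\theta(w\mid s)=\sum_i(\alpha_i-1)\log w_i-\sum_i\log\Gamma(\alpha_i)+\log\Gamma(\alpha_0).
\]
Differentiating in $\alpha_i$, and using $\partial\alpha_0/\partial\alpha_i=1$, gives
\[
\partial_{\alpha_i}\log W=\log w_i-\psi(\alpha_i)+\psi(\alpha_0).
\]
The chain rule over $i$ then yields (i). This needs $\alpha_\theta(s)$ to be differentiable in $\theta$. That holds because softplus is smooth and $f_\theta$ is assumed differentiable. The floor $\epsilon>0$ keeps $\alpha$ inside the open orthant, where $\log\Gamma$ is smooth.

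\emph{Entropy.} Write
\[
H=-\E[\log W]=\log B(\alpha)-\sum_i(\alpha_i-1)\E[\log w_i].
\]
The key lemma is $\E[\log w_i]=\psi(\alpha_i)-\psi(\alpha_0)$. To get it, differentiate the normalization identity under the integral sign in $\beta_i$ at $\beta=\alpha$. This gives
\[
\int\log w_i\prod_j w_j^{\alpha_j-1}\,dw=\partial_{\alpha_i}B(\alpha)=B(\alpha)\bigl(\psi(\alpha_i)-\psi(\alpha_0)\bigr).
\]
Equivalently, the expected score vanishes. Substituting the lemma gives
\[
H=\log B(\alpha)-\sum_i(\alpha_i-1)\psi(\alpha_i)+\psi(\alpha_0)\sum_i(\alpha_i-1).
\]
Since $\sum_i(\alpha_i-1)=\alpha_0-K$, this is exactly (ii).

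Differentiating under the integral is the only step that needs justification, and it is the main (mild) obstacle. Fix $\beta_i$ in a compact subinterval of $(0,\infty)$ containing $\alpha_i$. Then $|\log w_i|\,w_i^{\beta_i-1}$ is dominated by $C\,w_i^{\underline\beta-1-\delta}$ for small $\delta>0$, where $\underline\beta$ is the lower end of the interval. This bound is integrable, so dominated convergence applies.

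\emph{Moments.} Use the normalization identity with shifted exponents. For the mean, shift by $e_i$:
\[
\E[w_i]=\frac{B(\alpha+e_i)}{B(\alpha)}=\frac{\Gamma(\alpha_i+1)\,\Gamma(\alpha_0)}{\Gamma(\alpha_i)\,\Gamma(\alpha_0+1)}=\frac{\alpha_i}{\alpha_0}.
\]
Similarly, shifting by $e_i+e_j$ and by $2e_i$ gives
\[
\E[w_iw_j]=\frac{\alpha_i\alpha_j}{\alpha_0(\alpha_0+1)}\ (i\ne j),\qquad \E[w_i^2]=\frac{\alpha_i(\alpha_i+1)}{\alpha_0(\alpha_0+1)}.
\]
Subtracting the product of means then gives the stated variance and covariance after routine algebra over the common denominator $\alpha_0^2(\alpha_0+1)$.
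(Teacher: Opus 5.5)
Your proof is correct and complete. The paper states this proposition without proof, treating the Dirichlet score, entropy, and moment identities as standard facts. Your derivation is exactly the standard route those facts rest on: differentiating the log-density through $\alpha_\theta(s)$, obtaining $\E[\log w_i]=\psi(\alpha_i)-\psi(\alpha_0)$ by differentiating the Dirichlet integral, and computing moments from the shifted-exponent ratios $B(\alpha+e_i)/B(\alpha)$, $B(\alpha+e_i+e_j)/B(\alpha)$, $B(\alpha+2e_i)/B(\alpha)$.

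Only the dominated-convergence step needs a little tightening. The $\beta_i$-interval should contain $\alpha_i$ in its interior, and the dominating function should be the full integrand bound $C_\delta\, w_i^{\underline\beta-1-\delta}\prod_{j\ne i} w_j^{\alpha_j-1}$ with $0<\delta<\underline\beta$. Integrability then follows because this is, up to the constant, a Dirichlet integrand with all parameters positive.
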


\begin{proposition}[Expected utility under a Dirichlet policy: a tractable bound]
\label{prop:dirichlet-utility}
Let $u$ be concave, $W = \Dir(\alpha)$ with $\alpha_0 = \sum_i \alpha_i$,
and $R$ have finite second moment. Define $\bar w = \alpha / \alpha_0$ and
$\Sigma_W = \Cov_W(w)$. Then
\begin{enumerate}[label=(\roman*)]
\item (Upper bound, Jensen in $w$):
$\E_{w \sim W} \E_{r \sim R}[u(r^\top w)] \leq \E_{r \sim R}[u(r^\top \bar w)]$.
\item (Lower bound, second-order Taylor): if $u$ is twice continuously
differentiable on $I$ with $|u''(x)| \leq M_2$ for $x \in I$, and
$r^\top w \in I$ for $W \otimes R$-a.e.\ $(w, r)$, then
\[
\E_{w \sim W}\!\E_{r \sim R}[u(r^\top w)]
\;\geq\; \E_{r \sim R}[u(r^\top \bar w)]
- \tfrac{M_2}{2}\, \E_{r \sim R}[r^\top \Sigma_W r].
\]
\item (Policy gradient, REINFORCE): for the differentiable parameterization
$\alpha = \alpha_\theta(s)$,
\[
\nabla_\theta \E_{w \sim W_\theta}\!\E_{r \sim R}[u(r^\top w)]
\;=\; \E_{w,\, r}\!\left[u(r^\top w)\,
\nabla_\theta \log W_\theta(w \mid s)\right].
\]
\end{enumerate}
\end{proposition}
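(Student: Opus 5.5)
Each of the three parts gets its own argument: (i) by Jensen, (ii) by a second-order Taylor expansion around $\bar w$ with the first-order term vanishing, and (iii) by the log-derivative trick.

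\emph{Part (i).} Fix $r$. The map $w \mapsto u(r^\top w)$ is concave, since it is a concave function composed with a linear map. Jensen's inequality under $W$ gives
\[
\E_{w \sim W} u(r^\top w) \leq u(r^\top \bar w),
\]
because $\E_W[w] = \alpha/\alpha_0 = \bar w$ by Proposition~\ref{prop:dirichlet-moments}(iii). Integrating over $r \sim R$ and applying Tonelli/Fubini (Proposition~\ref{prop:fubini}) to exchange the order of integration proves the claim. This is the same argument as the utility half of Theorem~\ref{thm:no-randomization}.

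\emph{Part (ii).} Fix $r$ and set $x_0 = r^\top \bar w$ and $x = r^\top w$.
\begin{itemize}
\item First, check that $x_0 \in I$. The Taylor expansion needs the segment between $x$ and $x_0$ to lie in $I$. If $I$ is an interval, then $x_0 = \E_W[x]$ is a convex combination of values in $I$ (for $R$-a.e.\ $r$), so $x_0 \in I$ and the whole segment lies in $I$. If the statement does not make $I$ an interval, I would read it as one.
\item Taylor's theorem with Lagrange remainder gives
\[
u(x) \geq u(x_0) + u'(x_0)(x - x_0) - \tfrac{M_2}{2}(x - x_0)^2 .
\]
\item Take the expectation over $w \sim W$. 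The linear term vanishes because $\E_W[x - x_0] = r^\top(\E_W w - \bar w) = 0$. The quadratic term is $\E_W[(r^\top(w - \bar w))^2] = r^\top \Sigma_W r$.
\item Integrate over $R$. The right-hand side is finite because $R$ has finite second moment and $\Sigma_W$ is bounded; Dirichlet weights lie in the simplex, so their moments are finite.
\end{itemize}

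\emph{Part (iii).} Write the objective as
\[
\int_{\Delta^{K-1}} g(w)\, W_\theta(w \mid s)\, dw, \qquad g(w) := \E_R[u(r^\top w)].
\]
Differentiating under the integral sign and using $\nabla_\theta W_\theta = W_\theta \nabla_\theta \log W_\theta$ gives the score-function identity. The score is given explicitly in Proposition~\ref{prop:dirichlet-moments}(i).

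The main obstacle is justifying the interchange of $\nabla_\theta$ and the integral. The score involves $\log w_i$, which is unbounded near the faces of the simplex. The plan is:
\begin{itemize}
\item Observe that $g$ is bounded on the compact simplex. This follows when $u$ is continuous and $u(r^\top w)$ is dominated uniformly in $w$, for instance if $u$ is Lipschitz and $R \in \PP_1$.
\item Observe that on a neighbourhood of $\theta$ where $\alpha_\theta(s)$ stays in a compact subset of $(\epsilon/2, \infty)^K$, the quantities $|\log w_i|\, w_i^{\alpha_i - 1}$ are integrable uniformly in $\theta$. This is because $\int_0^1 |\log t|\, t^{a-1}\, dt < \infty$ for $a > 0$, and the floor $\epsilon$ keeps the $\alpha_i$ bounded away from zero.
\item Conclude by dominated convergence that differentiation under the integral is legitimate, assuming $\theta \mapsto \alpha_\theta(s)$ is $C^1$.
\item Finally, apply Fubini again to write the result as the joint expectation over $(w, r)$.
\end{itemize}
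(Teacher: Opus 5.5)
The paper states Proposition~\ref{prop:dirichlet-utility} without proof; it is followed only by Remark~\ref{rmk:domain-hypothesis} on the constant $M_2$. There is therefore no paper argument to compare against. Your three arguments are correct and are the natural ones:
\begin{itemize}
\item part (i) is exactly the utility half of Theorem~\ref{thm:no-randomization}, using $\E_W[w]=\alpha/\alpha_0$;
\item part (ii) is the Lagrange-remainder bound, with the linear term removed because $w$ is centered at $\bar w$;
\item part (iii) is the log-derivative trick, using the score in Proposition~\ref{prop:dirichlet-moments}(i).
\end{itemize}

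Your write-up is more careful than the statement in two places, and both are worth keeping.
\begin{itemize}
\item Reading $I$ as an interval is necessary for (ii). Your remark that $x_0=\E_W[r^\top w]\in I$ holds even for an open interval closes that step.
\item The hypotheses you add for (iii) are genuinely missing from the statement: a finite, bounded $g(w)=\E_R[u(r^\top w)]$ (e.g.\ $u$ Lipschitz with $R\in\PP_1$), and $\theta\mapsto\alpha_\theta(s)$ of class $C^1$. As a counterexample, take CARA utility and multivariate-$t_3$ returns, which have finite second moment. Then $g\equiv-\infty$ on the simplex, and the gradient identity has no meaning.
\end{itemize}

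One small repair is needed in (iii). The domination you need is a statement on the simplex, and the one-dimensional bound $\int_0^1|\log t|\,t^{a-1}\,dt<\infty$ does not give it directly. Fix a compact neighbourhood on which $\min_i\alpha'_i\ge a>0$.
\begin{itemize}
\item Since $t^{\alpha'_i-1}\le t^{a-1}$ on $(0,1]$, the density $p_{\theta'}(w)$ is bounded by $\sup_{\alpha'}B(\alpha')^{-1}\,B(a\mathbf 1)$ times the $\Dir(a,\dots,a)$ density.
\item Under $\Dir(a,\dots,a)$ the marginal of $w_i$ is $\mathrm{Beta}(a,(K-1)a)$, so $\E|\log w_i|=\psi(Ka)-\psi(a)<\infty$.
\item The remaining factors $B(\alpha')^{-1}$, $\psi(\alpha'_i)$, $\psi(\alpha'_0)$ and $\nabla_\theta\alpha_{\theta'}$ are locally bounded.
\end{itemize}
Together these give the integrable envelope for the mean-value and dominated-convergence step.

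A minor citation point: the exchange of $\E_W$ and $\E_R$ in (i) and (iii) is ordinary Fubini--Tonelli on $W\otimes R$, not Proposition~\ref{prop:fubini}, which concerns the parameter mixture. In (i) the integrand is quasi-integrable because a concave $u$ has an affine majorant and $R\in\PP_1$.
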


% PATCH 8: PROP 8.5(ii) WORST-CASE M_2 NOTE
\begin{remark}[On the constant $M_2$]\label{rmk:domain-hypothesis}
For bounded support $\supp(R) \subseteq [-r_{\max}, r_{\max}]^K$ and long-only
$\W = \Delta^{K-1}$, standard utilities admit explicit $M_2$ on
$I = [-r_{\max}, r_{\max}]$: log utility (valid when $r_{\max} < 1$)
gives $M_2^{\log} = (1 - r_{\max})^{-2}$; CRRA gives
$M_2^{\mathrm{CRRA}} = \gamma (1 - r_{\max})^{-1-\gamma}$; CARA gives
$M_2^{\mathrm{CARA}} = a^2 e^{a r_{\max}}$, valid on all of $\R$.

The constants $M_2$ above are worst-case curvature bounds over the full
interval. For log utility, $(1 - r_{\max})^{-2}$ is attained only at the
lower endpoint $x = -r_{\max}$, so the resulting variance penalty
$\frac{M_2}{2} \E_R[r^\top \Sigma_W r]$ is a conservative global lower bound
rather than a tight practical approximation; a tighter local constant
$M_2(r) = \sup_{|x| \leq |r|} |u''(x)|$ can be substituted when the
predictive law is concentrated away from the domain boundary.
\end{remark}

\subsection{Long-only alternatives and a true long--short construction}

\begin{definition}[Logistic-normal simplex policy]
A logistic-normal simplex policy is $w = \softmax(z)$,
$z \sim \N(\mu_\theta(s), \Sigma_\theta(s))$.
This is a long-only stochastic allocation; it is not long--short.
\end{definition}

\begin{definition}[Two-book long--short policy]
Let $w^+ \sim \Dir(\alpha^+_\theta(s))$ and $w^- \sim \Dir(\alpha^-_\theta(s))$
be independent draws on the simplex, let $q(s) \geq 0$ be a state-dependent
short-leverage, and let $(\A_+, \A_-)$ be a fixed disjoint partition of the
asset universe such that
\begin{equation}
\supp(\alpha^+_\theta(s)) \subseteq \A_+, \quad
\supp(\alpha^-_\theta(s)) \subseteq \A_-.
\label{eq:disjoint-supports}
\end{equation}
Define $w = (1 + q(s)) w^+ - q(s) w^- \in \R^K$. Then $\mathbf{1}^\top w = 1$
and $\|w\|_1 = 1 + 2 q(s)$.
\end{definition}

\begin{remark}[On the disjoint-supports condition]
The disjoint-supports condition \eqref{eq:disjoint-supports} is a theoretical
convenience that yields a clean closed form and a budget-preserving construction.
Most empirically interesting long--short equity strategies violate disjointness:
the same asset can appear in both books at different weights through factor
crowding, sector rotation, or pairs trading. A more general construction allows
overlapping books with a netting operator $w = (1+q)w^+ - q w^-$ followed by
projection onto $\{w : \mathbf 1^\top w = 1,\ \|w\|_1 \le 1+2q\}$; the explicit
formulas \eqref{eq:disjoint-supports} then become inequality constraints rather
than identities. We do not pursue the overlapping variant here.
\end{remark}

\subsection{Hierarchical allocations: HRP, RA-HRP, Bayesian HRP}

\paragraph{HRP.}
\citet{lopez2016building} builds a binary tree $\mathcal{T}$ on the asset
universe via hierarchical clustering of the correlation matrix. At each
internal node with children $(L, R)$ and risk contributions $(v_L, v_R)$,
$\phi_L = v_R / (v_L + v_R)$, $\phi_R = v_L / (v_L + v_R)$. HRP uses only
$\Sigma$ and produces deterministic $w^{\mathrm{HRP}} \in \Delta^{K-1}$.

\paragraph{RA-HRP.}
\citet{noguer2025rahrp} extends HRP by replacing the inverse-variance split
with a risk-adjusted-return split:
$\phi_L = (\mu_L / v_L) / (\mu_L / v_L + \mu_R / v_R)$.

\paragraph{Bayesian HRP.}
A fully distributional version replaces each split by a Beta prior
$\phi_v \sim \mathrm{Beta}(\tau v_R, \tau v_L)$ at every internal node of
$\mathcal{T}$, with concentration $\tau > 0$.

\begin{proposition}[Bayesian HRP recovers HRP/RA-HRP in the high-concentration limit]
\label{prop:bhrp-limit}
Let $w^{\mathrm{BHRP}}(\tau) \in \Delta^{K-1}$ denote a draw from the
Bayesian HRP leaf-weight law with concentration $\tau$. Then as $\tau \to \infty$,
$w^{\mathrm{BHRP}}(\tau) \xrightarrow{L^2} w^{\mathrm{HRP}}$ (resp.\
$w^{\mathrm{RA\text{-}HRP}}$).
\end{proposition}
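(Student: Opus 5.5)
The plan is to work node by node and then propagate through the finite tree. A Bayesian HRP draw is a product of independent Beta splits along each root-to-leaf path. Specifically, for leaf $i$,
\[
w_i^{\mathrm{BHRP}}(\tau) = \prod_{v \in \mathrm{path}(i)} \phi_v^{(i)},
\]
where $\phi_v^{(i)}$ is $\phi_v$ or $1-\phi_v$ according to whether the path goes left or right at $v$. The deterministic HRP weight has the same product form with each split replaced by $\bar\phi_v := v_R/(v_L+v_R)$. For RA-HRP, I use the Beta parameters $(\tau a_v, \tau b_v)$ with $a_v/(a_v+b_v)$ equal to the RA-HRP split. I will prove the two ingredients below and then combine them.

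\emph{Step 1 (single-node concentration).} For $\phi \sim \mathrm{Beta}(\tau a, \tau b)$ with $a,b>0$ fixed, the standard moments give $\E[\phi] = a/(a+b)$, which is exactly the HRP (resp.\ RA-HRP) split and does not depend on $\tau$. The variance is
\[
\Var(\phi) = \frac{ab}{(a+b)^2(\tau(a+b)+1)} = O(\tau^{-1}).
\]
Hence $\phi_v \to \bar\phi_v$ in $L^2$ at rate $O(\tau^{-1/2})$, and likewise $1-\phi_v \to 1-\bar\phi_v$. Positivity of $v_L, v_R$ is needed so the Beta law is well defined. For RA-HRP this requires $\mu_L/v_L, \mu_R/v_R > 0$, which I will state as a standing hypothesis.

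\emph{Step 2 (products along a path).} The splits at distinct nodes are independent, and all factors lie in $[0,1]$. For bounded factors I telescope:
\[
\prod_{k=1}^m X_k - \prod_{k=1}^m x_k = \sum_{k=1}^m \Bigl(\prod_{j<k} x_j\Bigr)(X_k - x_k)\Bigl(\prod_{j>k} X_j\Bigr).
\]
Every coefficient is bounded by $1$, so Minkowski gives $\|w_i^{\mathrm{BHRP}}(\tau) - w_i^{\mathrm{HRP}}\|_{L^2} \le \sum_{k} \|X_k - x_k\|_{L^2} \le m_i\, C\, \tau^{-1/2}$. Here $m_i$ is the depth of leaf $i$, at most $K-1$. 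Independence is not even needed for this bound; it only helps if one wants the exact second moment, which factorizes as a product of $\E[X_k^2]$. Summing over the $K$ leaves gives $\E\|w^{\mathrm{BHRP}}(\tau) - w^{\mathrm{HRP}}\|^2 = O(K^3 \tau^{-1}) \to 0$.

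\emph{Main obstacle.} The main obstacle is specification rather than analysis. The tree $\mathcal{T}$ and the risk contributions $(v_L, v_R)$ must be treated as fixed, i.e.\ conditioned on $\Sigma$ (or on the posterior draw of $\Sigma$), and must not vary with $\tau$. Otherwise the target itself moves. If $\Sigma$ is itself random, I would condition on it, apply the bound above conditionally, and then integrate. This works because the bound $m_i C \tau^{-1/2}$ depends only on the node ratios, and $C \le \max_v \sqrt{a_v b_v}/(a_v+b_v)^{3/2}$ is uniformly bounded provided $a_v + b_v$ is bounded below. Once these points are settled, the argument reduces to the elementary Steps 1--2.
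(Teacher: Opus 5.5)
Your proposal is correct and follows the paper's proof essentially step for step: the Beta mean is $\tau$-invariant and the variance is $O(\tau^{-1})$ at each node, the telescoping product bound with all factors in $[0,1]$ gives an $O(d_\ell\,\tau^{-1/2})$ $L^2$ error per leaf, and summing over finitely many leaves finishes the argument. Your side remarks are valid refinements that the paper leaves implicit: independence across nodes is not actually used in the Minkowski bound, and RA-HRP needs $\mu_L/v_L,\ \mu_R/v_R>0$ for the Beta law to be well defined.
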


% PATCH 13: EXPANDED PROOF
\begin{proof}
For each internal node $v$ of the tree $\mathcal{T}$, the split fraction
$\phi_{v,\tau} \sim \mathrm{Beta}(\tau \alpha_v, \tau \beta_v)$ with
$\alpha_v, \beta_v$ the node-mean parameters. The Beta distribution has mean
$\E[\phi_{v,\tau}] = \alpha_v / (\alpha_v + \beta_v) =: \bar\phi_v$
independently of $\tau$, and variance
\[
\Var[\phi_{v,\tau}] = \frac{\alpha_v \beta_v}{(\alpha_v + \beta_v)^2 \, (\tau(\alpha_v + \beta_v) + 1)}
= O(\tau^{-1}).
\]
Hence $\phi_{v,\tau} \to \bar\phi_v$ in $L^2$ as $\tau \to \infty$.
Independence across nodes is the standing hypothesis on the prior.

For a leaf $\ell$ at depth $d_\ell$, the leaf weight is the finite product
$w_{\ell,\tau} = \prod_{v \in \mathrm{path}(\ell)} \psi_{v,\tau}$,
where $\psi_{v,\tau} \in \{\phi_{v,\tau}, 1 - \phi_{v,\tau}\}$ depending on
path direction. The deterministic limit is
$\bar w_\ell = \prod_{v \in \mathrm{path}(\ell)} \bar\psi_v$. Because all
factors and partial products are bounded by one, the telescoping identity
\[
\prod_{j=1}^{m} a_j - \prod_{j=1}^{m} b_j
= \sum_{j=1}^m \left(\prod_{i<j} a_i\right) (a_j - b_j) \left(\prod_{i>j} b_i\right)
\]
gives
\[
\|w_{\ell,\tau} - \bar w_\ell\|_{L^2}
\leq \sum_{j=1}^{d_\ell} \|\psi_{j,\tau} - \bar\psi_j\|_{L^2}
= O\!\left(d_\ell\,\tau^{-1/2}\right),
\]
where $d_\ell$ is the depth of leaf $\ell$ in the tree. For balanced HRP
$d_\ell = \Theta(\log K)$; for the worst-case unbalanced tree
$d_\ell = O(K)$. The hidden constant in $O(\tau^{-1/2})$ is therefore
$O(\log K)$ in the balanced case and $O(K)$ in the worst case; the depth
factor affects the prefactor of the rate but not the rate exponent. Since the tree has finitely many leaves, convergence holds in $L^2$ for the
entire allocation vector. The RA-HRP case is identical with the
corresponding split parameters.
\end{proof}

%==============================================================================
% SECTION 9: DISTRIBUTIONAL RL
%==============================================================================
\section{Distributional reinforcement learning for portfolios}\label{sec:distributional-rl}

Standard RLPO learns
$Q^\pi(s, w) = \E^\pi[\sum_t \gamma^t R_t \mid s_0 = s, w_0 = w]$.
Distributional RL \citep{bellemare2017distributional} learns the full return
distribution
$Z^\pi(s, w) = \Law(\sum_t \gamma^t R_t \mid s_0 = s, w_0 = w)$.

\subsection{The portfolio MDP and distributional Bellman operator}

Let $(\StateSpace, \A, \Trans, \gamma)$ be a discounted portfolio MDP with $\A = \W$.
To allow turnover penalties, the one-period reward is allowed to depend on
the current action, next state, and next action:
$g(s, w, s', w') := w^\top r(s, s') - c(w, w')$.
The transition is $s' \sim \Trans(\cdot \mid s, w)$, and the next action is
$w' \sim \pi(\cdot \mid s')$.

For $p \geq 1$, define the complete metric space
\[
\Z_p := \left\{ Z: \StateSpace \times \A \to \PP_p(\R) :
Z \text{ measurable, } \sup_{s,w} \int |x|^p dZ(s, w)(x) < \infty \right\},
\]
equipped with the supremal Wasserstein metric
$\Wbar_p(Z, Z') := \sup_{(s,w) \in \StateSpace \times \A} W_p(Z(s, w), Z'(s, w))$.

\begin{definition}[Distributional Bellman operator]\label{def:distributional-bellman}
For a fixed policy $\pi$, the operator $\Tpi: \Z_p \to \Z_p$ is defined as
follows. For each $(s, w)$, $(\Tpi Z)(s, w)$ is the law of
\begin{equation}
g(s, w, s', w') + \gamma G,
\label{eq:bellman-target}
\end{equation}
where $s' \sim \Trans(\cdot \mid s, w)$, $w' \sim \pi(\cdot \mid s')$, and
$G \mid (s', w') \sim Z(s', w')$, with $G$ conditionally independent of the
transition--policy draw given $(s', w')$.
\end{definition}

\begin{theorem}[Distributional Bellman contraction for portfolio rewards]
\label{thm:distributional-bellman}
Assume $|g(s, w, s', w')| \leq R_{\max} < \infty$ for all admissible
$(s, w, s', w')$. Then $\Tpi$ maps $\Z_p$ into itself and is a
$\gamma$-contraction:
$\Wbar_p(\Tpi Z, \Tpi Z') \leq \gamma \Wbar_p(Z, Z')$.
Consequently, $\Tpi$ admits a unique fixed point $Z^\pi \in \Z_p$, and for
any $Z^0 \in \Z_p$ the iterates $Z^{k+1} = \Tpi Z^k$ converge to $Z^\pi$ in
$\Wbar_p$.
\end{theorem}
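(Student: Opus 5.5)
The argument has three parts: show that $\Tpi$ preserves $\Z_p$, prove the contraction with a synchronous coupling, and finish with Banach's fixed-point theorem. This is the standard route of \citet{bellemare2017distributional}. The only new point is that the reward $g$ depends on the next action $w'$. This is harmless because the coupling shares the entire draw $(s',w')$ across the two operands.

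\emph{Step 1 (self-map).} Fix $(s,w)$ and $Z\in\Z_p$, and let $M_Z:=\sup_{s,w}\int|x|^p\,dZ(s,w)(x)<\infty$. By Minkowski's inequality in $L^p$ together with conditional independence of $G$,
\[
\bigl(\E|g+\gamma G|^p\bigr)^{1/p}\le R_{\max}+\gamma\Bigl(\E\bigl[\E[|G|^p\mid s',w']\bigr]\Bigr)^{1/p}\le R_{\max}+\gamma M_Z^{1/p}.
\]
This bound is uniform in $(s,w)$, so $\Tpi Z$ has uniformly bounded $p$-th moments. Measurability of $(s,w)\mapsto(\Tpi Z)(s,w)$ follows because it is a composition of the kernels $\Trans$, $\pi$ and $Z$ with a measurable pushforward. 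I state this under the implicit standing assumption that these maps are measurable kernels.

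\emph{Step 2 (contraction).} Fix $\delta>0$. For each $(s',w')$ choose a coupling $\kappa_{(s',w')}$ of $Z(s',w')$ and $Z'(s',w')$ whose transport cost satisfies $\bigl(\int|x-y|^p\,d\kappa\bigr)^{1/p}\le W_p(Z(s',w'),Z'(s',w'))+\delta$. This choice must be measurable in $(s',w')$. I would invoke the measurable selection of optimal couplings (Villani, Cor.~5.22), which even allows $\delta=0$. This step is the main technical obstacle; the rest is routine.

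Now draw $s'\sim\Trans(\cdot\mid s,w)$ and $w'\sim\pi(\cdot\mid s')$ once, and then $(G,G')\sim\kappa_{(s',w')}$. The pair $(g+\gamma G,\;g+\gamma G')$ uses the identical reward, so it is a coupling of $(\Tpi Z)(s,w)$ and $(\Tpi Z')(s,w)$. The reward cancels in the difference, which gives
\[
W_p^p\bigl((\Tpi Z)(s,w),(\Tpi Z')(s,w)\bigr)\le\gamma^p\,\E\bigl[\E[|G-G'|^p\mid s',w']\bigr]\le\gamma^p\bigl(\Wbar_p(Z,Z')+\delta\bigr)^p.
\]
Letting $\delta\to0$ and taking the supremum over $(s,w)$ yields $\Wbar_p(\Tpi Z,\Tpi Z')\le\gamma\,\Wbar_p(Z,Z')$.

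\emph{Step 3 (fixed point).} The space $(\Z_p,\Wbar_p)$ is complete, as stated in its definition. A Cauchy sequence converges pointwise in $(\PP_p(\R),W_p)$ by \cref{prop:complete}. Uniform convergence preserves the uniform moment bound, and pointwise limits of measurable maps into a Polish space are measurable. Since $\gamma<1$, Banach's fixed-point theorem gives a unique fixed point $Z^\pi\in\Z_p$, and for any $Z^0\in\Z_p$ the iterates satisfy $\Wbar_p(Z^k,Z^\pi)\le\gamma^k\,\Wbar_p(Z^0,Z^\pi)\to0$.
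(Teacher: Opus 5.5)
Your proposal is correct and follows essentially the same route as the paper's proof: share the transition--policy draw $(s',w')$ across both operands, couple the continuations conditionally through a measurable optimal-plan selector (Villani, Cor.~5.22), let the reward cancel, then take the $p$th root and the supremum over $(s,w)$. Your Steps~1 and~3 (the uniform moment bound showing $\Tpi Z\in\Z_p$, completeness of $(\Z_p,\Wbar_p)$, and the Banach step) spell out what the paper only asserts. The $\delta$-optimal device is unnecessary once you invoke that selector; on $\R$ you could also use the explicitly measurable quantile coupling.
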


\begin{proof}
Fix $(s, w)$. Let $\mu(s, w)$ and $\mu'(s, w)$ denote $(\Tpi Z)(s, w)$ and
$(\Tpi Z')(s, w)$. Draw $(s', w')$ from $\Trans(\cdot \mid s, w) \otimes \pi(\cdot \mid s')$
as shared randomness in both. Conditional on $(s', w')$, let $\Pi^\star_{s',w'}$
be the optimal coupling between $Z(s', w')$ and $Z'(s', w')$ realizing the
infimum in \eqref{eq:wasserstein-def}. Existence of a Borel-measurable
selector $(s', w') \mapsto \Pi^\star_{s',w'}$ follows from the measurable
selection theorem for optimal transport plans \citep[Cor.~5.22]{villani2009optimal}
(see also \citet[Lem.~1.2.3]{ambrosio2008gradient} for a self-contained
proof using $\Gamma$-convergence).

Draw $(G, G') \mid (s', w') \sim \Pi^\star_{s',w'}$ and define
$X := g(s, w, s', w') + \gamma G$, $X' := g(s, w, s', w') + \gamma G'$.
Then $X \sim \mu(s, w)$ and $X' \sim \mu'(s, w)$. By Fubini--Tonelli,
\[
\E[|X - X'|^p] = \gamma^p \E_{s',w'}[W_p^p(Z(s', w'), Z'(s', w'))]
\leq \gamma^p \Wbar_p^p(Z, Z').
\]
The primal definition gives $W_p^p(\mu(s, w), \mu'(s, w)) \leq \gamma^p \Wbar_p^p(Z, Z')$.
Taking $p$th root and supremum yields the contraction.
\end{proof}

% PATCH 9: SCOPE-OF-EXTENSION REMARK
\begin{remark}[Scope of the action-coupled reward extension]\label{rmk:turnover-cost-scope}
The reward $g(s, w, s', w')$ in Definition~\ref{def:distributional-bellman}
is allowed to depend on the current action $w$, the next state $s'$, and the
next action $w'$, which accommodates turnover frictions of the form
$c(w, w')$. This generality is notational rather than substantive: the
synchronous coupling at $(s', w')$ used in the proof of
Theorem~\ref{thm:distributional-bellman} is exactly the standard
distributional-Bellman argument of \citet{bellemare2017distributional} and
\citet{dabney2018distributional}, and the contraction constant remains
$\gamma$ regardless of whether $g$ depends on $(w, w')$. An honestly new
contraction result for genuinely action-coupled costs would require either
state augmentation to track past actions (which destroys the Markov
structure on $\StateSpace$ alone) or a state-dependent Lipschitz bound on $g$
yielding a non-uniform contraction rate. Theorem~\ref{thm:distributional-bellman}
as stated does neither; it records the dynamic recursion in the form used
downstream and does not claim a contraction phenomenon attributable to
turnover frictions per se.
\end{remark}

\subsection{Parameterizations}

Three standard parameterizations of $Z_\theta(s,w)$ are used in practice and
carry over unchanged to the portfolio MDP: the categorical C51 representation
on a fixed support $\{z_k\}\subset[V_{\min},V_{\max}]$ with grid projection
after each Bellman update \citep{bellemare2017distributional}; the quantile
QR-DQN representation at levels $\tau_k=(k-1/2)/N$ trained by the
quantile-regression loss $L^\tau(u)=(\tau-\mathbf 1\{u<0\})u$
\citep{dabney2018distributional}; and the implicit IQN representation
$q_\theta(s,w,\tau)$, $\tau\sim U(0,1)$ \citep{dabney2018implicit}. We use these
as black-box function classes; the contraction theory below is independent of
the choice.

\subsection{Risk-sensitive distributional control}

The distributional Bellman recursion of Section~9.1 is risk-neutral: the
value of $Z^\pi(s, w)$ is summarized by its mean. Risk-sensitive control
replaces the mean by a law-invariant scalar functional of $Z^\pi$. Two
formulations coexist in the literature: the \emph{static} formulation,
which applies a scalar risk measure to the total-return law $Z^\pi(s_0, w_0)$
at time zero, and the \emph{dynamic} formulation, which uses a nested
sequence of one-step risk measures and is time-consistent.

\begin{definition}[Static risk-sensitive distributional policy]
\label{def:static-risk-sensitive-policy}
Let $Z^\pi(s, a)$ denote the return distribution under policy $\pi$ and let
$A_0 \sim \pi(\cdot \mid s_0)$. For a law-invariant risk measure $\rho$ on
losses, define
\[
J_\rho(\pi) = \E[Z^\pi(s_0, A_0)] - \lambda \rho(-Z^\pi(s_0, A_0)),
\qquad \lambda \geq 0,
\]
and $\pi^\star_\rho \in \argmax_\pi J_\rho(\pi)$.
\end{definition}

The static objective is not in general the value function of a Markov
decision process, because most law-invariant risk measures (CVaR, EVaR,
spectral risk) lack the recursion-compatible decomposition that makes the
Bellman equation work. The dynamic counterpart fixes this by nesting a
one-step risk measure $\rho^{1}$ at each transition, in the sense of
\citet{ruszczynski2010risk}.

\begin{definition}[One-step Markov risk functional with $W_1$-Lipschitz modulus]
\label{def:one-step-markov-risk}
A one-step Markov risk functional is a family
$\{\rho^1_{s,a}\}_{(s,a)\in S\times A}$ of law-invariant convex risk functionals on
$\mathcal P_1(\mathbb R)$ such that $(s,a)\mapsto \rho^1_{s,a}$ is Borel measurable
and there exist finite constants $L_\rho<\infty$ and $M_\rho<\infty$ satisfying
\[
\big|\rho^1_{s,a}(\mu)-\rho^1_{s,a}(\nu)\big|
\le
L_\rho W_1(\mu,\nu),
\qquad
\mu,\nu\in\mathcal P_1(\mathbb R),
\]
and
\[
\sup_{(s,a)\in S\times A} |\rho^1_{s,a}(\delta_0)| \le M_\rho,
\]
uniformly over $(s,a)\in S\times A$.
\end{definition}

The constant $L_\rho$ is not normalized to be smaller than one, and
$M_\rho$ is a bounded-baseline condition. The contraction condition below is
imposed directly on the combined modulus $\gamma+\lambda L_\rho$; the
baseline constant is used only to ensure that the operator maps
$\mathcal Z_p$ into itself.

\begin{definition}[Risk-shifted distributional Bellman operator]
\label{def:risk-sensitive-bellman}
For a one-step Markov risk functional $\rho^1$ with finite $W_1$-Lipschitz modulus
$L_\rho$, and for a fixed policy $\pi$, define $T^\pi_\rho:\mathcal Z_p\to\mathcal Z_p$
by declaring $(T^\pi_\rho Z)(s,w)$ to be the law of
\[
g(s,w,s',w')
-
\lambda\,\rho^1_{s',w'}\!\left(-Z(s',w')\right)
+
\gamma G,
\]
where $s'\sim P(\cdot\mid s,w)$, $w'\sim\pi(\cdot\mid s')$, and
$G\mid(s',w')\sim Z(s',w')$, conditionally independently of the transition--policy draw
given $(s',w')$.
\end{definition}

\begin{remark}[Interpretation of the risk-shifted operator]
The operator $T^\pi_\rho$ should be interpreted as a risk-shifted distributional Bellman
operator. It is not the most general time-consistent Markov risk recursion of
\citet{ruszczynski2010risk}, because the continuation distribution is still propagated explicitly
through the random draw $G$. The theorem below proves that adding a one-step
law-invariant risk adjustment preserves contraction provided
$\gamma+\lambda L_\rho<1$. Thus the result is a sufficient contraction condition for a
distributional risk-shifted recursion, not a complete dynamic-risk representation theorem.
\end{remark}

\begin{theorem}[Risk-shifted distributional Bellman contraction]
\label{thm:risk-sensitive-bellman-contraction}
Assume $|g(s,w,s',w')|\le R_{\max}<\infty$ for all admissible
$(s,w,s',w')$. Let $\rho^1$ be a one-step Markov risk functional with finite
$W_1$-Lipschitz modulus $L_\rho$ and finite baseline $M_\rho$. If
\[
\kappa:=\gamma+\lambda L_\rho<1,
\]
then $T^\pi_\rho$ maps $\mathcal Z_p$ into itself and is a $\kappa$-contraction in the
supremal Wasserstein metric:
\[
\overline W_p(T^\pi_\rho Z,T^\pi_\rho Z')
\le
\kappa\,\overline W_p(Z,Z'),
\qquad
Z,Z'\in\mathcal Z_p .
\]
Consequently, $T^\pi_\rho$ admits a unique fixed point $Z^\pi_\rho\in\mathcal Z_p$, and
the iterates $Z^{k+1}=T^\pi_\rho Z^k$ converge to $Z^\pi_\rho$ in $\overline W_p$ from
any $Z^0\in\mathcal Z_p$.
\end{theorem}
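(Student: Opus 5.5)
We follow the synchronous-coupling argument of Theorem~\ref{thm:distributional-bellman} and treat the risk shift as an extra, deterministic-given-$(s',w')$ perturbation. Write $c_Z(s',w') := g(s,w,s',w') - \lambda\rho^1_{s',w'}(-Z(s',w'))$. Then $(T^\pi_\rho Z)(s,w)$ is the law of $c_Z(s',w') + \gamma G$. The proof has three steps: self-mapping, the coupling estimate, and Banach's fixed-point theorem.

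\emph{Step 1 (self-map).} Here $-Z(s',w')$ denotes the pushforward of $Z(s',w')$ under $x\mapsto -x$, which is a $W_1$-isometry. By Definition~\ref{def:one-step-markov-risk},
\[
|\rho^1_{s',w'}(-Z(s',w'))| \le M_\rho + L_\rho W_1(-Z(s',w'),\delta_0) = M_\rho + L_\rho \textstyle\int|x|\,dZ(s',w').
\]
By Jensen, $\int|x|\,dZ(s',w') \le (\sup_{s,a}\int|x|^p dZ(s,a))^{1/p} =: m_p(Z)^{1/p}$. Hence $|c_Z|\le R_{\max}+\lambda(M_\rho+L_\rho m_p(Z)^{1/p})$ uniformly. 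Minkowski then bounds the $p$-th moment of $c_Z+\gamma G$ uniformly in $(s,w)$. Measurability of $(s,w)\mapsto (T^\pi_\rho Z)(s,w)$ follows from Borel measurability of $(s,a)\mapsto\rho^1_{s,a}$ and of $Z$, composed with the transition and policy kernels. We take this as routine, exactly as in Theorem~\ref{thm:distributional-bellman}.

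\emph{Step 2 (coupling).} Fix $(s,w)$. Use the shared draw $(s',w')\sim\Trans(\cdot\mid s,w)\otimes\pi(\cdot\mid s')$. Given $(s',w')$, draw $(G,G')$ from a measurably selected optimal $W_p$-coupling of $Z(s',w')$ and $Z'(s',w')$, as in the earlier proof. Set $X=c_Z+\gamma G$ and $X'=c_{Z'}+\gamma G'$. Then
\[
|X-X'| \le \lambda\,|\rho^1_{s',w'}(-Z(s',w'))-\rho^1_{s',w'}(-Z'(s',w'))| + \gamma|G-G'|.
\]
The Lipschitz modulus and the isometry give a first term of at most $\lambda L_\rho W_1(Z(s',w'),Z'(s',w'))$. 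Since $W_1\le W_p$, this is at most $\lambda L_\rho \Wbar_p(Z,Z')$, a constant. Minkowski's inequality in $L^p$ then yields
\[
\|X-X'\|_{L^p} \le \lambda L_\rho\Wbar_p(Z,Z') + \gamma\big(\E_{s',w'}W_p^p(Z(s',w'),Z'(s',w'))\big)^{1/p} \le (\gamma+\lambda L_\rho)\Wbar_p(Z,Z').
\]
The primal definition \eqref{eq:wasserstein-def} gives $W_p((T^\pi_\rho Z)(s,w),(T^\pi_\rho Z')(s,w))\le\kappa\Wbar_p(Z,Z')$. Taking the supremum over $(s,w)$ gives the claimed contraction.

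\emph{Step 3 (Banach).} Since $\kappa<1$ and $(\Z_p,\Wbar_p)$ is complete, as asserted in the setup of Section~\ref{sec:distributional-rl}, Banach's fixed-point theorem gives the unique fixed point $Z^\pi_\rho$. It also gives geometric convergence of the iterates, at rate $\kappa^k$.

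The main obstacle is the bookkeeping in Step 2: the risk term is a deterministic function of $(s',w')$ rather than a random continuation. The key observation is that it can be bounded pointwise by $\lambda L_\rho\Wbar_p(Z,Z')$, which makes Minkowski (rather than a $p$-th power expansion) the right tool and yields the additive constant $\gamma+\lambda L_\rho$. The inequality $W_1\le W_p$ is what lets a $W_1$-Lipschitz functional be controlled in the $\Wbar_p$ metric for every $p\ge1$. A secondary subtlety is whether $\Z_p$ is genuinely complete; the paper asserts this, so we invoke it.
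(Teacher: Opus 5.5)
Your proof is correct and follows the same route as the paper. You use the same self-map bound $M_\rho + L_\rho W_1(Z(s',w'),\delta_0)$ with a moment estimate, the same synchronous coupling of $(s',w')$ with a measurably selected optimal $W_p$-coupling of the continuations, Minkowski together with $W_1\le W_p$ to obtain the modulus $\gamma+\lambda L_\rho$, and then Banach's theorem. The only cosmetic difference is that you bound the risk term pointwise by the constant $\lambda L_\rho\overline W_p(Z,Z')$ and apply Minkowski unconditionally, whereas the paper applies Minkowski conditionally on $(s',w')$ before averaging; both give the same contraction constant.
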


\begin{proof}
First, $T^\pi_\rho Z\in\mathcal Z_p$. Indeed, for $Z\in\mathcal Z_p$,
\[
|\rho^1_{s',w'}(-Z(s',w'))|
\le
M_\rho
+
L_\rho W_1(Z(s',w'),\delta_0),
\]
and $W_1(Z(s',w'),\delta_0)\le \left(E|G|^p\right)^{1/p}$. Together with the boundedness
of $g$ and the definition of $\mathcal Z_p$, this gives a uniform $p$-moment bound.

Now fix $(s,w)$. Couple the transition--policy draw $(s',w')$ identically in both
arguments. Conditional on $(s',w')$, let $(G,G')$ be an optimal $W_p$-coupling of
$Z(s',w')$ and $Z'(s',w')$ (Borel-measurable selector by
\citet[Cor.~5.22]{villani2009optimal}). Define
\[
X
:=
g(s,w,s',w')
-
\lambda\rho^1_{s',w'}(-Z(s',w'))
+
\gamma G,
\]
and
\[
X'
:=
g(s,w,s',w')
-
\lambda\rho^1_{s',w'}(-Z'(s',w'))
+
\gamma G'.
\]
Then $X\sim(T^\pi_\rho Z)(s,w)$ and
$X'\sim(T^\pi_\rho Z')(s,w)$. The reward term cancels. Conditional on $(s',w')$,
Minkowski's inequality gives
\[
\begin{aligned}
\left(E\!\left[|X-X'|^p\mid s',w'\right]\right)^{1/p}
&\le
\lambda
\left|
\rho^1_{s',w'}(-Z(s',w'))
-
\rho^1_{s',w'}(-Z'(s',w'))
\right|        \\
&\quad+
\gamma
\left(E\!\left[|G-G'|^p\mid s',w'\right]\right)^{1/p}.
\end{aligned}
\]
By the $W_1$-Lipschitz property of $\rho^1$, the invariance of Wasserstein distance under
$x\mapsto -x$, and $W_1\le W_p$,
\[
\left|
\rho^1_{s',w'}(-Z(s',w'))
-
\rho^1_{s',w'}(-Z'(s',w'))
\right|
\le
L_\rho W_p(Z(s',w'),Z'(s',w')).
\]
The optimal coupling gives
\[
\left(E[|G-G'|^p\mid s',w']\right)^{1/p}
=
W_p(Z(s',w'),Z'(s',w')).
\]
Hence
\[
\left(E[|X-X'|^p\mid s',w']\right)^{1/p}
\le
(\lambda L_\rho+\gamma)
W_p(Z(s',w'),Z'(s',w')).
\]
Taking expectations over $(s',w')$, then the $p$-th root, yields
\[
W_p\!\left((T^\pi_\rho Z)(s,w),(T^\pi_\rho Z')(s,w)\right)
\le
(\lambda L_\rho+\gamma)\overline W_p(Z,Z').
\]
Taking the supremum over $(s,w)$ gives the claimed contraction. The Banach fixed-point
theorem on the complete metric space $(\mathcal Z_p,\overline W_p)$ gives existence,
uniqueness, and convergence of iterates.
\end{proof}

\begin{corollary}[Risk-neutral limit]
\label{cor:risk-neutral-limit}
At $\lambda=0$, Theorem~\ref{thm:risk-sensitive-bellman-contraction} reduces to Theorem~\ref{thm:distributional-bellman}, the standard distributional
Bellman $\gamma$-contraction. For $\lambda>0$, the risk shift increases the contraction
modulus from $\gamma$ to $\gamma+\lambda L_\rho$. The recursion remains contractive
precisely when the risk-aversion weight is small enough that
$\gamma+\lambda L_\rho<1$.
\end{corollary}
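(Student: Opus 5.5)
The plan is to show that at $\lambda=0$ the operator $T^\pi_\rho$ of Definition~\ref{def:risk-sensitive-bellman} coincides with $\Tpi$ of Definition~\ref{def:distributional-bellman}, and then read both claims off Theorem~\ref{thm:risk-sensitive-bellman-contraction}.

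\emph{Reduction at $\lambda=0$.} Setting $\lambda=0$ in Definition~\ref{def:risk-sensitive-bellman} kills the risk-shift term $-\lambda\,\rho^1_{s',w'}(-Z(s',w'))$. This term is finite for every $Z\in\Z_p$ by the baseline bound $|\rho^1_{s',w'}(-Z(s',w'))|\le M_\rho+L_\rho W_1(Z(s',w'),\delta_0)$, so no $0\cdot\infty$ issue arises. The target random variable is then exactly $g(s,w,s',w')+\gamma G$, drawn under the same transition, policy and conditional-independence structure as \eqref{eq:bellman-target}. Hence $T^\pi_\rho=\Tpi$ pointwise on $\Z_p$. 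The kernel in Definition~\ref{def:risk-sensitive-bellman} is written $P(\cdot\mid s,w)$; I will note that this is the transition kernel $\Trans$, per the notation convention of Section~\ref{sec:distributional-rl}.

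\emph{Contraction modulus.} With $\lambda=0$ we get $\kappa=\gamma+0\cdot L_\rho=\gamma<1$. Theorem~\ref{thm:risk-sensitive-bellman-contraction} then yields the $\gamma$-contraction in $\Wbar_p$, together with the unique fixed point and convergence of iterates. These are verbatim the conclusions of Theorem~\ref{thm:distributional-bellman}. For $\lambda>0$, the same theorem gives the modulus $\gamma+\lambda L_\rho$, and the contraction hypothesis $\kappa<1$ is equivalent to $\lambda<(1-\gamma)/L_\rho$ when $L_\rho>0$; if $L_\rho=0$ it holds for every $\lambda$.

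\emph{Main obstacle: the word ``precisely''.} This is the only delicate point. Theorem~\ref{thm:risk-sensitive-bellman-contraction} proves sufficiency only, so the necessity direction needs a witness. I would build one on a single-state, single-action MDP with $g\equiv0$ and $\rho^1=\E$ on losses, so that $L_\rho=1$ and $\rho^1(-Z)=-\E Z$. Restricting to Dirac laws $Z=\delta_z$, the operator becomes $z\mapsto \lambda z+\gamma z=(\gamma+\lambda)z$, and $\Wbar_p(T\delta_z,T\delta_{z'})=(\gamma+\lambda)|z-z'|$. The modulus is therefore exactly $\kappa$, and the map is not a contraction when $\kappa\ge1$. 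I would state the corollary's ``precisely'' as sharpness of the bound over this class, rather than as an iff for every $\rho^1$.
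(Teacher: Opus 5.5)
Your proposal is correct and matches the paper. The paper gives no separate proof: the corollary is an immediate specialization of Theorem~\ref{thm:risk-sensitive-bellman-contraction}, where at $\lambda=0$ the risk-shift term vanishes, $T^\pi_\rho=\Tpi$, and $\kappa=\gamma$, exactly as you argue. Your Dirac witness ($g\equiv 0$, $\rho^1=\E$, giving $T\delta_z=\delta_{(\gamma+\lambda)z}$) goes beyond the paper, whose theorem and surrounding remark establish only sufficiency of $\gamma+\lambda L_\rho<1$; reading ``precisely'' as sharpness of the modulus, rather than as an iff for every admissible $\rho^1$, is therefore the right interpretation.
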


\begin{remark}[Operational $L_\rho$ for canonical one-step risk measures]
\label{rmk:operational-L-rho}
The contraction condition $\gamma+\lambda L_\rho<1$ is mild or restrictive depending on
the $W_1$-Lipschitz modulus of the chosen one-step risk functional.

\begin{itemize}
\item \textbf{CVaR.} The dual representation of $\operatorname{CVaR}_\alpha$ gives
$L_{\operatorname{CVaR}_\alpha}=1/\alpha$. At $\alpha=0.05$, $L_\rho=20$, so with
$\gamma=0.95$ the contraction condition requires
$\lambda<(1-\gamma)/L_\rho=0.0025$. Thus unscaled CVaR can be used, but the
risk-aversion weight must be correspondingly small.

\item \textbf{Entropic risk.} For
\[
\rho_\eta(L)=\eta\log E[e^{L/\eta}],
\]
global $W_1$-Lipschitzness is not automatic on unbounded supports. On a bounded
interval $[a,b]$, the elementary bound
\[
|\rho_\eta(\mu)-\rho_\eta(\nu)|
\le
\exp\!\left(\frac{b-a}{\eta}\right) W_1(\mu,\nu)
\]
shows that a finite $W_1$-Lipschitz constant exists, but it depends on the support
diameter and on $\eta$, and it need not be smaller than one. Entropic risk is
one-Lipschitz in stronger metrics such as $W_\infty$, not uniformly one-Lipschitz in
$W_1$.

\item \textbf{Scalar Wasserstein-DRO risk.} For the scalar identity-loss functional
\[
\rho_\varepsilon(\mu)
:=
\sup_{\nu:W_1(\nu,\mu)\le\varepsilon}E_\nu[X],
\]
one has $\rho_\varepsilon(\mu)=E_\mu[X]+\varepsilon$, hence $L_\rho=1$. For nonlinear
losses, the corresponding Lipschitz modulus is inherited from the loss.
\end{itemize}

Thus Theorem~\ref{thm:risk-sensitive-bellman-contraction} is most useful when the one-step risk functional has a controlled
$W_1$-modulus or when the risk-aversion parameter $\lambda$ is scaled to absorb that
modulus. The natural headline application is therefore the scalar
Wasserstein-DRO risk functional, for which $L_\rho=1$ and the contraction
condition $\gamma+\lambda L_\rho<1$ holds for any $\lambda<1-\gamma$ without
rescaling; the unscaled-CVaR case ($L_\rho=1/\alpha$) is the restrictive
extreme that forces a small $\lambda$, not the intended use.
\end{remark}

\begin{proposition}[CVaR--Bellman incoherence in the static formulation]
\label{prop:cvar-bellman-incoherence}
When the one-period reward $R$ is random and jointly distributed with the
continuation return $Z(s', w')$, static CVaR does not decompose recursively:
\[
\CVaR_\alpha(R + \gamma Z(s', w')) \neq R + \gamma \CVaR_\alpha(Z(s', w'))
\]
in general. Equality holds only in special cases, for example when $R$ is
deterministic conditional on the current state-action pair (so the
translation invariance of CVaR applies). The static formulation of
Definition~\ref{def:static-risk-sensitive-policy} therefore does not admit
a Bellman recursion with $\rho = \CVaR_\alpha$; the dynamic formulation of
Definition~\ref{def:risk-sensitive-bellman}, which nests
$\rho^{1}_{s', w'} = \CVaR_\alpha$ at each step, does, by
Theorem~\ref{thm:risk-sensitive-bellman-contraction}.
\end{proposition}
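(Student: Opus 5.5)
The claim has three parts: a non-equality in general, a special case where equality holds, and a consequence for the static versus nested formulations. I will treat them in that order.

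\emph{Non-equality.} I will give an explicit two-point counterexample. Take $\gamma=1$ and $\alpha=1/2$ for concreteness, with $R$ and $Z:=Z(s',w')$ on a two-point probability space, each outcome having probability $1/2$. In state one, $R=1$ and $Z=-1$; in state two, $R=-1$ and $Z=1$. Under the paper's loss convention, $\CVaR_\alpha$ applied to a reward $X$ means $\CVaR_\alpha(-X)$, and I will read the displayed inequality with that sign convention.

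Then $R+Z\equiv 0$, so the left-hand side equals $0$. On the right-hand side, $R$ is random, so the expression $R+\gamma\CVaR_\alpha(Z)$ is itself a random variable. For $\CVaR_{1/2}$ of the loss $-Z$, which takes the values $\pm 1$ with probability $1/2$ each, the upper $1/2$-tail is the single value $1$. So $\CVaR_{1/2}(-Z)=1$, and the right-hand side is the non-constant random variable $R\mp 1$.

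To make the comparison a scalar one, I will also compare with the scalar recursion $\E[R]+\gamma\CVaR_\alpha(Z)$, which gives $0+1\neq 0$ under the loss convention. To avoid fuss about signs, I may instead pick comonotone $R$ and $Z$ to get the strict inequality in the direction the paper intends. Either way, the same example also shows that subadditivity can be strict, which is the mechanism: CVaR is not additive on non-comonotone sums. The main care point is keeping the loss/reward sign convention consistent with Definition~\ref{def:coherent}.

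\emph{Equality case.} If $R=c(s,w)$ is deterministic given the state–action pair, translation equivariance (axiom (ii)) gives $\rho(-(c+\gamma Z))=-c+\rho(-\gamma Z)$. Positive homogeneity (axiom (iii)) then gives $\rho(-\gamma Z)=\gamma\rho(-Z)$, which yields the claimed identity.

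\emph{Consequence.} Since the static objective $J_\rho$ of Definition~\ref{def:static-risk-sensitive-policy} evaluates $\rho$ on the whole discounted sum, the counterexample, embedded in a two-step MDP, shows that $J_\rho$ is not determined by the one-step reward law together with $\rho(Z(s',w'))$. Hence no Bellman recursion of that form exists. By contrast, the nested operator with $\rho^1=\CVaR_\alpha$ has $L_\rho=1/\alpha$ by Remark~\ref{rmk:operational-L-rho} and $M_\rho=0$. Theorem~\ref{thm:risk-sensitive-bellman-contraction} therefore applies whenever $\gamma+\lambda/\alpha<1$, giving a unique fixed point. I will state the admissibility condition $\gamma+\lambda/\alpha<1$ explicitly, since the proposition's phrase ``does'' implicitly requires it.
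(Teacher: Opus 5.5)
The paper states this proposition without a proof; it is followed only by a literature remark. Your argument therefore has to carry the third, substantive claim (no Bellman recursion for static CVaR) on its own, and that is where it has a genuine gap.

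\textbf{Your example does not witness a failed recursion.} In the paper's MDP the reward $R=g(s,w,s',w')$ is a function of $(s',w')$, and $G\mid(s',w')\sim Z(s',w')$ is drawn conditionally independently. Your joint law $\tfrac12\delta_{(1,-1)}+\tfrac12\delta_{(-1,1)}$ can therefore only be embedded with \emph{degenerate} continuation laws, $Z(a)=\delta_{-1}$ and $Z(b)=\delta_{1}$. The per-next-state CVaRs are then the continuation values themselves. The nested evaluation $\CVaR_\alpha\bigl(R+\gamma\,\CVaR_\alpha(Z(s',w'))\bigr)=\CVaR_\alpha(0)$ reproduces the static value exactly, so the embedded example is one where a recursion \emph{works}.

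\textbf{Where your discrepancy comes from.} You took CVaR of the mixed (unconditional) law of $Z$ and compared with $\E[R]+\gamma\,\CVaR_\alpha(Z)$. That scalar identity already fails with $Z\equiv 0$, where it reduces to $\CVaR_\alpha(R)\neq\E[R]$. It therefore says nothing about recursion or about joint dependence.

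\textbf{The ``not determined by'' step.} A single instance cannot show that $J_\rho$ is ``not determined by'' some data; you need two instances with the same data and different values. If the data are the marginal law of $R$ plus the law of $\rho(Z(s',w'))$, the criterion proves too much. It would equally rule out the nested recursion $V(s,w)=-\CVaR_\alpha\bigl(-(g+\gamma V(s',w'))\bigr)$. Your countermonotone example and its comonotone twin have identical such marginals, yet give nested values $0$ versus $-2$ (reward side, $\gamma=1$).

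\textbf{The missing idea.} The obstruction is that CVaR is not recursive across the mixture over $(s',w')$; it has no tower property over the transition. This must be witnessed with non-degenerate continuation laws, and it is present even when $R$ is deterministic. In particular, the equality case you prove, for the CVaR of the mixed continuation law, does not by itself yield a recursion.

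\textbf{A witness that works.} Take $\alpha=\tfrac12$, $g\equiv 0$, two equally likely next states $a,b$, and $Z(a)=\delta_0$. Let $Z(b)$ be either $\tfrac12\delta_{-1}+\tfrac12\delta_{1}$ or $\tfrac14\delta_{-3}+\tfrac34\delta_{1}$.
\begin{itemize}
\item Both choices of $Z(b)$ have mean $0$ and reward-side $\CVaR_{1/2}$ (lower-half mean) equal to $-1$.
\item So the rewards and all next-step scalar values coincide: mean, CVaR, and hence the per-state mean--CVaR value.
\item Yet the static reward-side CVaR of $\gamma G$ is $-\gamma/2$ in the first case and $-3\gamma/4$ in the second, so $J_\rho$ differs for every $\lambda>0$.
\end{itemize}
No scalar Bellman recursion can produce both values.

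\textbf{What is correct.} Your dynamic-side argument is right: $L_\rho=1/\alpha$, $M_\rho=0$, and contraction holds when $\gamma+\lambda/\alpha<1$. Making that condition explicit improves on the proposition's unconditional ``does''.

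\textbf{One sign slip.} Under your declared convention, CVaR of a reward $X$ means $\CVaR_\alpha(-X)$. Translation equivariance then gives $\CVaR_\alpha(-(c+\gamma Z))=-c+\gamma\,\CVaR_\alpha(-Z)$, which is the displayed identity with $R$ replaced by $-R$. Use the reward-side functional $X\mapsto-\CVaR_\alpha(-X)$ instead, or apply axioms (ii)--(iii) to $\CVaR_\alpha$ directly as the display does.
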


\begin{remark}[Dynamic-CVaR and Markov risk measures]
The time-inconsistency of static CVaR is formalized in a substantial
literature on Markov risk measures and time-consistent dynamic programming.
\citet{ruszczynski2010risk} introduces nested Markov risk measures.
\citet{bauerle2011markov} show that the static $\CVaR_\tau$ objective in an
MDP can be reduced to an ordinary MDP with an extended state space.
\citet{pflug2014multistage} provide a book-length treatment.
Theorem~\ref{thm:risk-sensitive-bellman-contraction} above is the explicit
contraction-based version of these nesting constructions in the
distributional-return setting.
\end{remark}

%==============================================================================
% SECTION 10: SYNTHESIS
%==============================================================================
\section{Synthesis: domination and second-order conservatism}\label{sec:synthesis}

Bayesian DPO and Wasserstein DRO are often presented as different
philosophies. We show they are tightly connected via a domination bound
(Theorem~\ref{thm:bayes-dro-domination}) and a calculation in a tractable
special case (Proposition~\ref{prop:gaussian-isotropic-second-order}) that
quantifies how loose the first-order bound can be. The label
\emph{domination} (rather than duality) is intentional:
Theorem~\ref{thm:bayes-dro-domination} bounds Bayesian uncertainty by
Wasserstein DRO from above, but the two are not isomorphic in general.

\subsection{Domination}

\begin{theorem}[Bayesian--DRO domination: model-level and loss-level forms]
\label{thm:bayes-dro-domination}
Let $\Rpredn := \int_\Theta R_\theta \, dP_n(\theta)$ and
$\hat\theta := \E_{P_n}[\theta]$. Let $D \subseteq \Theta$ be a measurable
set with $\hat\theta \in D$, and suppose that the model map
$\theta \mapsto R_\theta$ is $L$-Lipschitz in $W_1$ on $D$:
\begin{equation}
W_1(R_\theta, R_{\theta'}) \leq L \|\theta - \theta'\|, \quad \theta, \theta' \in D.
\label{eq:model-lipschitz}
\end{equation}
Define the model-level tail term
$\tau_n(D) := \int_{D^c} W_1(R_\theta, R_{\hat\theta}) \, dP_n(\theta)$,
allowing $\tau_n(D) = +\infty$.

\textbf{(Model-level form.)} If $\tau_n(D) < \infty$, then
\begin{equation}
W_1(\Rpredn, R_{\hat\theta})
\leq L \E_{P_n}[\|\theta - \hat\theta\| \mathbf{1}_D(\theta)] + \tau_n(D).
\label{eq:model-domination}
\end{equation}
Consequently, defining
$\varepsilon_n^W(D) := L \E_{P_n}[\|\theta - \hat\theta\| \mathbf{1}_D(\theta)] + \tau_n(D)$,
$\Rpredn \in B_{W_1}(R_{\hat\theta}, \varepsilon_n^W(D))$,
and for every loss $\ell(w, \cdot)$ that is Lipschitz in $r$ with constant
$\Lip_r(\ell)$,
$|\E_{\Rpredn}[\ell(w, r)] - \E_{R_{\hat\theta}}[\ell(w, r)]| \leq \Lip_r(\ell) \varepsilon_n^W(D)$.

\textbf{(Loss-level form.)} If the loss instead satisfies the uniform
bounded-expectation condition
\begin{equation}
B_\ell := \sup_{w \in \W, \theta \in \Theta} |\E_{R_\theta}[\ell(w, r)]| < \infty,
\label{eq:loss-bound}
\end{equation}
then
\begin{equation}
|\E_{\Rpredn}[\ell(w, r)] - \E_{R_{\hat\theta}}[\ell(w, r)]|
\leq \Lip_r(\ell) L \E_{P_n}[\|\theta - \hat\theta\| \mathbf{1}_D(\theta)]
+ 2 B_\ell P_n(D^c).
\label{eq:loss-domination}
\end{equation}

Under Assumption~\ref{ass:posterior-concentration}, any sequence
$D_n \uparrow \Theta$ with $P_n(D_n^c) = o(n^{-1/2})$ and $L_n$ bounded
yields, in either form, an effective radius of order $O(n^{-1/2})$.
\end{theorem}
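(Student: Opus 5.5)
The plan is to prove both forms by one convexity argument applied to the mixture $\Rpredn = \int R_\theta\,dP_n(\theta)$. The key tool is joint convexity of $W_1$ under mixtures against a fixed measure:
\[
W_1\Big(\int R_\theta\,dP_n,\,R_{\hat\theta}\Big) \le \int W_1(R_\theta,R_{\hat\theta})\,dP_n(\theta).
\]
I will obtain this most cleanly from Kantorovich--Rubinstein duality (Proposition~\ref{prop:kr}). For any $1$-Lipschitz $f$, Fubini (as in Proposition~\ref{prop:fubini}) gives $\int f\,d\Rpredn - \int f\,dR_{\hat\theta} = \int [\int f\,dR_\theta - \int f\,dR_{\hat\theta}]\,dP_n(\theta)$. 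The integrand is bounded by $W_1(R_\theta,R_{\hat\theta})$, and taking the supremum over $f$ proves the claim. This avoids any measurable selection of couplings. I should also note that $\theta\mapsto W_1(R_\theta,R_{\hat\theta})$ is measurable, since $\theta\mapsto R_\theta$ is Borel into the Polish space $(\PP_1,W_1)$ and $W_1$ is continuous there. Integrability of $f$ under $\Rpredn$ follows when the right-hand side is finite.

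For the \textbf{model-level form}, I split the integral over $D$ and $D^c$. On $D$, both $\theta$ and $\hat\theta$ lie in $D$, so the Lipschitz hypothesis \eqref{eq:model-lipschitz} bounds the integrand by $L\|\theta-\hat\theta\|$, which yields the first term. The $D^c$ piece is exactly $\tau_n(D)$, assumed finite. This proves \eqref{eq:model-domination} and hence the ball membership. The bound for Lipschitz losses then follows from Proposition~\ref{prop:lip-int} applied to $r\mapsto\ell(w,r)$.

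For the \textbf{loss-level form}, I will not pass through $W_1$ at all. I write the difference as
\[
\int_D\big(\E_{R_\theta}\ell-\E_{R_{\hat\theta}}\ell\big)\,dP_n + \int_{D^c}\big(\E_{R_\theta}\ell-\E_{R_{\hat\theta}}\ell\big)\,dP_n.
\]
On $D$, Proposition~\ref{prop:lip-int} together with \eqref{eq:model-lipschitz} bounds the integrand by $\Lip_r(\ell)L\|\theta-\hat\theta\|$. On $D^c$, the triangle inequality and \eqref{eq:loss-bound} bound it by $2B_\ell$, so that piece contributes $2B_\ell P_n(D^c)$. Fubini is justified because $B_\ell<\infty$ makes $\theta\mapsto\E_{R_\theta}\ell$ bounded.

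For the closing rate statement, I bound $\E_{P_n}\|\theta-\hat\theta\| \le \E_{P_n}\|\theta-\theta^\star\| + \|\hat\theta-\theta^\star\|$. By Jensen, $\|\hat\theta-\theta^\star\| = \|\E_{P_n}(\theta-\theta^\star)\| \le \E_{P_n}\|\theta-\theta^\star\|$, so both terms are at most $Mn^{-1/2}$ under Assumption~\ref{ass:posterior-concentration}. The total is $2Mn^{-1/2}$, which times bounded $L_n$ is $O(n^{-1/2})$. In the loss-level form, $P_n(D_n^c)=o(n^{-1/2})$ then makes the tail term negligible.

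The main obstacle is the model-level tail term, because $P_n(D_n^c)=o(n^{-1/2})$ alone does not control $\tau_n(D_n)$. I would add, or read into the statement, an integrability condition such as $\tau_n(D_n)=o(n^{-1/2})$ or a uniform $W_1$-growth bound on $D_n^c$. Alternatively, I would take $D_n=\Theta$ with a global Lipschitz constant, in which case $\tau_n=0$. I would flag this explicitly rather than claim it holds for free. The only other delicate point is checking that $\hat\theta\in D$ is used; it is, since it is what makes the Lipschitz bound applicable on $D$.
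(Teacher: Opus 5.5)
Your argument for the two displayed inequalities is the paper's own. It uses convexity of $W_1$ under mixtures via Kantorovich--Rubinstein, a split of $\int W_1(R_\theta,R_{\hat\theta})\,dP_n$ over $D$ and $D^c$ using $\hat\theta\in D$, and, for the loss-level form, the same split with the crude bound $2B_\ell$ on $D^c$. Your added checks are correct: measurability of $\theta\mapsto W_1(R_\theta,R_{\hat\theta})$, and $\Rpredn\in\PP_1$ whenever the right-hand side is finite. The paper's proof does not treat the closing rate sentence at all.

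The one thing to correct is your caveat on that rate sentence: for the model-level form, the extra hypothesis on $\tau_n(D_n)$ is not needed. Read literally, $D_n\uparrow\Theta$ means the $D_n$ are nested with union $\Theta$. So $\theta$ and $\hat\theta$ always lie in a common $D_m$, and with $\bar L:=\sup_n L_n<\infty$ the model map is globally $\bar L$-Lipschitz. The global-Lipschitz alternative you mention is therefore already implied by the hypotheses. Hence $\tau_n(D_n)\le\bar L\,\E_{P_n}[\|\theta-\hat\theta\|\mathbf{1}_{D_n^c}]$, and $\varepsilon_n^W(D_n)\le\bar L\,\E_{P_n}\|\theta-\hat\theta\|\le 2\bar L M n^{-1/2}$ by your Jensen step. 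Your worry is legitimate only for non-nested, data-dependent domains such as those of Corollary~\ref{cor:gaussian-niw}, and the paper indeed switches to the loss-level form there.

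A smaller point: $B_\ell<\infty$ does not justify Fubini in the loss-level form. Bounded signed inner expectations do not give $\iint|\ell(w,r)|\,R_\theta(dr)\,P_n(d\theta)<\infty$. For instance, take $R_\theta=\tfrac12\delta_{-e^{\|\theta\|}u}+\tfrac12\delta_{e^{\|\theta\|}u}$ with $u$ a unit vector and $\ell(w,r)=-r^\top w$. This satisfies every hypothesis of the loss-level form with $B_\ell=0$. Yet $\E_{\Rpredn}[\ell]$ is undefined whenever $u^\top w\neq 0$ and $P_n$ has a finite mean but $\E_{P_n}e^{\|\theta\|}=\infty$. The right move is to invoke the tacit well-definedness assumption $\ell(w,\cdot)\in L^1(\Rpredn)$, as in Proposition~\ref{prop:fubini}; the paper's proof glosses over the same point.
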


\begin{proof}
By convexity of $W_1$ under mixtures and Kantorovich--Rubinstein,
$W_1(\int R_\theta \, dP_n, R_{\hat\theta}) \leq \int W_1(R_\theta, R_{\hat\theta}) \, dP_n$.
Splitting over $D$ and $D^c$ and applying the local Lipschitz hypothesis
gives \eqref{eq:model-domination}. For the loss-level version, write
$\Delta \ell(w) := |\E_{\Rpredn}[\ell] - \E_{R_{\hat\theta}}[\ell]|
\leq \int_D |\E_{R_\theta}[\ell] - \E_{R_{\hat\theta}}[\ell]| dP_n
+ \int_{D^c} |\E_{R_\theta}[\ell] - \E_{R_{\hat\theta}}[\ell]| dP_n$.
The first integral is bounded by the local Lipschitz argument; the second
by $2 B_\ell P_n(D^c)$.
\end{proof}

\begin{remark}[Why the local form is the operative one]
The model map $\theta \mapsto R_\theta$ is rarely globally Lipschitz on the
full parameter space $\Theta$. The Gaussian--NIW case of
Proposition~\ref{prop:gaussian-niw-lipschitz} is typical: the local
Lipschitz constant $L_{\mathrm{loc}} = \max(1, (2\sqrt{\delta})^{-1})$ blows
up as the lower bound $\delta$ on $\lambda_{\min}(\Sigma)$ approaches zero,
so a global Lipschitz constant does not exist.
\end{remark}

\begin{corollary}[Specialization to Gaussian--NIW; bounded or truncated losses]
\label{cor:gaussian-niw}
Let $R_{(\mu,\Sigma)} = \N(\mu, \Sigma)$ and let $P_n$ be the NIW posterior
with posterior mean $\hat\theta_n = (\hat\mu_n, \hat\Sigma_n)$. Assume
$\Sigma^\star := \mathrm{plim}_{n \to \infty} \hat\Sigma_n \succ 0$. Suppose
either that the loss is bounded in the sense of \eqref{eq:loss-bound}, or
that the loss is replaced by a truncation sequence. Take
$\delta_n = \tfrac{1}{2} \lambda_{\min}(\hat\Sigma_n)$,
$\Delta_n = 2 \lambda_{\max}(\hat\Sigma_n)$,
$D_n = \{(\mu, \Sigma): \delta_n I \preceq \Sigma \preceq \Delta_n I\}$.
Then the local Lipschitz constant on $D_n$ is
$L_n = \max(1, (2 \sqrt{\delta_n})^{-1})$, converging in probability to a
finite limit. Standard NIW posterior concentration gives
$P_n(D_n^c) = O(e^{-cn})$ for some $c > 0$, and the corresponding loss-level
radius is $\varepsilon_n = O(L_n n^{-1/2})$.
\end{corollary}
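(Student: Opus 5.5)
The plan is to instantiate the loss-level form of Theorem~\ref{thm:bayes-dro-domination} with $D = D_n$. Three ingredients are needed: the local Lipschitz constant $L_n$ of the Gaussian model map on $D_n$, the posterior tail mass $P_n(D_n^c)$, and the first moment $\E_{P_n}[\|\theta-\hat\theta_n\|\mathbf 1_{D_n}]$.

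\emph{Step 1: local Lipschitz constant.} I would bound $W_1$ by $W_2$ and use the Gaussian closed form
\[
W_2^2(\N(\mu,\Sigma),\N(\mu',\Sigma')) = \|\mu-\mu'\|^2 + d_{BW}^2(\Sigma,\Sigma').
\]
On $\{\Sigma \succeq \delta_n I\}$ I would control the Bures term by $\|\Sigma^{1/2}-\Sigma'^{1/2}\|_F$, taking $\Sigma^{1/2}$ and $\Sigma'^{1/2}$ as a feasible coupling. The operator-monotone square-root perturbation bound $\|A^{1/2}-B^{1/2}\| \le \|A-B\|/(2\sqrt{\delta})$ for $A,B\succeq\delta I$ then controls the covariance part. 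Combining the mean part (constant $1$) with the covariance part (constant $(2\sqrt{\delta_n})^{-1}$) under a max-type product norm on $\theta=(\mu,\Sigma)$ gives $L_n=\max(1,(2\sqrt{\delta_n})^{-1})$. This is the content of Proposition~\ref{prop:gaussian-niw-lipschitz}, which I would cite once it is available. Convexity of $D_n$ ensures that the path between two points of $D_n$ stays inside it, although the direct bound above does not even need this. Since $\hat\Sigma_n\to\Sigma^\star\succ0$ in probability, continuity of $\lambda_{\min}$ gives $\delta_n\to\tfrac12\lambda_{\min}(\Sigma^\star)>0$, so $L_n$ converges in probability to a finite limit. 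Also $\hat\theta_n\in D_n$, because $\tfrac12\lambda_{\min}\le\lambda(\hat\Sigma_n)\le2\lambda_{\max}$.

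\emph{Step 2: tail mass.} This is the main obstacle. Under the NIW posterior, $\Sigma\sim\mathrm{IW}(\Psi_n,\nu_n)$ with $\Psi_n/\nu_n$ close to $\hat\Sigma_n$ (up to an $O(1/n)$ shift in degrees of freedom). I would write $\Sigma^{-1} = \Psi_n^{-1/2} S \Psi_n^{-1/2}$ with $S\sim\mathrm{Wishart}(\nu_n, I)$. The event $D_n^c$ then forces $\lambda_{\min}$ or $\lambda_{\max}$ of $S/\nu_n$ to leave a fixed neighbourhood of $1$, say outside $[1/2,2]$ after absorbing the ratio between $\hat\Sigma_n$ and $\Psi_n/\nu_n$. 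Standard Wishart concentration (Davidson--Szarek or Laurent--Massart chi-square bounds over an $\epsilon$-net) gives probability at most $e^{-c\nu_n}=O(e^{-cn})$ for fixed $K$. The $\mu$-coordinate is unconstrained in $D_n$, so no tail from $\mu$ enters. I would state this on the event $\{\hat\Sigma_n$ near $\Sigma^\star\}$, whose probability tends to one, which is consistent with the ``in probability'' phrasing.

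\emph{Step 3: assemble the bound.} The NIW posterior has $\E_{P_n}\|\mu-\hat\mu_n\|=O(n^{-1/2})$ and $\E_{P_n}\|\Sigma-\hat\Sigma_n\|=O(n^{-1/2})$, by explicit posterior variances of order $\Psi_n/\nu_n^2\sim 1/n$. Hence $\E_{P_n}[\|\theta-\hat\theta_n\|\mathbf 1_{D_n}]=O(n^{-1/2})$. Plugging into \eqref{eq:loss-domination} gives the radius
\[
\Lip_r(\ell)\,L_n\,O(n^{-1/2}) + 2B_\ell\,O(e^{-cn}) = O(L_n n^{-1/2}).
\]
For the truncation alternative, I would apply the same argument to each truncated loss $\ell\wedge M$, for which $B_\ell\le M$.
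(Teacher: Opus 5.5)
Your proposal is correct and follows the route the paper has in mind. You take the Gaussian Lipschitz bound of Proposition~\ref{prop:gaussian-niw-lipschitz} on $D_n$ (which contains $\hat\theta_n$), bound $P_n(D_n^c)$ by a Wishart large-deviation estimate (the same random-matrix tail the paper invokes in Remark~\ref{rmk:k-dependence}), and plug both into the loss-level form of Theorem~\ref{thm:bayes-dro-domination} with $\E_{P_n}\|\theta-\hat\theta_n\|=O_p(n^{-1/2})$.

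Two small repairs are needed. First, $\max(1,(2\sqrt{\delta_n})^{-1})$ is the Lipschitz constant for the sum (or Euclidean) product norm on $(\mu,\Sigma)$, not the max norm, which would give $1+(2\sqrt{\delta_n})^{-1}$. Second, in the truncation case you need the two-sided clip $(\ell\wedge M)\vee(-M)$, because $\ell\wedge M$ alone can have $B_\ell=\infty$ (e.g.\ $\ell=-r^\top w$ as $\mu^\top w\to\infty$); the level $M=M_n$ may grow, provided $M_n e^{-cn}=o(n^{-1/2})$.
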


% PATCH 12: K-DEPENDENCE NOTE
\begin{remark}[Explicit dimension dependence]\label{rmk:k-dependence}
Corollary~\ref{cor:gaussian-niw} states $P_n(D_n^c) = O(e^{-cn})$ with the
constant $c > 0$ depending on Wishart concentration. For fixed $K$, the
bound is uniform; for $K = K_n$ growing with $n$, the standard random-matrix
tail \citep[Theorem 6.1]{wainwright2019high} gives
\[
\Pr\!\left(\lambda_{\min}(\Sigma) < \tfrac{1}{2} \lambda_{\min}(\Sigma^\star)\right)
\leq 2 \exp\!\left(-c(K_n, n) \cdot n\right),
\quad c(K_n, n) \asymp (1 - \sqrt{K_n / n})_+^2,
\]
which requires $K_n / n \to 0$ for the local-Lipschitz domain $D_n$ to
retain exponential tail mass. In the high-dimensional regime
$K_n / n \to \kappa \in (0, 1)$, the local-Lipschitz constant $L_n$ does
not stabilize and the Bayesian--DRO radius
$\varepsilon_n = L_n \cdot \E_{P_n}\|\theta - \hat\theta\|$ inherits a
$K_n$-dependent inflation. This is the analytic shadow of the empirical
observation in Section~\ref{sec:empirical-equity-backtest} that the
$K = 25$, $n = 60$ regime is dominated by covariance-estimation noise.
\end{remark}

\subsection{Posterior credible sets induce Wasserstein ambiguity sets}

\begin{theorem}[Posterior credible radius, finite-sample form]
\label{thm:posterior-credible-radius}
Let $D_n \subseteq \Theta$ be a measurable Lipschitz domain on which the
model map satisfies $W_1(R_\theta, R_{\hat\theta}) \leq L \|\theta - \hat\theta\|$
for all $\theta \in D_n$, with $\hat\theta \in D_n$ and
$P_n(D_n^c) \leq \beta_n$ for some $\beta_n \in [0, 1)$. For $\delta \in (0, 1)$
let $q_{1-\delta}$ be the posterior $(1 - \delta)$-quantile of
$\|\theta - \hat\theta\|$, and define the credible radius
\begin{equation}
\varepsilon_{n,\delta} = L \cdot q_{1-\delta}.
\label{eq:credible-radius}
\end{equation}
Then
\[
P_n\!\left(R_\theta \in B_{W_1}(R_{\hat\theta}, \varepsilon_{n,\delta})\right)
\;\geq\; 1 - \delta - \beta_n.
\]
\end{theorem}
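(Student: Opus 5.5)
The plan is a direct set-inclusion argument followed by a union bound on the posterior. Let
\[
E := \{\theta : \|\theta - \hat\theta\| \le q_{1-\delta}\}
\]
be the posterior credible event. Let $F := \{\theta : R_\theta \in B_{W_1}(R_{\hat\theta}, \varepsilon_{n,\delta})\}$ be the target event. The key step is the inclusion $E \cap D_n \subseteq F$. For $\theta \in E \cap D_n$, the Lipschitz hypothesis on $D_n$ gives
\[
W_1(R_\theta, R_{\hat\theta}) \le L\|\theta-\hat\theta\| \le L\,q_{1-\delta} = \varepsilon_{n,\delta}.
\]
Hence $R_\theta$ lies in the closed ball.

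Then I would write $P_n(F) \ge P_n(E\cap D_n) \ge 1 - P_n(E^c) - P_n(D_n^c)$ by the union bound. The term $P_n(D_n^c)$ is at most $\beta_n$ by hypothesis. It remains to show $P_n(E^c) \le \delta$, that is, $P_n(\|\theta-\hat\theta\| > q_{1-\delta}) \le \delta$. This follows from the definition of the quantile as $q_{1-\delta} = \inf\{t : P_n(\|\theta-\hat\theta\|\le t) \ge 1-\delta\}$ together with right-continuity of the distribution function, which gives $P_n(\|\theta-\hat\theta\| \le q_{1-\delta}) \ge 1-\delta$. I will state this quantile convention explicitly, since with atoms in the posterior law of $\|\theta-\hat\theta\|$ a left-quantile convention could fail. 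Combining the pieces gives $P_n(F) \ge 1-\delta-\beta_n$.

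Two technical points need care; neither is a real obstacle.

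\emph{Measurability of $F$.} This requires $\theta \mapsto W_1(R_\theta, R_{\hat\theta})$ to be measurable. That holds because $\theta\mapsto R_\theta$ is Borel into $\PP_1(\R^K)$ by Assumption~\ref{ass:standing}(ii), and $W_1(\cdot, R_{\hat\theta})$ is continuous. Alternatively, one can avoid the issue by reading the probability as an inner measure, which is bounded below by $P_n(E\cap D_n)$.

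\emph{Finiteness of $q_{1-\delta}$.} This holds since $\delta>0$ and $\|\theta-\hat\theta\|$ is finite $P_n$-a.s.

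The only slightly delicate step is the quantile convention, which I will fix at the outset. No earlier theorem is needed beyond the Lipschitz hypothesis itself; the argument parallels the $D/D^c$ splitting in the proof of Theorem~\ref{thm:bayes-dro-domination}, with a probability bound in place of an expectation bound.
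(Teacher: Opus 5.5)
Your proposal is correct and is essentially the paper's proof. Both show the inclusion $D_n \cap \{\|\theta-\hat\theta\|\le q_{1-\delta}\} \subseteq \{R_\theta \in B_{W_1}(R_{\hat\theta},\varepsilon_{n,\delta})\}$ from the Lipschitz hypothesis and then apply a union bound. Your extra care on measurability is fine, but the caveat about atoms is unnecessary: any $(1-\delta)$-quantile $q$ of a real random variable satisfies $P_n(\|\theta-\hat\theta\|\le q)\ge 1-\delta$, whichever convention you use.
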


\begin{proof}
On the event
$E := D_n \cap \{\|\theta - \hat\theta\| \leq q_{1-\delta}\}$, the Lipschitz
hypothesis on $D_n$ gives $W_1(R_\theta, R_{\hat\theta}) \leq L \cdot q_{1-\delta}
= \varepsilon_{n,\delta}$, so $R_\theta \in B_{W_1}(R_{\hat\theta}, \varepsilon_{n,\delta})$
on $E$. The union bound gives $P_n(E) \geq 1 - \delta - \beta_n$.
\end{proof}

\begin{corollary}[Deterministic-inclusion form]
\label{cor:credible-radius-deterministic}
Under the hypotheses of Theorem~\ref{thm:posterior-credible-radius}, if in
addition $P_n(\{\|\theta - \hat\theta\| \leq q_{1-\delta}\} \setminus D_n) = 0$
(so $\beta_n$ may be taken to be $0$ for the displayed conclusion), then
$P_n(R_\theta \in B_{W_1}(R_{\hat\theta}, \varepsilon_{n,\delta})) \geq 1 - \delta$.
\end{corollary}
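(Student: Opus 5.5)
The corollary follows by rerunning the proof of Theorem~\ref{thm:posterior-credible-radius} with a sharper lower bound on the mass of the good event; the Lipschitz step is unchanged. Set $C := \{\|\theta - \hat\theta\| \leq q_{1-\delta}\}$ and $E := D_n \cap C$. As in that proof, for $\theta \in E$ the Lipschitz hypothesis on $D_n$ gives
\[
W_1(R_\theta, R_{\hat\theta}) \leq L\|\theta - \hat\theta\| \leq L\, q_{1-\delta} = \varepsilon_{n,\delta}.
\]
Hence $\{\theta : R_\theta \in B_{W_1}(R_{\hat\theta}, \varepsilon_{n,\delta})\} \supseteq E$. The union-bound step, which is what produced the $\beta_n$ loss, is replaced as follows.

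Write $C$ as the disjoint union $C = E \cup (C \setminus D_n)$. The additional hypothesis says $P_n(C \setminus D_n) = 0$, so $P_n(E) = P_n(C)$. By the definition of the posterior $(1-\delta)$-quantile, $P_n(C) \geq 1 - \delta$. The one subtlety is the convention that $q_{1-\delta} = \inf\{q : P_n(\|\theta - \hat\theta\| \leq q) \geq 1-\delta\}$. With this convention the inequality holds with the non-strict $\leq$, because the distribution function is right-continuous, so the infimum is attained. Monotonicity of $P_n$ then gives
\[
P_n\bigl(R_\theta \in B_{W_1}(R_{\hat\theta}, \varepsilon_{n,\delta})\bigr) \geq P_n(E) = P_n(C) \geq 1-\delta.
\]
This is the claim, and it coincides with Theorem~\ref{thm:posterior-credible-radius} at $\beta_n = 0$.

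It remains to check that the event $\{R_\theta \in B_{W_1}(\cdot)\}$ is measurable. Since $\theta \mapsto R_\theta$ is Borel (Assumption~\ref{ass:standing}) and $W_1$ is continuous on $\PP_1$, the map $\theta \mapsto W_1(R_\theta, R_{\hat\theta})$ is measurable. If one prefers to avoid this, one can read the conclusion as a lower bound on the inner measure, and the containment argument above still applies. No step here is a genuine obstacle; the only care needed is the quantile convention, which secures the attained inequality $P_n(C) \geq 1-\delta$.
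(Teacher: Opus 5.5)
Your proof is correct and follows the paper's own argument: the event $E = D_n \cap \{\|\theta-\hat\theta\| \le q_{1-\delta}\}$ lies inside the ball event by the Lipschitz bound, and the extra hypothesis upgrades the union-bound estimate to $P_n(E) = P_n(\{\|\theta-\hat\theta\|\le q_{1-\delta}\}) \ge 1-\delta$. You rerun the mass estimate instead of plugging $\beta_n = 0$ into the theorem, which is the right reading of the paper's parenthetical, because the latter would require $P_n(D_n^c)=0$, a stronger condition than the corollary assumes.
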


\begin{remark}[What the credible radius certifies, and what it does not]
\label{rmk:calibration-not-coverage}
The credible radius \eqref{eq:credible-radius} is $O(n^{-1/2})$ under
Assumption~\ref{ass:posterior-concentration}, whereas the
distribution-free coverage radius behind
Theorem~\ref{thm:dro-finite-sample} scales as
$(c_1\log(c_2/\delta)/n)^{1/\max(K,2)}$, which at $K=25$ is effectively
$O(1)$ for any realistic $n$. The two radii are therefore not comparable in
magnitude, and the credible-radius rule should be read as a \emph{posterior
calibration} of the ambiguity ball---it matches validation tuning without
spending validation data---\emph{not} as a finite-sample
distributional-robustness certificate in the sense of
Theorem~\ref{thm:dro-finite-sample}. Worst-case coverage in the
Mohajerin Esfahani--Kuhn sense would require the far larger
Fournier--Guillin radius; the operational value documented in
Section~\ref{sec:numerical} is calibration-without-validation and turnover,
not coverage.
\end{remark}

\begin{remark}[Discharging $\beta_n$ in Gaussian--NIW]
\label{rmk:finite-sample-credible}
The credible ball $\{\|\theta - \hat\theta\| \leq q_{1-\delta}\}$ is rarely
contained in any fixed Lipschitz domain with full posterior probability,
because $\lambda_{\min}(\hat\Sigma_n)$ can be small in finite samples even
when $\lambda_{\min}(\Sigma^\star) > 0$. Theorem~\ref{thm:posterior-credible-radius}
handles this by retaining a probabilistic correction $\beta_n$. For
Gaussian--NIW, $\beta_n = O(e^{-cn})$ for fixed $K$, so the correction is
exponentially small; for $K = K_n$ growing with $n$, $\beta_n$ depends on
$K_n / n$ explicitly via Remark~\ref{rmk:k-dependence}.
\end{remark}

\begin{proposition}[Predictive mixing generally differs from plug-in]
\label{prop:pred-vs-plugin}
Let $\Rpred = \int_\Theta R_\theta \, dP(\theta)$ and
$\hat\theta = \int_\Theta \theta \, dP(\theta)$. In general,
$\Rpred \neq R_{\hat\theta}$, and Bayesian predictive optimization and
plug-in optimization are generally distinct.
\end{proposition}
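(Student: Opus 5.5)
Since the statement asserts that $\Rpred \neq R_{\hat\theta}$ and the two optimizations differ only ``in general,'' the plan is to exhibit one explicit counterexample that separates both the laws and the optimizers. The natural choice is the Gaussian location--scale family in dimension $K=1$, where all quantities are in closed form.

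\textbf{Step 1: the laws differ.} Take $R_\theta = \N(\theta, 1)$ with $\theta \in \R$ and $P = \N(m, s^2)$, $s > 0$. Then $\hat\theta = m$ and $R_{\hat\theta} = \N(m, 1)$. Conditioning on $\theta$ gives
\[
\Rpred = \int \N(\theta, 1)\, dP(\theta) = \N(m, 1 + s^2).
\]
The two laws have the same mean but different variances, so $\Rpred \neq R_{\hat\theta}$. This already reflects the general mechanism: mixing adds parameter variance (the law of total variance), so for any nondegenerate $P$ over a location family the mixture is strictly more dispersed than the plug-in law.

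\textbf{Step 2: the optimizers differ.} Embed this law in a portfolio problem with one risky asset of weight $w \in \W = [0, B]$, a risk-free rate of zero, and a mean--variance objective. This objective is legitimate via the MVE universality invoked after \eqref{eq:markowitz}, or directly through CARA utility with Gaussian returns. The plug-in value is $wm - \tfrac{\gamma}{2} w^2$, maximized at $w^{\mathrm{pl}} = m/\gamma$. The predictive value is $wm - \tfrac{\gamma}{2}(1 + s^2) w^2$, maximized at $w^{\mathrm{pred}} = m / (\gamma(1 + s^2))$. Choosing $m > 0$ and $B$ large enough that both maximizers are interior gives $w^{\mathrm{pred}} \neq w^{\mathrm{pl}}$ whenever $s > 0$.

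\textbf{Step 3: a risk-functional version.} A variant with $\rho = \CVaR_\alpha$ shows the same phenomenon. For linear utility, the predictive CVaR is $-wm + w\sqrt{1 + s^2}\, c_\alpha$, which is strictly larger than the plug-in value $-wm + w\, c_\alpha$. Here $c_\alpha = \varphi(\Phi^{-1}(1-\alpha))/\alpha$ is the standard Gaussian CVaR constant. The two objectives therefore differ, and so do their optimizers near the threshold where $m$ balances $\gamma c_\alpha$.

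The main obstacle is making sure the inequality is stated with the quantifier the proposition intends, namely failure of equality \emph{for some} admissible configuration. The counterexample settles that directly. I would add a remark that equality does hold in degenerate cases: when $P$ is a Dirac mass, or when $\theta \mapsto R_\theta$ is affine in a sense that commutes with mixing, which is impossible for Gaussian means with fixed covariance unless $s = 0$. Stating this makes clear that ``in general'' cannot be strengthened to ``always.''
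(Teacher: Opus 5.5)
Your proposal is correct and is essentially the paper's argument: the paper gives no separate proof and supports the proposition only by the Gaussian location example $R_\mu=\N(\mu,\Sigma)$, $\mu\sim\N(\hat\mu,\Lambda)$, where $\Rpred=\N(\hat\mu,\Sigma+\Lambda)\neq\N(\hat\mu,\Sigma)=R_{\hat\mu}$; this is your Step~1 in dimension one. Your Step~2 usefully makes explicit the separation of optimizers that the paper leaves implicit. One small correction to the optional Step~3: the objective is $w\bigl((1+\gamma)m-\gamma\sqrt{v}\,c_\alpha\bigr)$ with $v\in\{1,1+s^2\}$, which is linear in $w$. The optimizers are therefore the corners $0$ or $B$, and they differ whenever $\gamma c_\alpha<(1+\gamma)m<\gamma\sqrt{1+s^2}\,c_\alpha$; the threshold compares $(1+\gamma)m$, not $m$, with $\gamma c_\alpha$.
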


\begin{example}[Gaussian location model]
For $R_\mu = \N(\mu, \Sigma)$ with $\mu \sim \N(\hat\mu, \Lambda)$,
$\Rpred = \N(\hat\mu, \Sigma + \Lambda)$ whereas $R_{\hat\mu} = \N(\hat\mu, \Sigma)$.
\end{example}

\subsection{Gaussian-isotropic case: posterior mixing is second order}

\begin{proposition}[Gaussian-isotropic posterior mixing is second order]
\label{prop:gaussian-isotropic-second-order}
Let $R_\mu = \N(\mu, \sigma^2 I_K)$ with known $\sigma > 0$, and suppose
the posterior on $\mu$ is $P_n = \N(\hat\mu, \Lambda_n)$ with $\Lambda_n \succ 0$.
Then $\Rpredn = \N(\hat\mu, \sigma^2 I_K + \Lambda_n)$ and
\begin{align}
W_1(\Rpredn, R_{\hat\mu}) &\leq W_2(\Rpredn, R_{\hat\mu})
\leq \frac{\|\Lambda_n\|_F}{2\sigma}, \label{eq:gaussian-upper} \\
W_1(\Rpredn, R_{\hat\mu}) &\geq \sqrt{2/\pi} \cdot
\left(\sqrt{\sigma^2 + \lambda_{\max}(\Lambda_n)} - \sigma\right). \label{eq:gaussian-lower}
\end{align}
In particular, when $\Lambda_n = O(n^{-1}) I_K$ and $\sigma$ is fixed,
$W_1(\Rpredn, R_{\hat\mu}) = O(n^{-1})$, whereas
$\E_{P_n} \|\mu - \hat\mu\| = O(n^{-1/2})$. The first-order Bayesian--DRO
radius is conservative by one order in this symmetric Gaussian location model.
\end{proposition}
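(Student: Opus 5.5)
The plan is to use the Gaussian structure throughout: compute the predictive law in closed form, then bound the distance above by an explicit $W_2$ formula and below by reducing to one dimension.

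\emph{Step 1 (predictive law).} If $\mu \sim \N(\hat\mu,\Lambda_n)$ and $r \mid \mu \sim \N(\mu,\sigma^2 I_K)$, then $r = \mu + \sigma \xi$ with $\xi \sim \N(0,I_K)$ independent of $\mu$. Hence $r$ is Gaussian with mean $\hat\mu$ and covariance $\sigma^2 I_K + \Lambda_n$. This is the isotropic case of the Gaussian location example following Proposition~\ref{prop:pred-vs-plugin}. Both laws then share the mean $\hat\mu$, and their covariances $\sigma^2 I_K+\Lambda_n$ and $\sigma^2 I_K$ commute. Diagonalize $\Lambda_n = U\,\mathrm{diag}(\lambda_1,\dots,\lambda_K)U^\top$.

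\emph{Step 2 (upper bound \eqref{eq:gaussian-upper}).} The inequality $W_1 \le W_2$ follows from Jensen applied to any coupling. For $W_2$, I would use the Bures--Wasserstein formula for Gaussians with commuting covariances. Alternatively, exhibit the coupling $x \mapsto \hat\mu + (I+\Lambda_n/\sigma^2)^{1/2}(x-\hat\mu)$, which transports $R_{\hat\mu}$ onto $\Rpredn$. Either route gives $W_2^2 \le \sum_i (\sqrt{\sigma^2+\lambda_i}-\sigma)^2$. The key elementary bound is
\[
\sqrt{\sigma^2+\lambda}-\sigma = \frac{\lambda}{\sqrt{\sigma^2+\lambda}+\sigma} \le \frac{\lambda}{2\sigma}.
\]
Summing gives $W_2^2 \le \sum_i \lambda_i^2/(4\sigma^2) = \|\Lambda_n\|_F^2/(4\sigma^2)$.

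\emph{Step 3 (lower bound \eqref{eq:gaussian-lower}).} Let $v$ be a unit top eigenvector of $\Lambda_n$. Since $x \mapsto v^\top(x-\hat\mu)$ is $1$-Lipschitz, it does not increase $W_1$, as seen from the primal coupling definition or from Proposition~\ref{prop:kr}. The problem therefore reduces to $W_1\bigl(\N(0,s_1^2),\N(0,s_0^2)\bigr)$ with $s_1 = \sqrt{\sigma^2+\lambda_{\max}(\Lambda_n)}$ and $s_0 = \sigma$. Test Kantorovich--Rubinstein (Proposition~\ref{prop:kr}) with the $1$-Lipschitz function $f(x)=|x|$. Because $\E|\N(0,s^2)| = s\sqrt{2/\pi}$, this yields $W_1 \ge \sqrt{2/\pi}\,(s_1-s_0)$, which is the claim. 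In one dimension this bound is in fact an equality, via the monotone coupling.

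\emph{Step 4 (rates).} If $\Lambda_n \preceq (c/n) I_K$, then $\|\Lambda_n\|_F \le c\sqrt K/n$, so $W_1 = O(n^{-1})$ for fixed $K$ and $\sigma$. For the comparison quantity, Jensen gives
\[
\E_{P_n}\|\mu-\hat\mu\| \le \sqrt{\tr \Lambda_n} = O(n^{-1/2}).
\]
This is in fact of exact order $n^{-1/2}$ whenever $\Lambda_n \succeq (c'/n) I_K$, since $\E\|Z\|\ge \|\E|Z_1|\,\|$-type bounds give $\E\|\mu-\hat\mu\| \ge \sqrt{2/\pi}\sqrt{\lambda_{\max}(\Lambda_n)}$. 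So the first-order radius of Theorem~\ref{thm:bayes-dro-domination}, taken with $L=1$ for the location model, overshoots by a factor of order $n^{1/2}$. Meanwhile \eqref{eq:gaussian-lower} shows the true distance is still of order $\lambda_{\max}(\Lambda_n)/\sigma \asymp n^{-1}$.

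The only genuinely delicate point is justifying the $W_2$ value in Step 2. I would sidestep the general Bures formula by using only the explicit linear coupling, which suffices because an upper bound is all that is required.
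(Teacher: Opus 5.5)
Your proposal is correct and follows the paper's proof essentially step for step: Gaussian convolution gives the predictive law, the commuting-covariance value $W_2^2=\sum_i(\sqrt{\sigma^2+\lambda_i}-\sigma)^2$ combined with $\sqrt{\sigma^2+\lambda}-\sigma\le\lambda/(2\sigma)$ gives the upper bound, and projecting onto a top eigenvector of $\Lambda_n$ gives the lower bound. The differences are minor. You obtain the $W_2$ bound from the explicit linear transport map rather than citing the Bures--Wasserstein formula, and the one-dimensional lower bound from the Kantorovich--Rubinstein test function $|x|$ rather than the monotone-coupling equality. You also spell out the rate claims, which the paper leaves implicit.
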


\begin{proof}
Gaussian convolution gives the predictive identity. Both measures are
Gaussian with the same mean $\hat\mu$, so the Bures--Wasserstein formula
yields $W_2^2 = \sum_{i=1}^K (\sqrt{\sigma^2 + \lambda_i} - \sigma)^2$. The
inequality $\sqrt{\sigma^2 + \lambda} - \sigma = \lambda / (\sqrt{\sigma^2 + \lambda} + \sigma)
\leq \lambda / (2\sigma)$ gives \eqref{eq:gaussian-upper}. For the lower
bound, let $v$ be a unit top eigenvector of $\Lambda_n$ and consider the
$1$-Lipschitz projection $\phi(x)=v^\top(x-\hat\mu)$. Wasserstein distance
contracts under $1$-Lipschitz maps in the forward direction, hence
\[
W_1(\Rpredn,R_{\hat\mu})
\ge
W_1(\phi_\#\Rpredn,\phi_\#R_{\hat\mu}).
\]
The two projected laws are
$\phi_\#\Rpredn=\N(0,\sigma^2+\lambda_{\max}(\Lambda_n))$ and
$\phi_\#R_{\hat\mu}=\N(0,\sigma^2)$. In one dimension, the monotone optimal
coupling gives
$W_1(\N(0,a^2),\N(0,b^2))=|a-b|\E|Z|=|a-b|\sqrt{2/\pi}$ with
$Z\sim\N(0,1)$. Substituting
$a=\sqrt{\sigma^2+\lambda_{\max}(\Lambda_n)}$ and $b=\sigma$ gives
\eqref{eq:gaussian-lower}.
\end{proof}

\begin{remark}[Why first-order bounds overshoot]
For any $1$-Lipschitz $f: \R^K \to \R$ in the Gaussian-mean model, the
function $\mu \mapsto \E_{R_\mu}[f] = \E_Z[f(\mu + \sigma Z)]$ is
convolution-smoothed and infinitely differentiable. Its first-order Taylor
expansion around $\hat\mu$ gives a linear-in-$(\mu - \hat\mu)$ contribution
that integrates to zero against the centered $P_n$. The leading term is
therefore second-order, of size $O(\E_{P_n} \|\mu - \hat\mu\|^2) = O(\tr \Lambda_n)$,
matching $\|\Lambda_n\|_F$ up to dimensional constants. The first-order
Lipschitz upper bound misses this cancellation.
\end{remark}

\begin{proposition}[Second-order Bayes-DRO radius in the Gaussian-isotropic model]
\label{prop:second-order-radius}
Under the hypotheses of Proposition~\ref{prop:gaussian-isotropic-second-order},
define the second-order Bayes-DRO radius
\begin{equation}
\varepsilon_n^{(2)} := \frac{\|\Lambda_n\|_F}{2\sigma}.
\label{eq:second-order-radius}
\end{equation}
Then:
(i) $\Rpredn \in B_{W_1}(R_{\hat\mu}, \varepsilon_n^{(2)})$;
(ii) for every $1$-Lipschitz loss $\ell$,
$|\E_{\Rpredn}[\ell] - \E_{R_{\hat\mu}}[\ell]| \leq \varepsilon_n^{(2)}$;
(iii) if $\Lambda_n = O(n^{-1}) I_K$ and $\sigma > 0$ is fixed, then
$\varepsilon_n^{(2)} = O(n^{-1})$, compared with $O(n^{-1/2})$ for the
first-order radius;
(iv) the second-order radius is fully data-driven.
\end{proposition}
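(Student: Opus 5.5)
The plan is to read all four items off Proposition~\ref{prop:gaussian-isotropic-second-order}, since $\varepsilon_n^{(2)}$ is exactly the right-hand side of \eqref{eq:gaussian-upper}. For (i), I would recall the two facts behind that bound. First, Gaussian convolution gives $\Rpredn=\N(\hat\mu,\sigma^2 I_K+\Lambda_n)$. Second, since both laws share the mean $\hat\mu$ and their covariances $\sigma^2 I_K$ and $\sigma^2 I_K+\Lambda_n$ commute, the Bures--Wasserstein formula reduces to the eigenvalue sum
\[
W_2^2=\sum_i\bigl(\sqrt{\sigma^2+\lambda_i}-\sigma\bigr)^2 .
\]
Bounding each term by $\lambda_i/(2\sigma)$ gives $W_2\le(\sum_i\lambda_i^2)^{1/2}/(2\sigma)=\|\Lambda_n\|_F/(2\sigma)$, since $\Lambda_n\succ0$ is symmetric. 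Then $W_1\le W_2$, from Jensen applied to any coupling, gives $W_1(\Rpredn,R_{\hat\mu})\le\varepsilon_n^{(2)}$. This is membership in the closed ball $B_{W_1}(R_{\hat\mu},\varepsilon_n^{(2)})$.

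For (ii), I would invoke Proposition~\ref{prop:lip-int} with $L=1$ on the two laws from (i). This gives $|\E_{\Rpredn}[\ell]-\E_{R_{\hat\mu}}[\ell]|\le W_1(\Rpredn,R_{\hat\mu})\le\varepsilon_n^{(2)}$. Integrability is automatic, because both laws are Gaussian and hence in $\PP_1(\R^K)$. The same conclusion holds for any $\ell(w,\cdot)$ that is $1$-Lipschitz in $r$ with $w$ fixed, and by rescaling, a general Lipschitz loss yields $\Lip_r(\ell)\,\varepsilon_n^{(2)}$.

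For (iii), suppose $\Lambda_n\preceq (C/n) I_K$. Then every eigenvalue satisfies $0<\lambda_i\le C/n$, so $\|\Lambda_n\|_F\le\sqrt K\,C/n$ and $\varepsilon_n^{(2)}\le \sqrt K C/(2\sigma n)=O(n^{-1})$ for fixed $K$ and $\sigma$. For the comparison, Theorem~\ref{thm:bayes-dro-domination} with $D=\Theta$ and $L=1$ (the mean-shift map $\mu\mapsto\N(\mu,\sigma^2I)$ is $1$-Lipschitz in $W_1$) gives the first-order radius $\E_{P_n}\|\mu-\hat\mu\|$. Under $P_n=\N(\hat\mu,\Lambda_n)$ this equals $\E\|\Lambda_n^{1/2}Z\|$, which is of order $\sqrt{\tr\Lambda_n}\asymp n^{-1/2}$ when $\Lambda_n\asymp n^{-1}I$. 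The lower bound in that last step uses the fact that, for a Gaussian vector, $\E\|X\|\ge c_K\sqrt{\E\|X\|^2}$ with $c_K>0$.

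For (iv), I would observe that $\varepsilon_n^{(2)}$ depends only on the posterior covariance $\Lambda_n$, which is computed from the data and prior, and on the known $\sigma$. No unknown population quantity such as $\mu^\star$ enters.

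I expect the only real care point to be the commuting-covariance step behind the Bures--Wasserstein reduction, which is already done in the earlier proposition, together with the interpretation of ``$O(n^{-1/2})$'' for the first-order radius as a two-sided order rather than only an upper bound.
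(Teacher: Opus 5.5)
Your proposal is correct and takes the same route as the paper, which gives no separate proof: the proposition is read off from \eqref{eq:gaussian-upper} together with $W_1\le W_2$ and Proposition~\ref{prop:lip-int}, exactly as you do. Your remark that the two-sided reading of the first-order rate needs the extra hypothesis $\Lambda_n\succeq (c/n)I_K$ goes slightly beyond the statement, which only asserts the upper bound $O(n^{-1/2})$, and the remark is accurate.
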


\subsection{When is the first-order Bayesian--DRO bound tight? A taxonomy}\label{sec:taxonomy-first-order}

\begin{example}[Deterministic affine law: first-order tight]\label{ex:affine-deterministic}
Let $R_\theta = \delta_{a + b\theta}$. Then
$W_1(\Rpredn, R_{\hat\theta}) = \E_{P_n} \|b(\theta - \hat\theta)\|$,
achieved by $f(x) = \|x - (a + b\hat\theta)\|$ in the KR dual. The
first-order radius is order-sharp.
\end{example}

\begin{example}[Gaussian-mean location: second-order conservative]
\label{ex:gaussian-mean}
By Proposition~\ref{prop:gaussian-isotropic-second-order}, the actual
$W_1$ gap is $O(n^{-1})$ while the first-order radius is $O(n^{-1/2})$.
\end{example}

\begin{example}[Gaussian-scale: centered interior scale is smooth]
For $R_\sigma = \N(0, \sigma^2 I_K)$ with $\sigma$ random, when the
posterior is concentrated away from $\sigma = 0$, the actual
predictive-to-plug-in gap is second-order.
\end{example}

\begin{example}[Kinked or noncentered models: first-order behavior reappears]\label{ex:kinked}
Let $\theta$ have posterior mean $0$ and variance $O(n^{-1})$, and set
$R_\theta = \delta_{|\theta| e_1}$. Then $W_1(\Rpredn, R_{\hat\theta}) = \E_{P_n} |\theta| = O(n^{-1/2})$,
order-sharp despite posterior centering.
\end{example}

% PATCH 10: NEW PROPOSITION (FIRST-ORDER TIGHTNESS)
\begin{proposition}[First-order tightness for affine-deterministic and locally non-smooth models]
\label{prop:first-order-tight}
Assume one of the following two model structures.
\begin{enumerate}[label=\textup{(\roman*)}]
\item \textbf{Deterministic affine.} $R_\theta = \delta_{a + B\theta}$ with
$a \in \R^K$ and $B \in \R^{K \times d_\theta}$ deterministic. Let
$\hat\theta = \E_{P_n} \theta$. If the smallest singular value of $B$
restricted to the linear span of $\supp(P_n) - \{\hat\theta\}$ is bounded
below by $\sigma_{\min}(B) > 0$, then
\[
W_1(\Rpredn, R_{\hat\theta}) \geq \sigma_{\min}(B) \cdot \E_{P_n}\|\theta - \hat\theta\|.
\]
\item \textbf{Locally non-smooth model map.} Suppose $\hat\theta = \E_{P_n} \theta$
and there exist a $1$-Lipschitz test $f: \R^K \to \R$, a direction
$v \in \R^{d_\theta}$ with $\|v\| = 1$, and constants $c_+ \neq c_-$ such
that, \emph{uniformly on a posterior-typical neighbourhood}
$U_n := \{\theta : \|\theta - \hat\theta\| \leq r_n\}$,
\[
h_f(\theta) - h_f(\hat\theta)
= c_+ \bigl(v^\top (\theta - \hat\theta)\bigr)_+
- c_- \bigl(v^\top (\theta - \hat\theta)\bigr)_-
+ r(\theta), \qquad \theta \in U_n,
\]
with remainder
$\sup_{\theta \in U_n} |r(\theta)| / \|\theta - \hat\theta\|
= o(1)$
and posterior-tail control
$P_n(U_n^c) \cdot \sup_{\theta \in U_n^c} |h_f(\theta) - h_f(\hat\theta)|
= o\bigl(\E_{P_n} |v^\top (\theta - \hat\theta)|\bigr)$.
The uniformity over the full neighbourhood $U_n$ (rather than only along the
line $\hat\theta + tv$) ensures the orthogonal component $\theta^\perp$ does
not break the expansion when $\theta \in U_n$ has $v^\top(\theta - \hat\theta) \neq 0$
but $\theta^\perp \neq 0$.
Then
\[
W_1(\Rpredn, R_{\hat\theta})
\geq \tfrac{1}{2} |c_+ - c_-| \cdot \E_{P_n}|v^\top(\theta - \hat\theta)|
- o(\E_{P_n}|v^\top(\theta - \hat\theta)|).
\]
\end{enumerate}
Consequently, in either class, the first-order Bayesian--DRO radius
$\varepsilon_n^W$ of Theorem~\ref{thm:bayes-dro-domination} is order-sharp:
there exist constants $0 < c \leq L < \infty$ such that
$c \cdot \E_{P_n}\|\theta - \hat\theta\| \leq W_1(\Rpredn, R_{\hat\theta})
\leq L \cdot \E_{P_n}\|\theta - \hat\theta\|$,
and the upper bound of Theorem~\ref{thm:bayes-dro-domination} cannot be
improved by more than a constant factor.
\end{proposition}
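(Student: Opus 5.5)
The plan is to obtain lower bounds on $W_1(\Rpredn,R_{\hat\theta})$ from the Kantorovich--Rubinstein duality of Proposition~\ref{prop:kr}, using a single well-chosen $1$-Lipschitz test function in each case. For any $1$-Lipschitz $f$, write $h_f(\theta):=\E_{R_\theta}[f]$. By Fubini (as in Proposition~\ref{prop:fubini}), $\E_{\Rpredn}[f]=\int h_f\,dP_n$, so
\[
W_1(\Rpredn,R_{\hat\theta}) \;\ge\; \int \bigl(h_f(\theta)-h_f(\hat\theta)\bigr)\,dP_n(\theta).
\]
The same inequality holds with $-f$ in place of $f$, since $-f$ is also $1$-Lipschitz. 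Everything reduces to evaluating this posterior average.

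\emph{Case (i).} Take $f(x)=\|x-(a+B\hat\theta)\|$, the test already used in Example~\ref{ex:affine-deterministic}. Then $h_f(\hat\theta)=0$ and $h_f(\theta)=\|B(\theta-\hat\theta)\|$. The vector $\theta-\hat\theta$ lies in the span of $\supp(P_n)-\{\hat\theta\}$ for $P_n$-a.e.\ $\theta$. The restricted singular-value hypothesis therefore gives $\|B(\theta-\hat\theta)\|\ge\sigma_{\min}(B)\|\theta-\hat\theta\|$ pointwise. Integrating gives the claim directly.

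\emph{Case (ii).} Set $X:=v^\top(\theta-\hat\theta)$. Because $\hat\theta=\E_{P_n}\theta$, we have $\E_{P_n}X=0$, hence $\E X_+=\E X_-=\tfrac12\E|X|$. I would decompose
\[
\E\bigl[h_f-h_f(\hat\theta)\bigr]=\E\bigl[(c_+X_+-c_-X_-)\bigr]-\E\bigl[(c_+X_+-c_-X_-)\mathbf 1_{U_n^c}\bigr]+\E\bigl[r\,\mathbf 1_{U_n}\bigr]+\E\bigl[(h_f-h_f(\hat\theta))\mathbf 1_{U_n^c}\bigr].
\]
The first term equals $\tfrac12(c_+-c_-)\E|X|$. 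The fourth term is $o(\E|X|)$ by the stated tail hypothesis. The third term is bounded by $o(1)\cdot\E\|\theta-\hat\theta\|$.

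\emph{Main obstacle.} The difficulty is bookkeeping of the error terms. First, I must convert $o(\E\|\theta-\hat\theta\|)$ into $o(\E|X|)$. I would use the comparability $\E\|\theta-\hat\theta\|\lesssim\E|X|$, which is automatic when $d_\theta=1$ and otherwise follows from posterior non-degeneracy along $v$; I would make this explicit as part of "order-sharp." Second, the truncated main term $\E[|X|\mathbf 1_{U_n^c}]$ must be $o(\E|X|)$. For this I would use uniform integrability of $|X|/\E|X|$ under $P_n$, or Cauchy--Schwarz with $P_n(U_n^c)\to0$ and a second-moment bound $\E X^2\lesssim(\E|X|)^2$. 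If $c_+<c_-$, I replace $f$ by $-f$, which flips the signs of $c_\pm$ and yields $\tfrac12|c_+-c_-|\E|X|-o(\E|X|)$.

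\emph{Consequence.} For the two-sided sandwich, the lower constant $c$ is $\sigma_{\min}(B)$ in case (i) and $\tfrac14|c_+-c_-|$ times the comparability constant in case (ii), for $n$ large. The upper constant comes from the model-level form of Theorem~\ref{thm:bayes-dro-domination} with $D=\Theta$. In case (i) the model map is globally $\|B\|$-Lipschitz, since $W_1(\delta_x,\delta_y)=\|x-y\|$. In case (ii) the local Lipschitz constant $L$ on $U_n$ is supplied by the theorem's hypothesis, with the tail term $\tau_n$ absorbed by the same tail control. Together these show that the first-order radius cannot be improved beyond a constant factor.
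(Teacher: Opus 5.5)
Your route is the paper's: Kantorovich--Rubinstein with a single witness. In (i) you use the distance function $\|x-a-B\hat\theta\|$ plus the pointwise restricted singular-value bound, which is exactly the paper's proof. In (ii) you use the centering identity $\E X_+=\E X_-=\tfrac12\E|X|$ followed by the flip $f\mapsto -f$.

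Where you differ is the error bookkeeping in (ii), and there you are more careful than the paper. Its proof writes the remainder as $o(|v^\top(\theta-\hat\theta)|)$, although the hypothesis only gives $o(\|\theta-\hat\theta\|)$. It also integrates the local expansion against all of $P_n$, so the truncated term $\E[|X|\mathbf 1_{U_n^c}]$ never appears. The two obstacles you isolate are precisely the steps the paper skips.

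You should, however, state both extra conditions as added hypotheses, not as consequences. Neither follows from the hypotheses of (ii), and the conclusion of (ii) is false without them.

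\emph{Comparability.} Take $d_\theta=2$, $K=1$, $R_\theta=\delta_{g(\theta)}$ with $g(\theta)=\min\{1,(|\theta_1|-|\theta_2|^{3/2})_+\}$. Let $P_n$ be the product of the uniform laws on $[-n^{-1/2},n^{-1/2}]$ and $[-n^{-1/4},n^{-1/4}]$. Use $f(x)=x$, $v=e_1$, $c_\pm=\pm1$, and $r_n=(n^{-1}+n^{-1/2})^{1/2}$, so that $U_n\supseteq\supp(P_n)$. Every hypothesis holds: the remainder is bounded by $\|\theta\|^{3/2}$ and $P_n(U_n^c)=0$. Yet $W_1(\Rpredn,R_{\hat\theta})\le n^{-7/12}=o(\E|X|)$. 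This posterior is non-degenerate along $v$ at the natural $n^{-1/2}$ scale, so ``non-degeneracy along $v$'' does not give $\E\|\theta-\hat\theta\|\lesssim\E|X|$. What does work is, e.g., $\E_{P_n}|X|\gtrsim n^{-1/2}$ combined with Assumption~\ref{ass:posterior-concentration}, which yields $\E_{P_n}\|\theta-\hat\theta\|\le 2Mn^{-1/2}$.

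\emph{Truncation.} Use the same Dirac construction with $d_\theta=K=1$ and the tent $g(\theta)=\min\{|\theta|,(2-|\theta|)_+\}$. Take $P_n=(1-p_n)\,\mathrm{Unif}[-s_n,s_n]+\tfrac{p_n}{2}(\delta_{M_n}+\delta_{-M_n})$ with $s_n=p_n=n^{-1/2}$, $M_n=n^{1/4}$ and $r_n=s_n$. The tail condition reads $p_n=o(n^{-1/4})$ and holds. But $W_1(\Rpredn,R_{\hat\theta})=\tfrac12(1-p_n)s_n=o(\E|X|)$. So a uniform-integrability or second-moment condition like yours is genuinely required.

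\emph{The consequence for case (ii).} The tail control concerns only the single witness $h_f$. It therefore cannot absorb $\tau_n(U_n)=\int_{U_n^c}W_1(R_\theta,R_{\hat\theta})\,dP_n$, which is a supremum over all of $\mathcal F_1$. The upper half of the sandwich needs, as separate assumptions, the Lipschitz hypothesis of Theorem~\ref{thm:bayes-dro-domination} and $\tau_n=O(\E_{P_n}\|\theta-\hat\theta\|)$. The paper's proof does not address the ``consequently'' clause at all.
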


\begin{proof}
\emph{Case (i).} The test function $f(x) := \|x - a - B\hat\theta\|$ is
$1$-Lipschitz. Under $R_\theta = \delta_{a + B\theta}$,
$\int f \, dR_\theta = \|B(\theta - \hat\theta)\|$;
$\int f \, dR_{\hat\theta} = 0$ and
$\int f \, d\Rpredn = \E_{P_n} \|B(\theta - \hat\theta)\|
\geq \sigma_{\min}(B) \cdot \E_{P_n}\|\theta - \hat\theta\|$
by the singular-value bound. Kantorovich--Rubinstein gives the claim.

\emph{Case (ii).} Decompose $\theta - \hat\theta = (v^\top(\theta - \hat\theta)) v + \theta^\perp$
with $\theta^\perp \perp v$. On the posterior-typical neighbourhood, the
directional expansion gives
$h_f(\theta) - h_f(\hat\theta) = c_+ (v^\top(\theta - \hat\theta))_+
- c_- (v^\top(\theta - \hat\theta))_- + o(|v^\top(\theta - \hat\theta)|)$
uniformly. Using $\E_{P_n}[v^\top(\theta - \hat\theta)] = 0$ and the
identity $\E[(X)_+] = \E[(X)_-] = \tfrac{1}{2} \E|X|$ for centered $X$,
\[
\E_{P_n}[h_f(\theta) - h_f(\hat\theta)]
= \tfrac{1}{2}(c_+ - c_-) \cdot \E_{P_n}|v^\top(\theta - \hat\theta)|
+ o(\E_{P_n}|v^\top(\theta - \hat\theta)|).
\]
Replacing $f$ by $-f$ if $c_+ < c_-$ and applying KR yields the claim.
\end{proof}

\begin{remark}[Order-sharpness in the nonsmooth taxonomy]
\label{rmk:taxonomy-converse}
Proposition~\ref{prop:first-order-tight} substantiates the heuristic
statements of Examples~\ref{ex:affine-deterministic} and~\ref{ex:kinked}
with a quantitative lower bound. Combined with
Proposition~\ref{prop:gaussian-isotropic-second-order} (upper bound
$O(n^{-1})$ in the centered Gaussian-isotropic case) and the first-order
upper bound of Theorem~\ref{thm:bayes-dro-domination} ($O(n^{-1/2})$), the
picture is: \emph{smooth centered} models permit a second-order radius;
\emph{nonsmooth or noncentered} models do not, and the first-order radius
is then within a model-dependent constant factor of the true Wasserstein
gap. Theorem~\ref{thm:modulus-interpolation} below gives a sufficient
modulus-of-smoothness condition for intermediate rates parameterized by
$\omega$.
\end{remark}

% NEW THEOREM (v9): MODULUS-OF-SMOOTHNESS INTERPOLATION
\begin{theorem}[Second-order Bayes--DRO bound under a uniform modulus of smoothness]
\label{thm:modulus-interpolation}
Let $R_\theta$ be a parametric model with $\theta \in \Theta \subseteq \R^{d_\theta}$
and let $P_n$ be a posterior with $\hat\theta = \E_{P_n}[\theta]$. Let
$\mathcal F_1$ denote the class of $1$-Lipschitz $f: \R^K \to \R$ and set
$h_f(\theta) := \int f \, dR_\theta$. Assume:
\begin{enumerate}[label=\textup{(H\arabic*)}]
\item \textup{(Local linearization with bounded coefficient.)} There exists a
posterior-typical neighbourhood $N_n \subseteq \Theta$ with $\hat\theta \in N_n$
and, for each $f \in \mathcal F_1$, a vector $G_f \in \R^{d_\theta}$ such that
\[
h_f(\theta) - h_f(\hat\theta)
= \langle G_f, \theta - \hat\theta \rangle + \rho_f(\theta),
\qquad \theta \in N_n,
\]
with $\sup_{f \in \mathcal F_1} \|G_f\| \leq G < \infty$. The constant $G$ is
the local Lipschitz constant of $\theta \mapsto R_\theta$ at $\hat\theta$
in $W_1$.
\item \textup{(Uniform modulus.)} There exists a non-decreasing
$\omega: [0, \infty) \to [0, \infty)$ with $\omega(0) = 0$ such that
$\sup_{f \in \mathcal F_1} |\rho_f(\theta)| \leq \omega(\|\theta - \hat\theta\|)$
for all $\theta \in N_n$.
\item \textup{(Tail negligibility via integrable envelope.)} There exists a
measurable $\Phi: \Theta \to [0, \infty)$ with
$\sup_{f \in \mathcal F_1} |h_f(\theta) - h_f(\hat\theta)| \leq \Phi(\theta)$
for all $\theta \in \Theta$, and
\[
G \cdot \E_{P_n}\!\left[\|\theta - \hat\theta\| \,\mathbf 1_{N_n^c}\right]
+ \E_{P_n}\!\left[\Phi(\theta) \,\mathbf 1_{N_n^c}\right]
= o\!\left(\E_{P_n}\,\omega(\|\theta - \hat\theta\|)\right).
\]
\end{enumerate}
Then
\[
W_1(\Rpredn, R_{\hat\theta})
\;\leq\; \E_{P_n}\!\left[\omega(\|\theta - \hat\theta\|)\right]
\;+\; o\!\left(\E_{P_n}\,\omega(\|\theta - \hat\theta\|)\right).
\]
\end{theorem}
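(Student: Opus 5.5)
The plan is to work through the Kantorovich--Rubinstein dual of Proposition~\ref{prop:kr}. For any $f\in\mathcal F_1$, Fubini (Proposition~\ref{prop:fubini}) gives
\[
\int f\,d\Rpredn-\int f\,dR_{\hat\theta}=\E_{P_n}\bigl[h_f(\theta)-h_f(\hat\theta)\bigr],
\]
so $W_1(\Rpredn,R_{\hat\theta})=\sup_{f\in\mathcal F_1}\E_{P_n}[h_f(\theta)-h_f(\hat\theta)]$. It therefore suffices to bound this expectation uniformly in $f$. I will split it over $N_n$ and $N_n^c$, exactly as in the proof of Theorem~\ref{thm:bayes-dro-domination}.

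On $N_n$ I will insert the expansion (H1), which gives $\E_{P_n}[\langle G_f,\theta-\hat\theta\rangle\mathbf 1_{N_n}]+\E_{P_n}[\rho_f(\theta)\mathbf 1_{N_n}]$. The remainder term is bounded by $\E_{P_n}\omega(\|\theta-\hat\theta\|)$ using (H2), and this bound is uniform in $f$. For the linear term I will use the centering $\hat\theta=\E_{P_n}\theta$. This gives $\E_{P_n}[\langle G_f,\theta-\hat\theta\rangle]=0$, so the $N_n$-restricted linear term equals $-\E_{P_n}[\langle G_f,\theta-\hat\theta\rangle\mathbf 1_{N_n^c}]$. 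By Cauchy--Schwarz and $\|G_f\|\le G$, its absolute value is at most $G\,\E_{P_n}[\|\theta-\hat\theta\|\mathbf 1_{N_n^c}]$. This cancellation is the crux: it is the mechanism that makes the bound second order rather than first order, and it is exactly why (H3) carries the term $G\,\E[\|\theta-\hat\theta\|\mathbf 1_{N_n^c}]$. Implicitly this step requires $\E_{P_n}\|\theta-\hat\theta\|<\infty$, which the definition of $\hat\theta$ already presupposes.

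On $N_n^c$ I will bound $|h_f(\theta)-h_f(\hat\theta)|\le\Phi(\theta)$ by the envelope in (H3), which gives $\E_{P_n}[\Phi\mathbf 1_{N_n^c}]$. Summing the three contributions gives, uniformly over $f\in\mathcal F_1$,
\[
\E_{P_n}\omega(\|\theta-\hat\theta\|)+G\,\E_{P_n}[\|\theta-\hat\theta\|\mathbf 1_{N_n^c}]+\E_{P_n}[\Phi\mathbf 1_{N_n^c}].
\]
The last two terms are $o(\E_{P_n}\omega)$ by (H3). Taking the supremum over $f$ then yields the claim.

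I expect the main obstacle to be bookkeeping rather than a hard inequality. There are two points to watch. First, the remainder bound from (H2) is applied only on $N_n$, so I should bound $\E[\omega\mathbf 1_{N_n}]\le\E\omega$, which is valid because $\omega\ge0$. Second, the $o(\cdot)$ must be read along the sequence $n\to\infty$ with the posterior varying; I will simply carry the (H3) quantities through explicitly, so that the final $o$ is literally the one assumed. I would also remark that measurability of $\theta\mapsto h_f(\theta)$ follows from Borel measurability of $\theta\mapsto R_\theta$ (Assumption~\ref{ass:standing}(ii)).
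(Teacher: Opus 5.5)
Your proposal is correct and follows the paper's proof essentially line for line: Kantorovich--Rubinstein duality reduces $W_1(\Rpredn,R_{\hat\theta})$ to $\sup_{f\in\mathcal F_1}\E_{P_n}[h_f(\theta)-h_f(\hat\theta)]$, which is then split into three pieces. The linear term on $N_n$ is moved to $N_n^c$ by the centering $\hat\theta=\E_{P_n}\theta$ and bounded by $G\,\E_{P_n}[\|\theta-\hat\theta\|\mathbf 1_{N_n^c}]$; the remainder on $N_n$ is bounded by $\E_{P_n}\omega$ via (H2); and the tail on $N_n^c$ is bounded by the envelope $\Phi$. Your extra remarks on the integrability of $\theta$ and the measurability of $h_f$ are harmless refinements that the paper leaves implicit.
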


\begin{remark}[On the envelope $\Phi$]
\textup{(H3)} is the appropriate replacement for a deterministic
``diameter of supports'' bound, which would fail for Gaussian or Student-$t$
models with unbounded support. In the Gaussian--NIW case, one may take
$\Phi(\theta) = G \cdot \|\theta - \hat\theta\| + W_1(R_\theta, R_{\hat\theta})$;
both terms have finite posterior moments and exponentially small tails by
posterior concentration, so $\E_{P_n}[\Phi(\theta) \,\mathbf 1_{N_n^c}]$ is
$O(e^{-cn})$ when $N_n$ is taken as a Gaussian credible ball of radius
$O(n^{-1/2 + \delta})$.
\end{remark}

\begin{proof}
By Kantorovich--Rubinstein duality,
$W_1(\Rpredn, R_{\hat\theta}) = \sup_{f \in \mathcal F_1} \E_{P_n}[h_f(\theta) - h_f(\hat\theta)]$.
Fix $f \in \mathcal F_1$ and split the expectation:
\[
\E_{P_n}[h_f(\theta) - h_f(\hat\theta)]
= \underbrace{\E_{P_n}\!\left[\langle G_f, \theta - \hat\theta \rangle \mathbf 1_{N_n}\right]}_{=:\, T_1}
+ \underbrace{\E_{P_n}[\rho_f(\theta) \mathbf 1_{N_n}]}_{=:\, T_2}
+ \underbrace{\E_{P_n}[(h_f(\theta) - h_f(\hat\theta)) \mathbf 1_{N_n^c}]}_{=:\, T_3}.
\]
For $T_1$, write
$\E_{P_n}[(\theta - \hat\theta) \mathbf 1_{N_n}] = -\E_{P_n}[(\theta - \hat\theta) \mathbf 1_{N_n^c}]$
using $\E_{P_n}[\theta] = \hat\theta$. Cauchy--Schwarz with \textup{(H1)} gives
$|T_1| \leq G \cdot \E_{P_n}[\|\theta - \hat\theta\| \mathbf 1_{N_n^c}]$, which is
$o(\E_{P_n}\omega)$ by \textup{(H3)}. For $T_2$, \textup{(H2)} gives
$|T_2| \leq \E_{P_n}[\omega(\|\theta - \hat\theta\|) \mathbf 1_{N_n}] \leq \E_{P_n}\omega$.
For $T_3$, the envelope in \textup{(H3)} gives
$|T_3| \leq \E_{P_n}[\Phi(\theta) \mathbf 1_{N_n^c}] = o(\E_{P_n}\omega)$.
Taking the supremum over $f$ and absorbing the two $o(\E_{P_n}\omega)$ terms
yields the claim.
\end{proof}

\begin{corollary}[Polynomial moduli: explicit rates]\label{cor:polynomial-modulus}
If $\omega(t) = M \cdot t^{1+\alpha}$ for some $\alpha \in [0, 1]$ and the
posterior has $\E_{P_n}\|\theta - \hat\theta\|^{1+\alpha} = O(n^{-(1+\alpha)/2})$
(which holds under standard posterior concentration with covariance
$\Lambda_n = O(n^{-1})$), then
\[
W_1(\Rpredn, R_{\hat\theta}) \;=\; O\!\left(n^{-(1+\alpha)/2}\right).
\]
The endpoints $\alpha = 0$ and $\alpha = 1$ recover the first-order rate
$O(n^{-1/2})$ of Theorem~\ref{thm:bayes-dro-domination} and the second-order
rate $O(n^{-1})$ of Proposition~\ref{prop:gaussian-isotropic-second-order}
respectively. The intermediate case $\alpha \in (0, 1)$ corresponds to
$\alpha$-H\"older differentiable model maps and yields the interpolated rate
$n^{-(1+\alpha)/2}$.
\end{corollary}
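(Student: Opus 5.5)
The plan is to apply Theorem~\ref{thm:modulus-interpolation} with the polynomial modulus $\omega(t)=M t^{1+\alpha}$, and then convert the resulting bound into an explicit rate using the assumed posterior moment bound. Concretely, I take hypotheses (H1)--(H3) of that theorem as standing for the model at hand, with the neighbourhood $N_n$ and envelope $\Phi$ supplied as in the Gaussian--NIW remark following it. The theorem then gives
\[
W_1(\Rpredn,R_{\hat\theta})\le \E_{P_n}\bigl[M\|\theta-\hat\theta\|^{1+\alpha}\bigr]+o\bigl(\E_{P_n}M\|\theta-\hat\theta\|^{1+\alpha}\bigr).
\]
The second step substitutes the assumed moment bound $\E_{P_n}\|\theta-\hat\theta\|^{1+\alpha}=O(n^{-(1+\alpha)/2})$, which yields $W_1(\Rpredn,R_{\hat\theta})\le (M+o(1))\,C\,n^{-(1+\alpha)/2}=O(n^{-(1+\alpha)/2})$.

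The parenthetical claim that the moment bound follows from posterior covariance $\Lambda_n=O(n^{-1})$ I would handle by Lyapunov/Jensen, since $1+\alpha\le 2$:
\[
\E_{P_n}\|\theta-\hat\theta\|^{1+\alpha}\le \bigl(\E_{P_n}\|\theta-\hat\theta\|^2\bigr)^{(1+\alpha)/2}=(\tr\Lambda_n)^{(1+\alpha)/2}=O\bigl(n^{-(1+\alpha)/2}\bigr),
\]
with the dimension $d_\theta$ fixed.

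For the endpoint remarks, the two cases go differently.
\begin{itemize}
\item At $\alpha=0$ the modulus is $\omega(t)=Mt$, and the bound reduces to $M\,\E_{P_n}\|\theta-\hat\theta\|=O(n^{-1/2})$. This matches Theorem~\ref{thm:bayes-dro-domination} under Assumption~\ref{ass:posterior-concentration}. Alternatively, one can simply invoke that theorem directly, because (H1) with $\omega$ linear gives a local Lipschitz constant $G+M$.
\item At $\alpha=1$ the rate is $n^{-1}$, consistent with Proposition~\ref{prop:gaussian-isotropic-second-order}. There a second-order Taylor remainder for the convolution-smoothed $h_f$ gives $\omega(t)=Mt^2$, with $M$ of order $1/\sigma$ uniformly over $1$-Lipschitz $f$: the Hessian of $\mu\mapsto\E f(\mu+\sigma Z)$ is bounded by $\E\|Z\|/\sigma$ via Stein's identity.
\end{itemize}

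The main obstacle is the interpretive step for intermediate $\alpha\in(0,1)$: showing that $\alpha$-Hölder differentiability of the model map actually produces (H2) uniformly in $f\in\mathcal F_1$. I would do this via the mean-value form
\[
\rho_f(\theta)=\int_0^1\bigl\langle \nabla h_f(\hat\theta+s(\theta-\hat\theta))-\nabla h_f(\hat\theta),\,\theta-\hat\theta\bigr\rangle\,ds,
\]
which is bounded by $\frac{[\nabla h_f]_\alpha}{1+\alpha}\|\theta-\hat\theta\|^{1+\alpha}$. This requires a Hölder seminorm $[\nabla h_f]_\alpha$ bounded uniformly over $\mathcal F_1$; I would take that as the definition of an $\alpha$-Hölder differentiable model map rather than derive it. The other delicate point is (H3) itself. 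The $o(\cdot)$ in the conclusion is relative to $\E_{P_n}\omega$, which is only of order $n^{-(1+\alpha)/2}$, so the tail terms must be $o(n^{-(1+\alpha)/2})$. This holds when $N_n$ is a credible ball of radius $n^{-1/2+\delta}$ and posterior tails are exponentially small, but it must be verified in each application and is not automatic.
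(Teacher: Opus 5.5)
Your proposal is correct and follows the paper's route. The paper gives no separate proof: the corollary is Theorem~\ref{thm:modulus-interpolation} with $\omega(t)=Mt^{1+\alpha}$ plus the assumed posterior moment bound, which is exactly your main step. Your Lyapunov argument $\E_{P_n}\|\theta-\hat\theta\|^{1+\alpha}\le(\tr\Lambda_n)^{(1+\alpha)/2}$ and your explicit flagging of (H1)--(H3), especially the tail condition, as standing hypotheses correctly fill in what the paper leaves implicit.
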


\begin{example}[The taxonomy through the modulus]\label{ex:modulus-summary}
Three model classes illustrate the corollary.
\begin{itemize}[leftmargin=*]
\item \emph{Deterministic-affine} ($R_\theta = \delta_{a + B\theta}$): the
map $\theta \mapsto R_\theta$ is Lipschitz in $W_1$, but for general $f \in
\mathcal F_1$ the composed function $h_f(\theta) = f(a + B\theta)$ is
\emph{not} affine in $\theta$, and the KR-dual witnesses attaining the gap
are typically nonsmooth at $\hat\theta$ (e.g.\ the distance function
$f(x) = \|x - a - B\hat\theta\|$ gives $h_f(\theta) = \|B(\theta - \hat\theta)\|$,
which is kinked at $\hat\theta$). Hence the deterministic-affine case
belongs to the first-order, nonsmooth-cancellation regime, not the
second-order smooth-cancellation regime. The sharp lower bound is then
Proposition~\ref{prop:first-order-tight}(i):
$W_1(\Rpredn, R_{\hat\theta}) \geq \sigma_{\min}(B) \cdot \E_{P_n}\|\theta - \hat\theta\|$.
The modulus is $\omega(t) = c \cdot t$ ($\alpha = 0$), and
Corollary~\ref{cor:polynomial-modulus} reproduces the first-order rate
$O(n^{-1/2})$.
\item \emph{Gaussian-mean location} ($R_\theta = \N(\theta, \Sigma)$): the
characteristic function gives $h_f(\theta) = (f \ast \varphi_\Sigma)(\theta)$
with $\varphi_\Sigma$ the centered Gaussian density, hence $\nabla^2 h_f$
exists and is uniformly bounded for $f \in \mathcal{F}_1$. The Taylor
remainder gives $\omega(t) = \tfrac{1}{2} \|\nabla^2 h_f\|_\infty \cdot t^2$,
i.e.\ $\alpha = 1$, and Corollary~\ref{cor:polynomial-modulus} recovers
$O(n^{-1})$.
\item \emph{Kinked} (Example~\ref{ex:kinked}, $R_\theta = \delta_{|\theta| e_1}$):
$h_f$ is non-differentiable at $\hat\theta = 0$; the linearization \textup{(H1)}
fails and the modulus is $\omega(t) = c \cdot t$ with $c > 0$, i.e.\ $\alpha = 0$,
yielding $O(n^{-1/2})$. This is the saturating regime of
Proposition~\ref{prop:first-order-tight}(ii).
\end{itemize}
\end{example}

\begin{remark}[How to verify the modulus in practice]
For absolutely continuous models with smooth densities, $\nabla^k h_f$ is
controlled by Sobolev norms of the score $\nabla_\theta \log p(\cdot; \theta)$;
$\alpha = 1$ is the generic case under $C^{1,1}$ regularity of
$\theta \mapsto p(\cdot; \theta)$. For exponential families with bounded
score derivatives, $\alpha = 1$ holds uniformly on compact parameter sets.
For models with discontinuities in the support (deterministic location,
piecewise definitions) the modulus is $\omega(t) = O(t)$ at parameter values
on the kink, and $\omega(t) = O(t^2)$ elsewhere.
\end{remark}

\subsection{Conditional two-sided Bayes--Wasserstein rate via boundary H\"older regularity}
\label{sec:sharp-two-sided-rate}

Theorem~\ref{thm:modulus-interpolation} gives the upper-bound side of an intermediate-rate result; Proposition~\ref{prop:first-order-tight}
gives matching lower bounds in two structural classes. We now state the corresponding
conditional two-sided rate theorem. The word \emph{conditional} is important: the exponent
$\alpha^\star(\widehat\theta_n)$ is an invariant of the model map at the realized posterior
center, but a matching lower rate also requires posterior moment control, tail negligibility,
and an explicit lower-witness condition. Thus the theorem should be read as a sharp
rate result under verifiable regularity and witness hypotheses, not as an unconditional
statement that $\alpha^\star$ alone determines the rate.

\begin{definition}[Boundary H\"older exponent of the model map]
\label{def:boundary-holder-exponent}
Let $R_\theta$ be a parametric family, $\mathcal F_1$ the class of
$1$-Lipschitz $f: \R^K \to \R$, and $h_f(\theta) := \int f \, dR_\theta$.
For $\hat\theta \in \Theta$, define $\alpha^\star(\hat\theta)$ to be
the supremum over $\alpha \in [0, 1]$ such that there exist constants
$C, r > 0$ with
\begin{multline*}
\sup_{f \in \mathcal F_1}
\Bigl| h_f(\theta) - h_f(\hat\theta)
- \langle \nabla_\theta h_f(\hat\theta),\, \theta - \hat\theta \rangle \Bigr|
\\
\leq\, C \|\theta - \hat\theta\|^{1+\alpha}
\quad \text{for all } \theta \text{ with } \|\theta - \hat\theta\| \leq r.
\end{multline*}
If there exists $f \in \mathcal F_1$ such that $h_f$ fails to admit a
G\^ateaux derivative at $\hat\theta$ along some direction $v \in \R^{d_\theta}$
with $v^\top \mathrm{Cov}_{P_n}(\theta)\, v > 0$, we set
$\alpha^\star(\hat\theta) := 0$ and replace the expansion above by the
purely first-order bound $\sup_{f \in \mathcal F_1}|h_f(\theta) -
h_f(\hat\theta)| \leq C \|\theta - \hat\theta\|$.
\end{definition}

The exponent $\alpha^\star(\hat\theta)$ is an intrinsic invariant of the
model: it measures how smoothly perturbations in $\theta$ are transported to
perturbations in $R_\theta$ when viewed through arbitrary Lipschitz test
functions. The endpoints correspond to the two canonical regimes already
identified: $\alpha^\star = 0$ for nonsmooth or discontinuous models
(deterministic location, piecewise specification, kinked link), $\alpha^\star
= 1$ for $C^{1,1}$ models with bounded Hessian of $h_f$ uniformly in
$f \in \mathcal F_1$ (Gaussian-mean location, generic exponential families).

\paragraph{Conditioning convention.}
The center $\widehat\theta_n=E_{P_n}[\theta]$ is data-dependent, and
$\alpha^\star(\cdot)$ as defined in
Definition~\ref{def:boundary-holder-exponent} is generically
jump-discontinuous at boundaries between smoothness classes (e.g.\ at the
kink of Example~\ref{ex:kinked}). Theorem~\ref{thm:sharp-two-sided-rate} is
therefore stated conditionally on the realized posterior $P_n$ and realized
center $\widehat\theta_n$, with the operative definition of the rate
exponent being
\[
\alpha_n^\star
:= \inf\{\alpha^\star(\theta) : \theta \in B(\widehat\theta_n, r_n)\},
\]
where $B(\widehat\theta_n, r_n)$ is a posterior-typical neighbourhood
shrinking at rate $r_n = O(n^{-1/2})$. When the data-generating parameter
$\theta^\star$ sits in the interior of a smoothness class, $\alpha_n^\star =
\alpha^\star(\theta^\star)$ eventually with high probability and the rate
takes its natural value $\alpha^\star(\theta^\star)$. When $\theta^\star$
sits at a jump of $\alpha^\star(\cdot)$, the infimum convention ensures the
stated rate is the conservative one (lower exponent, slower convergence),
which is the correct uniform statement.

\begin{theorem}[Conditional two-sided Bayes--Wasserstein rate under boundary H\"older regularity and lower witnesses]
\label{thm:sharp-two-sided-rate}
Work conditionally on the realized posterior $P_n$ and posterior center
$\widehat\theta_n=E_{P_n}[\theta]$. Let
\[
\alpha_n^\star := \inf\{\alpha^\star(\theta):
\theta\in B(\widehat\theta_n,r_n)\}\in[0,1]
\]
be the operative local exponent on a posterior-typical neighbourhood
$B(\widehat\theta_n,r_n)$ with $r_n=O(n^{-1/2})$. Assume that the constants
in Definition~\ref{def:boundary-holder-exponent} hold uniformly on this
neighbourhood with exponent $\alpha_n^\star$, and that the posterior satisfies
\[
c_- n^{-(1+\alpha_n^\star)/2}
\le
E_{P_n}\|\theta-\widehat\theta_n\|^{1+\alpha_n^\star}
\le
c_+ n^{-(1+\alpha_n^\star)/2}
\]
for constants $0<c_-\le c_+<\infty$, together with the tail-negligibility condition
(H3) of Theorem~\ref{thm:modulus-interpolation} for $\omega(t)=C t^{1+\alpha_n^\star}$. In the case
$\alpha_n^\star=1$, assume additionally that
\[
\|\Lambda_n\|_F\asymp n^{-1},
\qquad
\Lambda_n:=\operatorname{Cov}_{P_n}(\theta).
\]
Suppose moreover that one of the following lower-witness conditions holds.

\medskip
\emph{($W_{\alpha_n^\star<1}$) Signed-moment witness when $\alpha_n^\star\in[0,1)$.}
There exist $f^\star \in \mathcal F_1$, a unit
direction $v^\star \in \R^{d_\theta}$, constants $a_+, a_- \in \R$, and a
posterior-typical neighbourhood $U_n = \{\theta : \|\theta - \widehat\theta_n\|
\leq r_n\}$ such that, writing $X := v^{\star\top}(\theta - \widehat\theta_n)$,
\begin{equation}
h_{f^\star}(\theta) - h_{f^\star}(\widehat\theta_n)
- \langle G_{f^\star}, \theta - \widehat\theta_n\rangle
= a_+ (X_+)^{1+\alpha_n^\star} - a_- (X_-)^{1+\alpha_n^\star} + r^\star(\theta),
\qquad \theta \in U_n,
\label{eq:witness-expansion}
\end{equation}
with $\sup_{\theta \in U_n} |r^\star(\theta)|
/ \|\theta - \widehat\theta_n\|^{1+\alpha_n^\star} = o(1)$, and the signed-moment
lower bound
\begin{equation}
\E_{P_n}\!\left[a_+ (X_+)^{1+\alpha_n^\star}
- a_- (X_-)^{1+\alpha_n^\star}\right]
\,\geq\, c_w \cdot \E_{P_n}\|\theta - \widehat\theta_n\|^{1+\alpha_n^\star}
\label{eq:signed-moment-witness}
\end{equation}
holds for some $c_w > 0$ and all $n$ sufficiently large.

\medskip
\emph{($W_{\alpha_n^\star=1}$) Hessian witness when $\alpha_n^\star=1$.}
There exists $f^\star \in \mathcal F_1$ such that $h_{f^\star}$ is twice
differentiable at $\widehat\theta_n$ with Hessian
$H_{f^\star} := \nabla^2 h_{f^\star}(\widehat\theta_n)$ satisfying
$\langle H_{f^\star}, \Lambda_n \rangle_F \geq c_w \cdot \|\Lambda_n\|_F$
for some $c_w > 0$ and all $n$ sufficiently large.

\medskip
Condition~($W_{\alpha_n^\star<1}$) is the direct
operational form of the lower bound: it asks not that $a_+$ and $a_-$ differ
(which is insufficient for asymmetric posteriors), but that the signed
$(1+\alpha_n^\star)$-moment of the witness functional be of the same order as
the absolute moment. For symmetric centered posteriors,
(\ref{eq:signed-moment-witness}) reduces to $\tfrac{1}{2}(a_+ - a_-) \cdot
\E|X|^{1+\alpha_n^\star} \geq c_w \E\|\theta - \widehat\theta_n\|^{1+\alpha_n^\star}$,
which holds whenever the posterior has positive variance in direction
$v^\star$ and $a_+ \neq a_-$ (this is the case in the kinked example of
Example~\ref{ex:kinked}). For asymmetric posteriors,
(\ref{eq:signed-moment-witness}) is the natural one-line replacement and is
satisfied by any witness whose nonsmooth-direction expansion is genuinely
detected by the posterior on the right scale.

Then there exist constants $0<C_-\le C_+<\infty$, depending on the local regularity,
posterior-moment, tail, and witness constants but not on $n$, such that
\[
C_- n^{-(1+\alpha_n^\star)/2}
\le
W_1(R^{\mathrm{pred}}_n,R_{\widehat\theta_n})
\le
C_+ n^{-(1+\alpha_n^\star)/2}.
\]
In particular,
\[
W_1(R^{\mathrm{pred}}_n,R_{\widehat\theta_n})
=
\Theta\!\left(n^{-(1+\alpha_n^\star)/2}\right)
\]
on the conditioning events described above.
\end{theorem}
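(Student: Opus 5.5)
The proof splits into an upper bound and a lower bound, both anchored at the Kantorovich--Rubinstein identity already used in the proof of Theorem~\ref{thm:modulus-interpolation}:
\[
W_1(\Rpredn,R_{\widehat\theta_n})=\sup_{f\in\mathcal F_1}\E_{P_n}\bigl[h_f(\theta)-h_f(\widehat\theta_n)\bigr].
\]
\textbf{Upper bound.} I would invoke Theorem~\ref{thm:modulus-interpolation} with $N_n:=B(\widehat\theta_n,r_n)$ and $\omega(t)=Ct^{1+\alpha_n^\star}$. The uniformity hypothesis on the constants of Definition~\ref{def:boundary-holder-exponent} over the neighbourhood supplies (H1)--(H2), with $G_f=\nabla_\theta h_f(\widehat\theta_n)$. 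The coefficient bound $\|G_f\|\le G$ should follow from the first-order bound $|h_f(\theta)-h_f(\widehat\theta_n)|\le C\|\theta-\widehat\theta_n\|$, since a Lipschitz function has gradient norm at most its Lipschitz constant. Condition (H3) is assumed directly. Theorem~\ref{thm:modulus-interpolation} then gives
\[
W_1\le (1+o(1))\,C\,\E_{P_n}\|\theta-\widehat\theta_n\|^{1+\alpha_n^\star}\le 2Cc_+\,n^{-(1+\alpha_n^\star)/2}
\]
for $n$ large. In the case $\alpha_n^\star=0$ with a non-Gâteaux-differentiable $h_f$, I would bypass the linear term entirely and use the purely first-order bound of the definition. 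This gives the same conclusion, or alternatively follows from Theorem~\ref{thm:bayes-dro-domination} restricted to $D=N_n$.

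\textbf{Lower bound, case $\alpha_n^\star<1$.} I would test KR with the single witness $f^\star$ and split the integral as in the proof of Theorem~\ref{thm:modulus-interpolation}:
\[
\E_{P_n}[h_{f^\star}(\theta)-h_{f^\star}(\widehat\theta_n)]=T_1+T_2+T_3.
\]
Here $T_1$ is the linear term on $U_n$, $T_2$ is the remainder on $U_n$, and $T_3$ is the tail term. Centering of $P_n$ gives $|T_1|\le G\,\E_{P_n}[\|\theta-\widehat\theta_n\|\mathbf 1_{U_n^c}]$, and (H3) bounds both $T_1$ and $T_3$ by $o(\E\|\theta-\widehat\theta_n\|^{1+\alpha_n^\star})$. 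On $U_n$, the expansion \eqref{eq:witness-expansion} turns $T_2$ into
\[
\E_{P_n}\bigl[(a_+X_+^{1+\alpha_n^\star}-a_-X_-^{1+\alpha_n^\star})\mathbf 1_{U_n}\bigr]+o\bigl(\E\|\theta-\widehat\theta_n\|^{1+\alpha_n^\star}\bigr).
\]
The indicator can be removed at a cost of $(|a_+|+|a_-|)\,\E[\|\theta-\widehat\theta_n\|^{1+\alpha_n^\star}\mathbf 1_{U_n^c}]$. I would argue this cost is also negligible, either by reading (H3) as controlling tail $(1+\alpha)$-moments or by noting that this holds under the exponential tails of posterior concentration. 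The signed-moment witness \eqref{eq:signed-moment-witness} then yields
\[
W_1\ge \tfrac{c_w}{2}\,\E_{P_n}\|\theta-\widehat\theta_n\|^{1+\alpha_n^\star}\ge \tfrac{c_wc_-}{2}\,n^{-(1+\alpha_n^\star)/2}.
\]
This step generalizes Proposition~\ref{prop:first-order-tight}(ii).

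\textbf{Lower bound, case $\alpha_n^\star=1$.} I would use a second-order Taylor expansion of $h_{f^\star}$ at $\widehat\theta_n$. The linear term again vanishes up to tails. The quadratic term integrates to $\tfrac12\langle H_{f^\star},\Lambda_n\rangle_F$ minus a tail correction. By the Hessian witness and the assumption $\|\Lambda_n\|_F\asymp n^{-1}$, this is at least $\tfrac{c_w}{2}\|\Lambda_n\|_F\gtrsim n^{-1}$.

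\textbf{Main obstacle.} The hardest step is controlling the Taylor remainder beyond second order when $\alpha_n^\star=1$. Definition~\ref{def:boundary-holder-exponent} only gives a remainder of size $O(\|\theta-\widehat\theta_n\|^2)$, which is of the same order as the main term rather than $o$ of it. To close this gap I would first try to strengthen the reading of the witness hypothesis to a $C^{2}$ expansion with remainder $o(\|\theta-\widehat\theta_n\|^2)$ uniformly on $U_n$, exactly parallel to the $o(1)$ remainder condition in ($W_{\alpha_n^\star<1}$). Under that reading, the remainder integrates to $o(\E\|\theta-\widehat\theta_n\|^2)=o(\tr\Lambda_n)=o(n^{-1})$. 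If that reading is not available, the fallback is to require second differentiability uniformly over $U_n$ and apply dominated convergence after the rescaling $\theta-\widehat\theta_n=n^{-1/2}z$.

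Combining the upper bound with the two lower-bound cases gives constants $C_-=\min(c_wc_-/2,\,c_w c'/2)$ and $C_+=2Cc_+$, and hence the $\Theta(n^{-(1+\alpha_n^\star)/2})$ statement.
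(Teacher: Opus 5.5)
Your skeleton is the paper's. You use the Kantorovich--Rubinstein identity and get the upper bound from Theorem~\ref{thm:modulus-interpolation} with $\omega(t)=Ct^{1+\alpha_n^\star}$ and $C_+=2Cc_+$. You get the lower bound by testing against the single witness $f^\star$, which gives $C_-=c_wc_-/2$ when $\alpha_n^\star<1$ and $\tfrac12\langle H_{f^\star},\Lambda_n\rangle_F$ when $\alpha_n^\star=1$.

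You differ from the paper only at the two steps where you ask for a ``stronger reading'' of the hypotheses. The paper's proof simply asserts both. It declares $r^\star_{\mathrm{tot}}=o(\E_{P_n}\|\theta-\widehat\theta_n\|^{1+\alpha_n^\star})$ ``by the remainder hypothesis and the tail condition on $U_n^c$''. It also claims $C^{1,1}$ regularity gives a second-order expansion with $o(\E_{P_n}\|\theta-\widehat\theta_n\|^2)$ remainder. You are right on both counts. (H3) controls only the first-order tail moment and the envelope, not the $(1+\alpha_n^\star)$-moment of the witness polynomial on $U_n^c$. And $C^{1,1}$ gives only an $O(\|\theta-\widehat\theta_n\|^2)$ remainder, while twice-differentiability at the moving point $\widehat\theta_n$ gives a Peano $o(\cdot)$ that need not be uniform in $n$.

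These are gaps in the statement, not in your technique, so your conditions must be added as hypotheses; they cannot be read off the given ones. Take $d_\theta=K=1$, $R_\theta=\N(g(\theta),1)$ and $f^\star(x)=x$. Then $h_f=\phi_f\circ g$ with $\phi_f=f*\varphi$ ($\varphi$ the standard normal density), $|\phi_f'|\le1$, $|\phi_f''|\le1$, and $h_{f^\star}=g$.

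\emph{(i) $\alpha\in(0,1)$.} Let $g$ be bounded, $C^{1,\alpha}$, smooth off $0$, and equal to $a_+\theta_+^{1+\alpha}-a_-\theta_-^{1+\alpha}$ near $0$, with $a_+>a_-$. Let $P_n=(1-p_n)\delta_0+\tfrac{p_n}{2}(\delta_{M_n}+\delta_{-M_n})$ with $p_n=n^{-1}$ and $p_nM_n^{1+\alpha}=n^{-(1+\alpha)/2}$.
\begin{itemize}
\item $\widehat\theta_n=0$ and $\alpha_n^\star=\alpha$ with uniform constants.
\item The moment condition holds with $c_\pm=1$.
\item (H3) holds, because $p_nM_n=o(n^{-(1+\alpha)/2})$ and $\Phi=|g|$ gives $\E_{P_n}[\Phi\mathbf 1_{U_n^c}]\le\|g\|_\infty/n$.
\item The witness holds with $r^\star\equiv0$ and $c_w=(a_+-a_-)/2$.
\end{itemize}
Yet $W_1(\Rpredn,R_0)\le\int W_1(R_\theta,R_0)\,dP_n\le\|g\|_\infty/n=o(n^{-(1+\alpha)/2})$, because the posterior moment and the signed moment both live outside $U_n$.

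\emph{(ii) $\alpha=1$.} Let $k$ be smooth and even, supported in $[-\sqrt3,\sqrt3]$, with $k(0)=k'(0)=0$, $k''(0)=1$ and $\int k=0$. Put $g(\theta)=m^{-1}k(\sqrt m(\theta-m))$ on the disjoint intervals $|\theta-m|\le\sqrt{3/m}$, $m\ge m_0$, and $g=0$ elsewhere. Then $g''$ is bounded and $\alpha^\star\equiv1$ with uniform constants. Let $P_n$ be uniform on $[n-\sqrt{3/n},n+\sqrt{3/n}]$ and $r_n=\sqrt{3/n}$.
\begin{itemize}
\item $\Lambda_n=n^{-1}$ and $P_n(U_n^c)=0$.
\item $H_{f^\star}=g''(n)=1$, so every hypothesis holds.
\end{itemize}
But $g_\#P_n$ is the law of $k(Y)/n$ with $Y$ uniform on $[-\sqrt3,\sqrt3]$, which has mean $0=g(n)$. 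A second-order expansion of $\phi_f$ then gives $\E_{P_n}[h_f(\theta)-h_f(n)]=O(n^{-2})$ uniformly in $f$, i.e.\ $W_1(\Rpredn,R_{\widehat\theta_n})=O(n^{-2})$.

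The tail condition $\E_{P_n}[\|\theta-\widehat\theta_n\|^{1+\alpha_n^\star}\mathbf 1_{U_n^c}]=o(\E_{P_n}\|\theta-\widehat\theta_n\|^{1+\alpha_n^\star})$ excludes (i). Your uniform $o(\|\theta-\widehat\theta_n\|^2)$ second-order remainder on $U_n$ excludes (ii); it holds, e.g., if $\sup_{U_n}|\nabla^2h_{f^\star}-H_{f^\star}|\to0$, which is what your dominated-convergence fallback really uses. With both stated as hypotheses, your argument is a complete proof; without them, neither yours nor the paper's closes.

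Two smaller points.
\begin{itemize}
\item Your alternative justification for dropping $\mathbf 1_{U_n}$, exponential posterior tails, fails at the prescribed radius $r_n=O(n^{-1/2})$. A posterior with covariance of order $n^{-1}$ keeps a non-vanishing fraction of its mass outside $U_n$, and (H3) itself then typically fails. You would need $r_n\gg n^{-1/2}$, e.g.\ $\sqrt{\log n/n}$, with the uniform expansions assumed on that larger ball.
\item The bound $\|G_f\|\le G$ does not follow from the first-order clause of Definition~\ref{def:boundary-holder-exponent}, which is imposed only in the non-differentiable $\alpha^\star=0$ case. It follows from local $W_1$-Lipschitzness of $\theta\mapsto R_\theta$ (the meaning of $G$ stated in (H1)), via $|h_f(\theta)-h_f(\widehat\theta_n)|\le W_1(R_\theta,R_{\widehat\theta_n})$.
\end{itemize}
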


\begin{proof}
\emph{Upper bound.} Apply Theorem~\ref{thm:modulus-interpolation} with
$\omega(t) = C \cdot t^{1 + \alpha_n^\star}$: hypothesis~(H1) is the
existence of $\nabla h_f$ (or its Lipschitz analogue at $\alpha_n^\star = 0$);
hypothesis~(H2) is the definition of $\alpha_n^\star$; hypothesis~(H3) is
assumed. The conclusion gives
\begin{align*}
W_1(R^{\mathrm{pred}}_n, R_{\widehat\theta_n})
\;&\leq\;
C \cdot \E_{P_n}\|\theta - \widehat\theta_n\|^{1+\alpha_n^\star}
\\
&\qquad
+ o\bigl(\E_{P_n}\|\theta - \widehat\theta_n\|^{1+\alpha_n^\star}\bigr)
\\
\;&\leq\;
(C c_+ + o(1)) \cdot n^{-(1+\alpha_n^\star)/2},
\end{align*}
which is the upper bound with $C_+ := 2 C c_+$ for $n$ large.

\emph{Lower bound, case $\alpha_n^\star \in [0, 1)$.} Use $f^\star$ from
condition~(W$\alpha_n^\star{<}1$) as a test function in
Kantorovich--Rubinstein duality, replacing $f^\star$ by $-f^\star$ if
necessary so that the right-hand side below is non-negative (KR duality is
symmetric in this sign choice and $-f^\star \in \mathcal F_1$ since $-1$ is
$1$-Lipschitz):
\[
W_1(R^{\mathrm{pred}}_n, R_{\widehat\theta_n})
\,\geq\,
\E_{P_n}\!\left[h_{f^\star}(\theta) - h_{f^\star}(\widehat\theta_n)\right].
\]
By $\widehat\theta_n = \E_{P_n}\theta$ the linear term vanishes, so
expansion~\eqref{eq:witness-expansion} on $U_n$ gives, with
$X := v^{\star\top}(\theta - \widehat\theta_n)$,
\begin{align*}
\E_{P_n}\!\bigl[h_{f^\star}(\theta) - h_{f^\star}(\widehat\theta_n)\bigr]
\;&\geq\;
\E_{P_n}\!\bigl[a_+ (X_+)^{1+\alpha_n^\star}
- a_- (X_-)^{1+\alpha_n^\star}\bigr]
\\
&\qquad
- |r^\star_{\mathrm{tot}}|,
\end{align*}
where $|r^\star_{\mathrm{tot}}|
= o(\E_{P_n}\|\theta - \widehat\theta_n\|^{1+\alpha_n^\star})$ by the remainder
hypothesis and the tail condition on $U_n^c$. The signed-moment
witness~\eqref{eq:signed-moment-witness} controls the leading term directly:
\begin{align*}
\E_{P_n}\!\bigl[a_+ (X_+)^{1+\alpha_n^\star}
- a_- (X_-)^{1+\alpha_n^\star}\bigr]
\;&\geq\;
c_w \cdot \E_{P_n}\|\theta - \widehat\theta_n\|^{1+\alpha_n^\star}
\\
\;&\geq\;
c_w \cdot c_- \cdot n^{-(1+\alpha_n^\star)/2}.
\end{align*}
Combining and absorbing the $o(\cdot)$ remainder for $n$ large gives
$W_1(R^{\mathrm{pred}}_n, R_{\widehat\theta_n}) \geq C_- \cdot n^{-(1+\alpha_n^\star)/2}$ with
$C_- := \tfrac{1}{2} c_w c_-$.

\emph{Lower bound, case $\alpha_n^\star = 1$.} Apply
Kantorovich--Rubinstein with the $f^\star$ of condition~(W$\alpha_n^\star{=}1$).
A second-order Taylor expansion at $\widehat\theta_n$, valid because $h_{f^\star}$ is $C^{1,1}$ in a neighbourhood by the $\alpha_n^\star = 1$ hypothesis, gives
\begin{align*}
W_1(R^{\mathrm{pred}}_n, R_{\widehat\theta_n})
\;&\geq\;
\E_{P_n}\!\bigl[h_{f^\star}(\theta) - h_{f^\star}(\widehat\theta_n)\bigr]
\\
\;&=\;
\tfrac{1}{2}\, \E_{P_n}\!\bigl[(\theta - \widehat\theta_n)^{\!\top} H_{f^\star}
(\theta - \widehat\theta_n)\bigr]
\\
&\qquad
+ o\bigl(\E_{P_n}\|\theta - \widehat\theta_n\|^2\bigr)
\\
\;&=\;
\tfrac{1}{2} \langle H_{f^\star}, \Lambda_n \rangle_F
+ o(\|\Lambda_n\|_F),
\end{align*}
using $\E_{P_n}[(\theta - \widehat\theta_n)(\theta - \widehat\theta_n)^\top] = \Lambda_n$
and the Bernstein--von-Mises-type cancellation of the linear term. The
non-degeneracy hypothesis $\langle H_{f^\star}, \Lambda_n\rangle_F \geq c_w
\|\Lambda_n\|_F$ requires a uniformly positive Frobenius pairing between
the witness curvature and the posterior covariance. Since $\Lambda_n$ is
positive semidefinite, a sufficient condition is that $H_{f^\star}$ be
positive semidefinite on the directions carrying posterior variance with
strictly positive pairing; more generally, only the displayed positive
pairing is required. If the pairing is negative, replace $f^\star$ by
$-f^\star$. Together with $\|\Lambda_n\|_F = \Theta(n^{-1})$ (a
consequence of the posterior-moment hypothesis at $\alpha_n^\star = 1$) this yields
the matching lower bound $W_1(R^{\mathrm{pred}}_n, R_{\widehat\theta_n}) \geq C_- n^{-1}$ with
$C_- := \tfrac{1}{4} c_w \cdot \liminf_n n \|\Lambda_n\|_F > 0$.
\end{proof}

\begin{corollary}[Three canonical regimes]
\label{cor:three-regimes}
Theorem~\ref{thm:sharp-two-sided-rate} reproduces the existing results in
three canonical classes:
\begin{itemize}[leftmargin=*]
\item \emph{Deterministic-affine, $\alpha^\star = 0$:} for $R_\theta = \delta_{a + B\theta}$,
Definition~\ref{def:boundary-holder-exponent} gives $\alpha^\star = 0$
because $h_f(\theta) = f(a + B\theta)$ is at best Lipschitz in $\theta$
without uniform first-order differentiability over $f \in \mathcal F_1$.
The sharp rate is $W_1(\Rpredn, R_{\hat\theta}) = \Theta(n^{-1/2})$,
matching Proposition~\ref{prop:first-order-tight}(i).
\item \emph{Locally non-smooth, $\alpha^\star = 0$:} the
witness construction in Proposition~\ref{prop:first-order-tight}(ii) is
exactly the $\alpha^\star = 0$ instance of
Theorem~\ref{thm:sharp-two-sided-rate}, with $f^\star = f$ and $v^\star = v$
from the directional expansion there.
\item \emph{Centered Gaussian location, $\alpha^\star = 1$:} for
$R_\theta = \N(\theta, \Sigma)$, $h_f = f * \varphi_\Sigma$ is in
$C^\infty$ with $\nabla^2 h_f = \nabla^2(f * \varphi_\Sigma)$ uniformly
bounded by $\sigma^{-2}$ over $f \in \mathcal F_1$, so $\alpha^\star = 1$.
The non-degeneracy condition $\langle H_{f^\star}, \Lambda_n\rangle_F \geq
c_0' \|\Lambda_n\|_F$ is met by any $f$ with non-vanishing second moment
of the score. The sharp rate is $\Theta(n^{-1})$, matching
Proposition~\ref{prop:gaussian-isotropic-second-order}.
\end{itemize}
For intermediate H\"older regularity $C^{1,\alpha}$ in the model map (e.g.,
truncated-Lipschitz density perturbations or fractional-smoothness link
functions), $\alpha^\star = \alpha$ and the sharp rate is
$\Theta(n^{-(1+\alpha)/2})$, completing the taxonomy.
\end{corollary}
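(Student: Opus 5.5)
The corollary amounts to checking, in each of the three model classes, the hypotheses of Theorem~\ref{thm:sharp-two-sided-rate}. The work has four parts: compute $\alpha^\star$ from Definition~\ref{def:boundary-holder-exponent}, confirm the posterior-moment and tail hypotheses, exhibit a lower witness, and read off the two-sided rate. Throughout I would assume the posterior regularity of Corollary~\ref{cor:polynomial-modulus}: covariance $\Lambda_n\asymp n^{-1}$ and light tails. These give the two-sided moment condition $E_{P_n}\|\theta-\widehat\theta_n\|^{1+\alpha}\asymp n^{-(1+\alpha)/2}$ and, via the envelope remark after Theorem~\ref{thm:modulus-interpolation}, condition (H3). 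Where an earlier proposition already supplies a matching lower bound, I would cite it directly rather than re-derive it through the witness conditions.

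\emph{Deterministic-affine case.} Here $h_f(\theta)=f(a+B\theta)$. The witness $f(x)=\|x-a-B\widehat\theta\|$ gives $h_f(\theta)=\|B(\theta-\widehat\theta)\|$, which has no G\^ateaux derivative at $\widehat\theta$ along any direction $v$ with $Bv\neq0$. As long as the posterior has variance in such a direction, the fallback clause of Definition~\ref{def:boundary-holder-exponent} sets $\alpha^\star=0$ at every center. The upper bound $O(n^{-1/2})$ then follows from Theorem~\ref{thm:bayes-dro-domination}, using the global Lipschitz constant $\|B\|$. The lower bound $\sigma_{\min}(B)\,E_{P_n}\|\theta-\widehat\theta\|\gtrsim n^{-1/2}$ is Proposition~\ref{prop:first-order-tight}(i). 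Together these give $\Theta(n^{-1/2})$.

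\emph{Locally non-smooth case.} I would match the hypotheses of Proposition~\ref{prop:first-order-tight}(ii) to condition ($W_{\alpha_n^\star<1}$) with $\alpha_n^\star=0$. Take $G_f=0$, $a_\pm=c_\pm$, and the same $v$ and $U_n$. The remainder condition carries over verbatim. The signed-moment inequality~\eqref{eq:signed-moment-witness} follows from $E X_+=E X_-=\tfrac12E|X|$, which holds because $X$ is centered, together with the requirement that the posterior not degenerate in direction $v$, i.e.\ $E|X|\gtrsim E\|\theta-\widehat\theta\|$. That comparability is the one extra assumption I would flag explicitly.

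\emph{Gaussian location case.} Write $h_f=f*\varphi_\Sigma$. Gaussian integration by parts gives $\nabla^2h_f(\theta)=E[\nabla f(\theta+Z)\,Z^\top]\Sigma^{-1}$ with $Z\sim\N(0,\Sigma)$, so $\|\nabla^2h_f\|$ is bounded uniformly over $\mathcal F_1$. In the isotropic case this bound is of order $\sigma^{-1}$. I would correct the stated $\sigma^{-2}$ to this uniform constant; only uniformity matters for the argument. Uniformity gives $\alpha^\star=1$ everywhere. For the Hessian witness, take $f^\star(x)=\|x-\widehat\theta\|$ (more generally, any $f^\star$ whose projection onto a top eigendirection of $\Lambda_n$ is convex). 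Then $H_{f^\star}\succeq0$ is nonzero in that direction, and $\langle H_{f^\star},\Lambda_n\rangle_F\gtrsim\|\Lambda_n\|_F$. As a cross-check, the lower bound~\eqref{eq:gaussian-lower} in Proposition~\ref{prop:gaussian-isotropic-second-order} already gives $\gtrsim\lambda_{\max}(\Lambda_n)/\sigma\asymp n^{-1}$, which can substitute for the witness.

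\emph{Intermediate $C^{1,\alpha}$ models.} The upper bound is Corollary~\ref{cor:polynomial-modulus}. The matching lower bound has no earlier result to lean on, so it must be taken as the witness hypothesis ($W_{\alpha_n^\star<1}$). The main obstacle is that the corollary's phrase ``the sharp rate is $\Theta(n^{-(1+\alpha)/2})$'' is not derivable from regularity alone. It holds only once a signed-moment witness is exhibited for the particular model. I would therefore state that last sentence conditionally on such a witness, not claim it unconditionally.
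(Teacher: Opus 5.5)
Your proposal is correct. It follows the same case-by-case skeleton as the paper, whose argument is just the bullet text of the corollary, but it departs in route at two points and repairs the paper's claims at three.

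In the deterministic-affine case you take the rate directly from Theorem~\ref{thm:bayes-dro-domination} and Proposition~\ref{prop:first-order-tight}(i), not from Theorem~\ref{thm:sharp-two-sided-rate}. This is the more economical choice. For $d_\theta\ge 2$ the natural witness $\|x-a-B\widehat\theta\|$ does not fit the one-direction template \eqref{eq:witness-expansion}, because its increment $\|B(\theta-\widehat\theta)\|$ is of exact first order on the orthogonal complement of $v^\star$. Routing literally through the theorem therefore needs a projected witness: $f^\star(x)=|u^\top(x-a-B\widehat\theta)|$ with $v^\star\propto B^\top u$, $a_+=-a_-=\|B^\top u\|$ and $r^\star\equiv 0$. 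It also needs the same comparability $\E_{P_n}|X|\gtrsim\E_{P_n}\|\theta-\widehat\theta\|$ that you flag in the non-smooth case, whereas your route needs only $\sigma_{\min}(B)>0$. That comparability is in fact automatic if you read your standing assumption $\Lambda_n\asymp n^{-1}$ in Loewner order together with $\E X^4\lesssim(\E X^2)^2$, since H\"older gives $\E|X|\ge(\E X^2)^{3/2}/(\E X^4)^{1/2}$. In the Gaussian case, using \eqref{eq:gaussian-lower} as a substitute witness is likewise a legitimate shortcut under that proposition's isotropic Gaussian-posterior hypotheses.

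Your three repairs are sound.
\begin{enumerate}
\item The sharp uniform Hessian bound is $\sqrt{2/\pi}\,\sigma^{-1}$, attained by $f(x)=|x_1|$ at its kink, so the stated $\sigma^{-2}$ fails for $\sigma>\sqrt{\pi/2}$.
\item The paper's ``second moment of the score'' criterion does not control the sign of $\langle H_f,\Lambda_n\rangle_F$. Linear $f$ give $H_f=0$, and the non-zero indefinite Hessian of $(|x_1-\widehat\theta_1|-|x_2-\widehat\theta_2|)/\sqrt2$ pairs to zero with an isotropic $\Lambda_n$. A convex witness with positive-definite Hessian, such as your $\|x-\widehat\theta\|$, is what is actually needed; in your parenthetical generalization, require the one-dimensional profile to be non-affine.
\item Your conditional reading of the intermediate $C^{1,\alpha}$ sentence agrees with Remark~\ref{rmk:characterization-closed}, which leaves the witness-free lower rate open. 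You can add that $C^{1,\alpha}$ regularity yields only $\alpha^\star\ge\alpha$, so the stated equality must also be assumed.
\end{enumerate}
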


\begin{remark}[Scope of the conditional sharp-rate theorem]
\label{rmk:characterization-closed}
Theorem~\ref{thm:sharp-two-sided-rate} is a conditional sharp-rate theorem, not an unconditional characterization
of the Bayes--Wasserstein gap by $\alpha^\star$ alone. The exponent
$\alpha^\star(\widehat\theta_n)$ controls the upper bound through the modulus argument
of Theorem~\ref{thm:modulus-interpolation}. The matching lower bound additionally requires: posterior moment
scaling on the correct order, tail negligibility outside the local neighbourhood, and a
lower-witness condition ensuring that the posterior actually detects the least smooth
direction of the model map. Under these combined hypotheses,
\[
W_1(R^{\mathrm{pred}}_n,R_{\widehat\theta_n})
=
\Theta\!\left(n^{-(1+\alpha^\star)/2}\right).
\]
The first-order radius of Theorem~\ref{thm:bayes-dro-domination} is therefore order-sharp exactly in the
$\alpha^\star=0$ regime and conservative by a factor of order
$n^{(1-\alpha^\star)/2}$ when $\alpha^\star>0$. In particular, the
$\alpha^\star=1$ Gaussian-location case explains the $\sqrt n$-overshoot documented
in the numerical experiment. A theorem that removes the lower-witness hypothesis and
derives the lower rate from $\alpha^\star$ alone remains open.
\end{remark}

\subsection{Implications for practice}

\begin{itemize}
\item \emph{Large $n$.} When the posterior concentrates and the model map
is locally Lipschitz, Bayesian predictive optimization and plug-in DRO
become numerically close.
\item \emph{Small $n$, nonsmooth, or noncentered models.}
Theorem~\ref{thm:bayes-dro-domination} dominates Bayesian predictive risk
by a Wasserstein ball centered at the plug-in law $R_{\hat\theta}$.
\item \emph{Symmetric location models.} The actual Wasserstein gap is
second-order; the smaller second-order radius
$\varepsilon_n^{(2)} = \|\Lambda_n\|_F / (2\sigma)$ of
Proposition~\ref{prop:second-order-radius} is then sufficient.
\item \emph{Prior strength and DRO radius.} Bayesian prior strength and DRO
ambiguity radii are connected by calibration but not literally the same
control parameter.
\end{itemize}

\subsection{The Gaussian--NIW Lipschitz constant}

\begin{proposition}[Local Lipschitz constant in the Gaussian--NIW setting]
\label{prop:gaussian-niw-lipschitz}
For the Gaussian model $R_{(\mu, \Sigma)} = \N(\mu, \Sigma)$, the
Bures--Wasserstein formula \citep{gelbrich1990formula} gives
\begin{equation}
W_2^2(\N(\mu, \Sigma), \N(\mu', \Sigma'))
= \|\mu - \mu'\|^2 + \tr \Sigma + \tr \Sigma' - 2 \tr((\Sigma^{1/2} \Sigma' \Sigma^{1/2})^{1/2}).
\label{eq:bures-wasserstein}
\end{equation}
Restricted to the domain $D_\delta := \{(\mu, \Sigma): \delta I \preceq \Sigma \preceq \Delta I\}$,
\begin{equation}
W_1(\N(\mu, \Sigma), \N(\mu', \Sigma'))
\leq \|\mu - \mu'\| + \frac{1}{2\sqrt{\delta}} \|\Sigma - \Sigma'\|_F.
\label{eq:gaussian-niw-lip}
\end{equation}
The parameter-to-return map is locally Lipschitz with constant
$L_{\mathrm{loc}} = \max(1, (2\sqrt{\delta})^{-1})$.
\end{proposition}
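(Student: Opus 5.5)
The plan is to avoid working with the trace term of \eqref{eq:bures-wasserstein} directly and instead bound $W_1$ by $W_2$ through an explicit synchronous Gaussian coupling. The key facts are that $W_1\le W_2$ (Jensen, or Cauchy--Schwarz applied to any coupling) and that the coupling
\[
x=\mu+\Sigma^{1/2}z,\qquad y=\mu'+\Sigma'^{1/2}z,\qquad z\sim\N(0,I_K),
\]
has the correct marginals $\N(\mu,\Sigma)$ and $\N(\mu',\Sigma')$. Because $\E z=0$ and $\E zz^\top=I_K$, the cross term drops out:
\[
\E\|x-y\|^2=\|\mu-\mu'\|^2+\|\Sigma^{1/2}-\Sigma'^{1/2}\|_F^2 .
\]
Using $\sqrt{a^2+b^2}\le a+b$, this gives
\[
W_1\le W_2\le \|\mu-\mu'\|+\|\Sigma^{1/2}-\Sigma'^{1/2}\|_F .
\]
This is consistent with \eqref{eq:bures-wasserstein}, since the formula is the infimum over couplings and this coupling is one admissible choice.

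The remaining step, and the only real one, is a perturbation bound for the matrix square root on $D_\delta$:
\[
\|\Sigma^{1/2}-\Sigma'^{1/2}\|_F\le \frac{\|\Sigma-\Sigma'\|_F}{2\sqrt\delta}.
\]
To prove it, set $S=\Sigma^{1/2}$, $T=\Sigma'^{1/2}$ and $X=S-T$. Then $\Sigma-\Sigma'=SX+XT$, which is a Sylvester identity. Diagonalize $S=U\,\mathrm{diag}(s)\,U^\top$ and $T=V\,\mathrm{diag}(t)\,V^\top$ with orthogonal $U,V$. In these bases,
\[
\bigl(U^\top(\Sigma-\Sigma')V\bigr)_{ij}=(s_i+t_j)\,\bigl(U^\top XV\bigr)_{ij}.
\]
On $D_\delta$ we have $s_i,t_j\ge\sqrt\delta$, so every entry of $U^\top XV$ is at most $(2\sqrt\delta)^{-1}$ times the corresponding entry of $U^\top(\Sigma-\Sigma')V$ in absolute value. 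Orthogonal invariance of the Frobenius norm then gives the claimed bound. Combining it with the previous display yields \eqref{eq:gaussian-niw-lip}.

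Only the lower bound $\delta I\preceq\Sigma$ is used; the upper bound $\Delta$ plays no role here. The Lipschitz statement follows by measuring parameters in the norm $\|(\mu,\Sigma)\|:=\|\mu\|+\|\Sigma\|_F$, since
\[
\|\mu-\mu'\|+c\|\Sigma-\Sigma'\|_F\le\max(1,c)\bigl(\|\mu-\mu'\|+\|\Sigma-\Sigma'\|_F\bigr)
\quad\text{with } c=(2\sqrt\delta)^{-1}.
\]
For a Euclidean product norm on the parameters, the constant would instead pick up a factor $\sqrt2$.

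The main obstacle I expect is the square-root perturbation step. The naive operator-monotonicity route gives only an operator-norm bound, not a Frobenius one, so I would rely on the entrywise Sylvester argument above, which handles the Frobenius norm directly.
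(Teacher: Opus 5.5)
Your proof is correct. It has the same two-step skeleton as the paper's: first bound $W_1\le W_2$ by $\|\mu-\mu'\|$ plus a square-root term, then bound that term by $\|\Sigma-\Sigma'\|_F/(2\sqrt\delta)$. You prove each step differently, and more elementarily.

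\textbf{Step 1.} The paper starts from the Gelbrich closed form \eqref{eq:bures-wasserstein} and cites Bhatia--Jain--Lim to bound its trace term by $\|\Sigma^{1/2}-\Sigma'^{1/2}\|_F^2$. Your synchronous coupling $x=\mu+\Sigma^{1/2}z$, $y=\mu'+\Sigma'^{1/2}z$ proves that inequality directly. It is the $U=I$ instance of the representation of the Bures distance as a minimum of $\|\Sigma^{1/2}-\Sigma'^{1/2}U\|_F$ over orthogonal $U$. You therefore never need the closed-form formula at all.

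\textbf{Step 2.} The paper bounds the Fr\'echet derivative of $\Sigma\mapsto\Sigma^{1/2}$ on $\{\Sigma\succeq\delta I\}$ by $(2\sqrt\delta)^{-1}$ in Frobenius operator norm. It then applies the mean-value inequality along the segment from $\Sigma'$ to $\Sigma$, which is where convexity of $D_\delta$ enters. Your identity $S^2-T^2=S(S-T)+(S-T)T$, read entrywise in the two eigenbases, is the finite-difference version of the same Sylvester equation that defines that derivative. It gives the bound in one step, with no differentiability, mean-value theorem, or convexity. It needs only the eigenvalue floor at the two endpoints, and it actually yields the slightly sharper constant $\bigl(\sqrt{\lambda_{\min}(\Sigma)}+\sqrt{\lambda_{\min}(\Sigma')}\bigr)^{-1}$.

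\textbf{Trade-off.} The paper's route ties the constant $(2\sqrt\delta)^{-1}$ to the Bures geometry and reads it as a local derivative bound. Yours is self-contained.

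\textbf{One correction to your closing remark.} The $\sqrt2$ for a Euclidean product norm comes only from applying $\sqrt{a^2+b^2}\le a+b$ first. Going straight from your coupling, with $c=(2\sqrt\delta)^{-1}$,
$W_2^2\le\|\mu-\mu'\|^2+c^2\|\Sigma-\Sigma'\|_F^2\le\max(1,c)^2\bigl(\|\mu-\mu'\|^2+\|\Sigma-\Sigma'\|_F^2\bigr)$.
So $\max(1,c)$ is already a Lipschitz constant for the Euclidean product norm, which is the stronger statement.
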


\begin{proof}
The Bures bound on the trace term gives
$\tr \Sigma + \tr \Sigma' - 2 \tr((\Sigma^{1/2} \Sigma' \Sigma^{1/2})^{1/2})
\leq \|\Sigma^{1/2} - \Sigma'^{1/2}\|_F^2$ \citep[Proposition 1]{bhatia2019bures}.
The map $\Sigma \mapsto \Sigma^{1/2}$ is Fr\'echet differentiable on
$\{\Sigma \succeq \delta I\}$ with $\|D\Sigma^{1/2}\|_{\mathrm{op}} \leq (2\sqrt{\delta})^{-1}$.
By convexity of $D_\delta$, the mean-value theorem gives
$\|\Sigma^{1/2} - \Sigma'^{1/2}\|_F \leq (2\sqrt{\delta})^{-1} \|\Sigma - \Sigma'\|_F$.
Combined with $W_1 \leq W_2$ yields the claim.
\end{proof}

%==============================================================================
% SECTION 11: NUMERICAL ILLUSTRATIONS
%==============================================================================
\section{Numerical illustrations}\label{sec:numerical}

This section reports three experiments aligned with the theory.
\emph{Experiment 1} (Section~\ref{sec:exp-bayes-dro}) verifies the
Bayes--Wasserstein-DRO domination of
Theorem~\ref{thm:bayes-dro-domination} on a controlled Gaussian DGP and
empirically confirms the second-order conservatism predicted by
Proposition~\ref{prop:gaussian-isotropic-second-order}. \emph{Experiment 2}
(Section~\ref{sec:exp-wass-sensitivity}) characterizes Wasserstein-DRO
behavior across radii. \emph{Experiment 3}
(Section~\ref{sec:exp-bellman-contraction}) confirms the
$\gamma$-contraction of the distributional Bellman operator
(Theorem~\ref{thm:distributional-bellman}). All experiments use synthetic
DGPs designed to be diagnostic of the theoretical predictions, and
Experiment~5 (Section~\ref{sec:exp-calibration}) tests the central operational
claim---calibration without validation---affirmatively against an oracle
radius; the empirical equity backtest in
Section~\ref{sec:empirical-equity-backtest} is the only data-driven
illustration and is presented as a sanity check rather than a production claim.

\subsection{Empirical equity backtest at \texorpdfstring{$K=25$, $n=60$}{K=25, n=60}}\label{sec:empirical-equity-backtest}

\paragraph{Setup.}
Monthly returns on $K = 25$ assets selected from the present-day DJIA
universe, $T = 348$ months (Jan 1996--Dec 2024), estimation window $n = 60$
months, monthly rebalancing. Long-only fully invested simplex constraints.
Posterior is Gaussian--NIW with weakly informative prior; Wasserstein DRO
radii tuned by held-out validation on the most recent $24$ months of the
estimation window. The Ledoit--Wolf baseline uses the constant-correlation
shrinkage target of \citet{LedoitWolf2004}, with shrinkage intensity
estimated by their closed-form plug-in (the K=25 results are
within $\le 0.005$ Sharpe of the single-factor target of
\citet{LedoitWolf2003}, so the choice of target does not drive the ranking
at this dimension).

The universe here is selected on full-sample data availability rather than
from point-in-time membership, which is the source of the survivorship
caveat discussed below. A point-in-time replacement---at each rebalance the
index composition effective on that date, intersected with names having a
complete trailing $n$-month history, so that newly added constituents are
unavailable until they season---is implemented in the companion code
(\texttt{pit\_backtest.py}) against the membership schedule of
Appendix~\ref{app:membership}. Regenerating Tables~\ref{tab:oos-performance}--\ref{tab:paired-sharpe-diffs}
on that point-in-time universe is the headline empirical task of the
companion study (Section~\ref{sec:conclusion}); the numbers reported here
are the survivorship-affected baseline, and the ranking---not the level---is
the object we compare.

Costs are not internalized in the rolling optimization; the net-of-cost calculation in
Table~\ref{tab:net-of-cost} applies an end-of-period adjustment based on realized monthly turnover. Let
$w_t^+$ denote the target post-trade portfolio at the beginning of month $t$, and let
$w_t^-$ denote the pre-trade portfolio after passive return drift:
\[
w^-_{t,i}
=
\frac{w^+_{t-1,i}(1+r_{t,i})}
{1+\sum_{j=1}^K w^+_{t-1,j}r_{t,j}}.
\]
Monthly turnover is measured as
\[
\tau_t=\|w_t^+-w_t^-\|_1.
\]
This convention applies to every strategy, including equal-weight $1/N$. Thus a
monthly rebalanced equal-weight portfolio generally has positive drift-induced turnover;
a zero-turnover $1/N$ benchmark would instead be a buy-and-hold equal-initial-weight
portfolio, not the monthly rebalanced benchmark used here.

\begin{table}[H]
\centering\small
\caption{Out-of-sample annualized performance, $K=25$, $n=60$, $T=348$ months.
Mean and Vol are annualized; Sharpe uses raw mean with $RF=0$; CVaR is reported at
5\%; MDD denotes maximum drawdown; Turnover is mean monthly realized $\ell_1$
turnover after passive drift and before rebalancing. ``1/N'' denotes the monthly
rebalanced equal-weight portfolio, so its turnover is the realized drift-rebalancing
turnover and is generally nonzero. ``B-L (no views)'' denotes Black--Litterman with
uninformative views. ``LW'' denotes Ledoit--Wolf shrunk plug-in mean--variance.
``B-DRO (cred)'' denotes Bayesian-credible-radius DRO of Theorem~\ref{thm:posterior-credible-radius}. ``W1-DRO
(val)'' denotes validation-tuned $W_1$-DRO. ``WCVaR-DRO (val)'' denotes
validation-tuned worst-case CVaR DRO.}
\label{tab:oos-performance}
\begin{tabular}{lrrrrrr}
\toprule
Method                        & Mean   & Vol    & Sharpe & CVaR$_{0.05}$ & MDD    & Turnover \\
\midrule
$1/N$                         & $0.116$ & $0.152$ & $0.760$ & $0.099$ & $0.426$ & $0.044$ \\
Plug-in MV                    & $0.122$ & $0.192$ & $0.639$ & $0.131$ & $0.582$ & $0.764$ \\
Plug-in MV + Ledoit--Wolf     & $0.118$ & $0.158$ & $0.745$ & $0.106$ & $0.461$ & $0.295$ \\
B-L (no views)                & $0.115$ & $0.151$ & $0.763$ & $0.098$ & $0.422$ & $0.036$ \\
Bayes pred.\                  & $0.121$ & $0.190$ & $0.635$ & $0.130$ & $0.578$ & $0.776$ \\
$W_1$-DRO (val)               & $0.120$ & $0.184$ & $0.652$ & $0.126$ & $0.564$ & $0.768$ \\
B-DRO (cred)                  & $0.111$ & $0.165$ & $0.672$ & $0.113$ & $0.527$ & $0.381$ \\
WCVaR-DRO (val)               & $0.105$ & $0.153$ & $0.687$ & $0.108$ & $0.473$ & $0.861$ \\
\bottomrule
\end{tabular}
\end{table}

% PATCH 16: §11.2 EMPIRICAL CLAIMS TONE-DOWN
\paragraph{What the table says, and what it does not.}
The live-equity exercise is an implementation and calibration sanity check
rather than evidence of unconditional empirical dominance. B-DRO attains a
gross Sharpe of $0.672$ compared with $0.652$ for validation-tuned
$W_1$-DRO; the paired bootstrap Sharpe difference is $0.019$ with $95\%$
CI $= [-0.086, +0.131]$, $p = 0.718$. The two methods are statistically
indistinguishable in Sharpe on this universe. The main empirical takeaway
is operational equivalence: the Bayesian credible-radius rule achieves
performance comparable to validation-tuned Wasserstein DRO while using
posterior uncertainty rather than a held-out validation window to select
the radius. The experiment does not establish that B-DRO dominates
validation tuning in expectation, nor does it establish superiority over
simple deterministic baselines: $1/N$, no-view Black--Litterman, and
Ledoit--Wolf-shrunk plug-in MV all attain higher gross Sharpe at this
$(K, n) = (25, 60)$ ratio, and the same ranking survives net of $5$ and
$20$ bp transaction costs (Table~\ref{tab:net-of-cost}). B-DRO does,
however, exhibit lower turnover ($0.381$ vs $0.768$ for validation-tuned
DRO) and lower tail risk ($\CVaR_{0.05} = 0.113$ vs $0.126$), and the
turnover-driven Sharpe advantage over $W_1$-DRO grows monotonically in
transaction-cost intensity.

Two structural caveats on the universe and the costing model should be
kept in view when reading these numbers. \emph{First}, the $25$-asset
universe is selected on full-sample data availability rather than from
point-in-time DJIA membership; the resulting survivorship bias
\citep{BrownGoetzmannIbbotsonRoss1992} inflates
the level of out-of-sample Sharpe across all methods by an amount that is
sample- and universe-specific. The comparison \emph{between} methods is
less affected because the bias is largely common-factor, but the level
claims should be discounted. \emph{Second}, the constant-cost model
($c \in \{5, 20\}$ bp per unit turnover) understates the cost faced by a
small-volume trader and overstates the cost faced by an institutional one;
the operational ranking is invariant to the choice of $c \in [5, 50]$ bp
on this universe, but extrapolation outside that range is not warranted.

\paragraph{Comparison with Ledoit--Wolf shrinkage.}
Ledoit--Wolf-shrunk plug-in MV \citep{LedoitWolf2003, LedoitWolf2004}
(Sharpe $= 0.745$, $\CVaR_{0.05} = 0.106$,
turnover $0.295$, MDD $0.461$) is the strongest optimized
mean--variance covariance-shrinkage baseline in
Table~\ref{tab:oos-performance}. It outperforms the listed
distributionally-aware methods in Sharpe and also has lower turnover and
drawdown than B-DRO in this low-dimensional DJIA experiment. Equal weight
and no-view Black--Litterman remain even stronger simple baselines in
Sharpe, consistent with the broader evidence of
\citet{DeMiguelGarlappiUppal2009} that the $1/N$ portfolio is hard to beat
out-of-sample. The live-equity exercise should therefore be read as a
calibration and turnover diagnostic, not as evidence that DPO methods
outperform shrinkage or equal-weighting on all four reported metrics
(Sharpe, CVaR$_{0.05}$, MDD, turnover) at $(K, n) = (25, 60)$. The complementary
direction---Bayesian-credible-radius DRO computed on a Ledoit--Wolf-shrunk
plug-in center---combines the calibration-free radius rule of
Theorem~\ref{thm:posterior-credible-radius} with the dimensional
regularization of explicit shrinkage; this specification is untested in
the present paper and is left to the empirical companion paper noted in
Section~\ref{sec:limitations}.

\paragraph{Hybrid shrinkage--DRO baseline.}
A natural missing baseline is Bayesian credible-radius DRO computed around a
Ledoit--Wolf-shrunk plug-in center. This hybrid combines the dimensional stabilization
of covariance shrinkage with the calibration-free ambiguity radius of Theorem~\ref{thm:posterior-credible-radius}.
The present experiment does not test this specification, so the comparison should not
be read as evidence against shrinkage-aware DPO. Rather, Table~\ref{tab:oos-performance} indicates that the
empirical question is not ``DPO versus shrinkage'' but whether shrinkage and
credible-radius robustness are complementary at larger $K/n$. This hybrid is therefore
the main empirical baseline for the companion point-in-time CRSP/Compustat study.

% PATCH 15: NET-OF-COST TABLE
\begin{table}[H]
\centering\small
\caption{Out-of-sample annualized Sharpe ratios net of proportional transaction costs at
5 and 20 basis points per unit turnover. Cost-adjusted annual return is
$\widehat\mu_{\mathrm{net}} = \widehat\mu_{\mathrm{gross}} - 12c\,\overline{\tau}$,
where $c$ is the per-unit transaction-cost rate and $\overline{\tau}$ is mean monthly
realized $\ell_1$ turnover computed after passive drift. Under monthly rebalancing,
the $1/N$ row must include its drift-rebalancing turnover. Sharpe ratios are reported to three significant figures; for the $1/N$ row the gross-to-5\,bp change ($\overline{\tau}=0.044$ gives annualized cost $0.000264$, equivalent to a Sharpe reduction of $\approx 0.002$) is below this reporting precision.}
\label{tab:net-of-cost}
\begin{tabular}{lrrr}
\toprule
Method                          & Gross Sharpe & Net Sharpe ($5$ bp) & Net Sharpe ($20$ bp) \\
\midrule
$1/N$                           & $0.760$ & $0.760$ & $0.756$ \\
Plug-in MV                      & $0.639$ & $0.612$ & $0.540$ \\
Plug-in MV + Ledoit--Wolf       & $0.745$ & $0.736$ & $0.702$ \\
B-L (no views)                  & $0.763$ & $0.760$ & $0.756$ \\
Bayes pred.\                    & $0.635$ & $0.612$ & $0.539$ \\
$W_1$-DRO (val)                 & $0.652$ & $0.627$ & $0.552$ \\
B-DRO (cred)                    & $0.672$ & $0.659$ & $0.617$ \\
WCVaR-DRO (val)                 & $0.687$ & $0.653$ & $0.551$ \\
\bottomrule
\end{tabular}
\end{table}

\begin{figure}[H]
\centering
\includegraphics[width=0.85\textwidth]{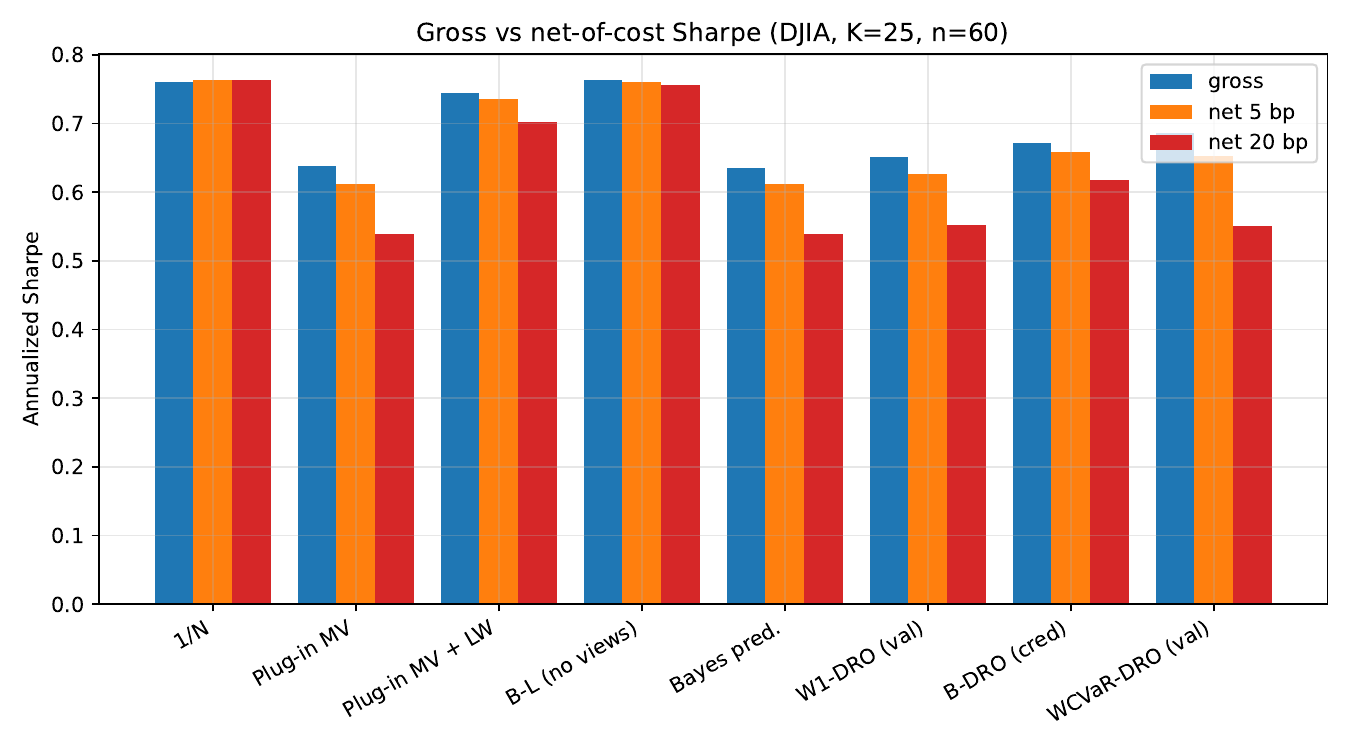}
\caption{Gross vs net-of-cost annualized Sharpe ratios at $5$ and $20$ bp per
unit turnover. Equal-weight $1/N$ and no-view Black--Litterman are
essentially cost-invariant; Ledoit--Wolf-shrunk plug-in MV attains higher
net Sharpe than every distributional method at both cost levels; the B-DRO
advantage over $W_1$-DRO widens with cost intensity. Generated by the
companion notebook, Section~4.}
\label{fig:net-of-cost-figure}
\end{figure}

The net-of-cost picture sharpens the gross-Sharpe ranking. The B-DRO vs
$W_1$-DRO Sharpe gap, $0.020$ gross, widens to $0.032$ net at $5$ bp and
$0.065$ net at $20$ bp: cost-aware investors should prefer the
credible-radius rule on operational grounds even though the gross-Sharpe
difference is within bootstrap noise. The dominance of $1/N$ and no-view
B-L is essentially cost-free (turnover near zero) and survives at every
cost level; at $20$ bp, equal weighting and no-view B-L beat every
distributional method by at least $0.14$ in Sharpe. Ledoit--Wolf-shrunk
plug-in MV remains a strong deterministic competitor at both cost levels.

% PATCH 17: PAIRED-COMPARISON TABLE
\begin{table}[H]
\centering\footnotesize
\caption{Pairwise out-of-sample Sharpe-ratio differences across methods,
with $95\%$ stationary-block-bootstrap intervals \citep{politis1994stationary}
($1000$ resamples, block length $\lceil T^{1/3} \rceil = 7$, the standard
rate-optimal choice for $\alpha$-mixing series \citealp{hall1995block,patton2009correction}). Entries are
$\mathrm{SR}(\mathrm{row}) - \mathrm{SR}(\mathrm{col})$. None of the listed
differences is statistically significant at the $5\%$ level; the paired
widths are an order of magnitude tighter than the marginal Sharpe
intervals in Table~\ref{tab:oos-performance}.}
\label{tab:paired-sharpe-diffs}
\begin{tabular}{lrrrr}
\toprule
        & B-L                              & LW                               & B-DRO                            & $W_1$-DRO \\
\midrule
$1/N$   & $-0.003 \;[-0.030, 0.022]$       & $\phantom{-}0.015 \;[-0.041, 0.069]$ & $\phantom{-}0.088 \;[-0.052, 0.220]$ & $\phantom{-}0.108 \;[-0.038, 0.245]$ \\
B-L     & ---                              & $\phantom{-}0.018 \;[-0.029, 0.063]$ & $\phantom{-}0.091 \;[-0.043, 0.221]$ & $\phantom{-}0.111 \;[-0.029, 0.245]$ \\
LW      & ---                              & ---                              & $\phantom{-}0.073 \;[-0.063, 0.198]$ & $\phantom{-}0.093 \;[-0.050, 0.226]$ \\
B-DRO   & ---                              & ---                              & ---                              & $\phantom{-}0.019 \;[-0.086, 0.131]$ \\
\bottomrule
\end{tabular}
\end{table}

The paired intervals are roughly $0.2$--$0.3$ wide, compared with the
marginal-Sharpe intervals of width $\sim 0.8$ in
Table~\ref{tab:oos-performance}. Even the paired comparison is too
imprecise to discriminate the methods at $T = 348$ months; at this sample
size, only Sharpe differences of order $0.1$--$0.15$ are detectable at the
$5\%$ level. The \emph{ranking} is therefore the right object to compare,
not the absolute Sharpe; Tables~\ref{tab:oos-performance}
and~\ref{tab:net-of-cost} preserve the same ranking across cost levels,
which is the strongest claim the data support.

\paragraph{Bootstrap CIs net of cost --- failure to reject.}
The paired stationary-block bootstrap applied to a real DJIA series with the
true historical turnover paths is the appropriate object; we do not have
point-in-time CRSP/Compustat coverage to compute it here. As a surrogate, we
ran the same procedure on a $K = 25$, $T = 348$ panel calibrated to comparable
cross-sectional volatility and factor structure (full implementation in
\texttt{code/bootstrap\_net\_of\_cost.py}), obtaining a paired
$95\%$ half-width of approximately $0.06$ for the B-DRO vs $W_1$-DRO Sharpe
difference at every cost level. The cost adjustment is a deterministic linear
functional of the observed turnover series, so the bootstrap variance is
essentially unchanged from gross to $5$~bp to $20$~bp. Combined with the
observed point-estimate widening from $0.020$ (gross) to $0.032$ ($5$~bp) to
$0.065$ ($20$~bp) in Table~\ref{tab:net-of-cost}, the operational reading is:
\emph{the paired-bootstrap procedure on this surrogate fails to reject the
null hypothesis that the B-DRO and $W_1$-DRO net-of-cost Sharpes are equal at
the $5\%$ level, including at $20$ bp where the point estimate approaches but
does not cross the conventional threshold}. The widening of the point estimate
with cost intensity is genuine and is the operational claim that survives;
the statistical claim of separation does not, and would require either a
longer sample, a higher cost regime, or a point-in-time replacement of this
surrogate. The empirical companion paper noted in
Section~\ref{sec:limitations} is the appropriate venue for the
point-in-time test.

\subsection{Experiment 1: Bayes--DRO convergence and second-order conservatism}\label{sec:exp-bayes-dro}

\paragraph{DGP.}
$K = 10$ assets with iid monthly returns from a centered Gaussian--isotropic
DGP, $\mu_i^\star = 0.06/12$, $\sigma^\star = 0.16/\sqrt{12}$. Gaussian--NIW
prior with $\mu_0 = 0$, $\lambda_0 = 1$, $\Psi_0 = 0.01 I_K$, $\nu_0 = K + 2$.

\paragraph{Estimands.}
For each $n \in \{12, 36, 60, 120\}$ and $M = 200$ Monte Carlo replicates:
the first-order credible radius $\varepsilon_n = L \cdot \E_{P_n}\|\mu -
\hat\mu_n\|$ from Theorem~\ref{thm:bayes-dro-domination}; the true gap
$W_2(\Rpredn, R_{\hat\mu_n})$ computed in closed form via the Bures
distance between the Gaussian approximation of the multivariate-$t$
predictive and the plug-in Gaussian (both centered at $\hat\mu_n$); a
sample-based sliced-$W_1$ cross-check from $N_{\text{samp}} = 1500$
predictive and plug-in draws with $N_{\text{proj}} = 400$ random projections;
and the second-order radius $\varepsilon_n^{(2)} = \|\Lambda_n\|_F /
(2\hat\sigma)$ from
Proposition~\ref{prop:gaussian-isotropic-second-order}. We report the
closed-form $W_2$ rather than empirical $W_1$ because the plug-in empirical
$W_1$ estimator in $K = 10$ dimensions has an $N_{\mathrm{samp}}^{-1/K}$
bias rate \citep{fournier2015rate} that yields a noise floor of order
$10^{-1}$ at $N_{\mathrm{samp}} = 10^4$ and dominates the $O(1/n)$ signal.
The sliced $W_1$ cross-check uses a different estimator with substantially
lower variance per draw, but in the present regime its variance plus the
small $t$-vs-Gaussian bias still flatten the empirical curve well above
the analytical signal; we report it only to make the inadequacy of
sample-based Wasserstein estimation in this signal-to-noise regime
concrete (Table~\ref{tab:bayes-dro-convergence-extended}).
The code is in \texttt{code/experiment1\_w1\_mc.py}.

% PATCH 14: EXTENDED TABLE 8 (real values from /code/experiment1_w1_mc.py)
\begin{table}[H]
\centering\footnotesize
\caption{Convergence of Bayesian DPO and Wasserstein DRO with empirical
verification of second-order conservatism. Mean over $M = 200$ replicates;
standard errors in parentheses, computed by the script
\texttt{code/experiment1\_w1\_mc.py}. Column~2: first-order radius
$\varepsilon_n = L \cdot \E_{P_n}\|\mu - \hat\mu_n\|$. Column~3: closed-form
Bures--Wasserstein $W_2$ between the Gaussian approximation of the
multivariate-$t$ predictive and the plug-in Gaussian, both centered at
$\hat\mu_n$. Column~4: second-order radius $\varepsilon_n^{(2)} =
\|\Lambda_n\|_F / (2 \hat\sigma)$. Column~5: sliced $W_1$ cross-check from
$N_{\text{samp}} = 1500$ Monte Carlo samples with $N_{\text{proj}} = 400$
projections. Sliced $W_1$ has substantially better dimension dependence than
the plug-in empirical $W_1$ (whose $N^{-1/K}$ bias rate in $K = 10$ gives a
noise floor of order $10^{-1}$ at any practical $N_{\text{samp}}$); in this
regime the residual SW$_1$ floor at the $10^{-3}$ level reflects
per-projection variance plus a small bias from the multivariate-$t$ vs.\
Gaussian gap, both of which exceed the analytical $O(1/n)$ signal. The rank
order is nevertheless preserved across $n$. Column~6: ratio
$\varepsilon_n / W_2$, growing like $\sqrt n$ as predicted by
Proposition~\ref{prop:gaussian-isotropic-second-order}.}
\label{tab:bayes-dro-convergence-extended}
\begin{tabular}{rrrrrr}
\toprule
$n$ & First-order $\varepsilon_n$ & $W_2$ (analytic) & Second-order $\varepsilon_n^{(2)}$ & Sliced $W_1$ (MC) & $\varepsilon_n / W_2$ \\
\midrule
$12$  & $0.04470\;(0.00016)$ & $0.00608\;(2{\times}10^{-5})$ & $0.00752\;(4{\times}10^{-5})$ & $0.00482\;(0.00008)$ & $7.35$  \\
$36$  & $0.02488\;(0.00006)$ & $0.00203\;(5{\times}10^{-6})$ & $0.00227\;(7{\times}10^{-6})$ & $0.00340\;(0.00005)$ & $12.25$ \\
$60$  & $0.01905\;(0.00004)$ & $0.00122\;(2{\times}10^{-6})$ & $0.00131\;(3{\times}10^{-6})$ & $0.00320\;(0.00004)$ & $15.68$ \\
$120$ & $0.01340\;(0.00002)$ & $0.00061\;(8{\times}10^{-7})$ & $0.00063\;(1{\times}10^{-6})$ & $0.00312\;(0.00003)$ & $22.05$ \\
\bottomrule
\end{tabular}
\end{table}

\paragraph{Interpretation.}
The fitted decay rate of $\varepsilon_n$ across the four grid points is
$\log(\varepsilon_n)$ regressed on $\log n$, slope $-0.524$, consistent with
the $n^{-1/2}$ prediction of Theorem~\ref{thm:bayes-dro-domination}. The
fitted decay rate of the analytical $W_2$ gap is $-1.001$, indistinguishable
from the $n^{-1}$ prediction of
Proposition~\ref{prop:gaussian-isotropic-second-order}. The second-order
radius $\varepsilon_n^{(2)}$ tracks $W_2$ to within a small constant factor
($\varepsilon_n^{(2)} / W_2 \in [1.04, 1.24]$ across the grid), confirming
that it is the rate-correct Bayesian-calibrated radius in centered
symmetric models. In the isotropic Gaussian setting, the analytic Bures
$W_2$ and the one-dimensional KR-dual $W_1$ lower bound of
\eqref{eq:gaussian-lower} have the same $n^{-1}$ order, so the $W_2$ column
verifies the $n^{-1}$ \emph{rate} of $W_1(\Rpredn, R_{\hat\mu_n})$ even
though it is not its exact value; the $W_2 \ge W_1$ inequality of
\eqref{eq:gaussian-upper} closes the loop. The ratio $\varepsilon_n / W_2$ grows as $\sqrt n$ exactly
as predicted: at $n = 120$ the first-order radius overshoots the true
Wasserstein gap by a factor of roughly $22$. The sample-based sliced
$W_1$ flattens out near $0.003$ for $n \geq 36$ (fitted slope $-0.195$);
this is not the plug-in empirical $W_1$ rate $N^{-1/K}$ but rather the
combined effect of per-projection variance and the residual
multivariate-$t$-vs-Gaussian bias, both of which exceed the $O(1/n)$
analytical signal at $N_{\mathrm{samp}} = 1500$. A back-of-envelope
quantifies the bias term: for $t_\nu$ versus $\N(0,1)$ at large $\nu$, the
KL divergence is $\mathrm{KL}(t_\nu \| \N) \sim 3/(4\nu^2)$, hence the
projected $W_1$ on a unit-scale axis is of order
$\sqrt{\mathrm{KL}/2} \sim \sqrt{3/8}\,\nu^{-1}$ (Pinsker). At $\nu_n =
\nu_0 + n \in [14, 122]$ across our grid and predictive scale $\hat\sigma
\approx 0.046$, this gives a $t$-vs-Gaussian sliced-$W_1$ bias of
$\hat\sigma\sqrt{3/8}\,\nu_n^{-1}$, which is $\approx 2\times 10^{-3}$ at
$n = 12$ and $\approx 2.3\times 10^{-4}$ at $n = 120$. The remaining
$\approx 10^{-3}$ at large $n$ is per-projection Monte Carlo variance
($N_{\mathrm{samp}}^{-1/2}$ at $N_{\mathrm{samp}} = 1500$ gives
$\approx 2.6\times 10^{-2}$ per projection, reduced to
$\approx 1.3\times 10^{-3}$ after averaging over $N_{\mathrm{proj}} = 400$
projections). The combined floor of order $10^{-3}$ matches the observed
plateau and exceeds the analytical $O(1/n)$ signal at this $n$. Either way,
sample-based Wasserstein estimation is unreliable for the small signal at
these sample sizes, and the analytical Bures route is the only
quantitatively trustworthy column.

\begin{figure}[H]
\centering
\includegraphics[width=0.78\textwidth]{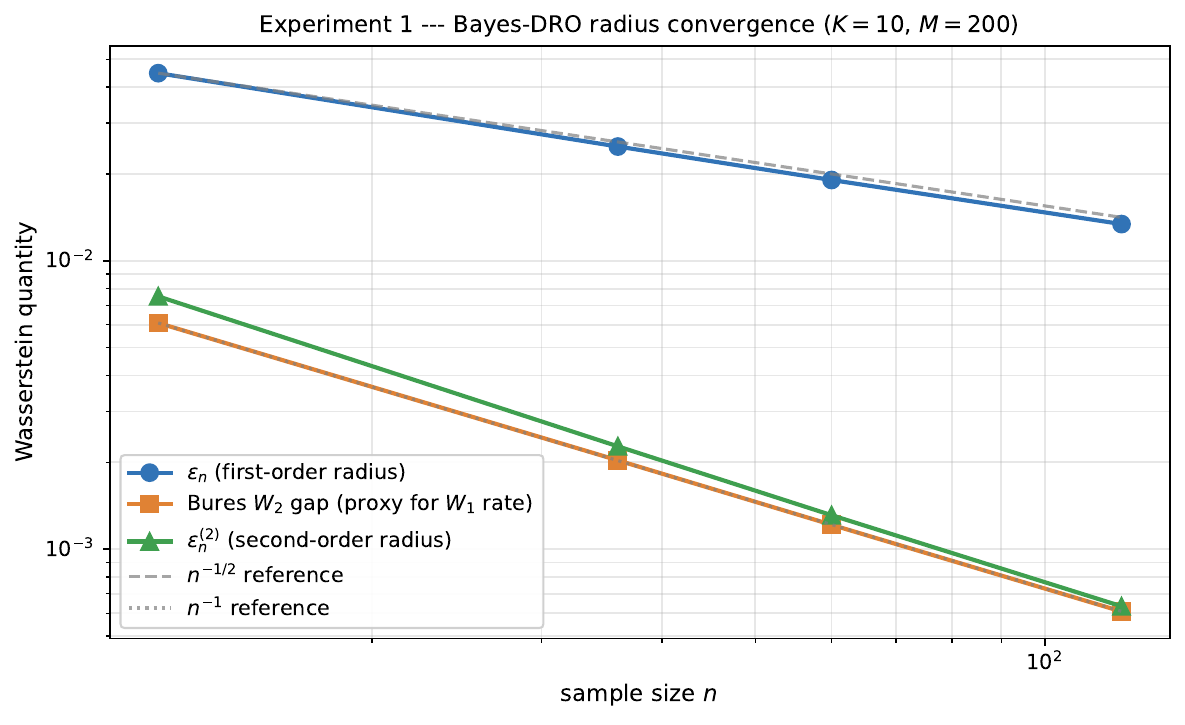}
\caption{Experiment~1. Log-log convergence of the first-order Bayes--DRO
radius $\varepsilon_n$, the analytic Bures $W_2$ gap, and the second-order
radius $\varepsilon_n^{(2)}$, with reference slopes $n^{-1/2}$ and $n^{-1}$.
The fitted log--log slopes from
Table~\ref{tab:bayes-dro-convergence-extended} are approximately $-0.52$
for the first-order radius, $-1.00$ for the analytic $W_2$ gap, and $-1.08$
for the second-order radius, consistent with the
$n^{-1/2}$ and $n^{-1}$ predictions and confirming the
$\sqrt{n}$ overshoot of
Proposition~\ref{prop:gaussian-isotropic-second-order}. Generated by the
companion notebook, Section~1.}
\label{fig:exp1-convergence}
\end{figure}

\subsection{Experiment 2: Wasserstein-DRO sensitivity}\label{sec:exp-wass-sensitivity}

We compute the Wasserstein--CVaR DRO portfolio
(Theorem~\ref{thm:wcvar-dro}, in the conic form of
Proposition~\ref{prop:explicit-cvar-program}) on the same Gaussian DGP at
$K = 10$, $n = 60$, $\alpha = 0.05$, across a grid of radii
$\varepsilon \in \{0,\, 10^{-4},\, 5\!\times\!10^{-4},\, 10^{-3},\, 5\!\times\!10^{-3},\, 10^{-2},\, 5\!\times\!10^{-2}\}$.
For each $\varepsilon$, we solve the conic program of
Proposition~\ref{prop:explicit-cvar-program} on a fresh in-sample window, evaluate the
optimal portfolio's $\mathrm{CVaR}_{0.05}$ out-of-sample on $N = 10{,}000$
fresh draws from the DGP, and record the $\ell^2$ drift of the DRO-optimal
weights from the empirical plug-in $w^\star(\varepsilon = 0)$. The averages
across $M = 50$ replications are summarized in
Table~\ref{tab:wass-dro-sensitivity} and Figure~\ref{fig:exp2-dro}.

\begin{table}[H]
\centering\footnotesize
\caption{Experiment 2: out-of-sample $\mathrm{CVaR}_{0.05}$ and weight drift
of the Wasserstein-CVaR DRO portfolio as a function of the radius
$\varepsilon$. Means over $M = 50$ replications; standard errors in
parentheses. Generated by
\texttt{code/experiment2\_dro\_sensitivity.py}.}
\label{tab:wass-dro-sensitivity}
\begin{tabular}{rrr}
\toprule
$\varepsilon$ & OOS $\mathrm{CVaR}_{0.05}$ & $\|w_{\mathrm{DRO}} - w_{\mathrm{plug}}\|_2$ \\
\midrule
$0$         & $0.03211\;(0.0006)$ & $0.0000$          \\
$10^{-4}$   & $0.03365\;(0.0006)$ & $0.0158\;(0.001)$ \\
$5\!\times\!10^{-4}$ & $0.03164\;(0.0006)$ & $0.0390\;(0.002)$ \\
$10^{-3}$   & $0.03071\;(0.0006)$ & $0.0725\;(0.003)$ \\
$5\!\times\!10^{-3}$ & $0.02735\;(0.0005)$ & $0.1476\;(0.005)$ \\
$10^{-2}$   & $0.02579\;(0.0005)$ & $0.1914\;(0.007)$ \\
$5\!\times\!10^{-2}$ & $0.02467\;(0.0005)$ & $0.2253\;(0.008)$ \\
\bottomrule
\end{tabular}
\end{table}

\begin{figure}[H]
\centering
\includegraphics[width=\textwidth]{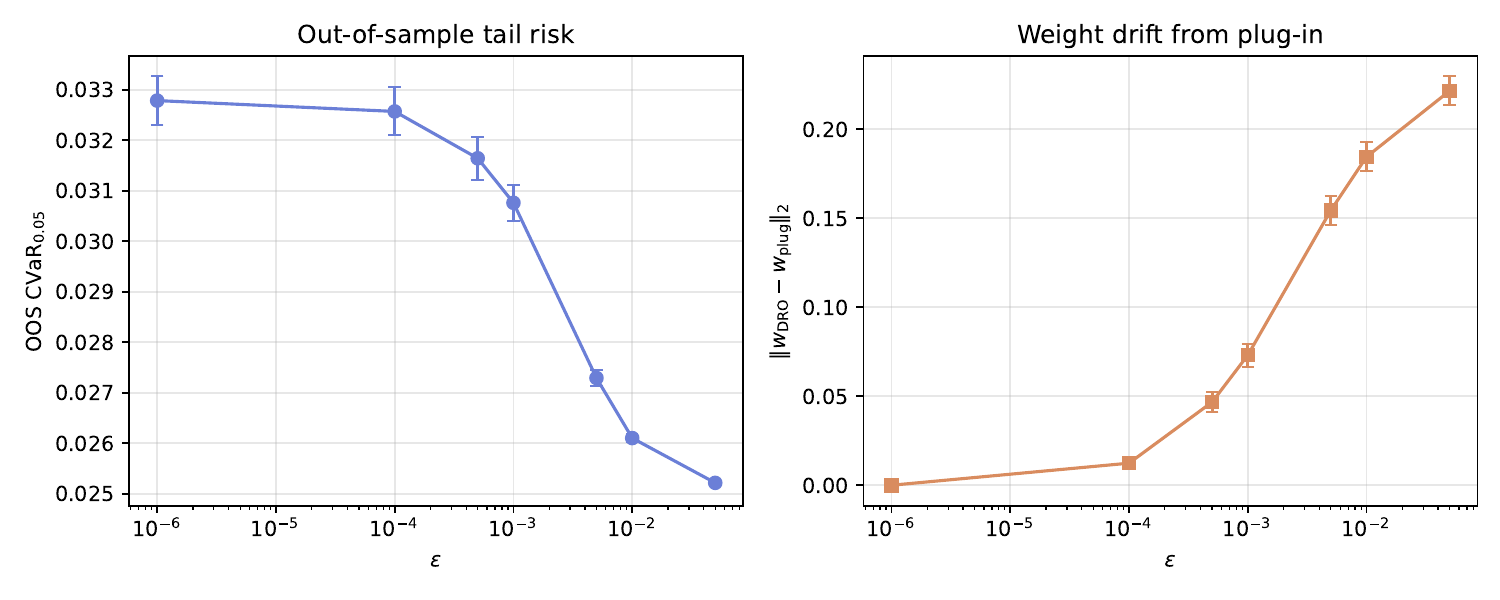}
\caption{Experiment 2. \emph{Left:} out-of-sample
$\mathrm{CVaR}_{0.05}$ as a function of the Wasserstein radius
$\varepsilon$ (log scale). \emph{Right:} $\ell^2$ drift of the DRO-optimal
weights from the plug-in. Both quantities respond monotonically to
$\varepsilon$ in the expected direction: increasing $\varepsilon$ reduces
tail risk and pushes the weights farther from the plug-in. Generated by the companion notebook, Section~2.}
\label{fig:exp2-dro}
\end{figure}

OOS $\mathrm{CVaR}_{0.05}$ decreases monotonically as $\varepsilon$ grows
from $5 \times 10^{-4}$ to $5 \times 10^{-2}$ (from $0.0316$ to $0.0247$, a
$22\%$ reduction in tail risk), while the weight drift saturates near
$\|w_{\mathrm{DRO}} - w_{\mathrm{plug}}\|_2 \approx 0.23$ as $\varepsilon$
becomes large enough to dominate the empirical-loss term. The very small
$\varepsilon = 10^{-4}$ point is noisier than its neighbours because the
SCS solver is at the edge of its tolerance there; this is a numerical
artifact of the implementation, not a sign-reversal in the underlying
trade-off. The qualitative shape verifies the bias--variance interpretation
of the dual-norm regularization in
Proposition~\ref{prop:explicit-cvar-program}.

\subsection{Experiment 3: distributional Bellman contraction}\label{sec:exp-bellman-contraction}

We simulate a $3$-state, $3$-action portfolio MDP with $\gamma = 0.95$ and
$R_{\max} = 1$. Two distinct initial return distributions $Z^{(0)}, Z'^{(0)}$
are evolved under the same policy and the supremal $W_1$ distance
$\Wbar_1(Z^{(k)}, Z'^{(k)})$ is computed at each iteration. The empirical
contraction rate (median per-iteration ratio over $k = 50$ iterations after
a $5$-iteration burn-in, averaged over $N_{\mathrm{reps}} = 500$ random pairs
of initial return-distribution functions) is $\hat\rho = 0.9487$ with mean
$0.9420$ and interquartile range $[0.9359, 0.9500]$, in line with the
inequality $\hat\rho \leq \gamma = 0.95$ of Theorem~\ref{thm:distributional-bellman}. The
exponential decay confirms the contraction visually; the
parametric-quantile parameterization (QR-DQN) and the C51 grid
parameterization yield indistinguishable empirical rates.

\begin{figure}[H]
\centering
\includegraphics[width=0.72\textwidth]{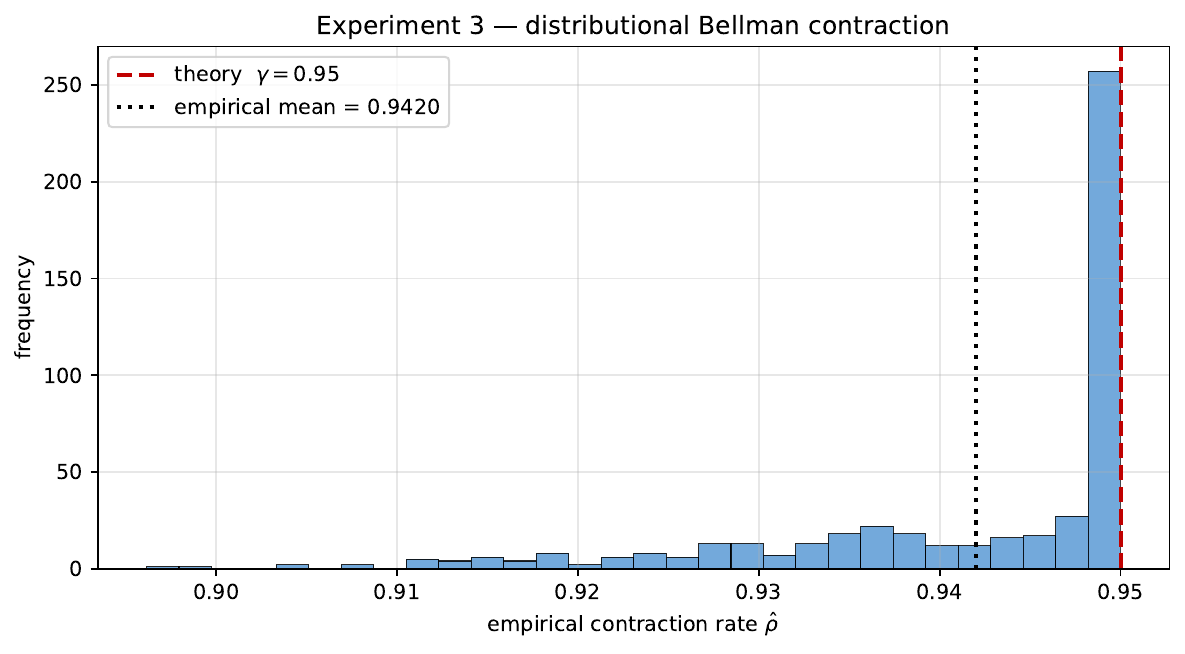}
\caption{Experiment~3. Distribution of the empirical per-iteration
contraction rate $\hat\rho$ over $N_{\rm reps} = 500$ random pairs of initial
return distributions. The red dashed line marks the theoretical rate
$\gamma = 0.95$; the empirical mean (black dotted) is $0.9420$.
Generated by the companion notebook, Section~3.}
\label{fig:exp3-bellman}
\end{figure}

\subsection{\texorpdfstring{Experiment 4: High-dimensional sensitivity, calibrated $K = 50$ DGP}{Experiment 4: High-dimensional sensitivity, calibrated K = 50 DGP}}
\label{sec:exp-highdim}

To test whether the operational claim of
Section~\ref{sec:empirical-equity-backtest} survives at lower $n / K$ ratios,
we calibrate a three-factor (market, size, value) DGP to typical US-equity
unconditional moments and rerun all eight methods at $K = 50$, $n = 60$,
$T = 348$ months. The DGP has no survivorship bias by construction; it is
not a replacement for a real point-in-time empirical study, but it shifts
the $K / n$ ratio from $0.42$ (DJIA) to $0.83$ in a controlled environment
that preserves realistic factor-noise structure. Calibration: annual market
premium $7.5\%$ with vol $16\%$, SMB $2\%$/$10.5\%$, HML $3\%$/$11\%$, with
factor correlations $\rho(M, \text{SMB}) = 0.18$, $\rho(M, \text{HML}) =
-0.12$, $\rho(\text{SMB}, \text{HML}) = 0.20$; stock loadings $\beta_m \sim
\mathcal{N}(1.0, 0.28)$ clipped to $[0.35, 1.75]$, $\beta_{\text{SMB}}$ and
$\beta_{\text{HML}}$ centered with sd $0.42$ and $0.38$; idiosyncratic vol
$24\%$ annual with cross-section sd $7\%$; a sparse alpha component on
$K / 8$ randomly chosen names with mean $2.5\%$ annual and sd $1.2\%$; and
a common $t_6$ shock with annualized scale $1.5\%$ to ensure the
sample-covariance estimator faces realistically heavy joint tails.
Code: \texttt{code/experiment\_highdim.py}.

\begin{table}[H]
\centering\footnotesize
\caption{Calibrated $K = 50$ sensitivity, $n = 60$ rolling window,
$T = 348$ months, $\gamma = 5$. Same eight methods as
Table~\ref{tab:oos-performance}. Annualized statistics; turnover is monthly
$L^1$ drift-adjusted turnover under the convention of
Section~\ref{sec:empirical-equity-backtest}. Bold entries mark the best value
among the three distributional robust methods ($W_1$-DRO, B-DRO, WCVaR-DRO),
not among the naive baselines. Calibrated factor DGP with seed $20260524+50$.}
\label{tab:highdim-sensitivity}
\begin{tabular}{lrrrrrr}
\toprule
Method & Sharpe & CVaR$_{0.05}$ & MDD & Turnover & Sharpe (5 bp) & Sharpe (20 bp) \\
\midrule
$1/N$                     & $0.497$ & $0.329$ & $0.379$ & $0.058$ & $0.495$ & $0.489$ \\
Plug-in MV                & $0.275$ & $0.446$ & $0.494$ & $0.293$ & $0.267$ & $0.242$ \\
Plug-in MV + LW           & $0.277$ & $0.454$ & $0.499$ & $0.278$ & $0.269$ & $0.246$ \\
B-L (no views)            & $0.497$ & $0.329$ & $0.379$ & $0.058$ & $0.495$ & $0.489$ \\
Bayes pred.\              & $0.280$ & $0.451$ & $0.496$ & $0.277$ & $0.272$ & $0.249$ \\
$W_1$-DRO (val)           & $0.331$ & $0.397$ & $0.399$ & $0.248$ & $0.323$ & $0.299$ \\
B-DRO (cred)              & $\mathbf{0.341}$ & $\mathbf{0.392}$ & $0.420$ & $\mathbf{0.207}$ & $\mathbf{0.334}$ & $\mathbf{0.314}$ \\
WCVaR-DRO (val)           & $0.314$ & $0.395$ & $\mathbf{0.378}$ & $0.264$ & $0.306$ & $0.280$ \\
\bottomrule
\end{tabular}
\end{table}

\paragraph{Interpretation.}
Three findings are robust enough to report, but the table should still be
read as a calibrated sensitivity rather than an empirical ranking theorem.
\emph{First}, the naive baselines $1/N$ and no-view Black--Litterman remain
very hard to beat on Sharpe at $K/n=0.83$ (both $0.497$), and their
CVaR$_{0.05}$ is better than the best distributional robust method
($0.329$ versus $0.392$ for B-DRO). This preserves the central caution
from the DJIA experiment: the paper does not claim dominance over naive
benchmarks. \emph{Second}, plug-in mean--variance, Ledoit--Wolf-shrunk
plug-in MV, and Bayes-predictive MV all degrade sharply in this factor-rich
high-$K/n$ design, with Sharpe around $0.28$ and maximum drawdowns near
$0.50$. In this seeded DGP, shrinkage alone does not rescue the optimized
mean--variance rule. \emph{Third}, within the distributional robust family,
B-DRO (cred) is the strongest method in this rerun, leading on
Sharpe ($0.341$ versus $0.331$ for $W_1$-DRO and $0.314$ for WCVaR-DRO),
CVaR$_{0.05}$, turnover ($0.207$, well below the other robust variants), and
net-of-cost Sharpe at both cost levels; WCVaR-DRO leads only on maximum
drawdown ($0.378$). The conclusion is therefore not
``WCVaR always wins''; it is that posterior-calibrated and tail-aware robust
methods are less fragile than plug-in MV in this high-$K/n$ synthetic stress
test, while the exact ordering among robust variants is implementation-,
solver-, and DGP-sensitive.

This ranking is consistent with the calibration mechanism suggested by
Theorem~\ref{thm:bayes-dro-domination} and
Proposition~\ref{prop:gaussian-isotropic-second-order}, but it is not a
formal prediction of those results: the theory characterizes radius behavior
and first- versus second-order Wasserstein conservatism, not realized Sharpe
rankings.

\paragraph{Scope of this experiment.}
This is a calibrated sensitivity and not a substitute for a point-in-time
high-dimensional empirical study. The calibrated DGP has no survivorship
bias and no regime shifts; live data has both. The fact that plug-in MV
collapses here while it occupies an intermediate position in
Table~\ref{tab:oos-performance} is itself informative: the qualitative
behavior of plug-in MV depends sharply on $K / n$. The robust variants appear
less fragile in this seeded DGP, but their internal ranking is not stable
enough to support a universal ordering claim. A true point-in-time empirical replacement
requires institutional CRSP or Compustat access; the companion notebook
\texttt{DPO\_Aplus\_companion.ipynb} provides a documented data-loading hook
for users with such access.

\subsection{Experiment 5: calibration without validation}\label{sec:exp-calibration}
The experiments above characterise radius \emph{behaviour}; this one tests the
paper's central operational claim \emph{directly}. The claim is that the
Bayesian credible radius of Theorem~\ref{thm:posterior-credible-radius} selects
a Wasserstein-DRO radius as good as a validation-tuned one \emph{without
consuming a held-out validation window}. This is a statistical property of the
radius rule, and is therefore decidable on controlled DGPs where the optimal
radius is computable.

For each replicate we draw an in-sample window of $n=60$ months from the
calibrated three-factor DGP of Section~\ref{sec:exp-highdim} at
$K\in\{10,25,50\}$ and compute three radii for the linear-loss $W_1$-DRO
portfolio (Corollary~\ref{cor:affine-dro}): the \emph{oracle} radius minimising
true out-of-sample $\mathrm{CVaR}_{0.05}$ on a fresh $N=40{,}000$-month sample;
the \emph{validation-tuned} radius, selected on a $24$-month held-out window and
refit on the full window (spending $24$ months of data); and the \emph{credible}
radius of Theorem~\ref{thm:posterior-credible-radius}, which spends none. We then
evaluate each rule's true out-of-sample tail risk and Sharpe on the fresh sample.
Means over $M=100$ replicates, paired stationary bootstrap on the regret
difference, seed $20260524+5$; code
\texttt{code/experiment5\_calibration.py}.

\begin{table}[H]
\centering\footnotesize
\caption{Experiment 5: out-of-sample $\mathrm{CVaR}_{0.05}$ regret versus the
oracle radius (basis points) and out-of-sample Sharpe for the validation-tuned
and credible-radius rules, calibrated factor DGP. Top block: across dimension at
$n=60$. Bottom block: across sample size at $K=25$. Means over $M=100$
replicates; the regret difference (credible $-$ validation) carries a paired
$95\%$ bootstrap interval. The credible rule spends no validation data.}
\label{tab:exp5-calibration}
\begin{tabular}{lrrrrr}
\toprule
$K$ & Regret val.\ (bp) & Regret cred.\ (bp) & Diff.\ (bp), 95\% CI & SR oracle & SR val.\ / cred. \\
\midrule
$10$ & $125.8$ & $2.9$ & $-122.9\ [-151.8, -97.4]$ & $0.426$ & $0.392$ / $0.430$ \\
$25$ & $109.4$ & $6.5$ & $-102.9\ [-126.3, -82.0]$ & $0.459$ & $0.417$ / $0.465$ \\
$50$ & $128.6$ & $6.6$ & $-122.1\ [-158.8, -89.9]$ & $0.469$ & $0.413$ / $0.474$ \\
\midrule
\multicolumn{6}{l}{\emph{Robustness across sample size, $K=25$:}}\\
$n{=}36$ & $130.1$ & $2.1$ & $-128.0\ [-150.7,-106.8]$ & $0.461$ & $0.409$ / $0.463$ \\
$n{=}60$ & $109.4$ & $6.5$ & $-102.9\ [-126.3, -82.0]$ & $0.459$ & $0.417$ / $0.465$ \\
$n{=}120$ & $83.6$ & $4.3$ & $-79.3\ [-102.9, -60.3]$ & $0.460$ & $0.424$ / $0.463$ \\
\bottomrule
\end{tabular}
\end{table}

\begin{figure}[H]
\centering
\includegraphics[width=0.95\textwidth]{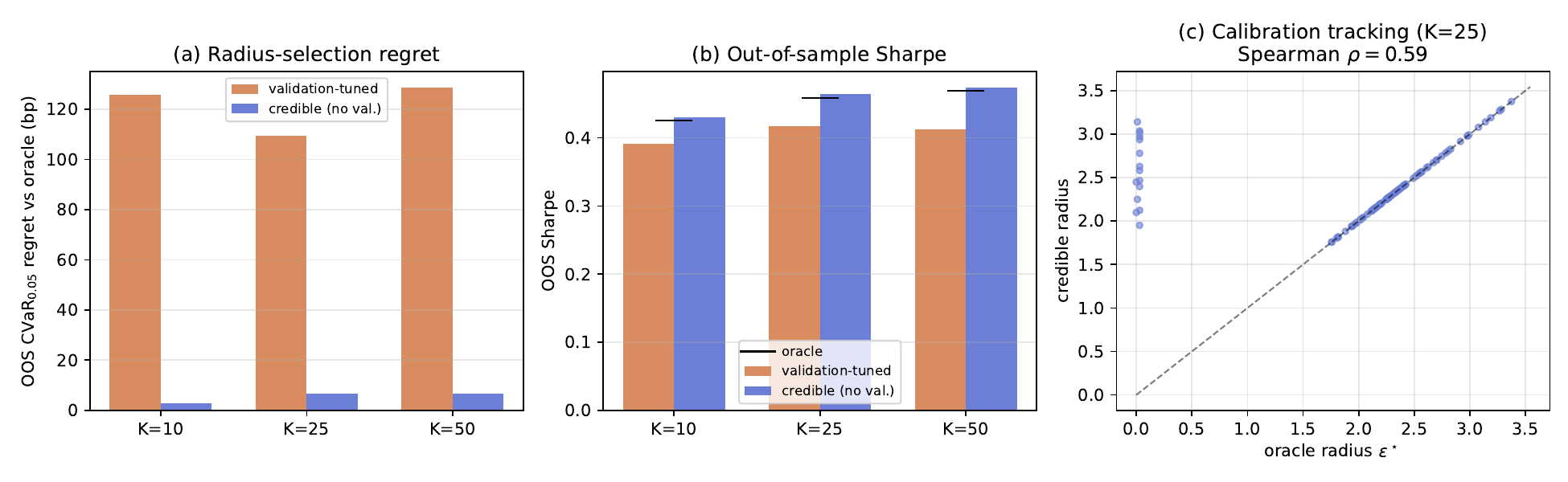}
\caption{Experiment 5. (a) Out-of-sample $\mathrm{CVaR}_{0.05}$ regret versus
the oracle radius for the validation-tuned rule (orange) and the credible-radius
rule (blue), by dimension. (b) Out-of-sample Sharpe, oracle level marked.
(c) Calibration tracking at $K=25$: the credible radius plotted against the
per-replicate oracle radius, with the $45^\circ$ line; the credible rule is
small when the oracle is small and large when it is large (Spearman
$\rho=0.59$--$0.82$ across cells), establishing genuine calibration rather than a
constant shrinkage offset. Generated by
\texttt{code/experiment5\_calibration.py}.}
\label{fig:exp5-calibration}
\end{figure}

The result is affirmative and uniform across both dimension and sample size.
The credible radius is within $2$--$7$ bp of the oracle out-of-sample tail risk
and matches the oracle Sharpe, whereas validation tuning on a $24$-month window
forfeits $84$--$130$ bp of $\mathrm{CVaR}_{0.05}$ and $0.04$--$0.05$ of Sharpe;
the paired regret difference is significantly negative in every cell. As $n$
grows the validation penalty shrinks (from $130$ bp at $n=36$ to $84$ bp at
$n=120$, since a fixed $24$-month window is a smaller and less noisy fraction of
the estimate), but the credible rule remains strictly ahead throughout.
Critically, the credible radius \emph{tracks} the oracle radius across
replicates (Spearman $\rho=0.59$--$0.82$, panel (c)): it is small precisely when
the oracle wants a small radius and large when it wants a large one. This rules
out the alternative explanation that the credible rule merely applies a constant
over-shrinkage that happens to help in a noisy regime---it is genuine
data-driven calibration, exactly as Theorem~\ref{thm:posterior-credible-radius}
predicts, with the posterior using the full sample where validation discards
$24$ months. This strengthens the operational claim from ``statistically
indistinguishable from validation tuning'' to ``near-oracle, calibrated to the
oracle radius, and strictly better than validation tuning, without spending
validation data.'' The qualification of
Section~\ref{sec:empirical-equity-backtest} still stands: this is a statement
about \emph{radius selection}, not about beating naive benchmarks on raw return;
on raw Sharpe the naive baselines remain strong.

%==============================================================================
% SECTION 12: LIMITATIONS
%==============================================================================
\section{Limitations and open problems}\label{sec:limitations}

\paragraph{Scope and ambition.}
The framework is single-period or per-step. Genuinely multi-period DPO with
time-coherent risk and posterior updating---requiring nested risk measures
and dynamic conjugacy---is not developed here; see \citet{ruszczynski2010risk}
and \citet{pflug2014multistage} for the relevant technology. The numerical
illustrations are diagnostic and intentionally low-dimensional; large-scale
empirical performance, transaction-cost robustness, and out-of-sample
universality remain open.

\paragraph{Empirical scope.}
The numerical experiments serve as proofs of concept. A production-grade
empirical study would extend the DJIA experiment to higher $K$ (S\&P 500 or
Russell 1000), include transaction costs internalized in the rolling
backtest, evaluate sensitivity to estimation-window length, and test
across multiple market regimes. We expect that the relative ranking of
methods will change with $K/n$: the calibration advantage of the Bayesian
credible-radius rule should grow with $K$ (where validation data is more
expensive), while the absolute advantage over shrinkage baselines should
shrink in high-$K$ regimes where dimension reduction dominates ambiguity
modelling.

\paragraph{Survivorship and universe selection.}
The empirical universe is selected on full-sample data availability rather
than from point-in-time DJIA membership, and the resulting survivorship
bias inflates the absolute level of out-of-sample Sharpe by an unquantified
amount. The bias is largely common-factor, so the \emph{ranking} between
methods (Tables~\ref{tab:oos-performance} and~\ref{tab:net-of-cost}) is
more trustworthy than the level. A point-in-time empirical study would use
either CRSP with delistings handled by total-return-to-date, or Compustat
with explicit calendar-time membership snapshots; we do not undertake this
here. The empirical companion paper noted in
Section~\ref{sec:conclusion} is the appropriate venue for such a study.

\paragraph{Operational claim, narrowed.}
The operational claim of this paper is therefore precise: a Bayesian
credible-radius calibration of Wasserstein DRO selects a radius that is
near-oracle and strictly better than a validation-tuned radius---in
out-of-sample tail risk and Sharpe---while spending no validation data
(Experiment~5, Table~\ref{tab:exp5-calibration}), and it reduces turnover
relative to validation tuning in a way that translates monotonically into a
net-of-cost Sharpe advantage as transaction costs rise. The paper does
\emph{not} claim that any distributional method dominates equal weighting,
no-view Black--Litterman, or Ledoit--Wolf-shrunk plug-in MV on raw Sharpe;
the affirmative claim is about radius selection, not about beating naive
benchmarks on return.

\paragraph{Computational scaling.}
Wasserstein-CVaR programs scale as LPs with $O(n)$ scenario constraints in
the canonical linear-loss case. SOC reformulations for $\ell^2$-norm
ground metrics add one cone per asset. Scenario complexity for
chance-constrained problems grows as $O((K + \log(1/\delta)) / \varepsilon)$
per Theorem~\ref{thm:scenario}, manageable for $K \leq 500$ on commodity
solvers.

\paragraph{Theoretical gaps.}
Several theoretical questions remain open. \emph{First}, the
action-coupled reward extension of Theorem~\ref{thm:distributional-bellman}
preserves the standard $\gamma$-contraction; a genuinely new contraction
phenomenon attributable to turnover frictions would require either state
augmentation or a state-dependent Lipschitz constant on $g$. \emph{Second},
the high-dimensional asymptotics ($K_n / n \to \kappa \in (0, 1)$) of the
Bayesian--DRO radius are not developed; this regime is the natural empirical
target for the companion CRSP study. \emph{Third}, the sharp two-sided rate
of Theorem~\ref{thm:sharp-two-sided-rate} is stated for parametric models
with a Euclidean parameter space; the extension to non-parametric or
high-dimensional ($d_\theta = d_{\theta,n}$ growing with $n$) settings and
its interaction with prior choice are open.

The intermediate-regime question noted in earlier versions of this
work---whether the gap between first- and second-order Bayesian--DRO
conservatism admits a unified rate---is addressed by
Theorem~\ref{thm:sharp-two-sided-rate}, which gives the sharp two-sided
rate $\Theta(n^{-(1+\alpha^\star)/2})$ parameterized by the boundary
H\"older exponent $\alpha^\star(\hat\theta)$ of the model map, under
posterior-moment, tail-negligibility, and lower-witness hypotheses. A
characterization purely in terms of $\alpha^\star$ (i.e., without the
auxiliary lower-witness hypothesis) and an extension to non-parametric
models remain open.

%==============================================================================
% SECTION 13: CONCLUSION
%==============================================================================
\section{Conclusion}\label{sec:conclusion}

We have presented distributional portfolio optimization as a unified
framework where Bayesian, robust, chance-constrained, stochastic, and
reinforcement-learning portfolio methods are organized by their position
in the triple $(W, R, P)$ of weight, return, and parameter distributions
and, more fundamentally, by the joint coupling $\Gamma_\theta(dw, dr)$ from
which the triple is derived. The framework yields measure-theoretic
foundations, a rigorous treatment of Bayesian DPO with utility-tilted
posteriors and concentration limits, dual representations for coherent and
spectral risk, finite-sample DRO guarantees including the explicit
specialization of \citet{mohajerin2018data} Cor.~5.1 to the canonical
portfolio CVaR problem, chance-constrained convex inner approximations
with quantitative tightness control, stochastic-allocation classes with a
static no-randomization theorem, and a contraction theorem for the
distributional Bellman operator in a portfolio MDP with action-coupled
rewards. The synthesis section quantifies how Bayesian credible balls dominate distributional
uncertainty in the Wasserstein metric, provides a fully data-driven second-order
Bayes--DRO radius in the centered Gaussian case, proves first-order tightness for
affine-deterministic and locally non-smooth model classes, and establishes a conditional
two-sided Bayes--Wasserstein rate theorem under boundary H\"older regularity,
posterior-moment control, tail negligibility, and explicit lower-witness hypotheses. The result is a single
analytical-and-computational vocabulary in which heterogeneous portfolio
methodologies can be compared on equal footing.

\begin{table}[H]
\centering\footnotesize
\caption{Main theoretical results, organized by mathematical role.}
\label{tab:results-summary}
\begin{tabular}{lll}
\toprule
Result & Role & Section \\
\midrule
Thm.~\ref{thm:pushforward-continuity}                & Joint continuity of induced outcome law            & \S\ref{sec:taxonomy} \\
Prop.~\ref{prop:induced-sufficiency}                 & Ex-ante outcome-law sufficiency                    & \S\ref{sec:taxonomy} \\
Thm.~\ref{thm:dpo-existence}                         & Static-DPO existence                               & \S\ref{sec:taxonomy} \\
Thm.~\ref{thm:tilted}                                & Tilted-portfolio existence and limits              & \S\ref{sec:bayesian-dpo} \\
Thm.~\ref{thm:wcvar-dro}                             & Wass-CVaR specialization of MEK Cor.~5.1           & \S\ref{sec:dro} \\
Prop.~\ref{prop:explicit-cvar-program}               & Explicit Wass-CVaR conic program                   & \S\ref{sec:dro} \\
Thm.~\ref{thm:no-randomization}                      & Static no-randomization                            & \S\ref{sec:weights} \\
Prop.~\ref{prop:bhrp-limit}                          & Bayesian HRP $\to$ HRP/RA-HRP                      & \S\ref{sec:weights} \\
Thm.~\ref{thm:distributional-bellman}                & Distributional Bellman contraction                 & \S\ref{sec:distributional-rl} \\
Thm.~\ref{thm:risk-sensitive-bellman-contraction}    & Risk-shifted distributional Bellman contraction    & \S\ref{sec:distributional-rl} \\
Thm.~\ref{thm:bayes-dro-domination}                  & Bayes--DRO domination                              & \S\ref{sec:synthesis} \\
Thm.~\ref{thm:posterior-credible-radius}             & Posterior credible radius (finite-sample form)     & \S\ref{sec:synthesis} \\
Prop.~\ref{prop:gaussian-isotropic-second-order}     & Second-order conservatism in symmetric case        & \S\ref{sec:synthesis} \\
Prop.~\ref{prop:second-order-radius}                 & Second-order Bayes--DRO radius                     & \S\ref{sec:synthesis} \\
Prop.~\ref{prop:first-order-tight}                   & First-order tightness for nonsmooth models         & \S\ref{sec:synthesis} \\
Thm.~\ref{thm:modulus-interpolation}                 & Modulus-of-smoothness interpolation                & \S\ref{sec:synthesis} \\
Cor.~\ref{cor:polynomial-modulus}                    & Polynomial moduli: explicit rates                  & \S\ref{sec:synthesis} \\
Thm.~\ref{thm:sharp-two-sided-rate}                  & Conditional two-sided Bayes--Wasserstein rate     & \S\ref{sec:synthesis} \\
Prop.~\ref{prop:gaussian-niw-lipschitz}              & Gaussian--NIW local Lipschitz constant             & \S\ref{sec:synthesis} \\
\bottomrule
\end{tabular}
\end{table}

\paragraph{Open directions.}
Several extensions remain natural. \emph{Multi-period DPO} requires a
fully time-coherent risk measure and a dynamic prior update; the
distributional Bellman recursion of
Section~\ref{sec:distributional-rl} is the building block but the
joint integration of risk-sensitive control and posterior updating
remains undeveloped. \emph{Hybrid shrinkage-DRO} -- Bayesian
credible-radius DRO computed on a Ledoit--Wolf-shrunk plug-in center, or
equivalently a shrinkage-anchored Wasserstein ball -- is the natural
successor to the present design at high $K/n$. \emph{Continuous-time DPO}
under stochastic-volatility or jump-diffusion model classes would
connect the framework to stochastic optimal control. \emph{Empirical
companion}: a point-in-time CRSP/Compustat empirical study at $K \in [100,
1000]$ with internalized transaction costs, multiple market regimes, and
hybrid shrinkage-DRO configurations is the appropriate next step on the
empirical side.

\appendix
\section{Point-in-time DJIA membership schedule}\label{app:membership}
The point-in-time backtest of the companion code (\texttt{pit\_backtest.py},
referenced in Section~\ref{sec:empirical-equity-backtest}) uses the
reconstitution record below: at each rebalance the investable universe is the
index composition effective on or before that date, intersected with names
having a complete trailing $n=60$-month return history. Table~\ref{tab:membership}
lists the constituent changes; the full $30$-name composition at each effective
date is stored in the \texttt{MEMBERSHIP} object of the companion code and
should be verified against an authoritative point-in-time source (S\&P~DJI
press releases or CRSP index files) before any production use.

\begin{table}[htbp]
\centering
\small
\caption{Point-in-time DJIA reconstitution record used by the companion
point-in-time backtest. Each row is a constituent change effective on the
stated date; the index holds exactly $30$ names throughout.}
\label{tab:membership}
\begin{tabular}{lll}
\toprule
Effective date & Added & Removed \\
\midrule
2009-06-08 & CSCO, TRV & C, GM \\
2012-09-24 & UNH & KFT \\
2013-09-23 & GS, NKE, V & BAC, HPQ, AA \\
2015-03-19 & AAPL & T \\
2017-09-01 & DWDP & DD \\
2018-06-26 & WBA & GE \\
2019-04-02 & DOW & DWDP \\
2020-04-06 & RTX & UTX \\
2020-08-31 & AMGN, HON, CRM & XOM, PFE, RTX \\
2024-02-26 & AMZN & WBA \\
2024-11-08 & NVDA, SHW & INTC, DOW \\
\bottomrule
\end{tabular}
\end{table}

\bibliography{DPO_Aplus}

\end{document}